\providecommand{\U}[1]{\protect\rule{.1in}{.1in}}
\newtheorem{theorem}{Theorem}
\newtheorem{algorithm}[theorem]{Observation}
\newtheorem{corollary}[theorem]{Corollary}
\newtheorem{definition}[theorem]{Definition}
\newtheorem{lemma}[theorem]{Lemma}
\newtheorem{problem}[theorem]{Problem}
\newtheorem{proposition}[theorem]{Proposition}
\newenvironment{proof}[1][Proof]{\noindent\textbf{#1.} }{\ \rule{0.5em}{0.5em}}
\begin{document}

\title{Two-sided bounds on minimum-error quantum measurement, on the reversibility of
quantum dynamics, and on maximum overlap using directional iterates}
\author{Jon Tyson\thanks{jonetyson@X.Y.Z, where X=post, Y=Harvard, Z=edu}\\Jefferson Lab, Harvard University}
\date{Posted July 10, 2009\\
Revised June 1, 2010}
\maketitle

\begin{abstract}
\noindent In a unified framework, we estimate the following quantities of
interest in quantum information theory:

\begin{enumerate}
\item The minimum-error distinguishability of arbitrary ensembles of mixed
quantum states.

\item The approximate reversibility of quantum dynamics in terms of
entanglement fidelity. (This is referred to as "channel-adapted quantum error
recovery" when applied to the composition of an encoding operation and a noise channel.)

\item The maximum overlap between a bipartite pure quantum state and a
bipartite mixed state that may be achieved by applying a local quantum
operation to one part of the mixed state.

\item The conditional min-entropy of bipartite quantum states.

\end{enumerate}

\noindent A refined version of the author's techniques
\href{http://link.aip.org/link/?JMAPAQ/50/032106/1}{[J. Math. Phys.
\textbf{50, }032016]} for bounding the first quantity is employed to give
two-sided estimates of the remaining three quantities.\vspace{0.02in}

We obtain a closed-form approximate reversal channel. Using a state-dependent
Kraus decomposition, our reversal may be interpreted as a
quadratically-weighted version of that of Barnum and Knill
\href{http://dx.doi.org/10.1063/1.1459754}{[J. Math. Phys. \textbf{43},
2097]}. The relationship between our reversal and Barnum and Knill's is
therefore similar to the relationship between Holevo's asymptotically-optimal
measurement [Theor. Probab. Appl. \textbf{23}, 411] and the \textquotedblleft
pretty good\textquotedblright\ measurement of Belavkin [Stochastics \textbf{1,
}315] and Hausladen \& Wootters
\href{http://dx.doi.org/10.1080/09500349414552221}{[J. Mod. Optic.
\textbf{41}, 2385]}. In particular, we obtain relatively simple reversibility
estimates without negative matrix powers at no cost in tightness of our
bounds. Our recovery operation is found to significantly outperform the
so-called \textquotedblleft transpose channel\textquotedblright\ in the simple
case of depolarizing noise acting on half of a maximally-entangled state.
Furthermore, our overlap results allow the entangled input state and the
output target state to differ, thus obtaining estimates in a somewhat more
general setting.\vspace{0.02in}

Using a result of K\"{o}nig, Renner, and Schaffner
\href{http://dx.doi.org/10.1109/TIT.2009.2025545}{[IEEE. Trans. Inf. Th. 55,
4337]}, our maximum overlap estimate is used to bound the conditional
min-entropy of arbitrary bipartite states.\vspace{0.02in}

Our primary tool is \textquotedblleft small angle\textquotedblright%
\ initialization of an abstract generalization of the iterative schemes of
Je\v{z}ek-\v{R}eh\'{a}\v{c}ek-Fiur\'{a}\v{s}ek
\href{http://dx.doi.org/10.1103/PhysRevA.65.060301}{[Phys. Rev. A \textbf{65,
}060301]}, Je\v{z}ek-Fiur\'{a}\v{s}ek-Hradil
\href{http://link.aps.org/doi/10.1103/PhysRevA.68.012305}{[Phys. Rev. A
\textbf{68}, 012305]}, and Reimpell-Werner
\href{http://link.aps.org/doi/10.1103/PhysRevLett.94.080501}{[Phys. Rev. Lett.
\textbf{94}, 080501]}. The monotonicity result of Reimpell
\href{http://deposit.ddb.de/cgi-bin/dokserv?idn=988217317}{[Ph.D. Thesis,
2007]} follows in greater generality.

\newpage

\end{abstract}
\tableofcontents

\newpage

\section{Introduction}

\noindent This paper considers the following problem of relevance in quantum
information theory:

\begin{quotation}
\noindent\textbf{The maximum overlap problem}: Let $\mu_{\mathcal{\mathcal{KH}%
}}$ be a positive semidefinite trace-class operator on $\mathcal{K}%
\otimes\mathcal{H}$, and let $M_{\mathcal{\mathcal{LH}}}$ be positive
semidefinite bounded operator on $\mathcal{L}\otimes\mathcal{H}$, where
$\mathcal{H},$ $\mathcal{K},$ and $\mathcal{L}$ are separable Hilbert spaces.
What is maximum overlap%
\begin{equation}
\operatorname{MO}\left(  \mu_{\mathcal{\mathcal{\mathcal{\mathcal{KH}}}}%
},M_{\mathcal{\mathcal{LH}}}\right)  =\sup_{\mathcal{R}}\operatorname*{Tr}%
_{\mathcal{\mathcal{LH}}}\left(  M_{\mathcal{\mathcal{LH}}}\mathcal{R}%
_{\mathcal{K}\rightarrow\mathcal{L}}\left(  \mu_{\mathcal{KH}}\right)
\right)  \text{,}\label{eq defining maximum overlap}%
\end{equation}
where the supremum is over all quantum operations $\mathcal{R}$ from
$\mathcal{K}$ to $\mathcal{L}$?
\end{quotation}

\noindent The maximum-overlap problem has the following important special cases:

\bigskip

\noindent\textbf{1. The minimum-error quantum detection problem} \cite{Holevo
remarks on optimal measurements, Yuen Ken Lax Optimum testing of multiple,
Holevo optimal measurement conditions 1, Barnett and Croke On the conditions
for discrimination between quantum states with minimum error}\textbf{:}

\begin{quotation}
\noindent If an unknown quantum state $\rho_{k}$ is randomly selected from
given ensemble of such states, with what probability may the value of $k$ be
determined by a carefully-chosen quantum measurement?
\end{quotation}

\noindent\textbf{2. \textbf{Approximate reversal of quantum dynamics }}%
\cite{Barnum Knill UhOh,
Schumacher Westmoreland Approximate Quantum error correction,
Jezek Fiurasek Hradil Quantum inference of states and processes,
Reimpell Werner,
Fletcher Thesis Channel Adapted quantum error correction,
Fletcher Shor Win Optimum quantum error recovery using semidefinite programming,
Reimpell Werner Audenaert,
Reimpell Thesis,
Fletcher Shor Win Channel-Adapted quantum error correction for the amplitude dampin channel,
Fletcher Shor Win Structured Near-Optimal Channel-Adapted Quantum Error Correction,
Kosut Shabani Lidar Robust Quantum error correction via convex optimization,
Taghavi Channel-Optimized quantum error correction,
Yamamoto Hara Tsumura suboptimal quantum-error-correcting procedure based on semidefinite programming,
Beny and Oreshkov general conditions for approximate quanutm error recovery and near optimal recovery channels,
Beny and Oreshkov in preparation,
Ng Mandayam simple approach to approximate QEC}%
:

\begin{quotation}
\noindent Suppose that an arbitrary quantum operation $\mathcal{A}$ acts on a
given quantum state $\rho$. How well may the action of $\mathcal{A}$ be
reversed by application of a recovery channel $\mathcal{R}$, so as to preserve
the entanglement of the original system with the environment? This problem is
one of \textbf{\textquotedblleft channel-adapted quantum error
recovery\textquotedblright}\ when the operation $\mathcal{A}$ is of the form
$\mathcal{A}=\mathcal{N}\circ\mathcal{E}$, where $\mathcal{E}$ is an encoding
operation designed to protect against a known noise process $\mathcal{N}$.
\end{quotation}

\noindent\textbf{3. Estimation of conditional min-/max-entropy of bipartite
quantum states }\cite{Renner thesis}:

\begin{quotation}
\noindent Let $\rho_{AB}$ be a bipartite quantum mixed state. Estimate the
conditional min-entropy $H_{\text{min}}\left(  A|B\right)  $ of $A$ given $B$.
\end{quotation}

Since all of these problems are believed to defy closed-form solution, the
purpose of this paper is to provide estimates. In section
\ref{section minimum error distinction a max prenorm problem} a refined proof
of the two-sided \textquotedblleft generalized
Holevo-Curlander\textquotedblright\ bounds of \cite{Tyson Holevo Curlander
Bounds, Ogawa and Nagoaka Strong converse to the quantum channel coding
theorem} for case $1$ is given. This method is extended in sections
$\ref{section max overlap}$ and \ref{section approximate channel reversals} to
yield simple two-sided estimates for cases $2$-$3$ and for $\operatorname{MO}%
\left(  \mu,M\right)  $ in the case of rank-1 $M$.

We briefly introduce each of the cases 1-3 before outlining our approach and
surveying closely-related work.

\subsection{Minimum-error detection}

The minimum-error quantum detection problem was first studied in the 1960's in
connection with the design of optical detectors \cite{Helstrom Quantum
Detection and Estimation Theory}, and it has since become of importance in
quantum Shannon theory (for example \cite{pure state HSW theorem, mixed state
HSW theorem, Holevo mixed state HSW theorem}) and in the design of quantum
algorithms
\cite{Ip Shor's algorithm is optimal,
Bacon,Childs from optimal to efficient algo, optimal alg for hidden shift,
Hayashi Kawachi Kobayashi quantum measurements for hidden subgroup problems with optimal sample complexity,
moore and russels distinguishing, Bacon new hidden subgroup, Radhakrishnan Rotteler Sen Random measurement bases quantum state distinction and the HSP,
Hunziker The geometry of quantum learning}%
. A generalization to the theory of wave pattern recognition may be found in
\cite{Belavkin Book}. Various general upper and/or lower bounds on quantum
distinguishability may be found in
\cite{pure state HSW theorem,Hayden Leung Multiparty
Hiding,Barnum Knill UhOh,Montanaro on the distinguishability of random quantum
states,Daowen Qui Minimum-error discrimination between mixed quantum
states,Montanaro a lower bound on the probability of error in quantum state
discrimination, Hayashi Kawachi Kobayashi quantum measurements for hidden subgroup problems with optimal sample complexity,
Qiu and Li bounds on the minimum error discrimination between
mixed quantum states,
Ogawa and Nagoaka Strong converse to the quantum channel coding theorem,
Tyson Holevo Curlander Bounds, Holovo Assym Opt Hyp Test, Curlander thesis MIT}%
.

The minimum-error quantum detection problem is precisely formulated by

\begin{definition}
\label{definition containing ensemble E}Let
\begin{equation}
\mathcal{E}=\left\{  \rho_{k}\right\}  _{k\in K}%
\label{a priori normed ensemble to distinguish}%
\end{equation}
be an ensemble of quantum states, represented as positive semidefinite
operators normalized by a-priori probability, setting
\begin{equation}
\operatorname*{Tr}\rho_{k}=p_{k},\label{a priori normalization}%
\end{equation}
where $p_{k}$ is the likelihood that $\rho_{k}$ will be drawn from
$\mathcal{E}$. A quantum measurement \cite{Mike and Ike} is described by a
\textbf{positive-operator-valued measure} (\textbf{POVM)}, which consists of a
vector $M=\left\{  M_{k}\right\}  _{k\in K}$ of positive semidefinite
operators satisfying $%
{\displaystyle\sum}
M_{k}\leq%
\openone
$.$^{\text{\cite{subpovm}}}$ (Throughout this paper the operator inequality
$A\leq B$ means $B-A$ is positive semidefinite.) The probability that the
value $k$ is measured when $M$ is applied to a unit-trace density matrix
$\rho$ is given by%
\[
\Pr\nolimits_{M}\left(  k\,|\,\rho\right)  =\operatorname*{Tr}M_{k}%
\rho\text{.}%
\]
The \textbf{success rate} for the POVM\ $M$ to correctly determine the value
of $k$ corresponding to a random element of the ensemble $\mathcal{E}$ is
given by%
\begin{equation}
P_{\text{succ}}\left(  M\right)  =%
{\displaystyle\sum_{k}}
~p_{k}\Pr\nolimits_{M}\left(  k\,|\,\frac{\rho_{k}}{p_{k}}\right)
=\operatorname*{Tr}%
{\displaystyle\sum_{k\in K}}
M_{k}\rho_{k}\text{.\label{formula for Psucc}}%
\end{equation}
The \textbf{minimum-error measurement problem} consists of finding a POVM
maximizing $\left(  \ref{formula for Psucc}\right)  $.
\end{definition}

\subsubsection{The relationship to \textquotedblleft
worst-case\textquotedblright\ detection}

Sometimes one is interested in the \textquotedblleft
worst-case\textquotedblright\ distinguishability
\begin{equation}
\max_{M}\min_{k}\operatorname*{Tr}%
{\displaystyle\sum}
M_{k}\hat{\rho}_{k},\label{worst case measurement}%
\end{equation}
of a collection of unit-trace states $\hat{\rho}_{k}$. As pointed out in
\cite{Harrow Winter How many copies are needed for state discrimination}, the
minimax theorem \cite{Morgenstern von Neumann minimax} implies that%
\begin{equation}
\max_{M}\min_{k}\operatorname*{Tr}\left(  M_{k}\hat{\rho}_{k}\right)
=\max_{M}\min_{\left\{  p_{k}\right\}  }\operatorname*{Tr}%
{\displaystyle\sum}
M_{k}p_{k}\hat{\rho}_{k}=\min_{\left\{  p_{k}\right\}  }\max_{M}%
\operatorname*{Tr}%
{\displaystyle\sum}
M_{k}p_{k}\hat{\rho}_{k},
\end{equation}
where $\left\{  p_{k}\right\}  $ represents a probability distribution. In
particular, single-instance bounds (for fixed $\left\{  p_{k}\right\}  $) may
in principle be minimized over all distributions $\left\{  p_{k}\right\}  $ to
give corresponding \textquotedblleft worst-case\textquotedblright\ bounds.

\subsection{Channel-adapted quantum error recovery}

The following problem is of importance in quantum information theory, quantum
communication, and quantum computing:

\begin{quotation}
\noindent\textit{Suppose that one wishes to store, process, or transmit
quantum data using a process that is subject to noise or loss. How well may
the effects of this noise be avoided, corrected, or eliminated by encoding the
data into a protected form, from which it may be later recovered unharmed by
this noise?}
\end{quotation}

This problem arises in any physical implementation of quantum communication or
computation, since unmitigated interactions with the environment tend to
corrupt quantum signals or memory. By the celebrated \textquotedblleft
threshold theorem\textquotedblright\ \cite{Aharonov Ben-Or F2, A. Kitaev
Russian Math. Surveys, Knill Laflamme Zurek, Aliferis Gosttesman Preskill,
Aharonov Kitaev Preskill}, one may in principle use error correction and
concatenated quantum codes to perform an arbitrary quantum computation in the
presence of noise below a fixed \textquotedblleft threshold\textquotedblright\ amount.

Standard quantum error correction seeks to design encoding and decoding maps
which \textit{exactly} correct for a given class of errors. Early successes of
this program were the first codes that could protect against arbitrary
single-qubit errors \cite{Shor Scheme for reducing decoherence in quantum
computer memory, Steane Error correcting codes in quantum theory, Calderbank
and Shor good quantum error-correcting codes exist}, followed by general
theoretical advances of \cite{Knill and Laflamme Theory of QEC}, and by the
construction of codes that correct for arbitrary single-qubit errors by
encoding a single qubit into five \cite{Bennett DiVincenzo Smolin Wootters
mixed state entanglement and QEC, Laflamme Miquel Paz Zurek Perfect QEC}.

Alternatively, one may consider approximate quantum error correction. For
example, Leung \textit{et al} \cite{Leung Nielsen Chuang Yamamoto} consider
relaxed error correction criteria to allow for efficient correction of a known
dominant noise process. Furthermore, Cr\'{e}peau, Gottesman, and Smith
\cite{Crepeau Gottesman Smith Approximate quantum error correcting codes and
secret sharing schemes} construct approximate error correcting codes which
asymptotically correct twice as many arbitrary local errors as would be
possible under exact error correction, even though they achieve fidelity
exponentially close to $1$ in the limit of long codes.

Under the banner of \textit{approximate channel adapted error correction}, a
number of authors
\cite{
Reimpell Werner,
Fletcher Shor Win Optimum quantum error recovery using semidefinite programming,
Fletcher Shor Win Channel-Adapted quantum error correction for the amplitude dampin channel,
Reimpell Werner Audenaert,
Reimpell Thesis,
Kosut Shabani Lidar Robust Quantum error correction via convex optimization,
Taghavi Channel-Optimized quantum error correction,
Barnum Knill UhOh,
Fletcher Thesis Channel Adapted quantum error correction,
Fletcher Shor Win Structured Near-Optimal Channel-Adapted Quantum Error Correction,
Yamamoto Hara Tsumura suboptimal quantum-error-correcting procedure based on semidefinite programming,
Ng Mandayam simple approach to approximate QEC,
Beny and Oreshkov general conditions for approximate quanutm error recovery and near optimal recovery channels,
Beny and Oreshkov in preparation}
alternatively have sought to treat quantum encoding and/or recovery as
optimization problems. Mathematically, given a \textquotedblleft
noise\textquotedblright\ channel $\mathcal{N}$ one seeks an encoding operation
$\mathcal{\xi}$ and a recovery operation $\mathcal{R}$ so that the
composition
\[
\Xi=\mathcal{R}\circ\mathcal{N}\circ\mathcal{\xi}%
\]
is as close to the identity channel as possible. Measures of \textquotedblleft
closeness\textquotedblright\ to the identity include

\begin{definition}
Let $\rho$ be a mixed quantum state over a Hilbert space $\mathcal{H}$, which
may be represented as a pure quantum state $\left\vert \psi_{\rho
}\right\rangle _{\mathcal{HE}}$ of the original system entangled with an
environment $\mathcal{E}$. The \textbf{entanglement fidelity} \cite{Schumacher
sending entanglement through noisy quantum channels} of the quantum operation
$\Xi:B^{1}\left(  \mathcal{H}\right)  \rightarrow B^{1}\left(  \mathcal{H}%
\right)  $ is given by%
\begin{equation}
F_{e}\left(  \rho,\Xi\right)  =\left\langle \psi_{\rho}\right\vert \Xi\left(
\left\vert \psi_{\rho}\right\rangle \left\langle \psi_{\rho}\right\vert
\right)  \left\vert \psi_{\rho}\right\rangle .\label{entanglement fidelity}%
\end{equation}
(Note that the choice of purification does not affect the defined quantity.)
The \textbf{channel fidelity }is the entanglement fidelity when $\rho$ is
taken to be maximally-mixed.\textbf{\ }Given a collection of states $\rho_{k}
$ with a-priori probabilities $p_{k}$, one defines the \textbf{average
entanglement fidelity }\cite{Barnum Knill UhOh}\textbf{\ }%
\begin{equation}
\bar{F}_{e}\left(  \left\{  \left(  \rho_{k},p_{k}\right)  \right\}
,\Xi\right)  =%
{\displaystyle\sum}
p_{k}F_{e}\left(  \rho_{k},\Xi\right)  \text{.}%
\label{average entanglement fidelity}%
\end{equation}

\end{definition}

Following
\cite{Barnum Knill UhOh,
Fletcher Shor Win Structured Near-Optimal Channel-Adapted Quantum Error Correction,
Fletcher Shor Win Channel-Adapted quantum error correction for the amplitude dampin channel,
Fletcher Shor Win Optimum quantum error recovery using semidefinite programming}%
, we shall fix the encoding operation $\xi$ and the noise process
$\mathcal{N}$. In particular, we focus on the problem of finding an
approximately optimal quantum recovery map, or channel reversal, for the
composed map%
\[
\mathcal{A}=\mathcal{N}\circ\mathcal{\xi}\text{,}%
\]
in the sense of entanglement fidelity.

\subsubsection{Other metrics for error
recovery\label{relationship to worst case bounds}}

A number of works have considered other measures of reversibility of quantum
channels. Kretschmann, Schlingermann, and Werner \cite{Kretschmann
Schlingemann and Werner Information distrubance tradeoff and the continuity of
stinesprings representation} have obtained two-sided bounds on the CB-norm
reversibility of channels in terms of the CB-distance between the
complementary channel and a depolarizing channel. Ng and Mandayam \cite{Ng
Mandayam simple approach to approximate QEC} have employed the transpose
channel (a special case of Barnum and Knill's \cite{Barnum Knill UhOh}
reversal) to study quantum error correction using the metric of worst-case
(non-entanglement) fidelity. Yamamoto, Hara, and Tsumura \cite{Yamamoto Hara
Tsumura suboptimal quantum-error-correcting procedure based on semidefinite
programming} considered a fixed encoding operation $\mathcal{E}$ and used
semidefinite programing to find a sub-optimal channel $\mathcal{R}$ to roughly
optimize the \textquotedblleft worst-case\textquotedblright\ entanglement
fidelity
\begin{equation}
\max_{\mathcal{R}}\min_{\rho}F_{e}\left(  \rho,\mathcal{R}\circ\mathcal{N}%
\circ\mathcal{E}\right)  .\label{minimax theorem applies}%
\end{equation}
More will be said about worst-case bounds in section
\ref{section bounds Beny Oreshkov}, below.

\subsection{Quantum conditional min- and max-entropy}

The following related quantities (and their $\varepsilon$-smooth counterparts)
are of interest in quantum cryptography (for example
\cite
{Schaffner Terhal Wehner robust cryptography in the noisy quantum storage model,
Konig Wehner Wullschleger unconditional security from noisy quantum storage,
Renner extracting classical randomness in a quantum world,
Renner thesis,
Schaffner cryptography in the bounded-quantum-storage model,
Leverrier Karpov Grangier Cerf Unconditional security of continuous variable QKD,
Konig Renner sampling of min entropy relative to quantum knowledge}%
) and/or in studies of non-identically distributed and/or non-asymptotic
problems in quantum information theory (for example
\cite{Renner Wolf Wullschleger single serving channel capacity,
Konig Renner Schaffner Operational meaning of min and max entropy,
Wehner Christandl Doherty A lower bound on the dimension of a quantum system given measured data,
Berta Single shot quantum state merging,
Berta Christandl Renner a conceptually simple proof of the quantum reverse shannon theorem}%
):

\begin{definition}
Let $\rho_{AB}$ be a bipartite density operator on $\mathcal{H}_{A}%
\otimes\mathcal{H}_{B}$. The \textbf{min-entropy of }$A$\textbf{\ conditioned
on }$B $ \cite{Renner thesis, Konig Renner Schaffner Operational meaning of
min and max entropy} is defined by%
\begin{equation}
H_{\text{min}}\left(  A|B\right)  _{\rho}:=-\log_{2}\inf_{\upsilon_{B}%
}\left\{  \left.  \operatorname*{Tr}\upsilon_{B}\mathbb{~}\right\vert
~\rho_{AB}\leq%
\openone
_{A}\otimes\upsilon_{B}\right\}  ,\label{eq defining conditional minentropy}%
\end{equation}
where the infimum ranges over positive semidefinite $\upsilon_{B}$. The
\textbf{max-entropy of A conditioned on B }\cite{Renner thesis, Konig Renner
Schaffner Operational meaning of min and max entropy} is defined by%
\begin{equation}
H_{\text{max}}\left(  A|B\right)  _{\rho}:=-H_{\text{min}}\left(  A|C\right)
_{\rho}\text{,}%
\end{equation}
where the min-entropy on the RHS is evaluated for a purification $\rho_{ABC}$
of $\rho_{AB}$. The \textbf{max-information }\cite{Berta Christandl Renner a
conceptually simple proof of the quantum reverse shannon theorem} that $B$ has
about $A$ is given by%
\[
I_{\text{max}}\left(  A:B\right)  _{\rho}=H_{\text{min}}\left(  A|B\right)
_{\rho_{A}^{-1/2}\rho_{AB}\rho_{A}^{-1/2}},
\]
where $\rho_{A}=\operatorname*{Tr}_{B}\rho_{AB}$.
\end{definition}

Estimates of $H_{\text{min}}\left(  A|B\right)  _{\rho}$ are obtained as a
corollary of our estimates for maximum overlap in conjunction with the
following recent theorem:

\begin{theorem}
[K\"{o}nig, Renner, Schaffner \cite{Konig Renner Schaffner Operational meaning
of min and max entropy}]%
\label{Theorem Konig et al operational min entropy theorem}Let the Hilbert
spaces $\mathcal{H}_{A}$ and $\mathcal{H}_{B}$ be finite-dimen\-sional. Then
the min-entropy of $A$ conditioned on $B$ for the state $\rho_{AB}$ may be
expressed as%
\begin{equation}
H_{\text{min}}\left(  A|B\right)  _{\rho}=-\log\left(  \dim\left(
\mathcal{H}_{A}\right)  \sup_{\mathcal{R}}\left(  \left\langle \Phi
_{AA^{\prime}}\right\vert \mathcal{R}_{B\rightarrow A^{\prime}}\left(
\rho_{AB}\right)  \left\vert \Phi_{AA^{\prime}}\right\rangle \right)  \right)
,\label{eq min entropy as overlap}%
\end{equation}
where $\Phi_{AA^{\prime}}$ is a bipartite maximally-entangled state between
$A$ and reference system $A^{\prime}\simeq A$, and where the supremum is over
quantum operations from $B$ to $A^{\prime}$.
\end{theorem}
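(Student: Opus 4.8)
The plan is to realize both sides of (\ref{eq min entropy as overlap}) as the optimal values of a semidefinite program and its dual, linked by the Choi--Jamio\l{}kowski isomorphism; throughout I write $d:=\dim\mathcal H_A$ and use the finite-dimensionality hypothesis. By the defining equation (\ref{eq defining conditional minentropy}),
\[
2^{-H_{\min}(A|B)_\rho}=\inf\left\{\operatorname{Tr}\upsilon_B\ :\ \upsilon_B\ge 0,\ \openone_A\otimes\upsilon_B\ge\rho_{AB}\right\}.
\]
The constraint $\upsilon_B\ge 0$ is in fact redundant: pairing $\rho_{AB}\le\openone_A\otimes\upsilon_B$ with a product vector $|a\rangle_A|b\rangle_B$ gives $\langle b|\upsilon_B|b\rangle\ge\langle a,b|\rho_{AB}|a,b\rangle\ge 0$. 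So the right-hand side is the value of a genuine SDP in the single variable $\upsilon_B$.

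First I would form the SDP dual. Introducing a positive semidefinite multiplier $X_{AB}$ for the constraint $\openone_A\otimes\upsilon_B-\rho_{AB}\ge 0$ and using $\operatorname{Tr}(X_{AB}(\openone_A\otimes\upsilon_B))=\operatorname{Tr}((\operatorname{Tr}_A X_{AB})\upsilon_B)$, the infimum of the Lagrangian over Hermitian $\upsilon_B$ is finite precisely when $\operatorname{Tr}_A X_{AB}=\openone_B$, where it equals $\operatorname{Tr}(X_{AB}\rho_{AB})$. The dual program is therefore
\[
\sup\left\{\operatorname{Tr}(X_{AB}\rho_{AB})\ :\ X_{AB}\ge 0,\ \operatorname{Tr}_A X_{AB}=\openone_B\right\}.
\]
Weak duality, $\operatorname{Tr}(X_{AB}\rho_{AB})\le\operatorname{Tr}\upsilon_B$, is immediate from $\operatorname{Tr}(X_{AB}(\openone_A\otimes\upsilon_B-\rho_{AB}))\ge 0$ together with $\operatorname{Tr}_A X_{AB}=\openone_B$. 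Since $\upsilon_B=c\,\openone_B$ (for $c>\|\rho_{AB}\|$) and $X_{AB}=\openone_{AB}/d$ are strictly feasible for the two programs, Slater's condition yields strong duality and attainment, so both optimal values equal $2^{-H_{\min}(A|B)_\rho}$.

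Next I would identify the dual-feasible set with quantum channels. An operator $X_{AB}\ge 0$ with $\operatorname{Tr}_A X_{AB}=\openone_B$ is exactly the Choi operator of a trace-preserving operation $\mathcal R_{B\to A'}$ under the identification $A'\simeq A$: diagonalizing $X_{AB}=\sum_j|\eta_j\rangle\langle\eta_j|$ and defining Kraus operators $R_j:\mathcal H_B\to\mathcal H_{A'}$ by $\langle a|R_j|b\rangle=\overline{(\langle a|_A\langle b|_B)|\eta_j\rangle}$, a direct index computation turns $\operatorname{Tr}_A X_{AB}=\openone_B$ into the trace-preservation identity $\sum_j R_j^\dagger R_j=\openone_B$, while complete positivity is automatic; conversely every channel's Kraus data yields such an $X_{AB}$. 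This makes the supremum over $X_{AB}$ a supremum over all channels $\mathcal R_{B\to A'}$. It then remains to match the objective functions. Using the ``ricochet'' identity for the maximally entangled state, $\sqrt d\,(\openone_A\otimes R_j^\dagger)|\Phi_{AA'}\rangle=|\eta_j\rangle_{AB}$, one finds
\[
\langle\Phi_{AA'}|(\openone_A\otimes R_j)\,\rho_{AB}\,(\openone_A\otimes R_j^\dagger)|\Phi_{AA'}\rangle=\tfrac1d\langle\eta_j|\rho_{AB}|\eta_j\rangle,
\]
and summing over $j$ gives $\operatorname{Tr}(X_{AB}\rho_{AB})=d\,\langle\Phi_{AA'}|\mathcal R_{B\to A'}(\rho_{AB})|\Phi_{AA'}\rangle$. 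Substituting this into the strong-duality identity and taking $-\log_2$ of both sides produces exactly (\ref{eq min entropy as overlap}). I expect the only real obstacles to be bookkeeping rather than conceptual: verifying Slater's condition and attainment for the duality step, and tracking the complex conjugation and transpose in the Choi correspondence carefully enough that trace preservation lines up with $\operatorname{Tr}_A X_{AB}=\openone_B$ and the prefactor $d=\dim\mathcal H_A$ emerges with the correct normalization of $|\Phi_{AA'}\rangle$.
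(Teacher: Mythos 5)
Your proof is correct, but note that this paper never proves Theorem \ref{Theorem Konig et al operational min entropy theorem}: it is imported verbatim from K\"{o}nig--Renner--Schaffner \cite{Konig Renner Schaffner Operational meaning of min and max entropy} and used as a black box (in Corollary \ref{corollary bounding min entropy} and Appendix C), so the only meaningful comparison is with the cited source --- and there your route is essentially the original one. K\"{o}nig, Renner, and Schaffner likewise read $2^{-H_{\text{min}}\left(  A|B\right)  _{\rho}}$ off from the semidefinite program $\inf\left\{  \operatorname*{Tr}\upsilon_{B}:\openone_{A}\otimes\upsilon_{B}\geq\rho_{AB}\right\}  $, pass to the dual $\sup\left\{  \operatorname*{Tr}\left(  X_{AB}\rho_{AB}\right)  :X_{AB}\geq0,\ \operatorname*{Tr}_{A}X_{AB}=\openone_{B}\right\}  $ with strong duality supplied by Slater's condition, and identify the dual-feasible set with Choi matrices of channels $B\rightarrow A^{\prime}$, the prefactor $\dim\mathcal{H}_{A}$ emerging from the normalization of $\Phi_{AA^{\prime}}$ exactly as in your ricochet computation; your derivations of the dual, of the redundancy of $\upsilon_{B}\geq0$, and of the identity $\operatorname*{Tr}\left(  X_{AB}\rho_{AB}\right)  =\dim\left(  \mathcal{H}_{A}\right)  \left\langle \Phi_{AA^{\prime}}\right\vert \mathcal{R}_{B\rightarrow A^{\prime}}\left(  \rho_{AB}\right)  \left\vert \Phi_{AA^{\prime}}\right\rangle $ all check out. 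One loose end deserves a line: the theorem's supremum runs over \emph{quantum operations}, which in this paper's terminology means trace \emph{non-increasing} CP maps, whereas your dual-feasible set corresponds to trace-preserving ones. This is harmless: an operation's Choi matrix satisfies $\operatorname*{Tr}_{A}X_{AB}\leq\openone_{B}$, and your weak-duality estimate $\operatorname*{Tr}\left(  X_{AB}\rho_{AB}\right)  \leq\operatorname*{Tr}\left(  \left(  \operatorname*{Tr}\nolimits_{A}X_{AB}\right)  \upsilon_{B}\right)  \leq\operatorname*{Tr}\upsilon_{B}$ still holds because $\upsilon_{B}\geq0$, so the supremum over operations is squeezed between the supremum over channels and the primal value and all three coincide --- but as written your argument only covers the channel case.
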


\subsection{Directional Iterates: An abstract approach for deriving estimates
\label{abstract section}}

The first step in proving our estimates will be to recast all of the problems
of the first section as instances of

\begin{problem}
[Maximal seminorm problem]Let $S$ be a subset of a real or complex
semidefinite inner product space $V$. Find a maximal-seminorm element of $S$.
(A semidefinite inner product has all the usual properties of an inner
product, except that one may have $\left\langle x,x\right\rangle =0$ for
nonzero $x$.)
\end{problem}

The following generalization of the iterative schemes of \cite{Jezek Rehacek
and Fiurasek Finding optimal strategies for minimum error quantum state
discrimination, Jezek Fiurasek Hradil Quantum inference of states and
processes, Reimpell Thesis, Reimpell Werner}, will prove useful for analyzing
this class of problems:

\begin{definition}
\label{def of directional iterate}An \textbf{abstract
Je\v{z}ek-\v{R}eh\'{a}\v{c}ek-Fiur\'{a}\v{s}ek-Hradil-Reimpell-Werner iterate}
of $g\in V$ is an element $g^{\left(  +\right)  }\in S$ which maximizes
$\operatorname{Re}\left\langle g^{\left(  +\right)  },g\right\rangle $. Such
$g^{\left(  +\right)  }$ will also be called \textbf{directional iterates}.
\end{definition}%

\[
\fbox{$%
\begin{array}
[c]{cc}%
{\parbox[b]{2.0297in}{\begin{center}
\includegraphics[
height=1.3396in,
width=2.0297in
]%
{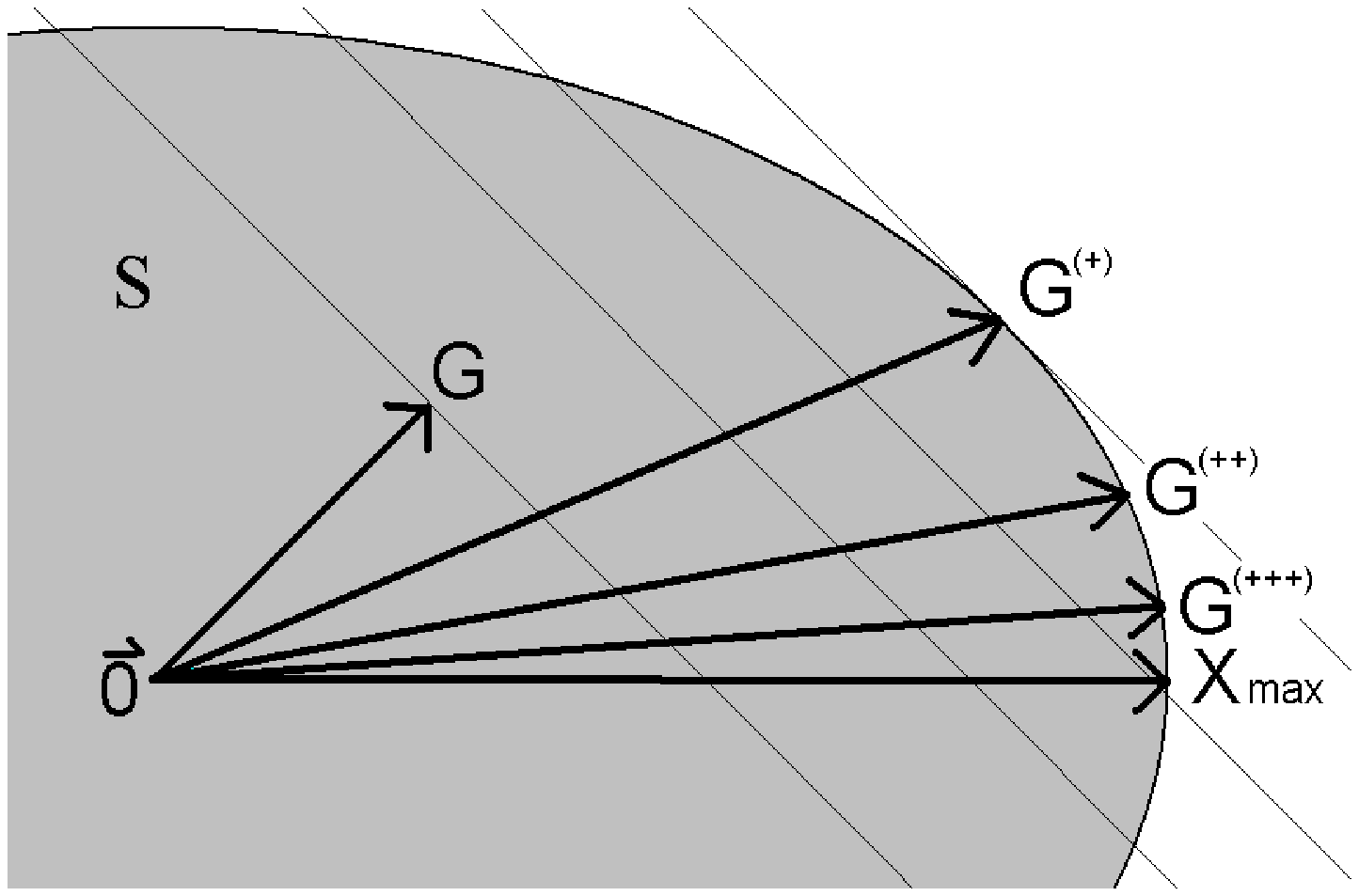}%
\\
{}%
\end{center}}}%
&
{\parbox[b]{1.7824in}{\begin{center}
\includegraphics[
height=1.3854in,
width=1.7824in
]%
{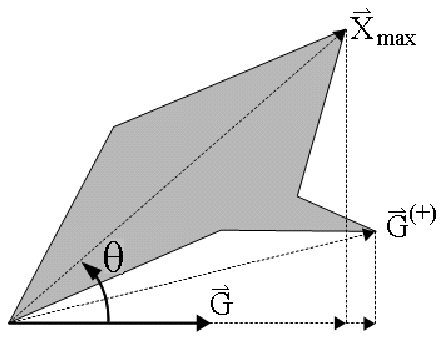}%
\\
{}%
\end{center}}}%
\\%
\begin{array}
[t]{l}%
\text{\textbf{Fig 1a:} \textit{Iterates converging on }}x_{\text{max}}%
\text{.}\\
\text{\textit{(contours drawn orthogonal to} }G\text{\textit{.)}}%
\end{array}
&
\begin{array}
[t]{l}%
\text{\textbf{Fig 1b: }\textit{Note that} }\\
\left\Vert X_{\text{max}}\right\Vert \geq\left\Vert G^{\left(  +\right)
}\right\Vert \geq\left\Vert \Pi G^{\left(  +\right)  }\right\Vert
\geq\left\Vert \Pi X_{\text{max}}\right\Vert \text{,}\\
\text{\textit{with approximate equality for reasonably small} }\theta\text{.
}\\
\text{\textit{Here }}\Pi=\left\vert G\right\rangle \left\langle G\right\vert
\text{, \textit{and }}V\text{ \textit{has real scalars. A }}\\
\text{\textit{complex variant appears as inequality} }%
\ref{key abstract estimate}\text{.}%
\end{array}
\end{array}
$}%
\]

\pagebreak

Useful properties of these iterates are given by

\begin{lemma}
[Geometric properties of directional iterates]%
\label{directional iterate lemma}Suppose that $S\subseteq V$ has a
maximal-seminorm vector $x_{\text{max}}$, and assume that each $g\in V$ admits
a directional iterate $g^{\left(  +\right)  }$. Then

\begin{enumerate}
\item[\textbf{G1.}] One has the following inequalities%
\begin{equation}
\left\Vert x_{\text{max}}\right\Vert \geq\left\Vert g^{\left(  +\right)
}\right\Vert \geq\Lambda\left(  g\right)  \geq\left\Vert x_{\text{max}%
}\right\Vert \cos\left(  \theta\right)  \text{,}\label{key abstract estimate}%
\end{equation}
where%
\begin{align}
\Lambda\left(  g\right)   &  :=\operatorname{Re}\left\langle g^{\left(
+\right)  },\frac{g}{\left\Vert g\right\Vert }\right\rangle
\label{def of lamda in abstract case}\\
\cos\theta &  :=\operatorname{Re}\frac{\left\langle g,x_{\text{max}%
}\right\rangle }{\left\Vert g\right\Vert \left\Vert x_{\text{max}}\right\Vert
}\text{.}\label{def of angle theta in abstract case}%
\end{align}

\item[\textbf{G2.}] The map $g\mapsto g^{\left(  +\right)  }$ is
seminorm-increasing on $S$. In particular, if $g\in S$ then
\begin{equation}
\left\Vert g^{\left(  +\right)  }\right\Vert ^{2}\geq\left\Vert g\right\Vert
^{2}+\left\Vert g^{\left(  +\right)  }-g\right\Vert ^{2}\text{.}%
\label{iteration increase inequality}%
\end{equation}

\end{enumerate}
\end{lemma}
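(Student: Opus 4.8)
The plan is to derive both \textbf{G1} and \textbf{G2} directly from the two structural facts already available: the optimality of the directional iterate $g^{(+)}$, which by Definition \ref{def of directional iterate} maximizes $\operatorname{Re}\langle \cdot, g\rangle$ over all of $S$, and the maximality of $x_{\text{max}}$ among seminorms of elements of $S$. The only analytic input beyond these is the Cauchy--Schwarz inequality, and I would first record that it continues to hold in a merely semidefinite inner product space: the standard argument of minimizing $\langle x-\lambda y, x-\lambda y\rangle \geq 0$ over scalars $\lambda$ goes through, with the degenerate case $\Vert y\Vert = 0$ handled separately by noting that $\langle x, y\rangle$ must then vanish, since otherwise $\operatorname{Re}(\bar\lambda\langle x,y\rangle)$ would be unbounded as $\lambda$ varies. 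Granting this, no further machinery is required. (Throughout, $g$ is implicitly assumed to have nonzero seminorm, so that $g/\Vert g\Vert$ and the defining quotients make sense.)

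For \textbf{G1}, I would verify the chain of four quantities from left to right. The first inequality $\Vert x_{\text{max}}\Vert \geq \Vert g^{(+)}\Vert$ is immediate, since $g^{(+)} \in S$ while $x_{\text{max}}$ has maximal seminorm in $S$. The second inequality $\Vert g^{(+)}\Vert \geq \Lambda(g)$ is Cauchy--Schwarz applied to $\Lambda(g) = \operatorname{Re}\langle g^{(+)}, g/\Vert g\Vert\rangle$, using that $g/\Vert g\Vert$ is a unit vector. For the third inequality, I would invoke optimality: because $g^{(+)}$ maximizes $\operatorname{Re}\langle \cdot, g\rangle$ over $S$ and $x_{\text{max}} \in S$, one has $\operatorname{Re}\langle g^{(+)}, g\rangle \geq \operatorname{Re}\langle x_{\text{max}}, g\rangle$; dividing by $\Vert g\Vert$ and using $\operatorname{Re}\langle x_{\text{max}}, g\rangle = \operatorname{Re}\langle g, x_{\text{max}}\rangle$ recovers exactly $\Vert x_{\text{max}}\Vert\cos\theta$ as defined in \eqref{def of angle theta in abstract case}.

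For \textbf{G2}, the idea is to expand the target inequality and reduce it to the very same optimality property. Writing $\Vert g^{(+)} - g\Vert^2 = \Vert g^{(+)}\Vert^2 - 2\operatorname{Re}\langle g^{(+)}, g\rangle + \Vert g\Vert^2$, the claimed bound \eqref{iteration increase inequality} becomes, after cancellation of $\Vert g^{(+)}\Vert^2$, equivalent to $\operatorname{Re}\langle g^{(+)}, g\rangle \geq \Vert g\Vert^2$. Since the hypothesis of \textbf{G2} is that $g$ itself lies in $S$, I would use $g$ as a competitor in the defining optimization of $g^{(+)}$: optimality gives $\operatorname{Re}\langle g^{(+)}, g\rangle \geq \operatorname{Re}\langle g, g\rangle = \Vert g\Vert^2$, which is precisely what is needed. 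Seminorm-monotonicity $\Vert g^{(+)}\Vert \geq \Vert g\Vert$ then follows immediately, the nonnegative term $\Vert g^{(+)} - g\Vert^2$ quantifying the strict increase.

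The genuinely routine part here is the algebra; the only place that demands a moment's care is the semidefinite Cauchy--Schwarz step, since in a degenerate inner product one cannot simply normalize and must separately rule out the null-vector obstruction. I would therefore state this Cauchy--Schwarz fact explicitly at the outset, so that both the second inequality of \textbf{G1} and the implicit normalizations elsewhere are justified uniformly.
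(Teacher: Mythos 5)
Your proposal is correct and follows essentially the same route as the paper: the same chain for \textbf{G1} (trivial maximality, Schwarz, then optimality of $g^{(+)}$ against the competitor $x_{\text{max}}$), and the same expansion-plus-optimality argument for \textbf{G2} using $g\in S$ as its own competitor. Your explicit verification that Cauchy--Schwarz survives in a semidefinite inner product space is a careful touch the paper leaves implicit, but it does not change the argument.
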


\noindent\textbf{Note:} The importance of property $G1$ is this: \textit{If
one can construct a guess }$g$\textit{\ subtending a reasonably small angle
with }$x_{\text{max}}$\textit{\ then both }$\Lambda\left(  g\right)
$\textit{\ and }$\left\Vert g^{\left(  +\right)  }\right\Vert $\textit{\ are
reasonably good estimates for }$\left\Vert x_{\text{max}}\right\Vert
$\textit{. }(Note that although $\left\Vert g^{\left(  +\right)  }\right\Vert
$ is a closer approximation to $\left\Vert x_{\text{max}}\right\Vert $, in our
applications\textit{\ }$\Lambda\left(  g\right)  $ will have a much simpler expression.)

\bigskip

\begin{proof}
To prove property $G1$, note that
\[
\left\Vert x_{\text{max}}\right\Vert \geq\left\Vert g^{\left(  +\right)
}\right\Vert \geq\frac{\operatorname{Re}\left\langle g^{\left(  +\right)
},g\right\rangle }{\left\Vert g\right\Vert }\geq\frac{\operatorname{Re}%
\left\langle x_{\text{max}},g\right\rangle }{\left\Vert g\right\Vert
}=\left\Vert x_{\text{max}}\right\Vert \cos\left(  \theta\right)  \text{.}%
\]
The first inequality is trivial, the second is Schwarz's, and the third is by
the definition of $g^{\left(  +\right)  }$.

To prove property $G2$, write
\[
\left\Vert g^{\left(  +\right)  }\right\Vert ^{2}=\left\Vert g^{\left(
+\right)  }-g\right\Vert ^{2}+\left\Vert g\right\Vert ^{2}+2\operatorname{Re}%
\left(  \left\langle g^{\left(  +\right)  },g\right\rangle -\left\langle
g,g\right\rangle \right)  \text{.}%
\]
The last term on the RHS is nonnegative by the definition of $g^{\left(
+\right)  }$.
\end{proof}

\medskip

We now may set forth the following:

\bigskip

\noindent%
\begin{boxedminipage}{5.8in}%

\smallskip

\noindent\textbf{General strategy for estimating maximal seminorms:}

\begin{enumerate}
\item \textit{Find a \textquotedblleft small angle guess\textquotedblright%
}\ $g$\textit{, such that the angle defined by }$\left(
\ref{def of angle theta in abstract case}\right)  $\textit{\ is provably small
in\hspace*{0.15in}\ }\newline\textit{some approximate sense.}

\item \textit{Obtain two-sided bounds for }$\left\Vert x_{\text{max}%
}\right\Vert $ \textit{using this bound on} $\theta$ \textit{in conjunction
with }$\left(  \ref{key abstract estimate}\right)  $.

\item \textit{Make this bound explicit by computing }$g^{\left(  +\right)  }%
$\textit{\ and }$\Lambda\left(  g\right)  $\textit{.}\smallskip
\end{enumerate}%

\end{boxedminipage}%

\bigskip

By property $G2$ of Lemma \ref{directional iterate lemma}, one may have some
hope of obtaining a maximal element as the limit of repeated iteration, as
occurs in Fig. 1a. In sections \ref{section JRF iteration from POVMs}%
-\ref{section reimpell werner iteration} we review numerical schemes in the
literature which may be seen as examples of this process. (These sections may
be skimmed on first reading.)

\subsubsection{Example 1: Je\v{z}ek-\v{R}eh\'{a}\v{c}ek-Fiur\'{a}\v{s}ek
iteration for POVMs\label{section JRF iteration from POVMs}}

Je\v{z}ek, \v{R}eh\'{a}\v{c}ek, and Fiur\'{a}\v{s}ek (JRF) \cite{Jezek Rehacek
and Fiurasek Finding optimal strategies for minimum error quantum state
discrimination, Hradil et al Maximum Likelihood methods in quantum mechanics}
proposed an \textit{unproven} numerical method for computing optimal
POVMs,$^{\text{\cite{other numerical schemes}}}$ using iteration of the
mapping $M\mapsto M^{\left(  \oplus\right)  }$ given by

\begin{definition}
\label{def of JRF iteration}The
\textbf{Je\v{z}ek-\v{R}eh\'{a}\v{c}ek-Fiur\'{a}\v{s}ek (JRF) iterate of a POVM
}$M=\left\{  M_{k}\right\}  _{k\in K}$ \cite{Jezek Rehacek and Fiurasek
Finding optimal strategies for minimum error quantum state discrimination,
Hradil et al Maximum Likelihood methods in quantum mechanics} is the POVM
defined by%
\begin{equation}
M_{k}^{\left(  \oplus\right)  }=\left(
{\displaystyle\sum_{\ell\in K}}
\rho_{\ell}M_{\ell}\rho_{\ell}\right)  ^{-1/2^{+}}\rho_{k}M_{k}\rho_{k}\left(
%
{\displaystyle\sum_{\ell\in K}}
\rho_{\ell}M_{\ell}\rho_{\ell}\right)  ^{-1/2^{+}}\text{.}%
\label{eq JRF successor}%
\end{equation}
Here the negative matrix power is defined by%
\begin{equation}
A^{-s^{+}}=%
{\displaystyle\sum_{\lambda_{j}>0}}
\lambda_{j}^{-s}\Pi_{j}\label{eq defining minus 1/2 plus exponent}%
\end{equation}
for $s\geq0$ and self-adjoint $A$ with spectral decomposition $A=%
{\textstyle\sum}
\lambda_{j}\Pi_{j}$.
\end{definition}

Je\v{z}ek, \v{R}eh\'{a}\v{c}ek, and Fiur\'{a}\v{s}ek made the following:%

\newtheorem{numericalobs}[theorem]{Numerical Observation}
\begin{numericalobs}
[JRF \cite
{ Jezek Rehacek and Fiurasek Finding optimal strategies for minimum error quantum state discrimination,
Hradil et al Maximum Likelihood methods in quantum mechanics}]%
JRF iteration monotonically increases success rate:%
\[
P_{\text{succ}}\left(  M^{\left(  \oplus\right)  }\right)  \geq P_{\text{succ}%
}\left(  M\right)  .
\]
Furthermore, iteration of this map starting from $\left\{  M_{k}=%
\openone
\right\}  $ converges to an optimal measurement%
\begin{equation}
\lim_{j\rightarrow\infty}P_{\text{succ}}\left(  M^{\left(  \oplus\right)
^{j}}\right)  =P_{\text{succ}}\left(  M^{\text{opt}}\right)  \text{.}%
\end{equation}%
\end{numericalobs}%

In section \ref{section JRF iteration revisited}, JRF iteration is exhibited
as a disguised form of directional iteration. JRF's numerically-observed
monotonicity then follows immediately from property $G2$ of lemma
\ref{directional iterate lemma}.

\subsubsection{Example 2: Je\v{z}ek-Fiur\'{a}\v{s}ek-Hradil and
Reimpell-Werner iterates \label{section reimpell werner iteration}}

Je\v{z}ek, Fiur\'{a}\v{s}ek, and Hradil (JFH) \cite{Jezek Fiurasek Hradil
Quantum inference of states and processes, Hradil et al Maximum Likelihood
methods in quantum mechanics} proposed an \textit{unproven} numerical scheme
for the maximum-likelihood problem \cite{Hradil et al Maximum Likelihood
methods in quantum mechanics, tomography1, Sacchi maximum likelihood
reconstruction of CP maps, tomography2} in quantum process tomography, which
contains the maximum-overlap problem $\left(
\ref{eq defining maximum overlap}\right)  $ as a special
case.$^{\text{\cite{maximum overlap as tomography}}}$

Reimpell and Werner \cite{Reimpell Werner, Reimpell Thesis} introduced a mild
generalization of this special case of JRH's algorithm, for use in finding
maximizers of the following:

\begin{definition}
A \textbf{Reimpell-Werner functional }\cite{Reimpell Werner, Reimpell Thesis}
$\mathcal{R}\mapsto f\left(  \mathcal{R}\right)  $ is a linear functional of
linear transformations $\mathcal{R}:B^{1}\left(  \mathcal{K}\right)
\rightarrow B^{1}\left(  \mathcal{L}\right)  $ such that $f\left(
\mathcal{R}\right)  \geq0$ for all completely positive $\mathcal{R}$.
\end{definition}

Reimpell and Werner were interested in the special cases of approximate
quantum error recovery and quantum encoding in the sense of channel fidelity.
In particular, setting%
\begin{equation}
f_{\mathcal{N}}\left(  \mathcal{E},\mathcal{R}\right)  =F_{e}\left(
\openone
/\dim\mathcal{H},\mathcal{R}\circ\mathcal{N}\circ\mathcal{E}\right)  \text{,}%
\end{equation}
where $\mathcal{N}$ is a known noise map, they alternatively optimized the
encoder $\mathcal{E}$ and decoder $\mathcal{R}$ in a seesaw fashion.

By analogy with the matrix-power method \cite{Reimpell Werner, Reimpell
Thesis}, they proposed an unproven numerical method for maximizing $f\left(
\mathcal{R}\right)  $ by iteration of the following map:

\begin{definition}
\label{def of reimpell werner iterates}Let $\mathcal{L}$ and $\mathcal{K}$ be
finite-dimensional, and represent the Reimpell-Werner functional $f$ as
\begin{equation}
f\left(  \mathcal{R}\right)  =\operatorname*{Tr}_{\mathcal{LK}^{\ast}}\left(
F\mathcal{\tilde{R}}\right)  \text{,}%
\label{eq reimpell functional represented by F}%
\end{equation}
where $\mathcal{\tilde{R}}\in B^{1}\left(  \mathcal{L}\otimes\mathcal{K}%
^{\ast}\right)  $ is the Choi matrix of $\mathcal{R}$ (see Definition
\ref{definition of canonical purif of state}) and $F$ is a positive operator
on $\mathcal{LK}^{\ast}$. The \textbf{Reimpell-Werner iterate} $\mathcal{R}%
^{\oplus}$ of $\mathcal{R}$ \cite{Reimpell Werner, Reimpell Thesis} is the
quantum operation with Choi matrix%
\begin{equation}
\widetilde{\mathcal{R}}^{\oplus}=\Gamma^{-1/2^{+}}~F\tilde{R}F~\Gamma
^{-1/2^{+}}\text{,}\label{Reimpell def of iterate}%
\end{equation}
where $\Gamma:\mathcal{LK}^{\ast}\rightarrow\mathcal{LK}^{\ast}$ is given by%
\begin{equation}
\Gamma=%
\openone
_{\mathcal{L}}\otimes\operatorname*{Tr}_{\mathcal{L}}\left(  F\tilde
{R}F\right)  \text{.}\label{Reimpell def of gamma for iterate}%
\end{equation}

\end{definition}

Reimpell \cite{Reimpell Thesis} proved the monotonicity property $f\left(
\mathcal{R}^{\oplus}\right)  \geq f\left(  \mathcal{R}\right)  $ using a
clever matrix analysis argument. In particular, the optimal map $\mathcal{R}$
is a fixed point of this iteration.

In Appendix B we show that Reimpell-Werner iteration (and the special case of
restricted JRH iteration) may be viewed as directional iteration on the
corresponding space of Stinespring dilations. In particular, Reimpell's
monotonicity result is exhibited as a special case of Lemma
\ref{directional iterate lemma}.

\subsection{Relevant existing bounds, suboptimal measurements, and approximate
reversals\label{section reviewing measurements}}

\subsubsection{Quadratic measurements and Generalized Holevo-Curlander bounds
\label{section generalized holevo curlander in intro}}

\begin{definition}
Let $\mathcal{E}=\left\{  p_{k}\left\vert \psi_{k}\right\rangle \left\langle
\psi_{k}\right\vert \right\}  _{k\in K}$ be an ensemble of pure states. Then
\textbf{Holevo's pure state measurement} \textbf{\cite{Holovo Assym Opt Hyp
Test}} is given by $M_{k}=\left\vert e_{k}\right\rangle \left\langle
e_{k}\right\vert $, where%
\begin{equation}
e_{k}=e_{k}^{\text{Holevo}}:=\left(
{\displaystyle\sum}
p_{k}^{2}\left\vert \psi_{k}\right\rangle \left\langle \psi_{k}\right\vert
\right)  ^{-1/2^{+}}p_{k}\psi_{k}\text{.}\label{holevo's pure state meas}%
\end{equation}

\end{definition}

Holevo constructed this measurement using an approximate minimal principle,
and proved

\begin{theorem}
[Holevo's asymptotic optimality theorem \cite{Holovo Assym Opt Hyp Test}%
]Holevo's measurement is \linebreak asymptotically-optimal for distinguishing
pure states in the sense that for fixed probabilities $\left\{  p_{k}\right\}
$ one has%
\begin{equation}
\frac{P_{\text{fail}}\left(  \left\{  e_{k}^{\text{Holevo}}\right\}  \right)
}{P_{\text{fail}}^{\text{optimal}}}\rightarrow1\label{limit holevo asymp}%
\end{equation}
as the $\psi_{k}$ are varied so that $\left\langle \psi_{i},\psi
_{j}\right\rangle \rightarrow\delta_{ij}$. Here $P_{\text{fail}}%
=1-P_{\text{succ}}$ represents the failure rate.
\end{theorem}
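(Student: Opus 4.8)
The plan is to run the three-step \textquotedblleft general strategy\textquotedblright\ of Section~\ref{abstract section} on a seminorm encoding in which Holevo's measurement is \emph{literally} a directional iterate. First I would recast pure-state detection as a maximal-seminorm problem: on $V=\ell^{2}(K,\mathcal{H})$ impose the semidefinite inner product $\langle x,y\rangle_{\mathcal{E}}=\sum_{k}\langle x_{k},\rho_{k}y_{k}\rangle$, which is degenerate since each $\rho_{k}=p_{k}\left\vert \psi_{k}\right\rangle \left\langle \psi_{k}\right\vert $ is rank one, and let $S$ be the rank-one sub-POVMs, i.e.\ the tuples $x=(x_{k})$ with $\sum_{k}\left\vert x_{k}\right\rangle \left\langle x_{k}\right\vert \leq\openone$. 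Then $\left\Vert x\right\Vert _{\mathcal{E}}^{2}=\sum_{k}p_{k}\left\vert \left\langle \psi_{k},x_{k}\right\rangle \right\vert ^{2}=P_{\text{succ}}$, so that a maximal-seminorm element $x_{\text{max}}$ is an optimal measurement and $\left\Vert x_{\text{max}}\right\Vert _{\mathcal{E}}^{2}=P_{\text{succ}}^{\text{optimal}}$ (a pure-state ensemble always admits a rank-one optimal POVM).

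Next I would identify $\{e_{k}^{\text{Holevo}}\}$ as the directional iterate of the guess $g=(\psi_{k})_{k}$. By Definition~\ref{def of directional iterate}, $g^{(+)}$ maximizes $\operatorname{Re}\left\langle g^{(+)},g\right\rangle _{\mathcal{E}}=\operatorname{Re}\sum_{k}\left\langle e_{k},p_{k}\psi_{k}\right\rangle $ over $S$. Setting $\Phi=\sum_{k}p_{k}^{2}\left\vert \psi_{k}\right\rangle \left\langle \psi_{k}\right\vert $ and factoring $\sum_{k}\left\vert e_{k}\right\rangle \left\langle p_{k}\psi_{k}\right\vert =AC^{\ast}$, where $AA^{\ast}=\sum_{k}\left\vert e_{k}\right\rangle \left\langle e_{k}\right\vert \leq\openone$ and $CC^{\ast}=\Phi$, the bound $\left\vert \operatorname{Tr}(AC^{\ast})\right\vert \leq\left\Vert A\right\Vert _{\infty}\left\Vert C\right\Vert _{1}\leq\operatorname{Tr}\sqrt{\Phi}$ caps the functional at $\operatorname{Tr}\sqrt{\Phi}$, and the choice $e_{k}=\Phi^{-1/2^{+}}p_{k}\psi_{k}=e_{k}^{\text{Holevo}}$ attains it. Hence $g^{(+)}=\{e_{k}^{\text{Holevo}}\}$, $\left\Vert g\right\Vert _{\mathcal{E}}=1$ (because $\sum_{k}p_{k}=1$), $\Lambda(g)=\operatorname{Tr}\sqrt{\Phi}$, and $\left\Vert g^{(+)}\right\Vert _{\mathcal{E}}^{2}=P_{\text{succ}}(\{e_{k}^{\text{Holevo}}\})$.

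Feeding these quantities into property~G1 of Lemma~\ref{directional iterate lemma} yields $(\operatorname{Tr}\sqrt{\Phi})^{2}\leq P_{\text{succ}}(\{e_{k}^{\text{Holevo}}\})\leq P_{\text{succ}}^{\text{optimal}}\leq(\operatorname{Tr}\sqrt{\Phi})^{2}\sec^{2}\theta$. Passing to failure rates and using $P_{\text{succ}}^{\text{optimal}}\geq P_{\text{succ}}(\{e_{k}^{\text{Holevo}}\})$ gives $1\leq P_{\text{fail}}(\{e_{k}^{\text{Holevo}}\})/P_{\text{fail}}^{\text{optimal}}\leq 1+(\operatorname{Tr}\sqrt{\Phi})^{2}\tan^{2}\theta/P_{\text{fail}}^{\text{optimal}}$. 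Since $\operatorname{Tr}\sqrt{\Phi}\to 1$ automatically as $\left\langle \psi_{i},\psi_{j}\right\rangle \to\delta_{ij}$, the whole theorem collapses to the single asymptotic estimate $(\operatorname{Tr}\sqrt{\Phi})^{2}\tan^{2}\theta=o(P_{\text{fail}}^{\text{optimal}})$.

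The hard part is precisely this angle estimate, and it must be obtained \emph{a priori}, without reference to the unknown optimizer $x_{\text{max}}$. The mechanism I expect to drive it is that the guess $g=(\psi_{k})$ already matches the optimizer to first order: the optimal measurement vectors tilt away from $\psi_{k}$ only into the span of the remaining states, so with $b_{k}:=\left\langle \psi_{k},x_{\text{max},k}\right\rangle $ (real after fixing phases) one has $b_{k}=1+O(\epsilon^{2})$, where $\epsilon=\max_{i\neq j}\left\vert \left\langle \psi_{i},\psi_{j}\right\rangle \right\vert $, whence $\sin^{2}\theta=1-(\sum_{k}p_{k}b_{k})^{2}/(\sum_{k}p_{k}b_{k}^{2})=O(\epsilon^{4})$, genuinely smaller than the order-$\epsilon^{2}$ quantity $P_{\text{fail}}^{\text{optimal}}$. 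Turning this heuristic into a proof means replacing the appeal to $x_{\text{max}}$ by a bound on $\cos\theta$ in terms of the overlap Gram matrix alone, i.e.\ proving the generalized Holevo--Curlander upper bound $P_{\text{succ}}^{\text{optimal}}\leq(\operatorname{Tr}\sqrt{\Phi})^{2}\sec^{2}\theta$ together with an explicit ensemble-dependent lower bound on $\cos\theta$ whose deficit from $1$ is $O(\epsilon^{4})$; this a-priori control of $\theta$ is the crux, and everything preceding it is bookkeeping.
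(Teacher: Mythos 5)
Your steps 1--3 are correct, and they are essentially the paper's own machinery from Section \ref{section minimum error distinction a max prenorm problem}: your rank-one encoding, the identification of $\{e_{k}^{\text{Holevo}}\}$ as the directional iterate of the guess $g=(\psi_{k})$ (with $\Lambda(g)=\operatorname{Tr}\sqrt{\Phi}$), and the chain from property G1 are the pure-state specialization of Theorem \ref{Theorem JRF iteration is an example of abstract JRFH iteration} and the proof of Theorem \ref{Theorem gen Holevo Curlander bounds}. But note that the paper never proves the statement you were assigned: it is quoted as background from \cite{Holovo Assym Opt Hyp Test}, and the paper's own bounds $\left(\ref{Tyson bounds}\right)$ are provably too weak to imply it, since its angle estimate (Lemma \ref{Lemma G for meas small angle}) gives only $\cos^{2}\theta\geq P_{\text{succ}}\left(M^{\text{opt}}\right)$, i.e. $\sin^{2}\theta\leq P_{\text{fail}}^{\text{opt}}$. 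Plugged into your ratio inequality that yields $1+O(1)$ --- the factor-of-two statement of $\left(\ref{BK factor of two chain}\right)$ --- not $1+o(1)$. So you have located the crux correctly: everything hinges on an angle estimate strictly sharper than anything proved in the paper, and that is exactly the step your proposal leaves unproven. As written, the assertion that the optimal measurement vectors satisfy $b_{k}=1+O(\epsilon^{2})$ termwise is an unverified claim about the unknown optimizer, and your framing additionally needs a Helstrom-type lower bound $P_{\text{fail}}^{\text{opt}}\gtrsim\epsilon^{2}$, which you assert but never establish. This is a genuine gap.

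The gap is real but fillable, and in fact the ``a-priori, optimizer-free'' control you believe is required is not needed; a self-referential bound in terms of $P_{\text{fail}}^{\text{opt}}$ suffices. Fix phases so that $b_{k}:=\langle\psi_{k},x_{\text{max},k}\rangle\in\left[0,1\right]$ (this changes neither the POVM nor membership in $S$). Your own formulas give the exact identity $\sin^{2}\theta=\operatorname{Var}_{p}(b)/P_{\text{succ}}^{\text{opt}}$, where $\operatorname{Var}_{p}$ is the variance with respect to $\{p_{k}\}$. Setting $\delta_{k}=1-b_{k}\geq0$, one has $\sum_{k}p_{k}\delta_{k}\leq\sum_{k}p_{k}\left(1-b_{k}^{2}\right)=P_{\text{fail}}^{\text{opt}}$, hence $\delta_{k}\leq P_{\text{fail}}^{\text{opt}}/p_{k}$ for each $k$, and therefore
\[
\operatorname{Var}_{p}(b)\leq\sum_{k}p_{k}\delta_{k}^{2}\leq\Bigl(\max_{k}\delta_{k}\Bigr)\sum_{k}p_{k}\delta_{k}\leq\frac{\bigl(P_{\text{fail}}^{\text{opt}}\bigr)^{2}}{\min_{k}p_{k}}\text{,}
\]
so that $\tan^{2}\theta\leq\bigl(P_{\text{fail}}^{\text{opt}}\bigr)^{2}\big/\bigl(\cos^{2}\theta\,P_{\text{succ}}^{\text{opt}}\min_{k}p_{k}\bigr)$. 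Your ratio bound then reads $P_{\text{fail}}\left(\{e_{k}^{\text{Holevo}}\}\right)/P_{\text{fail}}^{\text{opt}}\leq1+O\bigl(P_{\text{fail}}^{\text{opt}}/\min_{k}p_{k}\bigr)$, which tends to $1$ because $P_{\text{fail}}^{\text{opt}}\leq1-\left(\operatorname{Tr}\sqrt{\Phi}\right)^{2}\rightarrow0$ (the Gram matrix $\left(p_{i}p_{j}\left\langle\psi_{i},\psi_{j}\right\rangle\right)$ tends to $\operatorname{diag}\left(p_{k}^{2}\right)$, so $\operatorname{Tr}\sqrt{\Phi}\rightarrow\sum p_{k}=1$). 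This argument uses only the definition of $P_{\text{fail}}^{\text{opt}}$ and $b_{k}\leq1$; it is self-referential but not circular; it needs neither the termwise $O(\epsilon^{2})$ claim nor the lower bound $P_{\text{fail}}^{\text{opt}}\gtrsim\epsilon^{2}$; and it shows precisely where the hypothesis of fixed probabilities enters, namely through $\min_{k}p_{k}$. (You should also record, or cite, the standard fact that a rank-one optimal POVM exists for a pure-state ensemble, which your encoding presupposes.)
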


A natural mixed-state generalization of Holevo's measurement is given by

\begin{definition}
\label{def of QW measurement}The \textbf{quadratically-weighted measurement}
\cite{Jezek Rehacek and Fiurasek Finding optimal strategies for minimum error
quantum state discrimination, Tyson Holevo Curlander Bounds} for
distinguishing the ensemble $\left(
\ref{a priori normed ensemble to distinguish}\right)  $ is the first
Je\v{z}ek-\v{R}eh\'{a}\v{c}ek-Fiur\'{a}\v{s}ek iterate
\begin{equation}
M_{k}^{\text{QW}}=\left(
{\displaystyle\sum\nolimits_{\ell}}
\rho_{\ell}^{2}\right)  ^{-1/2^{+}}\rho_{k}^{2}\left(
{\displaystyle\sum\nolimits_{\ell}}
\rho_{\ell}^{2}\right)  ^{-1/2^{+}}\text{.}\label{formula for MQW}%
\end{equation}

\end{definition}

\noindent\textbf{Remark: }The quadratically-weighted measurement is an example
of a Belavkin-Maslov measurement (see page 39 of \cite{Belavkin Book}).

Generalizing the pure-state results of Holevo \cite{Holovo Assym Opt Hyp Test}
and Curlander \cite{Curlander thesis MIT}, the author proved the following:

\begin{theorem}
[Generalized Holevo-Curlander bounds \cite{Tyson Holevo Curlander Bounds}%
]\label{Theorem gen Holevo Curlander bounds}One has the following bounds on
the success rate of the optimal measurement $M^{\text{opt}}$ for
distinguishing the ensemble $\mathcal{E}$ of Definition
\ref{definition containing ensemble E}:%
\begin{equation}
\Lambda^{2}\leq P_{\text{succ}}\left(  M^{\text{QW}}\right)  \leq
P_{\text{succ}}\left(  M^{\text{opt}}\right)  \leq\Lambda\text{,}%
\label{Tyson bounds}%
\end{equation}
where%
\begin{equation}
\Lambda=\operatorname*{Tr}\sqrt{%
{\displaystyle\sum}
\rho_{k}^{2}}\leq1\text{.}\label{gamma in zero 1}%
\end{equation}

\end{theorem}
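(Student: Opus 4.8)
The plan is to apply the abstract machinery of Lemma~\ref{directional iterate lemma} (property $G1$) to a carefully chosen seminorm inner-product space in which $P_{\text{succ}}(M^{\text{opt}})$ appears as a squared maximal seminorm. The key realization is that the success rate $\operatorname*{Tr}\sum_k M_k \rho_k$ is bilinear in the POVM, so if we can write it as $\|x_{\max}\|^2$ for an appropriate vector space structure, then the two-sided bounds $\|x_{\max}\|\cos\theta \le \|g^{(+)}\| \le \|x_{\max}\|$ will translate directly into the claimed inequalities once we exhibit a good ``small-angle'' guess $g$.

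Let me work out the right vector space. I would take $V$ to be the space of ``square-root decompositions'': think of a POVM element $M_k = A_k^\dagger A_k$ and collect the operators $A_k$ into a single vector $x = (A_k)_{k\in K}$. The natural inner product is $\langle x, y\rangle = \sum_k \operatorname*{Tr}(A_k^\dagger B_k \rho_k)$ for $x=(A_k)$, $y=(B_k)$ — this is only \emph{semidefinite} (different square roots of the same $M_k$ can give the same physics), which is exactly why the problem is phrased for semidefinite inner product spaces. The set $S$ should be the image, under this square-root map, of the POVM constraint $\sum_k M_k \le \openone$. With this setup, I expect $\|x\|^2 = \sum_k \operatorname*{Tr}(A_k^\dagger A_k \rho_k) = \operatorname*{Tr}\sum_k M_k \rho_k = P_{\text{succ}}(M)$, so that $\|x_{\max}\|^2 = P_{\text{succ}}(M^{\text{opt}})$. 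The directional iterate of such a vector should, after unpacking, reproduce precisely one step of JRF iteration~\eqref{eq JRF successor}, and the first iterate starting from $M_k = \openone$ yields the quadratically-weighted measurement~\eqref{formula for MQW}.

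**The main steps, in order, would be:** (i) verify that the square-root space $V$ with the stated inner product is a genuine semidefinite inner product space and that $S$ has a maximal-seminorm element $x_{\max}$ corresponding to $M^{\text{opt}}$; (ii) choose the guess $g$ to be the square-root vector of the trivial POVM $M_k = \openone$, i.e.\ $g = (\openone)_k$ normalized, and compute its directional iterate $g^{(+)}$, which should be the square-root vector of $M^{\text{QW}}$ by~\eqref{formula for MQW}; (iii) compute $\Lambda = \Lambda(g) = \operatorname{Re}\langle g^{(+)}, g/\|g\|\rangle$ and check it equals $\operatorname*{Tr}\sqrt{\sum_k \rho_k^2}$, which is where the square-root-of-sum-of-squares structure in~\eqref{gamma in zero 1} emerges from the $-1/2^+$ power appearing in the JRF iterate; and (iv) feed everything into~\eqref{key abstract estimate}. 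From $G1$ we read off $\|x_{\max}\| \ge \|g^{(+)}\| \ge \Lambda(g)$. Squaring the outer-to-middle relation gives $P_{\text{succ}}(M^{\text{opt}}) \ge P_{\text{succ}}(M^{\text{QW}})$; squaring $\|g^{(+)}\| \ge \Lambda$ gives $P_{\text{succ}}(M^{\text{QW}}) \ge \Lambda^2$; and the chain $\Lambda \ge \|x_{\max}\|\cos\theta$ combined with $\|x_{\max}\|\ge \Lambda(g)$ (noting $\Lambda(g)=\langle g^{(+)},g/\|g\|\rangle$ and $\|x_{\max}\|\cos\theta = \langle x_{\max}, g/\|g\|\rangle \le \langle g^{(+)},g/\|g\|\rangle$ by maximality of $g^{(+)}$) yields the upper bound $P_{\text{succ}}(M^{\text{opt}}) = \|x_{\max}\|^2 \le \Lambda$, using that $\|x_{\max}\| \le \Lambda/\cos\theta$ rearranges — more directly, $\|x_{\max}\|^2 = \|x_{\max}\| \cdot \|x_{\max}\| \le \|x_{\max}\| \cdot (\Lambda/\cos\theta)$, and one shows $\|x_{\max}\|\cos\theta = \operatorname{Re}\langle x_{\max}, g/\|g\|\rangle = \Lambda(g)\cdot(\text{something})$; the precise bookkeeping is where care is needed.

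**The main obstacle** I anticipate is \emph{not} the abstract inequality — that is essentially immediate from $G1$ — but rather the two explicit identifications in steps (ii) and (iii): showing that the directional iterate of the flat guess is exactly $M^{\text{QW}}$, and that $\Lambda(g) = \operatorname*{Tr}\sqrt{\sum_k \rho_k^2}$. Both require carefully tracking how the maximization defining $g^{(+)}$ interacts with the $\sum_k M_k \le \openone$ constraint and reproduces the negative-square-root power $(\sum_\ell \rho_\ell^2)^{-1/2^+}$; in particular one must handle the pseudoinverse on the possibly-singular support of $\sum_k \rho_k^2$, which is the technical reason for the $^+$ convention in~\eqref{eq defining minus 1/2 plus exponent}. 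A secondary subtlety is justifying the claimed upper bound $\Lambda \le 1$ in~\eqref{gamma in zero 1}, which follows from $\operatorname*{Tr}\sqrt{\sum_k \rho_k^2} \le \operatorname*{Tr}\sum_k \rho_k = \sum_k p_k = 1$ using $\sqrt{\sum \rho_k^2} \le \sum \rho_k$ (valid since $A \ge 0$ and $A^2 \le B^2$ with commuting-enough structure, or more safely via operator-monotonicity of the square root applied to $\sum_k \rho_k^2 \le (\sum_k \rho_k)^2$), which holds because cross terms $\rho_j\rho_k + \rho_k\rho_j \ge 0$ for positive operators.
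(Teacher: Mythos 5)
Your framework is the paper's own: your ``square-root decomposition'' space is exactly the $\mathcal{E}$-semi-inner product space of Definition \ref{defintion including ensemble inner product}, your flat guess is the paper's guess $\left(  \ref{G for JRF abstract}\right)  $, and your identifications (ii)--(iii) are precisely Theorem \ref{Theorem JRF iteration is an example of abstract JRFH iteration}, which the paper proves by rewriting $\operatorname{Re}\left\langle E,F\right\rangle _{\mathcal{E}}$ as a trace against a contraction and invoking $\left(  \ref{equation sup trace A dag U}\right)  $--$\left(  \ref{unitary maximizer equation}\right)  $; those parts of your plan would go through. The genuine gap is the upper bound $P_{\text{succ}}\left(  M^{\text{opt}}\right)  \leq\Lambda$ --- exactly the place where you write ``the precise bookkeeping is where care is needed.'' This is not bookkeeping. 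Property $G1$ gives $\Lambda=\Lambda\left(  G\right)  \geq\left\Vert x_{\max}\right\Vert \cos\theta$, and this yields $\left\Vert x_{\max}\right\Vert ^{2}\leq\Lambda$ only if one \emph{also} proves $\cos\theta\geq\left\Vert x_{\max}\right\Vert $, i.e. (using $\left\Vert G\right\Vert _{\mathcal{E}}=1$)
\[
\operatorname{Re}\operatorname*{Tr}\sum_{k}E_{k}^{\text{opt}}\rho_{k}\;\geq\;\operatorname*{Tr}\sum_{k}\left(  E_{k}^{\text{opt}}\right)  ^{\dag}E_{k}^{\text{opt}}\rho_{k}=P_{\text{succ}}\left(  M^{\text{opt}}\right)  \text{.}
\]
Your rearrangement ``$\left\Vert x_{\max}\right\Vert \leq\Lambda/\cos\theta$'' is circular, since it uses the very quantity $\cos\theta$ over which you have no control. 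Worse, the displayed inequality is \emph{false} for an arbitrary square-root decomposition $M_{k}^{\text{opt}}=\left(  E_{k}^{\text{opt}}\right)  ^{\dag}E_{k}^{\text{opt}}$: replacing $E_{k}^{\text{opt}}$ by $-E_{k}^{\text{opt}}$ makes the left side negative while leaving the right side fixed. So the inequality must be \emph{arranged} by a specific choice of square roots, and no purely abstract manipulation of $G1$ can substitute for that choice.

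The missing idea is the paper's ``small angle'' Lemma \ref{Lemma G for meas small angle}: starting from any decomposition $M_{k}^{\text{opt}}=\tilde{E}_{k}^{\dag}\tilde{E}_{k}$, use polar decompositions to pick unitaries $U_{k}$ with $E_{k}^{\text{opt}}:=U_{k}\tilde{E}_{k}$ satisfying $E_{k}^{\text{opt}}\rho_{k}\geq0$; then
\[
\operatorname*{Tr}E_{k}^{\text{opt}}\rho_{k}=\left\Vert E_{k}^{\text{opt}}\rho_{k}\right\Vert _{1}\geq\left\vert \operatorname*{Tr}\left(  E_{k}^{\text{opt}}\right)  ^{\dag}E_{k}^{\text{opt}}\rho_{k}\right\vert
\]
by H\"{o}lder's inequality $\left(  \ref{ineq Holder I1 I infinity}\right)  $, since $\left\Vert \left(  E_{k}^{\text{opt}}\right)  ^{\dag}\right\Vert _{\infty}\leq1$. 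Summing over $k$ gives exactly the needed lower bound on $\cos\theta$, and only then does the chain $\left(  \ref{key abstract estimate}\right)  $ close up to give $P_{\text{succ}}\left(  M^{\text{opt}}\right)  \leq\Lambda$. Without this lemma your step (iv) proves only the two lower bounds in $\left(  \ref{Tyson bounds}\right)  $, not the upper bound.

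A secondary error: your argument for $\Lambda\leq1$ is wrong. Cross terms $\rho_{j}\rho_{k}+\rho_{k}\rho_{j}$ of positive semidefinite operators need \emph{not} be positive semidefinite, and the operator inequality $\sum\rho_{k}^{2}\leq\left(  \sum\rho_{k}\right)  ^{2}$ fails in general: for the rank-one projections $\rho_{1}=\left\vert 0\right\rangle \left\langle 0\right\vert $ and $\rho_{2}=\left\vert +\right\rangle \left\langle +\right\vert $ one has $\left(  \rho_{1}+\rho_{2}\right)  ^{2}-\rho_{1}^{2}-\rho_{2}^{2}=\rho_{1}\rho_{2}+\rho_{2}\rho_{1}$, which has a negative eigenvalue. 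The paper needs no matrix analysis here at all: once $\left(  \ref{Tyson bounds}\right)  $ is established, the chain $\Lambda^{2}\leq P_{\text{succ}}\left(  M^{\text{opt}}\right)  \leq\Lambda$ itself forces $\Lambda\leq1$.
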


\noindent\textbf{Note:} The upper bound of $\left(  \ref{Tyson bounds}\right)
$ was essentially a special case of a pre-existing bound of Ogawa and Nagaoka
\cite{Ogawa and Nagoaka Strong converse to the quantum channel coding
theorem}, which is a simple consequence of matrix monotonicity.

\subsubsection{The \textquotedblleft pretty good\textquotedblright%
\ measurement and Barnum \& Knill's distinguishability bound}

Another approximately-optimal measurement is the linearly-weighted measurement
given by

\begin{definition}
The \textbf{Belavkin-Hausladen-Wootters \textquotedblleft pretty
good\textquotedblright\ measurement} \textbf{(PGM)} \cite{Belavkin optimal
distinction of non-orthogonal quantum signals, Belavkin Optimal multiple
quantum statistical hypothesis testing, HausladenThesis, HausWootPGM} is given
by%
\begin{equation}
M_{k}^{\text{PGM}}=\left(
{\displaystyle\sum}
\rho_{\ell}\right)  ^{-1/2^{+}}\rho_{k}\left(
{\displaystyle\sum}
\rho_{\ell}\right)  ^{-1/2^{+}}\text{.}\label{PGM}%
\end{equation}

\end{definition}

A comparison of the PGM with Holevo's pure state measurement was conducted in
\cite{Tyson Error rates of quantum Belavkin measurements}. It was found that
Holevo's measurement outperforms the PGM\ for ensembles of two pure states,
and that the PGM does NOT satisfy Holevo's asymptotic optimality property
$\left(  \ref{limit holevo asymp}\right)  $.

The PGM\ is approximately-optimal for \textquotedblleft
reasonably-distinguishable\textquotedblright\ ensembles in the following
precise sense:

\begin{theorem}
[Barnum-Knill \cite{Barnum Knill UhOh}]%
\label{theorem barnum mill measurement bound}The success rate of the PGM
satisfies%
\begin{equation}
\frac{P_{\text{succ}}\left(  M^{\text{PGM}}\right)  }{P_{\text{succ}}\left(
M^{\text{opt}}\right)  }\geq P_{\text{succ}}\left(  M^{\text{opt}}\right)
\text{,}\label{BK meas estimate}%
\end{equation}
where $M^{\text{opt}}$ is an optimal measurement.
\end{theorem}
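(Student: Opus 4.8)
The plan is to derive $\left(\ref{BK meas estimate}\right)$ — equivalently $P_{\text{succ}}\left(M^{\text{PGM}}\right)\ge P_{\text{succ}}\left(M^{\text{opt}}\right)^{2}$ — from a two-step Cauchy--Schwarz argument in the Hilbert--Schmidt inner product, the only nontrivial ingredient being a single superoperator inequality. Write $\rho=\sum_{\ell}\rho_{\ell}$ (so $\operatorname{Tr}\rho=1$), let $\left\{E_{k}\right\}$ be an optimal POVM normalized so that $\sum_{k}E_{k}=\openone$, and record the two success rates as
\begin{equation}
P_{\text{succ}}\left(M^{\text{opt}}\right)=\sum_{k}\operatorname{Tr}\left(E_{k}\rho_{k}\right),\qquad P_{\text{succ}}\left(M^{\text{PGM}}\right)=\sum_{k}\operatorname{Tr}\left(\rho^{-1/2}\rho_{k}\rho^{-1/2}\rho_{k}\right),
\end{equation}
all computations taking place on the support of $\rho$, which contains the support of every $\rho_{k}$ since $\rho_{k}\le\rho$.

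First I would split each term $\operatorname{Tr}\left(E_{k}\rho_{k}\right)$ by inserting $\rho^{1/4}\rho^{-1/4}=\openone$ and regrouping it as a Hilbert--Schmidt inner product $\langle A,B\rangle=\operatorname{Tr}\left(A^{\dagger}B\right)$. Using the self-adjointness of $\rho^{1/4}E_{k}\rho^{1/4}$, one finds $\operatorname{Tr}\left(E_{k}\rho_{k}\right)=\langle\rho^{1/4}E_{k}\rho^{1/4},\ \rho^{-1/4}\rho_{k}\rho^{-1/4}\rangle$, whence Cauchy--Schwarz gives
\begin{equation}
\operatorname{Tr}\left(E_{k}\rho_{k}\right)\le\sqrt{\operatorname{Tr}\left(E_{k}\rho^{1/2}E_{k}\rho^{1/2}\right)}\ \sqrt{\operatorname{Tr}\left(\rho^{-1/2}\rho_{k}\rho^{-1/2}\rho_{k}\right)},
\end{equation}
because $\|\rho^{1/4}E_{k}\rho^{1/4}\|_{HS}^{2}=\operatorname{Tr}\left(E_{k}\rho^{1/2}E_{k}\rho^{1/2}\right)$ and the second factor is exactly the $k$-th term of $P_{\text{succ}}\left(M^{\text{PGM}}\right)$. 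Summing over $k$ and applying Cauchy--Schwarz a second time, now over the index $k$, would yield
\begin{equation}
P_{\text{succ}}\left(M^{\text{opt}}\right)\le\sqrt{\sum_{k}\operatorname{Tr}\left(E_{k}\rho^{1/2}E_{k}\rho^{1/2}\right)}\ \sqrt{P_{\text{succ}}\left(M^{\text{PGM}}\right)}.
\end{equation}

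The crux, and the step I expect to be the main obstacle, is then to show that the first factor is at most one, i.e.\ that $\sum_{k}\operatorname{Tr}\left(E_{k}\rho^{1/2}E_{k}\rho^{1/2}\right)\le\operatorname{Tr}\rho=1$. I would recognize the left-hand side as $\langle\rho^{1/2},\mathcal{F}\left(\rho^{1/2}\right)\rangle$, where $\mathcal{F}\left(X\right)=\sum_{k}E_{k}XE_{k}$ is a self-adjoint superoperator, so that it suffices to prove $\mathcal{F}\le\mathrm{id}$ in the Hilbert--Schmidt sense. Vectorizing via $\operatorname{vec}\left(AXB\right)=\left(A\otimes B^{T}\right)\operatorname{vec}\left(X\right)$, the superoperator $\mathcal{F}$ is represented by $\sum_{k}E_{k}\otimes E_{k}^{T}$, and the bound $\sum_{k}E_{k}\otimes E_{k}^{T}\le\openone\otimes\openone$ follows by summing the positivity of $\left(\openone-E_{k}\right)\otimes E_{k}^{T}$: indeed $E_{k}\le\openone$ gives $E_{k}\otimes E_{k}^{T}\le\openone\otimes E_{k}^{T}$, and $\sum_{k}\openone\otimes E_{k}^{T}=\openone\otimes\openone$ since $\sum_{k}E_{k}=\openone$. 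Hence $\operatorname{Tr}\left(\rho^{1/2}\mathcal{F}\left(\rho^{1/2}\right)\right)\le\operatorname{Tr}\rho=1$.

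Combining the displays gives $P_{\text{succ}}\left(M^{\text{opt}}\right)\le\sqrt{P_{\text{succ}}\left(M^{\text{PGM}}\right)}$, which is precisely $\left(\ref{BK meas estimate}\right)$ after squaring and dividing by $P_{\text{succ}}\left(M^{\text{opt}}\right)$. The only delicate points to monitor are the handling of the pseudo-inverse $\rho^{-1/2}$ off the support of $\rho$ (harmless, as noted above) and the transpose convention in the vectorization of $\mathcal{F}$ (immaterial, since $E_{k}^{T}$ remains positive semidefinite). This route is independent of the directional-iterate machinery of Lemma \ref{directional iterate lemma}: whereas that lemma delivers the generalized Holevo--Curlander bounds $\left(\ref{Tyson bounds}\right)$ for the quadratically-weighted iterate $M^{\text{QW}}$, the present bound concerns the linearly-weighted PGM and rests instead on the factorization $\rho_{k}=\rho^{1/4}\left(\rho^{-1/4}\rho_{k}\rho^{-1/4}\right)\rho^{1/4}$ together with the superoperator inequality above.
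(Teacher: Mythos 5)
Your proof is correct, but note that the paper never actually proves Theorem \ref{theorem barnum mill measurement bound}: it is quoted with a citation to Barnum and Knill, and the paper's own machinery (directional iterates plus the small-angle guess of Lemma \ref{Lemma G for meas small angle}) is used only to establish the analogous tightness relation $\left(\ref{eq also as tight}\right)$ for the quadratically-weighted measurement $M^{\text{QW}}$, obtained by applying the two-sided bound $\left(\ref{Tyson bounds}\right)$ twice. As you correctly observe, that machinery cannot yield $\left(\ref{BK meas estimate}\right)$ directly, since the directional iterate of the identity guess is $M^{\text{QW}}$, not the PGM; so a separate argument is genuinely needed, and yours supplies one. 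Your route --- insert $\rho^{\pm 1/4}$, apply Cauchy--Schwarz termwise in the Hilbert--Schmidt inner product, apply it again over the index $k$, and bound $\sum_k\operatorname{Tr}\left(E_k\rho^{1/2}E_k\rho^{1/2}\right)\leq 1$ --- is essentially the classical argument underlying Barnum and Knill's own derivation; the one place you deviate is the final normalization step, which is more commonly dispatched by the elementary trace inequality $\operatorname{Tr}\left(\left(XY\right)^2\right)\leq\operatorname{Tr}\left(X^2Y^2\right)$, giving $\operatorname{Tr}\left(E_k\rho^{1/2}E_k\rho^{1/2}\right)\leq\operatorname{Tr}\left(E_k^2\rho\right)\leq\operatorname{Tr}\left(E_k\rho\right)$ and hence the sum bound $\operatorname{Tr}\rho=1$, whereas you prove the superoperator inequality $\sum_k E_k\otimes E_k^{T}\leq\openone\otimes\openone$ by vectorization. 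Both are valid; the trace-inequality route is a line shorter, while yours isolates a reusable statement ($\mathcal{F}\leq\mathrm{id}$ in Hilbert--Schmidt sense for the map $\mathcal{F}\left(X\right)=\sum_k E_kXE_k$). Two points of bookkeeping: Definition \ref{definition containing ensemble E} allows $\sum_k M_k\leq\openone$, so you should either complete the optimal POVM to a resolution of the identity (harmless, as this never decreases the success rate) or note that your vectorization argument in fact only needs $\sum_k E_k\leq\openone$; and your handling of the pseudo-inverse is indeed innocuous, since $\rho_k\leq\rho$ forces $\operatorname{supp}\left(\rho_k\right)\subseteq\operatorname{supp}\left(\rho\right)$, exactly as you say.
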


Re-expressing this inequality in terms of $P_{\text{fail}}=1-P_{\text{succ}}$,
one sees that the PGM has a failure rate within a factor of two of the
optimal:%
\begin{equation}
P_{\text{fail}}\left(  M^{\text{opt}}\right)  \leq P_{\text{fail}}\left(
M^{\text{PGM}}\right)  \leq\left(  1+P_{\text{succ}}\left(  M^{\text{opt}%
}\right)  \right)  P_{\text{fail}}\left(  M^{\text{opt}}\right)  \leq2\times
P_{\text{fail}}\left(  M^{\text{opt}}\right)  \text{.}%
\label{BK factor of two chain}%
\end{equation}
The relationship between Barnum and Knill's bound $\left(
\ref{BK meas estimate}\right)  $ and the bounds of Theorem
\ref{Theorem gen Holevo Curlander bounds} is explained by the following proposition:

\begin{proposition}
[Comparison with the Barnum-Knill bounds]Both of the lower bounds of
inequality \ref{Tyson bounds} are sufficiently tight to also satisfy Barnum
and Knill's tightness relation $\left(  \ref{BK meas estimate}\right)  $:%
\begin{equation}
\frac{P_{\text{succ}}\left(  M^{\text{QW}}\right)  }{P_{\text{succ}}\left(
M^{\text{opt}}\right)  }\geq\frac{\Lambda^{2}}{P_{\text{succ}}\left(
M^{\text{opt}}\right)  }\geq P_{\text{succ}}\left(  M^{\text{opt}}\right)
\text{.}\label{eq also as tight}%
\end{equation}
In particular, $P_{\text{fail}}\left(  M^{\text{QW}}\right)  ,$ $2\left(
1-\Lambda\right)  $, and $1-\Lambda^{2}$ all lie in the interval $\left[
P_{\text{fail}}\left(  M^{\text{opt}}\right)  ,2\times P_{\text{fail}}\left(
M^{\text{opt}}\right)  \right]  $.
\end{proposition}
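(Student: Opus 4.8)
The plan is to deduce everything from the generalized Holevo-Curlander bounds of Theorem \ref{Theorem gen Holevo Curlander bounds}, namely the chain
\[
\Lambda^{2}\leq P_{\text{succ}}\left(  M^{\text{QW}}\right)  \leq P_{\text{succ}}\left(  M^{\text{opt}}\right)  \leq\Lambda\leq1,
\]
supplemented by the single elementary fact $\left(  1-\Lambda\right)  ^{2}\geq0$, i.e. $2\Lambda\leq1+\Lambda^{2}$. No new measurement-theoretic input is required; the proposition is a repackaging of $\left(  \ref{Tyson bounds}\right)  $.

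First I would establish the displayed chain $\left(  \ref{eq also as tight}\right)  $. The left inequality is immediate: the lower bound $P_{\text{succ}}\left(  M^{\text{QW}}\right)  \geq\Lambda^{2}$ of $\left(  \ref{Tyson bounds}\right)  $, divided by the positive quantity $P_{\text{succ}}\left(  M^{\text{opt}}\right)  $, yields exactly $P_{\text{succ}}\left(  M^{\text{QW}}\right)  /P_{\text{succ}}\left(  M^{\text{opt}}\right)  \geq\Lambda^{2}/P_{\text{succ}}\left(  M^{\text{opt}}\right)  $. The right inequality, after clearing the denominator, reads $\Lambda^{2}\geq P_{\text{succ}}\left(  M^{\text{opt}}\right)  ^{2}$, which is just the upper bound $P_{\text{succ}}\left(  M^{\text{opt}}\right)  \leq\Lambda$ of $\left(  \ref{Tyson bounds}\right)  $. (One may assume $P_{\text{succ}}\left(  M^{\text{opt}}\right)  >0$, the contrary case being trivial.)

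Next I would verify the ``in particular'' claim by treating the three quantities in turn, abbreviating $P_{f}:=P_{\text{fail}}\left(  M^{\text{opt}}\right)  =1-P_{\text{succ}}\left(  M^{\text{opt}}\right)  $. For $P_{\text{fail}}\left(  M^{\text{QW}}\right)  =1-P_{\text{succ}}\left(  M^{\text{QW}}\right)  $, the lower endpoint $P_{\text{fail}}\left(  M^{\text{QW}}\right)  \geq P_{f}$ merely restates $P_{\text{succ}}\left(  M^{\text{QW}}\right)  \leq P_{\text{succ}}\left(  M^{\text{opt}}\right)  $, while the upper endpoint uses the chain just proved: from $P_{\text{succ}}\left(  M^{\text{QW}}\right)  \geq P_{\text{succ}}\left(  M^{\text{opt}}\right)  ^{2}$ one gets $1-P_{\text{succ}}\left(  M^{\text{QW}}\right)  \leq\left(  1+P_{\text{succ}}\left(  M^{\text{opt}}\right)  \right)  P_{f}\leq2P_{f}$. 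For $1-\Lambda^{2}$, the lower endpoint is the bound $\Lambda^{2}\leq P_{\text{succ}}\left(  M^{\text{opt}}\right)  $, and the upper endpoint $1-\Lambda^{2}\leq2P_{f}$ rearranges to $2P_{\text{succ}}\left(  M^{\text{opt}}\right)  \leq1+\Lambda^{2}$, which follows from $P_{\text{succ}}\left(  M^{\text{opt}}\right)  \leq\Lambda$ and $2\Lambda\leq1+\Lambda^{2}$. For $2\left(  1-\Lambda\right)  $, the upper endpoint follows from $P_{\text{succ}}\left(  M^{\text{opt}}\right)  \leq\Lambda$, while the lower endpoint $2\left(  1-\Lambda\right)  \geq P_{f}$ rearranges to $2\Lambda\leq1+P_{\text{succ}}\left(  M^{\text{opt}}\right)  $, which holds because $2\Lambda\leq1+\Lambda^{2}\leq1+P_{\text{succ}}\left(  M^{\text{opt}}\right)  $, the final step invoking $\Lambda^{2}\leq P_{\text{succ}}\left(  M^{\text{opt}}\right)  $.

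There is no genuinely hard step: the argument is a sequence of one-line reductions to $\left(  \ref{Tyson bounds}\right)  $. The only point worth isolating is the recurrent appeal to the quadratic inequality $2\Lambda\leq1+\Lambda^{2}$, which is what converts the three endpoint checks not directly covered by a single inequality of $\left(  \ref{Tyson bounds}\right)  $ into trivialities; I would state it once at the outset and cite it as needed.
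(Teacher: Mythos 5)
Your proof is correct and follows essentially the same route as the paper's: the chain $\left(\ref{eq also as tight}\right)$ by double application of $\left(\ref{Tyson bounds}\right)$, and the interval inclusions by the factor-of-two manipulation of $\left(\ref{BK factor of two chain}\right)$ together with the elementary inequality $2\Lambda\leq 1+\Lambda^{2}$, which is exactly the paper's $1-\Lambda^{2}\leq 2\left(1-\Lambda\right)$ in rearranged form. The only difference is that you spell out all six endpoint checks explicitly, where the paper compresses them into two sentences.
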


\begin{proof}
Equation $\ref{eq also as tight}$ follows immediately by double application of
inequality \ref{Tyson bounds}. The claimed inclusions follow as in inequality
\ref{BK factor of two chain}, where one additionally uses the inequality
$1-\Lambda^{2}\leq2\left(  1-\Lambda\right)  $.
\end{proof}

\subsubsection{Barnum and Knill's approximate reversal map}

Generalizing the \textquotedblleft pretty good\textquotedblright%
\ measurement$^{\text{\cite{barnum knill mistake}}}$, Barnum and Knill have
constructed a reversal of an arbitrary quantum operation $\mathcal{A}%
:B^{1}\left(  \mathcal{H}\right)  \rightarrow B^{1}\left(  \mathcal{K}\right)
$ that is approximately optimal for reasonably reversible $\mathcal{A}$ in a
precise sense:

\begin{theorem}
[Barnum-Knill \cite{Barnum Knill UhOh}]%
\label{Theorem barnum knill reversal bound}Assume that the density operators
$\rho_{k}\in B^{1}\left(  \mathcal{H}\right)  $ of equation
$\ref{average entanglement fidelity}$ commute, set $\rho=%
{\textstyle\sum}
p_{k}\rho_{k}$, and let $\mathcal{A}^{\dag}:B\left(  \mathcal{K}\right)
\rightarrow B\left(  \mathcal{H}\right)  $ be the adjoint of $\mathcal{A}$
(see Def. \ref{definition of adjoint as an actual displayed defintion},
below). Then the recovery operation%
\begin{equation}
\mathcal{R}^{\text{BK}}\left(  \upsilon\right)  =\sqrt{\rho}\mathcal{A^{\dag}%
}\left(  \left(  \mathcal{A}\left(  \rho\right)  \right)  ^{-1/2^{+}}%
\upsilon\left(  \mathcal{A}\left(  \rho\right)  \right)  ^{-1/2^{+}}\right)
\sqrt{\rho}\label{Barnum Knill reversal}%
\end{equation}
is approximately optimal in the sense that%
\begin{equation}
\frac{\bar{F}_{e}\left(  \left\{  \rho_{k},p_{k}\right\}  ,\mathcal{R}%
^{\text{BK}}\circ\mathcal{N}\right)  }{\max_{\mathcal{R}}\bar{F}_{e}\left(
\left\{  \rho_{k},p_{k}\right\}  ,\mathcal{R}\circ\mathcal{N}\right)  }%
\geq\max_{\mathcal{R}}\bar{F}_{e}\left(  \left\{  \rho_{k},p_{k}\right\}
,\mathcal{R}\circ\mathcal{N}\right)  \text{,}\label{Barnum Knill estimate}%
\end{equation}
where $\bar{F}_{e}$ is the average entanglement fidelity of equation
$\ref{average entanglement fidelity}$.
\end{theorem}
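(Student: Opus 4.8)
The plan is to deduce the reversal bound $(\ref{Barnum Knill estimate})$ from the already-established measurement bound $(\ref{BK meas estimate})$ of Theorem \ref{theorem barnum mill measurement bound}, by exhibiting the channel-reversal problem as a minimum-error discrimination problem in which $\mathcal{R}^{\text{BK}}$ plays exactly the role of the pretty good measurement. The guiding observation is that entanglement fidelity is \emph{linear} in the recovery superoperator, so that $\max_{\mathcal{R}}\bar F_{e}(\{\rho_{k},p_{k}\},\mathcal{R}\circ\mathcal{A})$ is a linear functional maximized over the convex set of quantum operations --- precisely the shape of a maximum-overlap / discrimination problem. (One could alternatively try to route this through Lemma \ref{directional iterate lemma} by exhibiting $\mathcal{R}^{\text{BK}}$ as a directional iterate, but the cleanest path to the specific \emph{ratio} bound $(\ref{Barnum Knill estimate})$ is the reduction to the pretty-good-measurement bound.)

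First I would fix Kraus decompositions $\mathcal{A}\sim\{A_{i}\}$ and $\mathcal{R}\sim\{R_{j}\}$ and expand, using $F_{e}(\rho,\Xi)=\sum_{m}|\operatorname{Tr}(K_{m}\rho)|^{2}$ for $\Xi\sim\{K_{m}\}$, to obtain $\bar F_{e}(\{\rho_{k},p_{k}\},\mathcal{R}\circ\mathcal{A})=\sum_{k,i,j}p_{k}|\operatorname{Tr}(R_{j}A_{i}\rho_{k})|^{2}=\sum_{k,i,j}p_{k}|\langle R_{j}^{\dagger},A_{i}\rho_{k}\rangle_{\text{HS}}|^{2}$, where $\langle X,Y\rangle_{\text{HS}}=\operatorname{Tr}(X^{\dagger}Y)$ is the Hilbert--Schmidt product on the operators $\mathcal{H}\to\mathcal{K}$. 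Vectorizing operators into this Hilbert--Schmidt space turns the recovery Kraus family into a family of measurement vectors $S_{j}:=R_{j}^{\dagger}$ and the data into signal states $A_{i}\rho_{k}$, so the reversal fidelity acquires the form of a success probability. The commuting hypothesis enters here: diagonalizing the $\rho_{k}$ in a common eigenbasis lets me purify every $\rho_{k}$ using one shared environment basis, which is exactly what permits the various $(k,i)$ contributions to be assembled into a single \emph{pure-state} ensemble rather than an incoherent family. For non-commuting $\rho_{k}$ this assembly fails and only one-sided (maximum-overlap) estimates survive.

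Next I would identify $\mathcal{R}^{\text{BK}}$ with the pretty good measurement of this ensemble. Reading off the Kraus operators of $(\ref{Barnum Knill reversal})$ via $\mathcal{A}^{\dagger}(Y)=\sum_{i}A_{i}^{\dagger}YA_{i}$ gives $R_{i}^{\text{BK}}=\sqrt{\rho}\,A_{i}^{\dagger}(\mathcal{A}(\rho))^{-1/2^{+}}$, and with $w_{i}:=A_{i}\sqrt{\rho}$ and $N:=\mathcal{A}(\rho)=\sum_{i}w_{i}w_{i}^{\dagger}$ one has $R_{i}^{\text{BK}}=(N^{-1/2^{+}}w_{i})^{\dagger}$. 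These are precisely the pretty-good-measurement vectors $N^{-1/2^{+}}w_{i}$ of the pure-state ensemble $\{w_{i}\}$ in the sense of $(\ref{PGM})$; moreover $\sum_{i}(R_{i}^{\text{BK}})^{\dagger}R_{i}^{\text{BK}}=(\mathcal{A}(\rho))^{-1/2^{+}}\mathcal{A}(\rho)(\mathcal{A}(\rho))^{-1/2^{+}}$ equals the support projection of $\mathcal{A}(\rho)$, confirming that $\mathcal{R}^{\text{BK}}$ is a legitimate quantum operation consistent with the $^{-1/2^{+}}$ convention. Hence, under the correspondence of the previous step, $\bar F_{e}(\{\rho_{k},p_{k}\},\mathcal{R}^{\text{BK}}\circ\mathcal{A})=P_{\text{succ}}(M^{\text{PGM}})$ and $\max_{\mathcal{R}}\bar F_{e}(\{\rho_{k},p_{k}\},\mathcal{R}\circ\mathcal{A})=P_{\text{succ}}(M^{\text{opt}})$, whereupon $(\ref{Barnum Knill estimate})$ becomes literally $(\ref{BK meas estimate})$.

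The main obstacle I anticipate is making this reduction \emph{exact in both directions}. The completeness constraint for a recovery channel, $\sum_{j}R_{j}^{\dagger}R_{j}=\openone$ (equivalently $\operatorname{Tr}_{\mathcal{H}}\widetilde{\mathcal{R}}=\openone$ on the Choi matrix), is a partial-trace condition, and is a priori different from the POVM normalization $\sum_{j}M_{j}\leq\openone$; so it is not automatic that maximizing reversal fidelity over channels and maximizing success probability over POVMs yield the \emph{same} optimum --- yet this is exactly what the denominator of the ratio bound demands. I would therefore spend most of the effort verifying, using commutativity, that an optimal (rank-one) measurement of the constructed ensemble lifts to an admissible recovery channel and conversely, so that the two optima genuinely coincide rather than merely bounding one another. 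Once that identification is secured, the desired inequality is immediate from Theorem \ref{theorem barnum mill measurement bound}.
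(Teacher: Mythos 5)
You should first be aware that the paper never proves this theorem: it is quoted verbatim from Barnum and Knill \cite{Barnum Knill UhOh} as background. The closest the paper comes is its own Theorem \ref{theorem quadratic recovery estimates} for the quadratic recovery $\mathcal{R}^{\text{QR}}$, which is proven through the maximum-overlap machinery (small-angle guess plus Lemma \ref{directional iterate lemma}), with the Barnum--Knill-type ratio bound then extracted from the two-sided chain $\Lambda^{2}/\operatorname*{Tr}\mathcal{A}(\rho)\leq F_{e}\leq\sup_{\mathcal{R}}F_{e}\leq\Lambda$ as in $(\ref{eq quad reversal also satisfy barnum knill bounds})$; moreover Appendix C shows that the reduction runs in the direction \emph{opposite} to yours (discrimination is a special case of reversal/overlap, via Theorem \ref{Theorem Konig et al operational min entropy theorem}). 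Judged on its own merits, your proposal has a fatal gap: the exact two-way correspondence you plan to establish is false. Take $\mathcal{A}$ to be the completely depolarizing channel $\mathcal{A}(\mu)=\openone\operatorname*{Tr}(\mu)/d$ on $\mathcal{H}=\mathbb{C}^{d}$ and the single input state $\rho=\openone/d$ (one state, so the commutativity hypothesis holds trivially). Its Kraus operators are $A_{ij}=|i\rangle\langle j|/\sqrt{d}$, so your signal operators $w_{ij}=A_{ij}\sqrt{\rho}=|i\rangle\langle j|/d$ (or $A_{ij}\rho$, it does not matter) are \emph{mutually orthogonal} in Hilbert--Schmidt space; the constructed ensemble is therefore perfectly distinguishable, $P_{\text{succ}}(M^{\text{opt}})=P_{\text{succ}}(M^{\text{PGM}})=1$. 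Yet
\begin{equation}
\max_{\mathcal{R}}F_{e}\left(  \rho,\mathcal{R}\circ\mathcal{A}\right)
=F_{e}\left(  \rho,\mathcal{R}^{\text{BK}}\circ\mathcal{A}\right)  =\frac
{1}{d^{2}}\text{,}\nonumber
\end{equation}
since $\mathcal{R}\circ\mathcal{A}$ is again completely depolarizing. Both identities your argument needs --- $\bar{F}_{e}(\mathcal{R}^{\text{BK}}\circ\mathcal{A})=P_{\text{succ}}(M^{\text{PGM}})$ and $\max_{\mathcal{R}}\bar{F}_{e}=P_{\text{succ}}(M^{\text{opt}})$ --- fail here, so no amount of effort on "lifting measurements to channels" can close the gap: identifying \emph{which} Kraus operator acted is not the same as being able to undo it.

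The error has two sources. First, the objective functions are structurally different: the fidelity $\sum_{k,i,j}p_{k}|\operatorname*{Tr}(R_{j}A_{i}\rho_{k})|^{2}$ sums over \emph{all} pairs $(i,j)$, whereas a success probability credits only matched indices; and the feasible sets are incompatible in both directions --- for example the identity recovery channel has Kraus vector $\openone$, and $|\openone\rangle\rangle\langle\langle\openone|$ has operator norm $\dim\mathcal{H}$, so it is not even a valid POVM element in Hilbert--Schmidt space. Second, your sentence identifying $R_{i}^{\text{BK}}=(N^{-1/2^{+}}w_{i})^{\dag}$ with the PGM of $\{w_{i}\}$ "in the sense of $(\ref{PGM})$" conflates two different operators: the PGM in Hilbert--Schmidt space would invert the square root of the Gram operator $\sum_{i}|w_{i}\rangle\rangle\langle\langle w_{i}|$ acting on $\mathcal{K}\otimes\mathcal{H}^{\ast}$, whereas Barnum--Knill invert $\sum_{i}w_{i}w_{i}^{\dag}=\mathcal{A}(\rho)$ acting by left multiplication, i.e.\ $\mathcal{A}(\rho)^{-1/2^{+}}\otimes\openone_{\mathcal{H}^{\ast}}$. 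These coincide only when $\dim\mathcal{H}=1$, i.e.\ precisely in the genuine state-discrimination specialization --- which is why the PGM is a special case of $\mathcal{R}^{\text{BK}}$ and not conversely. A correct proof must instead establish a two-sided estimate with an intermediate quantity (as Barnum--Knill do, and as the paper does for $\mathcal{R}^{\text{QR}}$) and then square the upper bound into the lower one; the ratio bound $(\ref{Barnum Knill estimate})$ cannot be obtained as a literal instance of Theorem \ref{theorem barnum mill measurement bound}.
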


A special case of eq. \ref{Barnum Knill reversal} is of recent \cite{Ng
Mandayam simple approach to approximate QEC} interest in the literature:

\begin{definition}
The \textbf{transpose channel} \cite{Petz quantum entropy and its use} is the
special case of the Barnum-Knill reversal $\mathcal{R}^{\text{BK}}$ for
maximally-mixed $\rho$.
\end{definition}

A reversal of approximately optimal entanglement fidelity which is closely
related to the quadratic measurement will be constructed in section
$\ref{section approximate channel reversals}$.

\subsubsection{The bounds of B\'{e}ny and
Oreshkov\label{section bounds Beny Oreshkov}}

Generalizing the problem of quantum error-recovery, B\'{e}ny and Oreshkov
\cite{Beny and Oreshkov general conditions for approximate quanutm error
recovery and near optimal recovery channels} have more-generally considered
channel simulation. In particular, they consider the \textquotedblleft
worst-case\textquotedblright\ entanglement fidelity
\[
\max_{\mathcal{R}}\min_{\rho}F_{\rho}\left(  \mathcal{RA},\mathcal{M}\right)
\]
with which the channel $\mathcal{A}$ may be used to simulate the channel
$\mathcal{M}$. Here one has%
\[
F_{\rho}\left(  \mathcal{N},\mathcal{M}\right)  =\min_{\rho}f\left(
\mathcal{N}_{\mathcal{H}\rightarrow\mathcal{K}}\left(  \left\vert \psi_{\rho
}\right\rangle _{\mathcal{\mathcal{HH}}_{R}}\left\langle \psi_{\rho
}\right\vert \right)  ,\mathcal{M}_{\mathcal{H}\rightarrow\mathcal{K}}\left(
\left\vert \psi_{\rho}\right\rangle _{\mathcal{\mathcal{HH}}_{R}}\left\langle
\psi_{\rho}\right\vert \right)  \right)  \text{,}%
\]
where $\psi_{\rho}$ is a purification of $\rho$ and (changing their
conventions slightly) $f\left(  \rho,\sigma\right)  =\left(
\operatorname*{Tr}\sqrt{\sqrt{\rho}\sigma\sqrt{\rho}}\right)  ^{2}$ is the
fidelity between the states $\rho$ and $\sigma$. Note that quantum error
recovery is the $\mathcal{M}=%
\openone
$ special case. Employing the min-max Theorem and a beautiful (and short!)
duality argument involving complementary channels, they obtain the following theorem:

\begin{theorem}
[B\'{e}ny-Oreshkov \cite{Beny and Oreshkov general conditions for approximate
quanutm error recovery and near optimal recovery channels}]%
\label{corrolary 3 of beny}One has the worst-case recovery bounds%
\begin{equation}
\left(  \frac{3}{4}+\frac{1}{4}\tilde{\Lambda}_{\sigma}\right)  ^{2}\geq
\max_{\mathcal{R}}\min_{\rho}\mathcal{F}_{e}\left(  \rho,\mathcal{R}%
\circ\mathcal{A}\right)  \geq\tilde{\Lambda}_{\sigma}^{2}\text{,}%
\end{equation}
where the state $\sigma$ is an adjustable parameter, $\mathcal{A}$ has Kraus
decomposition $\mathcal{A}\left(  \mu\right)  =%
{\displaystyle\sum}
E_{i}\mu E_{i}^{\dag}$, and%
\begin{equation}
\tilde{\Lambda}_{\sigma}:=\min_{\rho}\operatorname*{Tr}\sqrt{%
{\displaystyle\sum}
E_{i}\rho^{2}E_{j}^{\dag}\times\operatorname*{Tr}\left(  E_{j}\sigma
E_{i}^{\dag}\right)  }\text{.}\label{Beny lamda}%
\end{equation}
Furthermore, if $\rho$ is fixed then one obtains%
\begin{equation}
\left(  \frac{3}{4}+\frac{1}{4}F_{\rho}\left(  \mathcal{\hat{A}},S\right)
\right)  ^{2}\geq\max_{\mathcal{R}}F_{e}\left(  \rho,\mathcal{R}%
\circ\mathcal{A}\right)  \geq\left(  F_{\rho}\left(  \mathcal{\hat{A}%
},S\right)  \right)  ^{2}\text{,}%
\end{equation}
where $\mathcal{\hat{A}}$ is a channel complementary to $\mathcal{A}$ and
$S\left(  \sigma\right)  =\mathcal{\hat{A}}\left(  \rho\right)  \times
\operatorname*{Tr}\sigma$.
\end{theorem}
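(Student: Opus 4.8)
The plan is to treat the fixed-$\rho$ bounds as the substance of the theorem and to derive the worst-case bounds from them by minimax. For the reduction, observe that for fixed $\mathcal{R}$ the entanglement fidelity $F_e(\rho,\mathcal{R}\circ\mathcal{A})$ is affine in $\mathcal{R}$ (it is linear in the composed channel) and convex in $\rho$ (one has $F_e(\rho,\Xi)=\sum_i|\operatorname{Tr}(A_i\rho)|^2$ for Kraus operators $A_i$ of $\Xi$, a sum of squared moduli of linear functionals of $\rho$). Since the density operators $\rho$ and the channels $\mathcal{R}$ each form a convex compact set in finite dimension, Sion's form of the minimax theorem \cite{Morgenstern von Neumann minimax} gives $\max_\mathcal{R}\min_\rho F_e=\min_\rho\max_\mathcal{R}F_e$. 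Granting the fixed-$\rho$ estimates $F_\rho(\hat{\mathcal{A}},S)^2\le\max_\mathcal{R}F_e(\rho,\mathcal{R}\circ\mathcal{A})\le(\tfrac34+\tfrac14 F_\rho(\hat{\mathcal{A}},S))^2$, I would take $\min_\rho$ of each side, using that $t\mapsto t^2$ and $t\mapsto(\tfrac34+\tfrac14 t)^2$ are increasing on $[0,\infty)$ and that $\tilde\Lambda_\sigma=\min_\rho F_\rho(\hat{\mathcal{A}},S)$; this produces exactly the two worst-case inequalities. Only the trivial direction $\max\min\le\min\max$ is needed for the upper bound, whereas the full minimax equality is what powers the lower bound.

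Next I would set up the complementary-channel dictionary. Fix a Stinespring isometry $U:\mathcal{H}\to\mathcal{K}\otimes\mathcal{E}$ whose ``columns'' are the $E_i$, so that $\mathcal{A}(\mu)=\operatorname{Tr}_{\mathcal{E}}(U\mu U^\dagger)=\sum_i E_i\mu E_i^\dagger$ and the complementary channel $\hat{\mathcal{A}}(\mu)=\operatorname{Tr}_{\mathcal{K}}(U\mu U^\dagger)$ has matrix entries $\langle i|\hat{\mathcal{A}}(\mu)|j\rangle=\operatorname{Tr}(E_j\mu E_i^\dagger)$. Under this identification the numbers $\operatorname{Tr}(E_j\sigma E_i^\dagger)$ appearing in $\tilde\Lambda_\sigma$ and in $S$ are precisely the entries of $\hat{\mathcal{A}}(\sigma)$, and $S$ is the constant (``completely forgetful'') channel with this fixed output. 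Hence $F_\rho(\hat{\mathcal{A}},S)$ quantifies how nearly $\hat{\mathcal{A}}$ discards its input on the support of $\rho$, and the entire argument is the information--disturbance duality: $\mathcal{A}$ is reversible on $\rho$ precisely to the extent that $\hat{\mathcal{A}}$ leaks nothing about $\rho$, the freedom in $\sigma$ being the freedom to choose the forgetful target.

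For the fixed-$\rho$ lower bound I would invoke Uhlmann's theorem on the common dilation $U$. A recovery decoupling the environment $\mathcal{E}$ from a purification $\psi_\rho$ is the same operation as one restoring the input, so the square-root fidelity between the true environment state $\hat{\mathcal{A}}(\psi_\rho)$ and its decoupled target $S(\psi_\rho)$ equals a maximum over recoveries of the square-root fidelity between $\mathcal{R}\circ\mathcal{A}(\psi_\rho)$ and $\psi_\rho$. Converting square-root fidelity into the squared fidelity $f=F_\rho$ and then into entanglement fidelity yields $\max_\mathcal{R}F_e\ge F_\rho(\hat{\mathcal{A}},S)^2$; checking that $\operatorname{Tr}\sqrt{\sum_{ij}E_i\rho^2E_j^\dagger\operatorname{Tr}(E_j\sigma E_i^\dagger)}$ is exactly this root fidelity evaluated on the purified input is then a direct computation using the Schmidt form of $\psi_\rho$.

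The converse (upper) bound is where I expect the real work, and the origin of the weights $\tfrac34,\tfrac14$. Given any recovery $\mathcal{R}$ of entanglement fidelity $F_e$, the composite $\mathcal{R}\circ\mathcal{A}$ is close to the identity on $\psi_\rho$; pushing this closeness through the complementary map and using monotonicity of fidelity forces $\hat{\mathcal{A}}$ to be correspondingly close to the constant channel $S$. Because fidelity obeys only a non-sharp triangle-type inequality, this step yields a weighted estimate of the form $\sqrt{F_e}\le\tfrac34+\tfrac14 F_\rho(\hat{\mathcal{A}},S)$ rather than an equality. I expect the main obstacle to be extracting the sharpest such fidelity continuity/triangle estimate so as to land on exactly these weights---equivalently, bounding $1-\sqrt{F_e}$ below by a fixed multiple of $1-F_\rho(\hat{\mathcal{A}},S)$---whereas the lower bound and the minimax reduction are routine once the complementary-channel dictionary is in place.
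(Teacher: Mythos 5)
First, a point of context: the paper never proves this theorem --- it is quoted from B\'eny and Oreshkov, with only the remark that they obtain it by ``employing the min-max Theorem and a beautiful (and short!) duality argument involving complementary channels.'' So there is no in-paper proof to compare against, and your proposal can only be judged against the statement itself and the method attributed to its authors. On that score your framework is the right one, and it matches the attributed route: the minimax reduction (which is exactly the paper's own Remark 2 following the theorem, using convexity of $\rho\mapsto F_{e}$ and affinity in $\mathcal{R}$), the dictionary identifying $\operatorname{Tr}\left(  E_{j}\sigma E_{i}^{\dag}\right)$ with the matrix entries of $\hat{\mathcal{A}}\left(  \sigma\right)$, and the Uhlmann/Stinespring decoupling argument for the lower bound are all sound in outline. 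One imprecision there: the fidelity between $\hat{\mathcal{A}}\left(  \psi_{\rho}\right)$ and the \emph{fixed} decoupled target does not ``equal'' a maximum over recoveries; the exact duality is $\max_{\mathcal{R}}\sqrt{F_{e}}=\max_{\tau}\sqrt{f\left(  \hat{\mathcal{A}}\left(  \psi_{\rho}\right)  ,\tau\otimes\bar{\rho}\right)  }$ with the maximum also taken over the constant-channel output $\tau$, and fixing $\tau=\hat{\mathcal{A}}\left(  \sigma\right)$ yields only the inequality --- which is fortunately the direction you use.

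The genuine gap sits exactly where the theorem's nontrivial content is: the upper bound with the weights $\tfrac{3}{4},\tfrac{1}{4}$. You never derive the inequality $\sqrt{\max_{\mathcal{R}}F_{e}}\leq\tfrac{3}{4}+\tfrac{1}{4}\,(\text{fidelity term})$; you defer it to an unspecified ``fidelity continuity/triangle estimate'' and hope those constants emerge. Nothing in the proposal forces them --- a generic Bures-angle triangle argument produces different (and $\rho$-dependent) constants, and bounding the gap between the \emph{optimal} constant output $\tau^{\ast}$ and the chosen $\hat{\mathcal{A}}\left(  \sigma\right)$ is precisely the step B\'eny--Oreshkov's duality argument is designed to control. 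Two further defects would need repair even granting that step. (i) Your reduction asserts $\tilde{\Lambda}_{\sigma}=\min_{\rho}F_{\rho}\left(  \hat{\mathcal{A}},S\right)$, but in the fixed-$\rho$ statement the constant channel $S$ has output $\hat{\mathcal{A}}\left(  \rho\right)$, which varies with $\rho$, whereas $\tilde{\Lambda}_{\sigma}$ fixes the reference output $\hat{\mathcal{A}}\left(  \sigma\right)$ once and for all; to make the minimax reduction legitimate you must first establish the fixed-$\rho$ bounds for an \emph{arbitrary} constant-channel target and only then specialize and minimize over $\rho$. (ii) There is a square-root slippage in conventions: $\operatorname{Tr}\sqrt{\sum E_{i}\rho^{2}E_{j}^{\dag}\operatorname{Tr}\left(  E_{j}\sigma E_{i}^{\dag}\right)  }$ is a \emph{root}-fidelity quantity, whose square equals $f\left(  \hat{\mathcal{A}}\left(  \psi_{\rho}\right)  ,\hat{\mathcal{A}}\left(  \sigma\right)  \otimes\bar{\rho}\right)$ in the paper's convention $f=\left(  \operatorname{Tr}\sqrt{\cdot}\right)  ^{2}$, so your identification of $\tilde{\Lambda}_{\sigma}$ with $\min_{\rho}F_{\rho}$ is correct only in B\'eny--Oreshkov's root-fidelity convention (the subject of the paper's own footnoted caveat); since the exponents appearing in the displayed bounds change with the convention, this must be pinned down before the pieces of your argument can be chained together.
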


\noindent\textbf{Remarks:}

\begin{enumerate}
\item There is an apparent, but unexplained, relationship between our work
below and the results of B\'{e}ny-Oreshkov, which appeared in arXiv preprint
form almost-simultaneously to ours. Further will be said on this matter in
\cite{Beny and Oreshkov in preparation}. (See Theorem
\ref{theorem quadratic recovery estimates} and Proposition
\ref{prop form near beny}, below.)

\item It is important to note that in the finite-dimensional case that one has
the identity%
\begin{equation}
\max_{\mathcal{R}}\min_{\rho}F_{e}\left(  \rho,\mathcal{R}\circ\mathcal{N}%
\circ\mathcal{E}\right)  =\min_{\rho}\max_{\mathcal{R}}F_{e}\left(
\rho,\mathcal{R}\circ\mathcal{N}\circ\mathcal{E}\right)
.\label{can swap in worst case}%
\end{equation}
This follows from the min-max Theorem \cite{Ky Fan minimax}, where the
convexity of the mapping $\rho\mapsto F_{e}\left(  \rho,\mathcal{R}%
\circ\mathcal{N}\circ\mathcal{E}\right)  $ is evident from equation $1.10$ of
\cite{Fletcher Thesis Channel Adapted quantum error correction} and where one
may take the recovery $\mathcal{R}$ to range over the convex set of quantum
operations (trace non-increasing completely positive maps). \textit{In
particular, one may obtain \textquotedblleft worst-case\textquotedblright%
\ recovery bounds (albeit with unevaluated minimization over }$\rho$\textit{)
from single-instance bounds on }$F_{e}\left(  \rho,\mathcal{R}\circ
\mathcal{N}\right)  $\textit{,} which we exclusively consider below.
\end{enumerate}

\subsection{Results}

Section \ref{abstract section} has already introduced directional iteration as
an abstract method for estimating solutions of maximal-seminorm problems. This
incorporates several explicitly-defined numerical iterative schemes, including:

\begin{itemize}
\item The iteration of Je\v{z}ek, \v{R}eh\'{a}\v{c}ek, and Fiur\'{a}\v{s}ek
for computing optimal quantum measurements.

\item The iteration of M. Je\v{z}ek, J. Fiur\'{a}\v{s}ek, and Z. Hradil as
restricted to the maximum-overlap problem.

\item The iteration of Reimpell and Werner for numerically optimizing quantum
error correction (both encoding and recovery).
\end{itemize}

\noindent Defined by a minimal-principle, directional iteration monotonically
increases seminorm essentially \textit{by construction}. In particular:

\begin{itemize}
\item Je\v{z}ek, \v{R}eh\'{a}\v{c}ek, and Fiur\'{a}\v{s}ek's numerical
observation that their iteration only increases success rate is proven in
greater generality.

\item This gives a short proof of Reimpell's monotonicity Theorem (pp. 39-42
of \cite{Reimpell Thesis}) for iterative optimization of quantum error correction.
\end{itemize}

Section $\ref{section minimum error distinction a max prenorm problem}$
introduces our techniques by presenting a new proof of the generalized
Holevo-Curlander bounds (Theorem \ref{Theorem gen Holevo Curlander bounds}) on
the distinguishability of arbitrary ensembles of mixed quantum states.

In section \ref{section max overlap}, Theorem
\ref{Theorem two sided overlap estimates} gives concise two-sided bounds for
the maximum overlap problem $\left(  \ref{eq defining maximum overlap}\right)
$, in the restricted case that $M_{\mathcal{\mathcal{LH}}}$ is rank $1$.
Corollary \ref{corollary bounding min entropy} bounds the quantum conditional
min-entropy. Appendix C shows how one may apply these bounds to recover the
bounds of section
$\ref{section minimum error distinction a max prenorm problem}$.

Theorem \ref{theorem quadratic recovery estimates} of section
\ref{section approximate channel reversals} applies our overlap bounds to
estimate approximate channel reversibility in the sense of entanglement
fidelity. (The bounds of section \ref{subsection overlap estimates} more
generally allow the entangled input and output states to differ, however.) Our
channel-reversibility estimates apply to the case of channel-adapted
approximate quantum error recovery.

Section $\ref{subsection comparison with barnum knill reversal}$ compares our
reversibility estimates and approximate reversal map to those of Barnum and
Knill. Although our bounds take a particularly simple form, they are still
sufficiently accurate to satisfy the tightness relation $\left(
\ref{Barnum Knill estimate}\right)  $ satisfied by the bounds of Barnum and
Knill. The relationship between our recovery map and Barnum and Knill's is
found to be analogous to the relationship between Holevo's asymptotically
optimal measurement and the so-called \textquotedblleft pretty
good\textquotedblright\ measurement. Furthermore, our recovery operation is
found to significantly outperform the transpose channel in the case of
depolarizing noise acting on half of a maximally-entangled state.

The conclusion points out directions for future research.

\section{Notation, conventions, and mathematical background}

The reader who is only interested in minimum-error distinguishability bounds
should proceed directly to section
\ref{section minimum error distinction a max prenorm problem}, referring back
only as directed.

\begin{definition}
Let $\mathcal{H}$ and $\mathcal{K}$ be Hilbert spaces, and let $A:\mathcal{H}%
\rightarrow\mathcal{K}$ be a bounded linear operator. The \textbf{absolute
value }is $\left\vert A\right\vert =\sqrt{A^{\dag}A}$. The space $B^{1}\left(
\mathcal{H}\rightarrow\mathcal{K}\right)  $ consists of all operators of
finite \textbf{trace norm} $\left\Vert A\right\Vert _{1}:=\operatorname*{Tr}%
\left\vert A\right\vert $. The space $B^{2}\left(  \mathcal{H}\rightarrow
\mathcal{K}\right)  $ consists of all operators of finite
\textbf{Hilbert-Schmidt norm} $\left\Vert A\right\Vert _{2}:=\sqrt
{\operatorname*{Tr}A^{\dag}A}$. This space has the inner product%
\begin{equation}
\left\langle A,B\right\rangle =\operatorname*{Tr}A^{\dag}B.
\end{equation}
The space $B\left(  \mathcal{H}\rightarrow\mathcal{K}\right)  $ consists of
all operators of finite \textbf{operator norm}, given by%
\begin{equation}
\left\Vert A\right\Vert =\left\Vert A\right\Vert _{\infty}=\sup_{0\neq\psi
\in\mathcal{H}}\frac{\left\Vert A\psi\right\Vert }{\left\Vert \psi\right\Vert
}\text{.}%
\end{equation}
When $\mathcal{H}=\mathcal{K}$, these spaces will be denoted by $B\left(
\mathcal{H}\right)  $, $B^{1}\left(  \mathcal{H}\right)  $, and $B^{2}\left(
\mathcal{H}\right)  ,$ for short. An operator $A$ is a \textbf{contraction }if
$\left\Vert A\right\Vert \leq1$.
\end{definition}

It is assumed that the reader is familiar with the following trace-norm
inequalities, which may be found in \cite{Reed and Simon I}:%

\begin{align}
\left\vert \operatorname*{Tr}A\right\vert  &  \leq\left\Vert A\right\Vert
_{1}=\left\Vert A^{\dag}\right\Vert _{1}\text{ if }\mathcal{K}=\mathcal{H}%
\text{.}\label{ineq tr abs less than tr norm}\\
\left\Vert WA\right\Vert _{1} &  \leq\left\Vert W\right\Vert _{\infty}%
\times\left\Vert A\right\Vert _{1}\text{.}\label{ineq Holder I1 I infinity}%
\end{align}
Furthermore,
\begin{equation}
\sup_{\left\Vert U\right\Vert \leq1}\operatorname{Re}\left(
\operatorname*{Tr}A^{\dag}U\right)  =\left\Vert A\right\Vert _{1}%
,\label{equation sup trace A dag U}%
\end{equation}
where $A:\mathcal{H}\rightarrow\mathcal{K}$ and the supremum is over
contractions $U:\mathcal{H}\rightarrow\mathcal{K}$. It follows simply from the
singular value decomposition that $U$ is a maximizer of $\left(
\ref{equation sup trace A dag U}\right)  $ iff%
\begin{equation}
\left.  U\right\vert _{\operatorname{Ran}\left(  A^{\dag}A\right)  }=A\left(
A^{\dag}A\right)  ^{-1/2^{+}}\text{,}\label{unitary maximizer equation}%
\end{equation}
where $\left(  A^{\dag}A\right)  ^{-1/2^{+}}$ is defined by $\left(
\ref{eq defining minus 1/2 plus exponent}\right)  $.

\begin{definition}
Let $A$ be a self-adjoint operator. The \textbf{positive projection }$\Pi
_{+}\left(  A\right)  $ is the projection onto the closure of the range of the
positive part of $A$. In particular, if $A$ has spectral decomposition $A=%
{\displaystyle\sum}
\lambda_{i}\left\vert \psi_{i}\right\rangle \left\langle \psi_{i}\right\vert $
then%
\begin{equation}
\Pi_{+}\left(  A\right)  =%
{\displaystyle\sum_{\lambda_{i}>0}}
\left\vert \psi_{i}\right\rangle \left\langle \psi_{i}\right\vert
\text{.}\label{eq defining positive projection}%
\end{equation}

\end{definition}

A more thorough discussion of most of the following terms may be found in
\cite{Mike and Ike}:

\begin{definition}
\label{definition of adjoint as an actual displayed defintion} A
\textbf{quantum state} is a trace-class positive semidefinite operator $\rho$
on a Hilbert space $\mathcal{H}$. (Generally states are of unit trace,
although in section
\ref{section minimum error distinction a max prenorm problem} it will be
convenient to normalize them by a-priori probability.) The \textbf{support}
$\operatorname*{supp}\left(  A\right)  $ of the transformation $A:\mathcal{H}%
\rightarrow\mathcal{K}$ is the closure of the range of $A^{\dag}A$, or
equivalently the orthogonal complement of the null-space of $A$. A
\textbf{quantum channel} is a trace preserving completely positive map. A
\textbf{quantum operation} is a trace non-increasing completely positive map.
A linear operator $U:\mathcal{K}\rightarrow\mathcal{L}\otimes\mathcal{E}$ is a
\textbf{Stinespring dilation }\cite{steinspring dialation} of a completely
positive map $\mathcal{R}:B^{1}\left(  \mathcal{K}\right)  \rightarrow
B^{1}\left(  \mathcal{L}\right)  $ if%
\begin{equation}
\mathcal{R}\left(  \rho\right)  =\operatorname*{Tr}_{\mathcal{E}}\left(  U\rho
U^{\dag}\right) \label{defining property of a purification}%
\end{equation}
for all $\rho\in B^{1}\left(  \mathcal{K}\right)  $. The \textbf{adjoint}
$\mathcal{R}^{\dag}:B\left(  \mathcal{L}\right)  \rightarrow B\left(
\mathcal{K}\right)  $ has the defining property that%
\begin{equation}
\operatorname*{Tr}_{\mathcal{L}}\left(  X_{\mathcal{L}}\mathcal{R}%
_{\mathcal{K}\rightarrow\mathcal{L}}\left(  Y_{\mathcal{K}}\right)  \right)
=\operatorname*{Tr}_{\mathcal{K}}\left(  \mathcal{R}_{\mathcal{L}%
\rightarrow\mathcal{K}}^{\dag}\left(  X_{\mathcal{L}}\right)  Y_{\mathcal{K}%
}\right) \label{eq defining adjoint}%
\end{equation}
for $X\in B\left(  \mathcal{L}\right)  $ and $Y\in B^{1}\left(  \mathcal{K}%
\right)  .$
\end{definition}

It is important to note that if $\mathcal{R}$ and $U$ are related by $\left(
\ref{defining property of a purification}\right)  $ then $\mathcal{R}$ is a
channel iff $U$ is an isometry $\left(  U^{\dag}U=%
\openone
\right)  $, and $\mathcal{R}:B^{1}\left(  \mathcal{K}\right)  \rightarrow
B^{1}\left(  \mathcal{L}\right)  $ is a quantum operation iff $U$ is a
contraction. Furthermore, it is observed in Appendix A that each quantum
operation $\mathcal{A}$ has a canonical dilation with the canonical
environment%
\begin{equation}
\mathcal{E}=\mathcal{L}_{\mathcal{E}}^{\ast}\otimes\mathcal{K}_{\mathcal{E}%
},\label{eq canonical environment}%
\end{equation}
where $\mathcal{K}_{\mathcal{E}}$ and $\mathcal{L}_{\mathcal{E}}^{\ast}$ are
copies of $\mathcal{K}$ and the dual space of $\mathcal{L}$, respectively.

\bigskip

\textbf{Tensor product notation:} A linear operator $A:\mathcal{H}%
\rightarrow\mathcal{K}$ will often be denoted as $A_{\mathcal{H}%
\rightarrow\mathcal{K}}$, and will be identified without further comment with
any operator of the form $A\otimes%
\openone
_{\mathcal{L}}$ where $%
\openone
_{\mathcal{L}}$ is the identity operator on some other Hilbert space
$\mathcal{L}$. If $\left\vert \psi\right\rangle \in\mathcal{L}$, the
transformation $\left\vert \psi\right\rangle \otimes A:\mathcal{H}%
\rightarrow\mathcal{K}\otimes\mathcal{L}$ is defined by%
\begin{equation}
\left(  \left\vert \psi\right\rangle _{\mathcal{L}}\otimes A_{\mathcal{H}%
\rightarrow\mathcal{K}}\right)  \left\vert \phi\right\rangle _{\mathcal{H}%
}=\left\vert \psi\right\rangle _{\mathcal{L}}\otimes\left\vert A\phi
\right\rangle _{\mathcal{K}}%
\label{define tensor product of vector and transformation}%
\end{equation}
When $\mathcal{A}:B^{1}\left(  \mathcal{H}\right)  \rightarrow B^{1}\left(
\mathcal{K}\right)  $ is a quantum operation, it will often be denoted as
$\mathcal{A}_{\mathcal{H}\rightarrow\mathcal{K}}$.

\subsection{Basis-free constructions using dual spaces and double
kets\label{section basis free}}

As in a number of previous works on channel-adapted quantum error recovery
\cite{Yamamoto Hara Tsumura suboptimal quantum-error-correcting procedure
based on semidefinite programming, Fletcher Shor Win Structured Near-Optimal
Channel-Adapted Quantum Error Correction, Fletcher Thesis Channel Adapted
quantum error correction}, in Section
\ref{section approximate channel reversals} and in the Appendices A and B it
will prove convenient to treat Hilbert spaces and their duals on equal footing:

\begin{definition}
The \textbf{dual space} $\mathcal{H}^{\ast}$ of the Hilbert space
$\mathcal{H}$ is the set of linear functionals $\bar{\psi}:\mathcal{H}%
\rightarrow\mathbb{C}$ of the form
\begin{equation}
\bar{\psi}\left(  \phi\right)  :=\left\langle \psi,\phi\right\rangle
:\mathcal{H}\rightarrow\mathbb{C},
\end{equation}
where $\psi\in\mathcal{H}$. The space $\mathcal{H}^{\ast}$ is a Hilbert space
in its own right with inner product%
\begin{equation}
\left\langle \bar{\psi}_{1},\bar{\psi}_{2}\right\rangle _{\mathcal{H}^{\ast}%
}:=\overline{\left(  \left\langle \psi_{1},\psi_{2}\right\rangle
_{\mathcal{H}}\right)  }=\left\langle \psi_{2},\psi_{1}\right\rangle
_{\mathcal{H}}\text{,}%
\end{equation}
where the bar in the middle denotes complex conjugation.
\end{definition}

Use of the dual space as a Hilbert space in its own right has pleasant
computational properties which are amenable to Dirac
notation.$^{\text{\cite{footnote citing antiparticles}}}$ For example, if
$\psi\in\mathcal{H}$ has the coordinate expansion%
\begin{equation}
\psi=%
{\displaystyle\sum}
\psi_{i}\left\vert i\right\rangle _{\mathcal{H}}%
\end{equation}
then the dual vector $\bar{\psi}\in\mathcal{H}^{\ast}$ has the expansion%
\begin{equation}
\bar{\psi}=%
{\displaystyle\sum}
\bar{\psi}_{i}\left\vert \bar{\imath}\right\rangle _{\mathcal{H}^{\ast}%
}\text{,}%
\end{equation}
where the coordinates $\bar{\psi}_{i}:=\overline{\left\langle i\right\vert
}_{\mathcal{H}^{\ast}}\overline{\left\vert \psi\right\rangle }_{\mathcal{H}%
^{\ast}}$ are simply the complex conjugates of the coordinates $\psi_{i}$:%
\begin{equation}
\bar{\psi}_{i}=\overline{\left(  \psi_{i}\right)  }\text{.}%
\end{equation}

Given the linear transformation $A\in B^{2}\left(  \mathcal{H}\rightarrow
\mathcal{K}\right)  $%
\begin{equation}
A_{\mathcal{H}\rightarrow\mathcal{K}}=%
{\displaystyle\sum}
A_{ij}\left\vert i\right\rangle _{\mathcal{K}}\,\left\langle j\right\vert
_{\mathcal{H}}%
\end{equation}
one may form the \textbf{conjugate operator }$\bar{A}:\mathcal{H}^{\ast
}\rightarrow\mathcal{K}^{\ast}$, the \textbf{transpose }$A^{\text{tr}%
}:\mathcal{K}^{\ast}\rightarrow\mathcal{H}^{\ast}$, and the
\textbf{basis-free} \textbf{double ket }$\left.  \left\vert A\right\rangle
\!\right\rangle _{\mathcal{\mathcal{KH}}^{\ast}}\in\mathcal{K}\otimes
\mathcal{H}^{\ast}$ by
\begin{align}
\bar{A}_{\mathcal{H}^{\ast}\rightarrow\mathcal{K}^{\ast}}  & =%
{\displaystyle\sum}
\bar{A}_{ij}\left\vert \bar{\imath}\right\rangle _{\mathcal{K}^{\ast}%
}\,\left\langle \bar{j}\right\vert _{\mathcal{H}^{\ast}}%
\label{stupid def of bar}\\
A_{\mathcal{K}^{\ast}\rightarrow\mathcal{H}^{\ast}}^{\text{tr}}  & =%
{\displaystyle\sum}
A_{ij}\left\vert \bar{j}\right\rangle _{\mathcal{H}^{\ast}}\,\left\langle
\bar{\imath}\right\vert _{\mathcal{K}^{\ast}}\label{stupid def of transpose}\\
\left.  \left\vert A\right\rangle \!\right\rangle _{\mathcal{\mathcal{KH}%
}^{\ast}}  & =%
{\displaystyle\sum}
A_{ij}\left\vert i\right\rangle _{\mathcal{K}}\,\left\vert \bar{j}%
\right\rangle _{\mathcal{H}^{\ast}}\text{.}\label{stupid def of double ket}%
\end{align}
These equations may be replaced by basis-independent definitions, since they
are uniquely-specified by the identities $\bar{A}\bar{\phi}=\overline{\left(
A\phi\right)  }$, $A^{\text{tr}}=\bar{A}^{\dag}$, and $\left\langle
\psi_{\mathcal{K}}\right\vert \left\langle \bar{\phi}_{\mathcal{H}^{\ast}%
}\right\vert \!\left.  \left\vert A\right\rangle \!\right\rangle
_{\mathcal{\mathcal{KH}}^{\ast}}=\left\langle \psi,A\phi\right\rangle $, for
$\phi\in\mathcal{H}$ and $\psi\in\mathcal{K}$, respectively.

The \textbf{basis-free double bra}
\begin{equation}
\left\langle \!\left\langle A\right\vert \right.  _{\mathcal{\mathcal{KH}%
}^{\ast}}=%
{\displaystyle\sum}
\bar{A}_{ij}\left\langle i\right\vert _{\mathcal{K}}\,\left\langle \bar
{j}\right\vert _{\mathcal{H}^{\ast}}\label{stupid def of double bra}%
\end{equation}
denotes the linear functional on $\mathcal{K}\otimes\mathcal{H}^{\ast}$
corresponding to $\left.  \left\vert A\right\rangle \!\right\rangle $. The
\textbf{partial transpose }is the isometric extension of the mapping
$A_{\mathcal{H}\rightarrow\mathcal{K}}\otimes B_{\mathcal{L}\rightarrow
\mathcal{M}}\mapsto A_{\mathcal{K}^{\ast}\rightarrow\mathcal{H}^{\ast}%
}^{\text{tr}}\otimes B_{\mathcal{L}\rightarrow\mathcal{M}}$, i.e.%
\begin{equation}
\operatorname*{PT}_{B^{2}\left(  \mathcal{H}\rightarrow\mathcal{K}\right)
\rightarrow B^{2}\left(  \mathcal{K}^{\ast}\rightarrow\mathcal{H}^{\ast
}\right)  }\left(
{\displaystyle\sum}
X_{mkh\ell}\left\vert m_{\mathcal{M}}\right\rangle \left\vert k_{\mathcal{K}%
}\right\rangle \left\langle h_{\mathcal{H}}\right\vert \left\langle
\ell_{\mathcal{L}}\right\vert \right)  =%
{\displaystyle\sum}
X_{mkh\ell}\left\vert m_{\mathcal{M}}\right\rangle \left\vert \bar
{h}_{\mathcal{H}^{\ast}}\right\rangle \left\langle \bar{k}_{\mathcal{K}^{\ast
}}\right\vert \left\langle \ell_{\mathcal{L}}\right\vert
,\label{stupid def of partial transpose}%
\end{equation}
which maps $B^{2}\left(  \mathcal{H}\otimes\mathcal{L}\rightarrow
\mathcal{K}\otimes\mathcal{M}\right)  \rightarrow B^{2}\left(  \mathcal{K}%
^{\ast}\otimes\mathcal{L}\rightarrow\mathcal{H}^{\ast}\otimes\mathcal{M}%
\right)  $, where $\mathcal{L}$ and $\mathcal{M}$ are arbitrary Hilbert spaces.

We collect some useful identities involving basis-free double-kets:

\begin{lemma}
\begin{enumerate}
\item If $A,B:\mathcal{H}\rightarrow\mathcal{K}$ then
\begin{equation}
\left\langle \!\left\langle A,B\right\rangle \!\right\rangle _{\mathcal{KH}%
^{\ast}}=\operatorname*{Tr}A^{\dag}B\label{eq double ket is an isometry}%
\end{equation}

\item Let $A:\mathcal{K}\rightarrow\mathcal{L},$ $B:\mathcal{H}\rightarrow
\mathcal{M}$, and $C:\mathcal{H}\rightarrow\mathcal{K}$. Then%
\begin{equation}
\left(  A_{\mathcal{K}\rightarrow\mathcal{L}}\otimes\bar{B}_{\mathcal{H}%
^{\ast}\rightarrow\mathcal{M}^{\ast}}\right)  \left.  \left\vert
C\right\rangle \!\right\rangle _{\mathcal{KH}^{\ast}}=\left.  \left\vert
ACB^{\dag}\right\rangle \!\right\rangle _{\mathcal{LM}^{\ast}}\text{.}%
\label{basic eq for double kets}%
\end{equation}

\item Let $A:\mathcal{H}\rightarrow\mathcal{L}$ and $B:\mathcal{K}%
\rightarrow\mathcal{L}$. Then%
\begin{equation}
\left\langle \!\left\langle A\right\vert \right.  _{\mathcal{LH}^{\ast}}%
\times\left.  \left\vert B\right\rangle \!\right\rangle _{\mathcal{LK}^{\ast}%
}=\operatorname*{Tr}_{\mathcal{L}}\left.  \left\vert B\right\rangle
\!\right\rangle _{\mathcal{LK}^{\ast}}\left\langle \!\left\langle A\right\vert
\right.  _{\mathcal{LH}^{\ast}}=\overline{B^{\dag}A}%
\label{eq inner product out first factor}%
\end{equation}

\item Let $A:\mathcal{H}\rightarrow\mathcal{K}$ and $B:\mathcal{H}%
\rightarrow\mathcal{L}$. Then%
\begin{align}
\operatorname*{Tr}_{\mathcal{H}^{\ast}}\left.  \left\vert A\right\rangle
\!\right\rangle _{\mathcal{K\mathcal{H}}^{\ast}}~\left\langle \!\left\langle
B\right\vert \right.  _{\mathcal{\mathcal{LH}}^{\ast}} &  =A_{\mathcal{H}%
\rightarrow\mathcal{K}}B_{\mathcal{L}\rightarrow\mathcal{H}}^{^{\dag}%
}\label{eq used to construct canonical purif}\\
\operatorname*{PT}_{B^{2}\left(  \mathcal{H}^{\ast}\right)  \rightarrow
B^{2}\left(  \mathcal{H}\right)  }\left(  \left.  \left\vert A\right\rangle
\!\right\rangle _{\mathcal{\mathcal{KH}}^{\ast}}~\left\langle \!\left\langle
B\right\vert \right.  _{\mathcal{\mathcal{LH}}^{\ast}}\right)   &
=B_{\mathcal{L}\rightarrow\mathcal{H}}^{\dag}\otimes A_{\mathcal{H}%
\rightarrow\mathcal{K}}:\mathcal{H}\otimes\mathcal{L}\rightarrow
\mathcal{H}\otimes\mathcal{K}\label{Basis free PT identity}%
\end{align}

\end{enumerate}
\end{lemma}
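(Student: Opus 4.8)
The plan is to verify each of the four identities by direct expansion in the coordinate bases of the explicit definitions $\left(\ref{stupid def of bar}\right)$--$\left(\ref{stupid def of partial transpose}\right)$, matching the two sides entry-by-entry. No part requires anything beyond orthonormality of the coordinate bases of $\mathcal{K}$, $\mathcal{H}^{\ast}$, etc., together with the bookkeeping relations $\bar{A}_{ij}=\overline{A_{ij}}$ and $A^{\text{tr}}=\bar{A}^{\dag}$. I would begin with part~1, which is really the assertion that the double-ket map is an isometry of $B^{2}\left(\mathcal{H}\rightarrow\mathcal{K}\right)$ onto $\mathcal{K}\otimes\mathcal{H}^{\ast}$. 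Writing $|A\rangle\!\rangle=\sum A_{ij}|i\rangle_{\mathcal{K}}|\bar{\jmath}\rangle_{\mathcal{H}^{\ast}}$ and pairing it with the double-bra $\langle\!\langle B|=\sum\overline{B_{k\ell}}\langle k|_{\mathcal{K}}\langle\bar{\ell}|_{\mathcal{H}^{\ast}}$ of $B$, the orthonormality relations $\langle i|k\rangle=\delta_{ik}$ and $\langle\bar{\jmath}|\bar{\ell}\rangle=\delta_{j\ell}$ collapse the quadruple sum to $\sum_{ij}\overline{B_{ij}}A_{ij}=\operatorname{Tr}A^{\dag}B$, which is $\left(\ref{eq double ket is an isometry}\right)$.

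For part~2 I would avoid coordinates and instead test $\left(\ref{basic eq for double kets}\right)$ against an arbitrary functional $\langle\psi_{\mathcal{L}}|\langle\bar{\chi}_{\mathcal{M}^{\ast}}|$, since a vector in $\mathcal{L}\otimes\mathcal{M}^{\ast}$ is determined by all such pairings. Applying the characterizing property $\langle\psi|\langle\bar{\chi}|\,|X\rangle\!\rangle=\langle\psi,X\chi\rangle$ to the right-hand side gives $\langle\psi,ACB^{\dag}\chi\rangle$. On the left-hand side I would push the operators through the pairing using $\langle\psi|A=\langle A^{\dag}\psi|$ and the defining relation $\bar{B}\bar{\phi}=\overline{B\phi}$, which yields $\langle\bar{\chi}|\bar{B}=\langle\overline{B^{\dag}\chi}|$ (equivalently $\bar{B}^{\dag}=B^{\text{tr}}$); the same characterizing property then reduces the left-hand side to $\langle A^{\dag}\psi,CB^{\dag}\chi\rangle=\langle\psi,ACB^{\dag}\chi\rangle$, and matching the two scalars completes part~2.

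Parts~3 and~4 are again coordinate computations, and the only real care needed is in tracking which tensor factor is being contracted. For part~3 I would first note that $\langle\!\langle A|\cdot|B\rangle\!\rangle=\operatorname{Tr}_{\mathcal{L}}|B\rangle\!\rangle\langle\!\langle A|$ is the general fact that contracting a bra against a ket on a shared factor equals the partial trace over that factor of the outer product. Expanding $\langle\!\langle A|_{\mathcal{LH}^{\ast}}=\sum\overline{A_{ij}}\langle i|_{\mathcal{L}}\langle\bar{\jmath}|_{\mathcal{H}^{\ast}}$ and $|B\rangle\!\rangle_{\mathcal{LK}^{\ast}}=\sum B_{k\ell}|k\rangle_{\mathcal{L}}|\bar{\ell}\rangle_{\mathcal{K}^{\ast}}$ and contracting the $\mathcal{L}$-index leaves $\sum_{j\ell}\left(A^{\dag}B\right)_{j\ell}|\bar{\ell}\rangle_{\mathcal{K}^{\ast}}\langle\bar{\jmath}|_{\mathcal{H}^{\ast}}$, which one recognizes as $\overline{B^{\dag}A}$ via $\left(\ref{stupid def of bar}\right)$, since $\overline{(B^{\dag}A)_{\ell j}}=\left(A^{\dag}B\right)_{j\ell}$; this is $\left(\ref{eq inner product out first factor}\right)$. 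For the first equality of part~4 I would form $|A\rangle\!\rangle_{\mathcal{KH}^{\ast}}\langle\!\langle B|_{\mathcal{LH}^{\ast}}$ and trace out $\mathcal{H}^{\ast}$, which identifies the two $\mathcal{H}^{\ast}$-indices and produces $\sum_{ik}\left(AB^{\dag}\right)_{ik}|i\rangle_{\mathcal{K}}\langle k|_{\mathcal{L}}=AB^{\dag}$, giving $\left(\ref{eq used to construct canonical purif}\right)$.

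The step I expect to be the genuine obstacle -- more bookkeeping than mathematics -- is the partial-transpose identity $\left(\ref{Basis free PT identity}\right)$. Here, rather than tracing out $\mathcal{H}^{\ast}$, one applies the map $\left(\ref{stupid def of partial transpose}\right)$, which simultaneously relabels $\mathcal{H}^{\ast}\leftrightarrow\mathcal{H}$ and conjugates the corresponding coordinates, to the $\mathcal{H}^{\ast}$-block $|\bar{\jmath}\rangle_{\mathcal{H}^{\ast}}\langle\bar{\ell}|_{\mathcal{H}^{\ast}}$ of the same outer product. The danger is losing track of which index is conjugated and which Hilbert space each surviving ket and bra lives in. I would carry out the relabeling on the explicit monomial $A_{ij}\,\overline{B_{k\ell}}\,|i\rangle_{\mathcal{K}}\,|\bar{\jmath}\rangle_{\mathcal{H}^{\ast}}\,\langle k|_{\mathcal{L}}\,\langle\bar{\ell}|_{\mathcal{H}^{\ast}}$ and check that the result reassembles into $B^{\dag}_{\mathcal{L}\rightarrow\mathcal{H}}\otimes A_{\mathcal{H}\rightarrow\mathcal{K}}$, thereby confirming that the partial transpose converts the contraction of part~4 into the claimed tensor product.
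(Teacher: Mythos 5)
Your proposal is correct and matches the paper's intended argument: the paper offers no detailed proof, only the remark that by multilinearity it suffices to check the identities on rank-1 operators, and your monomial-by-monomial coordinate expansion (together with the functional-pairing argument for part 2) is exactly that verification carried out explicitly. All of your index computations check out, including the sign/conjugation bookkeeping $\overline{\left(  B^{\dag}A\right)  _{\ell j}}=\left(  A^{\dag}B\right)  _{j\ell}$ in part 3 and the relabeling $\left\vert \bar{\jmath}\right\rangle _{\mathcal{H}^{\ast}}\left\langle \bar{\ell}\right\vert _{\mathcal{H}^{\ast}}\mapsto\left\vert \ell\right\rangle _{\mathcal{H}}\left\langle j\right\vert _{\mathcal{H}}$ in the partial-transpose identity.
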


Note that by multilinearity it is enough to check these identities for rank-1 operators.

\begin{definition}
\label{definition of canonical purif of state}The \textbf{canonical
purification }\cite{footnote basis dependent versions} of a quantum state
$\rho\in B^{1}\left(  \mathcal{H}\right)  $ is given by
\begin{equation}
\left\vert \psi_{\rho}\right\rangle =\left.  \left\vert \sqrt{\rho
}\right\rangle \!\right\rangle _{\mathcal{\mathcal{\mathcal{\mathcal{HH}%
^{\ast}}}}}.\label{eq for canonical purification}%
\end{equation}
When $\mathcal{K}$ is finite-dimensional,$^{\text{\cite{footnote infinite dim
choi}}}$ the \textbf{Choi matrix} \cite{Choi Matrix probably} of a
transformation $\mathcal{R}:B^{1}\left(  \mathcal{K}\right)  \rightarrow
B^{1}\left(  \mathcal{L}\right)  $ is given by%
\begin{equation}
\mathcal{\tilde{R}}=\mathcal{R}\left(  \left.  \left\vert
\openone
\right\rangle \!\right\rangle _{\mathcal{KK}^{\ast}}\left\langle
\!\left\langle
\openone
\right\vert \right.  \right)  \in B^{1}\left(  \mathcal{LK}^{\ast}\right)
.\label{eq for basis free choi matrix}%
\end{equation}

\end{definition}

Note that by $\left(  \ref{eq used to construct canonical purif}\right)  $ and
$\left(  \ref{eq inner product out first factor}\right)  $, $\psi_{\rho}$ has
the standard defining property%
\begin{equation}
\rho=\operatorname*{Tr}_{\mathcal{H}^{\ast}}\left\vert \psi_{\rho
}\right\rangle \left\langle \psi_{\rho}\right\vert
\label{eq psi sub rho purifies rho}%
\end{equation}
of a purification of $\rho,$ and also of $\bar{\rho}$%
\begin{equation}
\bar{\rho}=\operatorname*{Tr}_{\mathcal{H}}\left\vert \psi_{\rho}\right\rangle
\left\langle \psi_{\rho}\right\vert .\label{eq psi sub rho purifies rhobar}%
\end{equation}
In particular, if $\mathcal{H}$ is finite-dimensional then the state $\left(
\dim\mathcal{H}\right)  ^{-1/2}\left.  \left\vert
\openone
\right\rangle \!\right\rangle _{\mathcal{\mathcal{HH}^{\ast}}}$ is
maximally-entangled, and indeed the singular value decomposition of an
operator%
\[
A=%
{\displaystyle\sum}
\lambda_{i}\left\vert f_{i}\right\rangle \left\langle g_{i}\right\vert
\]
corresponds precisely to the Schmidt decomposition of its double-ket
\[
\left.  \left\vert A\right\rangle \!\right\rangle =%
{\displaystyle\sum}
\lambda_{i}\left\vert f_{i}\right\rangle \left\vert \bar{g}_{i}\right\rangle .
\]

\noindent\textbf{Remark: }A basis-free construction of the Stinespring
dilation may be found in Appendix A.

\section{Minimum-error distinction as a maximal seminorm
problem\label{section minimum error distinction a max prenorm problem}}

The minimum-error quantum detection problem of Definition
\ref{definition containing ensemble E} may be reformulated as a
maximal-seminorm problem using the identity%
\begin{equation}
P_{\text{succ}}\left(  M\right)  =\left\Vert E\right\Vert _{\mathcal{E}}%
^{2}\text{,}\label{eq making min error dectection into seminorm}%
\end{equation}
per the following definition:

\begin{definition}
\label{defintion including ensemble inner product} Let $\mathcal{E}=\left\{
\rho_{k}\right\}  _{k\in K}$ be the ensemble of Definition
$\ref{definition containing ensemble E}$. A vector of operators $E=\left\{
E_{k}:\mathcal{H}\rightarrow\mathcal{H}\right\}  _{k\in K}$ is a
\textbf{generalized measurement (GM) }\cite{Helstrom Quantum Detection and
Estimation Theory} corresponding to the POVM\ $M=\left\{  M_{k}\right\}
_{k\in K}$ if one has the decomposition%
\begin{equation}
M_{k}=E_{k}^{\dag}E_{k}\text{.}\label{EQ for POVM in terms of GM}%
\end{equation}
The $\mathcal{E}$\textbf{-semi-inner product} is defined for vectors of
operators $F=\left\{  F_{k}:\mathcal{H}\rightarrow\mathcal{H}\right\}  _{k\in
K}$ and $G=\left\{  G_{k}:\mathcal{H}\rightarrow\mathcal{H}\right\}  _{k\in K}
$ by%
\begin{equation}
\left\langle F,G\right\rangle _{\mathcal{E}}=\operatorname*{Tr}%
{\displaystyle\sum_{k\in K}}
F_{k}^{\dag}G_{k}\rho_{k}\text{.}\label{eq defining ensemble inner product}%
\end{equation}
The $\mathcal{E}$-\textbf{semi-inner product space }is the space
$V_{\mathcal{E}}=\left\{  E~\left\vert ~\left\Vert E\right\Vert _{\mathcal{E}%
}<\infty\right.  \right\}  $, on which $\left\langle \bullet,\bullet
\right\rangle _{\mathcal{E}}$ is well-defined. The set $S_{\mathcal{E}%
}\subseteq V_{\mathcal{E}}$ will denote the set of generalized measurements of
$\mathcal{E}$.
\end{definition}

\noindent\textbf{Remark:} It is important to note that if $\mathcal{E}$ is a
perfectly distinguishable ensemble of more than one element then $\left\Vert
\bullet\right\Vert _{\mathcal{E}}$ is only a seminorm. In particular, any
cyclic permutation $E^{\prime}$ of a perfectly-distinguishing generalized
measurement must satisfy $\left\Vert E^{\prime}\right\Vert =0$.

\subsection{Computation of directional
iterates\label{section JRF iteration revisited}}

Our first step is to compute directional iterates for generalized measurements:

\begin{theorem}
[Directional iteration for generalized measurements]%
\label{Theorem JRF iteration is an example of abstract JRFH iteration}Take
$S=S_{\mathcal{E}}$ and $V=V_{\mathcal{E}}$ as in Def.
\ref{def of directional iterate}. Then a directional iterate of $E\in
V_{\mathcal{E}}$ is given by%
\begin{equation}
E_{k}^{\left(  +\right)  }=E_{k}\rho_{k}\left(
{\displaystyle\sum}
\rho_{\ell}E_{\ell}^{\dag}E_{\ell}\rho_{\ell}\right)  ^{-1/2^{+}%
}\text{,\label{formula for iterate of a generalized measurement}}%
\end{equation}
where the exponent is given by equation
\ref{eq defining minus 1/2 plus exponent}. Furthermore, one has the identity%
\begin{equation}
\left\langle E^{\left(  +\right)  },E\right\rangle _{\mathcal{E}%
}=\operatorname*{Tr}\sqrt{%
{\displaystyle\sum}
\rho_{k}M_{k}\rho_{k}}\text{.}\label{formula Eplus inner product E}%
\end{equation}

\end{theorem}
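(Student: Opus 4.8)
The plan is to compute the directional iterate directly from its defining maximization principle. By Definition~\ref{def of directional iterate}, $E^{(+)}$ must maximize $\operatorname{Re}\left\langle E^{(+)},E\right\rangle_{\mathcal{E}}$ over all generalized measurements $E^{(+)}\in S_{\mathcal{E}}$. Unwinding the $\mathcal{E}$-semi-inner product of equation~\ref{eq defining ensemble inner product}, the quantity to be maximized is
\[
\operatorname{Re}\operatorname*{Tr}\sum_{k}\left(E^{(+)}_k\right)^{\dag}E_k\rho_k=\operatorname{Re}\operatorname*{Tr}\sum_{k}\left(E^{(+)}_k\right)^{\dag}\left(E_k\rho_k\right).
\]
The constraint that $\left\{(E^{(+)}_k)^{\dag}E^{(+)}_k\right\}$ be a POVM (i.e.\ $\sum_k (E^{(+)}_k)^{\dag}E^{(+)}_k\le\openone$) is exactly the statement that the block operator $\mathbf{E}^{(+)}=\bigl(E^{(+)}_1,E^{(+)}_2,\dots\bigr)$, viewed as a single map $\mathcal{H}\to\bigoplus_k\mathcal{H}$, is a contraction. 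Thus the problem is precisely of the form $\left(\ref{equation sup trace A dag U}\right)$: maximize $\operatorname{Re}\operatorname*{Tr}\bigl(\mathbf{A}^{\dag}\mathbf{U}\bigr)$ over contractions, where $\mathbf{A}$ has components $A_k:=E_k\rho_k$ and $\mathbf{U}$ has components $E^{(+)}_k$.

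First I would make this block reformulation precise, so that the supremum equals $\left\Vert\mathbf{A}\right\Vert_1$ and the maximizer is characterized by the identity $\left(\ref{unitary maximizer equation}\right)$, namely $\mathbf{U}=\mathbf{A}\bigl(\mathbf{A}^{\dag}\mathbf{A}\bigr)^{-1/2^{+}}$ on the range of $\mathbf{A}^{\dag}\mathbf{A}$. The key computation is then identifying $\mathbf{A}^{\dag}\mathbf{A}$: since $\mathbf{A}^{\dag}\mathbf{A}=\sum_k A_k^{\dag}A_k=\sum_k\rho_k E_k^{\dag}E_k\rho_k=\sum_\ell\rho_\ell M_\ell\rho_\ell$, applying the maximizer formula componentwise yields exactly
\[
E^{(+)}_k=A_k\bigl(\mathbf{A}^{\dag}\mathbf{A}\bigr)^{-1/2^{+}}=E_k\rho_k\Bigl(\sum_\ell\rho_\ell E_\ell^{\dag}E_\ell\rho_\ell\Bigr)^{-1/2^{+}},
\]
which is formula~\ref{formula for iterate of a generalized measurement}. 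I would note that $M_\ell=E_\ell^\dag E_\ell$ to match the stated form.

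For the inner-product identity~\ref{formula Eplus inner product E}, the supremum value from $\left(\ref{equation sup trace A dag U}\right)$ is $\left\Vert\mathbf{A}\right\Vert_1=\operatorname*{Tr}\left\vert\mathbf{A}\right\vert=\operatorname*{Tr}\sqrt{\mathbf{A}^{\dag}\mathbf{A}}=\operatorname*{Tr}\sqrt{\sum_\ell\rho_\ell M_\ell\rho_\ell}$, which is exactly the claimed value of $\left\langle E^{(+)},E\right\rangle_{\mathcal{E}}$ (this inner product is real and equals its own real part at the maximizer).

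\emph{The main obstacle} I anticipate is bookkeeping rather than conceptual: I must verify that the ``block contraction'' condition faithfully encodes the POVM constraint $\sum_k(E^{(+)}_k)^\dag E^{(+)}_k\le\openone$ and that the restriction ``on $\operatorname{Ran}(\mathbf{A}^\dag\mathbf{A})$'' in $\left(\ref{unitary maximizer equation}\right)$ causes no trouble — i.e.\ that extending $E^{(+)}_k$ arbitrarily off the support of $\sum_\ell\rho_\ell M_\ell\rho_\ell$ leaves both the maximizing value and the membership in $S_{\mathcal{E}}$ unaffected (indeed, the $\mathcal{E}$-seminorm only sees the action on the supports of the $\rho_k$, and the pseudoinverse $(\cdot)^{-1/2^+}$ already kills the kernel). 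A secondary point needing care is the possibly infinite index set $K$ and the separability/trace-class hypotheses, ensuring $\mathbf{A}\in B^1$ so that $\left\Vert\mathbf{A}\right\Vert_1$ is finite and the singular-value decomposition underlying $\left(\ref{equation sup trace A dag U}\right)$ applies; this follows from $E\in V_{\mathcal{E}}$ together with the a-priori normalization $\operatorname*{Tr}\rho_k=p_k$.
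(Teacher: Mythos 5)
Your proposal is correct and takes essentially the same route as the paper: your block operator $\mathbf{A}:\mathcal{H}\rightarrow\bigoplus_{k}\mathcal{H}$ with components $E_{k}\rho_{k}$ is precisely the paper's stacked operator $V_{E}\psi=\sum_{k}\left(  E_{k}\rho_{k}\psi\right)  \otimes\left\vert k\right\rangle _{\mathbb{C}^{K}}$, and both arguments identify the generalized-measurement constraint with the contraction condition and then invoke the trace-norm duality $\left(  \ref{equation sup trace A dag U}\right)  $--$\left(  \ref{unitary maximizer equation}\right)  $ to obtain the maximizer $\mathbf{A}\left(  \mathbf{A}^{\dag}\mathbf{A}\right)  ^{-1/2^{+}}$ and the value $\operatorname*{Tr}\sqrt{\sum\rho_{\ell}M_{\ell}\rho_{\ell}}$. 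Your added care about extensions off $\operatorname{Ran}\left(  \mathbf{A}^{\dag}\mathbf{A}\right)  $ is a point the paper leaves implicit, but it does not change the substance of the proof.
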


\noindent\textbf{Remark: }\emph{It follows by comparison of eq. }$\left(
\ref{formula for iterate of a generalized measurement}\right)  $\emph{\ with
eq. }$\left(  \ref{eq JRF successor}\right)  $\emph{\ that the iteration}
$E\mapsto E^{\left(  +\right)  }$\emph{\ for GMs corresponds to Je\v{z}ek,
\v{R}eh\'{a}\v{c}ek, and Fiur\'{a}\v{s}ek's iteration} $M\mapsto M^{\oplus}$
\emph{for POVMs. In particular, }$P_{\text{succ}}\left(  M^{\oplus}\right)
\geq P_{\text{succ}}\left(  M\right)  $,\emph{\ as was observed numerically
in} \cite{Jezek Rehacek and Fiurasek Finding optimal strategies for minimum
error quantum state discrimination}.

\bigskip

\begin{proof}
The proof is an easy modification of that of Theorem 9 of \cite{Tyson Holevo
Curlander Bounds}, which employs the $E_{k}=%
\openone
$ special case of $\left(
\ref{formula for iterate of a generalized measurement}\right)  $. One has the
identity%
\begin{equation}
\operatorname{Re}\left\langle E,F\right\rangle _{\mathcal{E}}%
=\operatorname{Re}\operatorname*{Tr}V_{E}^{\dag}U_{F}%
\text{,\label{tr UV in Jezek iteration}}%
\end{equation}
where $V_{E},U_{F}:\mathcal{H}\rightarrow\mathcal{H}\otimes\mathbb{C}^{K}$ are
defined by%
\begin{align*}
V_{E}\psi &  =%
{\displaystyle\sum_{k\in K}}
\left(  E_{k}\rho_{k}\psi\right)  \otimes\left\vert k\right\rangle
_{\mathbb{C}^{K}}\\
U_{F}\psi &  =%
{\displaystyle\sum_{k\in K}}
\left(  F_{k}\psi\right)  \otimes\left\vert k\right\rangle _{\mathbb{C}^{K}%
}\text{,}%
\end{align*}
where $\left\vert k\right\rangle _{\mathbb{C}^{K}}$ is the standard basis of
$\mathbb{C}^{K}$. Then $F$ is a generalized measurement iff $U_{F}$ is a
contraction, with $\left\Vert U_{F}\right\Vert \leq1$. But a contraction
$U_{F}$ maximizing $\left(  \ref{tr UV in Jezek iteration}\right)  $ is
computed using equation $\ref{unitary maximizer equation}$%
\[
U_{F}\psi=V_{E}\left(  V_{E}^{\dag}V_{E}\right)  ^{-1/2^{+}}\psi=%
{\displaystyle\sum}
\left\vert k\right\rangle _{\mathbb{C}^{K}}\otimes E_{k}\rho_{k}\left(
{\displaystyle\sum}
\rho_{\ell}E_{\ell}^{\dag}E_{\ell}\rho_{\ell}\right)  ^{-1/2^{+}}\psi\text{.}%
\]
Equations $\left(  \ref{formula for iterate of a generalized measurement}%
\right)  $ and $\left(  \ref{formula Eplus inner product E}\right)  $ follow.
\end{proof}

\subsection{A \textquotedblleft small-angle\textquotedblright%
\ guess\label{section small angle guess for detection}}

In order to use Lemma \ref{directional iterate lemma} to prove
distinguishability bounds, one must construct a guess $G$ subtending a
provably-small angle with an optimal generalized measurement $E^{\text{opt}}$.
As a hint of how to proceed, consider the case that the ensemble
$\mathcal{E}=\mathcal{E}^{\text{PD}}$ is a perfectly-distinguishable ensemble,
consisting of states $\rho_{k}$ of mutually-orthogonal support. An optimal GM
is simply given by%
\begin{equation}
E_{k}^{\text{opt}}=\Pi_{+}\left(  \rho_{k}\right)  \text{,}%
\end{equation}
where the positive projection on the right was defined in $\left(
\ref{eq defining positive projection}\right)  $. Use of spectral theory may be
avoided, however, if one notes that \emph{the semi-inner product
}$\left\langle E,F\right\rangle _{\mathcal{E}}$ \emph{of equation}
$\ref{eq defining ensemble inner product}$ \emph{is sensitive to the action of
the} $E_{k}$\emph{\ and} $F_{k}$ \emph{only on the ranges of the
corresponding} $\rho_{k}$. In particular, the simplest-possible
\textquotedblleft guess\textquotedblright%
\begin{equation}
G_{k}=%
\openone
\text{ for all }k\label{G for JRF abstract}%
\end{equation}
satisfies $G\equiv E^{\text{opt}}$ mod $\left\langle \bullet,\bullet
\right\rangle _{\mathcal{E}}$, since%
\begin{equation}
\left\Vert E^{\text{opt}}-G\right\Vert _{\mathcal{E}^{\text{PD}}}%
^{2}=\operatorname*{Tr}%
{\displaystyle\sum}
\left(  \Pi_{+}\left(  \rho_{k}\right)  -%
\openone
\right)  ^{2}\rho_{k}=0\text{.}\label{eq justifying guess}%
\end{equation}

This equation suggests that the guess $\left(  \ref{G for JRF abstract}%
\right)  $ will remain appropriate for \textquotedblleft reasonably
distinguishable\textquotedblright\ ensembles. Indeed, equation
$\ref{formula for iterate of a generalized measurement}$ shows that the
iterate $G^{\left(  +\right)  }$ corresponds to the mixed-state generalization
of Holevo's asymptotically optimal measurement $\left(
\ref{holevo's pure state meas}\right)  $.$^{\text{\cite{footnote how to get
Holevo cost}}}$ Furthermore, one obtains the following \textquotedblleft small
angle\textquotedblright\ estimates:

\begin{lemma}
\label{Lemma G for meas small angle}Define $G\in V_{\mathcal{E}}$ by equation
$\ref{G for JRF abstract}$, and let $M$ be a POVM of non-zero success rate.
Then one can decompose $M_{k}=E_{k}^{\dag}E_{k}$ in such a way that
$\left\langle G,E\right\rangle _{\mathcal{E}}\in\mathbb{R}$ and%
\begin{equation}
\cos\left(  \theta\right)  :=\frac{\left\langle G,E\right\rangle
_{\mathcal{E}}}{\left\Vert G\right\Vert _{\mathcal{E}}\left\Vert E\right\Vert
_{\mathcal{E}}}\geq\sqrt{P_{\text{succ}}\left(  M\right)  }\text{.}%
\label{ineq cos theta for measurement guess}%
\end{equation}

\end{lemma}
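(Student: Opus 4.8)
The plan is to exploit the freedom in the decomposition $M_k = E_k^\dag E_k$ by selecting the canonical positive square root $E_k = \sqrt{M_k}$. With this single choice, all three quantities entering $\cos(\theta)$ become transparent, and the required reality of $\langle G,E\rangle_{\mathcal{E}}$ comes for free from self-adjointness.

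First I would evaluate the two seminorms. Since $G_k = \openone$, one has $\left\Vert G\right\Vert_{\mathcal{E}}^2 = \operatorname*{Tr}\sum_k \rho_k = \sum_k p_k = 1$, using the a-priori normalization $\left(\ref{a priori normalization}\right)$. For the measurement side, the generalized measurement $E=\{\sqrt{M_k}\}$ satisfies $M_k = E_k^\dag E_k$, so by $\left(\ref{eq making min error dectection into seminorm}\right)$ and $\left(\ref{formula for Psucc}\right)$ one gets $\left\Vert E\right\Vert_{\mathcal{E}}^2 = \operatorname*{Tr}\sum_k M_k \rho_k = P_{\text{succ}}(M)$, which is nonzero by hypothesis. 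For the numerator, $\langle G,E\rangle_{\mathcal{E}} = \operatorname*{Tr}\sum_k \sqrt{M_k}\,\rho_k$; each summand $\operatorname*{Tr}\sqrt{M_k}\rho_k$ is real because $\sqrt{M_k}$ and $\rho_k$ are self-adjoint (so $\overline{\operatorname*{Tr}\sqrt{M_k}\rho_k} = \operatorname*{Tr}(\rho_k\sqrt{M_k}) = \operatorname*{Tr}\sqrt{M_k}\rho_k$), establishing $\langle G,E\rangle_{\mathcal{E}}\in\mathbb{R}$ as demanded.

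The one substantive ingredient is the operator inequality $\sqrt{M_k}\geq M_k$. This holds because $M$ is a (sub-)POVM, so $0\leq M_k \leq \sum_\ell M_\ell \leq \openone$; applying the scalar fact $\sqrt{\lambda}\geq\lambda$ on $[0,1]$ to the spectrum of $M_k$ gives $\sqrt{M_k}-M_k\geq 0$. Since $\rho_k\geq 0$, the trace of the product $(\sqrt{M_k}-M_k)\rho_k$ is non-negative, whence $\operatorname*{Tr}\sqrt{M_k}\rho_k\geq\operatorname*{Tr}M_k\rho_k$ term by term and therefore $\langle G,E\rangle_{\mathcal{E}}\geq P_{\text{succ}}(M)$. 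Assembling the pieces,
\[
\cos(\theta)=\frac{\langle G,E\rangle_{\mathcal{E}}}{\left\Vert G\right\Vert_{\mathcal{E}}\left\Vert E\right\Vert_{\mathcal{E}}}\geq\frac{P_{\text{succ}}(M)}{\sqrt{P_{\text{succ}}(M)}}=\sqrt{P_{\text{succ}}(M)},
\]
which is $\left(\ref{ineq cos theta for measurement guess}\right)$. There is no serious obstacle; the only point requiring care is committing to $E_k=\sqrt{M_k}$ rather than an arbitrary factorization, since it is exactly the self-adjointness of $\sqrt{M_k}$ that secures reality of the inner product, and the concavity/operator-monotonicity of the square root on $[0,1]$ that upgrades the trivial scalar bound $\sqrt{\lambda}\geq\lambda$ into the matrix inequality driving the estimate.
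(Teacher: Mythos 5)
Your proof is correct, but it takes a genuinely different route from the paper's. The paper starts from an \emph{arbitrary} factorization $M_{k}=\tilde{E}_{k}^{\dag}\tilde{E}_{k}$, uses the polar decomposition to pick unitaries $U_{k}$ with $U_{k}\tilde{E}_{k}\rho_{k}\geq0$, and then, setting $E_{k}=U_{k}\tilde{E}_{k}$, bounds $\left\langle G,E\right\rangle _{\mathcal{E}}=\sum\operatorname*{Tr}E_{k}\rho_{k}=\sum\left\Vert E_{k}\rho_{k}\right\Vert _{1}\geq\sum\left\vert \operatorname*{Tr}E_{k}^{\dag}E_{k}\rho_{k}\right\vert =P_{\text{succ}}\left(  M\right)  $ via H\"{o}lder's inequality $\left(  \ref{ineq Holder I1 I infinity}\right)  $ and the fact that each $E_{k}$ is a contraction. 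You instead commit to the canonical positive root $E_{k}=\sqrt{M_{k}}$ and replace the trace-norm/H\"{o}lder step by the spectral inequality $\sqrt{M_{k}}\geq M_{k}$ (valid because $0\leq M_{k}\leq\openone$), combined with $\operatorname*{Tr}\left(  AB\right)  \geq0$ for positive semidefinite $A,B$; the seminorm computations $\left\Vert G\right\Vert _{\mathcal{E}}=1$ and $\left\Vert E\right\Vert _{\mathcal{E}}^{2}=P_{\text{succ}}\left(  M\right)  $ agree with the paper's, and your choice satisfies the lemma since only \emph{some} decomposition is required and all downstream uses depend only on $E_{k}^{\dag}E_{k}=M_{k}$. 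Both arguments are complete, and yours is arguably more elementary: it needs neither the polar decomposition nor trace-norm duality. What the paper's version buys is that it is precisely the argument that generalizes: in Lemma \ref{lemma inequalitites for maximum overlap} the analogous objects are Stinespring dilations $W:\mathcal{K}\rightarrow\mathcal{L}\otimes\mathcal{E}$, for which no analogue of the choice $E_{k}=\sqrt{M_{k}}$ exists, and the proof there reuses the same rotate-by-polar-decomposition-then-H\"{o}lder template. One small remark: your appeal to operator monotonicity is superfluous; since $\sqrt{M_{k}}$ and $M_{k}$ are functions of the same operator, the scalar inequality $\sqrt{\lambda}\geq\lambda$ on $\left[  0,1\right]  $ applied to the spectrum (which you also invoke) already yields $\sqrt{M_{k}}-M_{k}\geq0$.
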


\begin{proof}
Chose $M_{k}=\tilde{E}_{k}^{\dag}\tilde{E}_{k}$ arbitrarily. By the polar
decomposition, there exist unitary $U_{k}:\mathcal{H}\rightarrow\mathcal{H}$
so that $U_{k}\tilde{E}_{k}\rho_{k}\geq0$ for all $k$. Setting
\[
E_{k}=U_{k}\tilde{E}_{k}\text{,}%
\]
it follows from H\"{o}lder inequality's $\left(
\ref{ineq Holder I1 I infinity}\right)  $ that%
\begin{equation}
\left\langle G,E\right\rangle _{\mathcal{E}}=\operatorname*{Tr}%
{\displaystyle\sum}
E_{k}\rho_{k}=%
{\displaystyle\sum}
\left\Vert E_{k}\rho_{k}\right\Vert _{1}\geq%
{\displaystyle\sum}
\left\vert \operatorname*{Tr}E_{k}^{\dag}E_{k}\rho_{k}\right\vert
=P_{\text{succ}}\left(  M\right)  \text{.}%
\label{inequal without theta for measurement guess}%
\end{equation}
Using the fact that $\left\Vert G\right\Vert _{\mathcal{E}}=1$, the conclusion
follows by dividing both sides by $\left\Vert E\right\Vert _{\mathcal{E}%
}=\sqrt{P_{\text{succ}}\left(  M\right)  }$.
\end{proof}

\noindent\textbf{Remark: }Note that if one rescales $G$ into generalized
measurement, as is only possible when the index set is finite, then one
obtains the \textquotedblleft random guessing\textquotedblright\ measurement
$\tilde{G}_{k}=%
\openone
/\sqrt{\left\vert K\right\vert }$. The reader may therefore be surprised that
the first iterate $\tilde{G}^{\left(  +\right)  }$ (which is unaffected by
rescaling of $G$) corresponds to Holevo's asymptotically-optimal measurement,
since this guess isn't even \textquotedblleft smart\textquotedblright\ enough
to take into account the a-priori probabilities of the $\rho_{k}$!

The resolution of this paradox is that the unrescaled guess $G_{k}=%
\openone
$ does NOT correspond to random guessing: \textit{it guesses every value }$k$.
Equation $\ref{eq justifying guess}$ is therefore analogous to the statement
that the teacher who grades a multiple-choice test by means of a punched
overlay (with holes only for the correct answers) will give top marks to the
daring schoolboy who marks ALL of the ovals on his exam.

\subsection{A simple proof of the generalized Holevo-Curlander
bounds\label{section simple proof Holevo Curlander}}

We have assembled all the pieces necessary to apply Lemma
\ref{directional iterate lemma}:

\bigskip

\begin{proof}
[Proof of Theorem \ref{Theorem gen Holevo Curlander bounds}]Take
$V=V_{\mathcal{E}}$ and $S=S_{\mathcal{E}}$ to be as in Definition
\ref{defintion including ensemble inner product}, and let $G$ be given by
$\left(  \text{\ref{G for JRF abstract}}\right)  $. By Theorem
\ref{Theorem JRF iteration is an example of abstract JRFH iteration}, one has
\begin{equation}
\left(  G^{\left(  +\right)  }\right)  ^{\dag}G^{\left(  +\right)
}=M^{\text{QW}}\text{.}\label{eq QW meas is iterate of guess}%
\end{equation}
By Lemma \ref{Lemma G for meas small angle} we may decompose $M_{k}%
^{\text{opt}}=\left(  E_{k}^{\text{opt}}\right)  ^{\dag}E_{k}^{\text{opt}}$ in
such a way that the \textquotedblleft small angle\textquotedblright\ estimate
$\left(  \ref{ineq cos theta for measurement guess}\right)  $ holds in the
equivalent form%
\begin{equation}
\left\langle E^{\text{opt}},G\right\rangle _{\mathcal{E}}\geq P_{\text{succ}%
}\left(  M^{\text{opt}}\right) \label{meas ineq step from small angle}%
\end{equation}
given by $\left(  \ref{inequal without theta for measurement guess}\right)  $.
Replacing $x_{\text{max}}$ by $E_{k}^{\text{opt}}$ in $\left(
\ref{key abstract estimate}\right)  $ gives%
\begin{equation}
\sqrt{P_{\text{succ}}\left(  M^{\text{opt}}\right)  }\geq\sqrt{P_{\text{succ}%
}\left(  M^{\text{QW}}\right)  }\geq\Lambda\left(  G\right)  \geq\left\langle
E^{\text{opt}},G\right\rangle _{\mathcal{E}}\text{,}%
\label{inquality subst and take sq root}%
\end{equation}
where we have used the fact that $\left\Vert G\right\Vert _{\mathcal{E}}=1$.
But by equations $\left(  \ref{def of lamda in abstract case}\right)  $,
$\left(  \ref{formula Eplus inner product E}\right)  $, and $\left(
\ref{gamma in zero 1}\right)  $,%
\[
\Lambda\left(  G\right)  =\operatorname*{Tr}\sqrt{%
{\displaystyle\sum}
\rho_{k}^{2}}=\Lambda\text{.}%
\]
The last inequality of $\left(  \ref{Tyson bounds}\right)  $ follows by
appending $\left(  \ref{meas ineq step from small angle}\right)  $ to $\left(
\ref{inquality subst and take sq root}\right)  $. The remaining three
inequalities of $\left(  \ref{Tyson bounds}\right)  $ follow by squaring
$\left(  \ref{inquality subst and take sq root}\right)  $. The inequality
$\Lambda\leq1$ follows by $\left(  \ref{Tyson bounds}\right)  $.
\end{proof}

\section{\ Maximum overlap as a maximal-seminorm
problem\label{section max overlap}}

The maximum overlap problem of equation $\ref{eq defining maximum overlap}$
may be expressed as a maximal seminorm problem using the identity%
\begin{equation}
\operatorname*{Tr}_{\mathcal{LH}}\left(  M_{\mathcal{\mathcal{LH}}%
}R_{\mathcal{K}\rightarrow\mathcal{L}}\left(  \mu_{\mathcal{\mathcal{KH}}%
}\right)  \right)  =\left\Vert U\right\Vert _{\mu,M}^{2}\text{,}%
\label{eq MO as maximum seminorm}%
\end{equation}
where $U$ is a Stinespring dilation of $\mathcal{R}$ and the seminorm is from
the following definition:

\begin{definition}
\label{definition overlap inner product}Let $\mathcal{E}=\mathcal{L}%
_{\mathcal{E}}^{\ast}\otimes\mathcal{K}_{\mathcal{E}}$ be the the canonical
environment $\left(  \ref{eq canonical environment}\right)  $ of
$\mathcal{R}_{\mathcal{K}\rightarrow\mathcal{L}}$. For operators $A,B:$
$\mathcal{H}\rightarrow\mathcal{\mathcal{L}}\otimes\mathcal{\mathcal{E}}$, the
$\mathbf{\mu}$-$\mathbf{M}$ \textbf{semi-inner product }is defined by
\begin{equation}
\left\langle A_{\mathcal{K}\rightarrow\mathcal{LE}},B_{\mathcal{K}%
\rightarrow\mathcal{LE}}\right\rangle _{\mu,M}=\operatorname*{Tr}%
_{\mathcal{H\mathcal{LE}}}\left(  M_{\mathcal{\mathcal{LH}}}B_{\mathcal{K}%
\rightarrow\mathcal{LE}}\mu_{\mathcal{KH}}\left(  A^{\dag}\right)
_{\mathcal{LE}\rightarrow\mathcal{K}}\right)  \text{.}%
\label{eq inner product for overlap problem}%
\end{equation}
The $\mathbf{\mu}$-$\mathbf{M}$ \textbf{semi-inner product space} is the space
$V_{\mu,M}=\left\{  U:\mathcal{K}\rightarrow\mathcal{\mathcal{LE}}~\left\vert
~\left\Vert U\right\Vert _{\mu,M}<\infty\right.  \right\}  $, on which
$\left\langle \bullet,\bullet\right\rangle _{\mu,M}$ is well-defined. The
\textbf{purification ball} is the set $S=\left\{  \left.  U:\mathcal{K}%
\rightarrow\mathcal{\mathcal{LE}}~\right\vert ~\left\Vert U\right\Vert
\leq1\right\}  \subseteq V_{\mu,M}$, where $\left\Vert \bullet\right\Vert $ is
the operator-norm.
\end{definition}

\subsection{Computation of directional iterates}

As in the case of measurements, it is not difficult to compute directional iterates:

\begin{theorem}
[Directional iteration for maximum overlap is JFH iteration]%
\label{Theorem iteration for overlap purifications}Let $V=V_{\mu,M}$ and $S$
be as in Definition \ref{definition overlap inner product}. Then the operator
$U_{\mathcal{K}\rightarrow\mathcal{LE}}\in V_{\mu,M}$ has the directional
iterate%
\begin{equation}
U_{\mathcal{K}\rightarrow\mathcal{LE}}^{\left(  +\right)  }=Q\left(  Q^{\dag
}Q\right)  ^{-1/2^{+}},\label{eq for JRF iterate for overlap problem}%
\end{equation}
where%
\begin{equation}
Q_{\mathcal{K}\rightarrow\mathcal{LE}}=\operatorname*{Tr}_{\mathcal{H}}\left(
M_{\mathcal{\mathcal{LH}}}U_{\mathcal{K}\rightarrow\mathcal{\mathcal{LE}}}%
\mu_{\mathcal{KH}}\right)  \text{.}\label{formula Q for overlap problem}%
\end{equation}
Furthermore, one has%
\begin{equation}
\left\langle U,U^{\left(  +\right)  }\right\rangle _{\mu,M}=\left\Vert
Q_{\mathcal{K}\rightarrow\mathcal{LE}}\right\Vert _{1}\text{.}%
\label{inner prod between U and iterate}%
\end{equation}

\end{theorem}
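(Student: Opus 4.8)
The plan is to follow the template of the companion result Theorem~\ref{Theorem JRF iteration is an example of abstract JRFH iteration}: with $U$ held fixed, the semi-inner product $W\mapsto\langle W,U\rangle_{\mu,M}$ is a $\mathbb{C}$-linear functional of $W$, so I would first rewrite it as an ordinary Hilbert--Schmidt pairing $\operatorname{Tr}_{\mathcal{K}}(W^{\dag}Q)$ against a single fixed operator $Q$, and then read off the maximizing contraction over the purification ball directly from $(\ref{equation sup trace A dag U})$ and its maximizer formula $(\ref{unitary maximizer equation})$.

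The first and central step is to establish, for every $W:\mathcal{K}\rightarrow\mathcal{LE}$, the identity
\[
\langle W,U\rangle_{\mu,M}=\operatorname{Tr}_{\mathcal{K}}\left(W^{\dag}Q\right),\qquad Q=\operatorname{Tr}_{\mathcal{H}}\left(M_{\mathcal{LH}}U_{\mathcal{K}\rightarrow\mathcal{LE}}\mu_{\mathcal{KH}}\right).
\]
Beginning from the defining formula $(\ref{eq inner product for overlap problem})$, which reads $\langle W,U\rangle_{\mu,M}=\operatorname{Tr}_{\mathcal{LEH}}(M U\mu W^{\dag})$, I would apply the cyclic property of the trace in the generalized form $\operatorname{Tr}(XY)=\operatorname{Tr}(YX)$ with $X=MU\mu:\mathcal{KH}\rightarrow\mathcal{LEH}$ and $Y=W^{\dag}:\mathcal{LEH}\rightarrow\mathcal{KH}$, bringing $W^{\dag}$ to the front to obtain $\operatorname{Tr}_{\mathcal{KH}}(W^{\dag}MU\mu)$. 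Since $W^{\dag}=W^{\dag}_{\mathcal{LE}\rightarrow\mathcal{K}}\otimes\openone_{\mathcal{H}}$ acts as the identity on $\mathcal{H}$, the partial trace over $\mathcal{H}$ passes through it, collapsing $MU\mu$ into $Q$ and leaving $\operatorname{Tr}_{\mathcal{K}}(W^{\dag}Q)$, which is $(\ref{formula Q for overlap problem})$.

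With this in hand, maximizing $\operatorname{Re}\langle W,U\rangle_{\mu,M}=\operatorname{Re}\operatorname{Tr}(Q^{\dag}W)$ over $S=\{\,W:\|W\|\le 1\,\}$ is exactly the variational problem solved by $(\ref{equation sup trace A dag U})$, whose value is $\|Q\|_{1}$ and whose maximizer, by $(\ref{unitary maximizer equation})$ taken with $A=Q$, is $U^{(+)}=Q(Q^{\dag}Q)^{-1/2^{+}}$. A one-line singular-value check shows $U^{(+)}$ is a partial isometry, hence a contraction lying in $S$, so it is a legitimate directional iterate in the sense of Definition~\ref{def of directional iterate}, giving $(\ref{eq for JRF iterate for overlap problem})$. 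Finally, to prove $(\ref{inner prod between U and iterate})$ I would compute $\langle U^{(+)},U\rangle_{\mu,M}=\operatorname{Tr}((U^{(+)})^{\dag}Q)=\operatorname{Tr}((Q^{\dag}Q)^{-1/2^{+}}Q^{\dag}Q)=\operatorname{Tr}|Q|=\|Q\|_{1}$ directly from the singular-value decomposition; since this value is real, Hermitian symmetry of the semi-inner product yields $\langle U,U^{(+)}\rangle_{\mu,M}=\overline{\langle U^{(+)},U\rangle_{\mu,M}}=\|Q\|_{1}$ as claimed.

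The main obstacle is the tensor-factor bookkeeping in the displayed identity: one must track carefully which factor each operator acts on under the implicit tensoring with identities, and justify the two moves---cyclicity across the distinct spaces $\mathcal{KH}$ and $\mathcal{LEH}$, and commuting the partial trace over $\mathcal{H}$ past $W^{\dag}$---since it is here that the canonical environment structure $\mathcal{E}=\mathcal{L}^{\ast}_{\mathcal{E}}\otimes\mathcal{K}_{\mathcal{E}}$ and the partial trace defining $Q$ enter. Everything following that identity is a direct appeal to the already-established variational facts $(\ref{equation sup trace A dag U})$ and $(\ref{unitary maximizer equation})$.
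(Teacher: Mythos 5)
Your proposal is correct and takes essentially the same route as the paper's proof: cyclicity of the trace (plus commuting $\operatorname{Tr}_{\mathcal{H}}$ past $W^{\dag}$) reduces the pairing to $\operatorname{Tr}_{\mathcal{K}}\left(  W^{\dag}Q\right)  $, and equations $\left(  \ref{equation sup trace A dag U}\right)  $--$\left(  \ref{unitary maximizer equation}\right)  $ then yield both the maximizer $Q\left(  Q^{\dag}Q\right)  ^{-1/2^{+}}$ and the value $\left\Vert Q\right\Vert _{1}$. The only differences are cosmetic: you work with $\left\langle W,U\right\rangle _{\mu,M}$ rather than $\left\langle U,W\right\rangle _{\mu,M}$ (equivalent, since the real parts coincide by conjugate symmetry; note this functional is conjugate-linear in $W$, not $\mathbb{C}$-linear as stated), and you verify $\left(  \ref{inner prod between U and iterate}\right)  $ by explicit computation rather than reading it off as the attained maximum.
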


\noindent\textbf{Remark: }Let $U_{\mathcal{K}\rightarrow\mathcal{LE}}$ be a
Stinespring dilation of a CP map $\mathcal{R}_{\mathcal{K}\rightarrow
\mathcal{L}}$. Then the operator $U_{\mathcal{K}\rightarrow
\mathcal{\mathcal{LE}}}^{\left(  +\right)  }$ of the above Theorem is a
dilation of the Je\v{z}ek-Fiur\'{a}\v{s}ek-Hradil iterate \cite{Jezek Fiurasek
Hradil Quantum inference of states and processes, Hradil et al Maximum
Likelihood methods in quantum mechanics}, already mentioned in section
$\ref{section reimpell werner iteration}$.

\bigskip

\begin{proof}
By cyclicity of the trace and equations $\left(
\ref{equation sup trace A dag U}\right)  $-$\left(
\ref{unitary maximizer equation}\right)  $,%
\begin{align}
\max_{W_{\mathcal{K}\rightarrow\mathcal{\mathcal{LE}}}\in S}\operatorname{Re}%
\left\langle U,W\right\rangle _{\mu,M} &  =\max_{\left\Vert W\right\Vert
\leq1}\operatorname{Re}\operatorname*{Tr}_{\mathcal{H\mathcal{LE}}}\left(
M_{\mathcal{\mathcal{LH}}}W_{\mathcal{K}\rightarrow\mathcal{LE}}%
\mu_{\mathcal{KH}}\left(  U^{\dag}\right)  _{\mathcal{LE}\rightarrow
\mathcal{K}}\right) \nonumber\\
&  =\max_{\left\Vert W\right\Vert \leq1}\operatorname{Re}\operatorname*{Tr}%
_{\mathcal{K}}\left(  \left(  Q^{\dag}\right)  _{\mathcal{\mathcal{LE}%
}\rightarrow\mathcal{K}}W_{\mathcal{K}\rightarrow\mathcal{LE}}\right)
\nonumber\\
&  =\left\Vert Q_{\mathcal{K}\rightarrow\mathcal{\mathcal{LE}}}\right\Vert
_{1}\text{,}\label{equation sup having Q in it for JFH}%
\end{align}
with maximizer $W=U^{\left(  +\right)  }$ given by $\left(
\ref{eq for JRF iterate for overlap problem}\right)  $.
\end{proof}

\subsection{The restricted maximum-overlap
problem\label{section the restricted maximum overlap}}

The remainder of this work restricts consideration to the case that%
\begin{equation}
M_{\mathcal{LH}}=\left\vert \phi\right\rangle _{\mathcal{\mathcal{LH}}%
}\left\langle \phi\right\vert \label{eq M proj in restricted MO prob}%
\end{equation}
is a rank $1$ projection, seeking to estimate%
\begin{equation}
\operatorname{MO}\left(  \mu_{\mathcal{\mathcal{KH}}},\phi
_{\mathcal{\mathcal{LH}}}\right)  :=\sup_{\mathcal{R}}\left\langle
\phi_{\mathcal{\mathcal{LH}}}\right\vert \mathcal{R}_{\mathcal{K}%
\rightarrow\mathcal{L}}\left(  \mu_{\mathcal{\mathcal{KH}}}\right)  \left\vert
\phi_{\mathcal{LH}}\right\rangle \text{,}%
\label{eq defining mu psi maximum overlap problem}%
\end{equation}
where the supremum is over quantum operations $\mathcal{R}$ from $\mathcal{K}
$ to $\mathcal{L}$. For convenience, we denote
\begin{equation}
\left\langle A_{\mathcal{K}\rightarrow\mathcal{LE}},B_{\mathcal{K}%
\rightarrow\mathcal{LE}}\right\rangle _{\mu,\phi}:=\left\langle
A,B\right\rangle _{\mu,\left\vert \phi\right\rangle \left\langle
\phi\right\vert }=\left\langle \phi_{\mathcal{\mathcal{LH}}}\right\vert
\operatorname*{Tr}_{\mathcal{\mathcal{E}}}\left(  B_{\mathcal{K}%
\rightarrow\mathcal{LE}}\mu_{\mathcal{KH}}\left(  A^{\dag}\right)
_{\mathcal{LE}\rightarrow\mathcal{K}}\right)  \left\vert \phi
_{\mathcal{\mathcal{LH}}}\right\rangle \label{reduced inner product}%
\end{equation}
and $V_{\mu,\phi}=V_{\mu,\left\vert \phi\right\rangle \left\langle
\phi\right\vert }$.

It is worth mentioning that by Theorems 1 and 2 of \cite{Konig Renner
Schaffner Operational meaning of min and max entropy} (see also equation
\ref{KRS version of optimal success} of the appendix), the minimum-error
detection problem is a special case of the restricted maximum overlap problem.
The importance of this fact for this work is as follows:\ \textit{One may use
the study of quantum measurements as a testing ground to for techniques for
the study of the maximum overlap problem and its special cases, including
quantum error recovery. }Furthermore, as we have already seen, Barnum and
Knill \cite{Barnum Knill UhOh} considered channel reversibility in the sense
of average entanglement fidelity by generalizing the \textquotedblleft pretty
good\textquotedblright\ measurement.

\subsubsection{A minor simplification}

We use the following notation for the partial traces of $\left\vert
\phi_{\mathcal{\mathcal{LH}}}\right\rangle $:%
\begin{align}
\phi_{\mathcal{L}} &  =\operatorname*{Tr}_{\mathcal{H}}\left\vert
\phi\right\rangle _{\mathcal{\mathcal{LH}}}\left\langle \phi\right\vert
\label{phi sub L defined}\\
\phi_{\mathcal{H}} &  =\operatorname*{Tr}_{\mathcal{L}}\left\vert
\phi\right\rangle _{\mathcal{\mathcal{LH}}}\left\langle \phi\right\vert
\label{phi sub H defined}%
\end{align}
Using the identity
\begin{equation}
\left\vert \phi_{\mathcal{\mathcal{LH}}}\right\rangle =\Pi_{+}\left(
\phi_{\mathcal{H}}\right)  \left\vert \phi_{\mathcal{\mathcal{LH}}%
}\right\rangle \text{,}\label{phi proj identity}%
\end{equation}
where the positive projection $\Pi_{+}\left(  \phi_{\mathcal{H}}\right)  $ is
given by equation $\ref{eq defining positive projection}$, one has the following

\begin{algorithm}
One has the identity%
\begin{equation}
\left\langle \phi_{\mathcal{\mathcal{LH}}}\right\vert \mathcal{R}%
_{\mathcal{K}\rightarrow\mathcal{L}}\left(  \mu_{\mathcal{\mathcal{KH}}%
}\right)  \left\vert \phi_{\mathcal{\mathcal{LH}}}\right\rangle =\left\langle
\phi_{\mathcal{\mathcal{LH}}}\right\vert \mathcal{R}_{\mathcal{K}%
\rightarrow\mathcal{L}}\left(  \hat{\mu}_{\mathcal{\mathcal{KH}}}\right)
\left\vert \phi_{\mathcal{\mathcal{LH}}}\right\rangle ,
\end{equation}
for any $\mathcal{R}$, where $\hat{\mu}$ is defined by%
\begin{equation}
\hat{\mu}_{\mathcal{\mathcal{KH}}}=\Pi_{+}\left(  \phi_{\mathcal{H}}\right)
\times\mu_{\mathcal{KH}}\times\Pi_{+}\left(  \phi_{\mathcal{H}}\right)
.\label{def muhat after switch to phis}%
\end{equation}

\end{algorithm}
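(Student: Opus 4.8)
The plan is to exploit the single structural fact that the recovery map $\mathcal{R}$ acts only on the $\mathcal{K}$ register, so that any operator supported on $\mathcal{H}$ passes freely through it. First I would record the adjoint of the projection identity $\left(\ref{phi proj identity}\right)$: since $\Pi_+\left(\phi_{\mathcal{H}}\right)$ is a self-adjoint projection, applying $\dag$ to $\left\vert\phi_{\mathcal{LH}}\right\rangle=\Pi_+\left(\phi_{\mathcal{H}}\right)\left\vert\phi_{\mathcal{LH}}\right\rangle$ gives $\left\langle\phi_{\mathcal{LH}}\right\vert=\left\langle\phi_{\mathcal{LH}}\right\vert\Pi_+\left(\phi_{\mathcal{H}}\right)$. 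Inserting both of these into the left-hand side produces
\[
\left\langle\phi_{\mathcal{LH}}\right\vert\mathcal{R}_{\mathcal{K}\rightarrow\mathcal{L}}\left(\mu_{\mathcal{KH}}\right)\left\vert\phi_{\mathcal{LH}}\right\rangle=\left\langle\phi_{\mathcal{LH}}\right\vert\Pi_+\left(\phi_{\mathcal{H}}\right)\,\mathcal{R}_{\mathcal{K}\rightarrow\mathcal{L}}\left(\mu_{\mathcal{KH}}\right)\,\Pi_+\left(\phi_{\mathcal{H}}\right)\left\vert\phi_{\mathcal{LH}}\right\rangle,
\]
so that everything reduces to absorbing the two flanking projections into the argument of $\mathcal{R}$.

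The key step is therefore the commutation claim
\[
\Pi_+\left(\phi_{\mathcal{H}}\right)\,\mathcal{R}_{\mathcal{K}\rightarrow\mathcal{L}}\left(\mu_{\mathcal{KH}}\right)\,\Pi_+\left(\phi_{\mathcal{H}}\right)=\mathcal{R}_{\mathcal{K}\rightarrow\mathcal{L}}\left(\Pi_+\left(\phi_{\mathcal{H}}\right)\,\mu_{\mathcal{KH}}\,\Pi_+\left(\phi_{\mathcal{H}}\right)\right)=\mathcal{R}_{\mathcal{K}\rightarrow\mathcal{L}}\left(\hat\mu_{\mathcal{KH}}\right),
\]
the second equality being just the definition $\left(\ref{def muhat after switch to phis}\right)$ of $\hat\mu$. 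To verify the first equality I would fix a Kraus decomposition $\mathcal{R}\left(\cdot\right)=\sum_i A_i\left(\cdot\right)A_i^{\dag}$ with $A_i:\mathcal{K}\rightarrow\mathcal{L}$, so that under the paper's tensor convention $\mathcal{R}_{\mathcal{K}\rightarrow\mathcal{L}}\left(\mu_{\mathcal{KH}}\right)=\sum_i\left(A_i\otimes\openone_{\mathcal{H}}\right)\mu_{\mathcal{KH}}\left(A_i^{\dag}\otimes\openone_{\mathcal{H}}\right)$. Because each $A_i$ lives on $\mathcal{K}\rightarrow\mathcal{L}$ while $\Pi_+\left(\phi_{\mathcal{H}}\right)$ lives on the $\mathcal{H}$ factor (i.e. is $\openone_{\mathcal{L}}\otimes\Pi_+\left(\phi_{\mathcal{H}}\right)$), the two commute termwise, $\left(A_i\otimes\openone\right)\left(\openone\otimes\Pi_+\right)=\left(\openone\otimes\Pi_+\right)\left(A_i\otimes\openone\right)$, and the projections slide inward past every Kraus operator, giving the displayed identity after resumming.

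Substituting this commutation identity back into the sandwiched form then collapses the left-hand side directly to $\left\langle\phi_{\mathcal{LH}}\right\vert\mathcal{R}_{\mathcal{K}\rightarrow\mathcal{L}}\left(\hat\mu_{\mathcal{KH}}\right)\left\vert\phi_{\mathcal{LH}}\right\rangle$, which is the claim. I expect no genuine obstacle: the only point requiring a word of care is the bookkeeping of tensor factors in the commutation step, together with the observation that this reasoning is unaffected by the separable infinite-dimensional setting, since $\Pi_+\left(\phi_{\mathcal{H}}\right)$ is still a well-defined support projection and the same argument runs verbatim in the Stinespring picture $\mathcal{R}\left(\cdot\right)=\operatorname{Tr}_{\mathcal{E}}\left(U\left(\cdot\right)U^{\dag}\right)$, where once more $U:\mathcal{K}\rightarrow\mathcal{LE}$ commutes with operators on $\mathcal{H}$.
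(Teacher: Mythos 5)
Your proposal is correct and takes essentially the same route as the paper: the paper presents this as an immediate Observation following from the projection identity $\left\vert \phi_{\mathcal{\mathcal{LH}}}\right\rangle =\Pi_{+}\left(  \phi_{\mathcal{H}}\right)  \left\vert \phi_{\mathcal{\mathcal{LH}}}\right\rangle$, leaving implicit precisely the step you make explicit. Your Kraus-decomposition bookkeeping — that the flanking projections act only on $\mathcal{H}$ and therefore slide past the Kraus operators of $\mathcal{R}_{\mathcal{K}\rightarrow\mathcal{L}}$ onto $\mu_{\mathcal{\mathcal{KH}}}$, yielding $\hat{\mu}_{\mathcal{\mathcal{KH}}}$ — is exactly the intended justification.
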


\subsubsection{A \textquotedblleft small angle\textquotedblright\ guess}

The strategy of Sec. \ref{abstract section} calls for construction of a guess
$G\in V_{\mu,\phi}$ subtending a provably small angle with some dilation
$W^{\text{opt}}:\mathcal{K}\rightarrow\mathcal{L}\otimes\mathcal{E}$ of an
optimal overlap operation $\mathcal{R}^{\text{opt}}$. We will be most
concerned with the \textquotedblleft reasonably overlappable\textquotedblright%
\ case, for which one has the crude approximation
\begin{equation}
\sup_{\mathcal{R}}\left\langle \phi_{\mathcal{\mathcal{LH}}}\right\vert
\mathcal{R}_{\mathcal{K}\rightarrow\mathcal{L}}^{\text{opt}}\left(  \hat{\mu
}_{\mathcal{\mathcal{KH}}}\right)  \left\vert \phi_{\mathcal{\mathcal{LH}}%
}\right\rangle \approx\operatorname*{Tr}\hat{\mu}_{\mathcal{\mathcal{KH}}%
}\text{.}%
\end{equation}

Our choice of guess will therefore be motivated by the case in which exact
equality is obtained:

\begin{proposition}
[The perfectly overlappable case]%
\label{proposition perfectly overlappable case}Let $\left\vert \phi
_{\mathcal{\mathcal{LH}}}\right\rangle $ be a unit vector and let
$\mathcal{R}$ be a quantum operation. Then one has perfect overlap%
\begin{equation}
\left\langle \phi_{\mathcal{\mathcal{LH}}}\right\vert \mathcal{R}%
_{\mathcal{K}\rightarrow\mathcal{L}}\left(  \hat{\mu}_{\mathcal{\mathcal{KH}}%
}\right)  \left\vert \phi_{\mathcal{\mathcal{LH}}}\right\rangle
=\operatorname*{Tr}\hat{\mu}\label{perfect overlap}%
\end{equation}
if and only if%
\begin{equation}
\left.  \mathcal{R}_{\mathcal{L}\rightarrow\mathcal{K}}^{\dag}\left(
\left\vert \phi_{\mathcal{\mathcal{LH}}}\right\rangle \left\langle
\phi_{\mathcal{\mathcal{LH}}}\right\vert \right)  \right\vert
_{\operatorname{Ran}\left(  \hat{\mu}_{_{\mathbb{\mathcal{KH}}}}\right)  }=%
\openone
\text{,}\label{eq consequence of perfect overlap}%
\end{equation}
where the adjoint $\mathcal{R}^{\dag}$ is given by $\left(
\ref{eq defining adjoint}\right)  $.
\end{proposition}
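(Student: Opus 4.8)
The plan is to transfer the recovery operation off the state and onto the rank-one operator $\left\vert \phi_{\mathcal{LH}}\right\rangle \left\langle \phi_{\mathcal{LH}}\right\vert$ by means of the adjoint relation $\left(\ref{eq defining adjoint}\right)$, carrying the space $\mathcal{H}$ along as a spectator factor. Writing
\[
P:=\mathcal{R}_{\mathcal{L}\rightarrow\mathcal{K}}^{\dag}\left( \left\vert \phi_{\mathcal{LH}}\right\rangle \left\langle \phi_{\mathcal{LH}}\right\vert \right) \in B\left( \mathcal{K}\otimes\mathcal{H}\right),
\]
the adjoint identity applied with $\mathcal{H}$ as spectator yields
\[
\left\langle \phi_{\mathcal{LH}}\right\vert \mathcal{R}_{\mathcal{K}\rightarrow\mathcal{L}}\left( \hat{\mu}_{\mathcal{KH}}\right) \left\vert \phi_{\mathcal{LH}}\right\rangle =\operatorname*{Tr}_{\mathcal{KH}}\left( P\,\hat{\mu}_{\mathcal{KH}}\right).
\]
Thus the whole proposition reduces to deciding when $\operatorname*{Tr}\left( P\hat{\mu}\right) =\operatorname*{Tr}\hat{\mu}$.

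Next I would show $0\leq P\leq \openone$. Since $\mathcal{R}$ is a quantum operation it is trace non-increasing, and dualizing this inequality via $\left(\ref{eq defining adjoint}\right)$ gives the sub-unitality $\mathcal{R}^{\dag}\left(\openone\right)\leq\openone$; because $\mathcal{R}^{\dag}$ is completely positive, its ampliation $\mathcal{R}^{\dag}\otimes\operatorname{id}_{\mathcal{H}}$ is again positive and satisfies $\left(\mathcal{R}^{\dag}\otimes\operatorname{id}_{\mathcal{H}}\right)\left(\openone_{\mathcal{LH}}\right)\leq\openone_{\mathcal{KH}}$. As $\left\vert \phi_{\mathcal{LH}}\right\rangle$ is a unit vector, $\left\vert \phi\right\rangle \left\langle \phi\right\vert \leq \openone_{\mathcal{LH}}$, so monotonicity of the positive map $\mathcal{R}^{\dag}\otimes\operatorname{id}_{\mathcal{H}}$ gives $0\leq P\leq\openone_{\mathcal{KH}}$.

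With $\openone-P\geq0$ and $\hat{\mu}\geq0$ the quantity $\operatorname*{Tr}\left( \left( \openone-P\right) \hat{\mu}\right) \geq0$ is nonnegative, so the overlap never exceeds $\operatorname*{Tr}\hat{\mu}$, with equality exactly when $\operatorname*{Tr}\left( \left( \openone-P\right) \hat{\mu}\right) =0$. The decisive step is then the standard fact that the trace of a product of two positive operators vanishes iff the product itself vanishes: writing $\operatorname*{Tr}\left( \left( \openone-P\right) \hat{\mu}\right) =\operatorname*{Tr}\left( \hat{\mu}^{1/2}\left( \openone-P\right) \hat{\mu}^{1/2}\right)$ exhibits it as the trace of a positive trace-class operator, which therefore vanishes iff $\hat{\mu}^{1/2}\left( \openone-P\right) \hat{\mu}^{1/2}=0$, equivalently $\left( \openone-P\right) ^{1/2}\hat{\mu}^{1/2}=0$, equivalently $\left( \openone-P\right) \hat{\mu}=0$. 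This last equation says precisely that $P$ acts as the identity on $\operatorname{Ran}\left( \hat{\mu}\right)$, which is condition $\left(\ref{eq consequence of perfect overlap}\right)$.

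The main points requiring care—rather than genuine obstacles—are (i) correctly dualizing \emph{trace non-increasing} into the operator inequality $\mathcal{R}^{\dag}\left(\openone\right)\leq\openone$ and invoking complete positivity to retain it after tensoring with $\operatorname{id}_{\mathcal{H}}$, and (ii) the positive-trace-zero argument in the possibly infinite-dimensional setting, where $\operatorname{Ran}\left(\hat{\mu}\right)$ should be read as the closure of the range and one uses that a positive trace-class operator of zero trace is zero.
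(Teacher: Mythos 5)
Your proof is correct, and its skeleton is the same as the paper's: both transfer $\mathcal{R}$ onto the rank-one target via the adjoint relation $\left(\ref{eq defining adjoint}\right)$, so that the overlap becomes $\operatorname*{Tr}\left(P\hat{\mu}\right)$ with $P=\mathcal{R}_{\mathcal{L}\rightarrow\mathcal{K}}^{\dag}\left(\left\vert \phi_{\mathcal{LH}}\right\rangle \left\langle \phi_{\mathcal{LH}}\right\vert \right)$, and both rest on the chain $0\leq P\leq\mathcal{R}^{\dag}\left(\openone\right)\leq\openone$, which is the single displayed line of the paper's proof (your explicit dualization of trace non-increasing into sub-unitality, kept under ampliation by $\operatorname{id}_{\mathcal{H}}$ via complete positivity, is exactly what that line implicitly uses). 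Where you genuinely diverge is the equality case. The paper disposes of it by citing its contraction-maximizer machinery: since $\operatorname*{Tr}\hat{\mu}=\left\Vert \hat{\mu}\right\Vert _{1}$, perfect overlap says the contraction $P$ attains the supremum in $\left(\ref{equation sup trace A dag U}\right)$ with $A=\hat{\mu}$, and the SVD-based characterization $\left(\ref{unitary maximizer equation}\right)$ of maximizers forces $\left.P\right\vert _{\operatorname{Ran}\left(\hat{\mu}\right)}=\hat{\mu}\left(\hat{\mu}^{\dag}\hat{\mu}\right)^{-1/2^{+}}=\Pi_{+}\left(\hat{\mu}\right)$ restricted to the range, i.e.\ the identity there. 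You instead argue by bare positivity: $\operatorname*{Tr}\left(\left(\openone-P\right)\hat{\mu}\right)=\operatorname*{Tr}\left(\hat{\mu}^{1/2}\left(\openone-P\right)\hat{\mu}^{1/2}\right)$ is the trace of a positive trace-class operator, hence vanishes iff $\left(\openone-P\right)^{1/2}\hat{\mu}^{1/2}=0$ iff $\left(\openone-P\right)\hat{\mu}=0$. The trade-off: the paper's route costs nothing locally because $\left(\ref{unitary maximizer equation}\right)$ is already a workhorse elsewhere (it drives the directional-iterate computations in Theorems \ref{Theorem JRF iteration is an example of abstract JRFH iteration} and \ref{Theorem iteration for overlap purifications}), whereas your argument is self-contained, avoids the singular value decomposition entirely, and makes the infinite-dimensional/closure issues transparent, at the price of re-proving a small positivity fact that the paper gets for free.
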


\begin{proof}
Since
\[
\mathcal{R}^{\dag}\left(  \left\vert \phi_{\mathcal{\mathcal{LH}}%
}\right\rangle \left\langle \phi_{\mathcal{\mathcal{LH}}}\right\vert \right)
\leq\mathcal{R}^{\dag}\left(
\openone
\right)  \leq%
\openone
,
\]
the conclusion follows from $\left(  \ref{eq defining adjoint}\right)  $ and
$\left(  \ref{equation sup trace A dag U}\right)  $-$\left(
\ref{unitary maximizer equation}\right)  $.
\end{proof}

\bigskip

In section \ref{section small angle guess for detection} we saw for finite
ensembles that a properly-rescaled version of the \textquotedblleft daring
schoolboy's\textquotedblright\ guess $\left\{  G_{k}=%
\openone
\right\}  $ could be implemented by \textquotedblleft random
guessing,\textquotedblright\ without use of any measurement apparatus. This
suggests consideration of a guess $G_{\mathcal{K}\rightarrow
\mathcal{\mathcal{LE}}}$ for which the corresponding (possibly
trace-increasing) CP map
\begin{equation}
\mathcal{R}^{G}\left(  \rho\right)  =\operatorname*{Tr}_{\mathcal{E}}\left(
G\rho G^{\dag}\right) \label{improper dilation}%
\end{equation}
is independent of $\rho$.

The following Lemma shows that an analogue of equation
$\ref{eq justifying guess}$ is satisfied by a guess of this kind:

\begin{lemma}
[Construction of a guess]\label{lemma construction of overlap guess}Let
$\phi_{\mathcal{\mathcal{LH}}}$ be a unit vector and let $G_{\mathcal{K}%
\rightarrow\mathcal{\mathcal{LE}}}$ be a dilation $\left(
\ref{improper dilation}\right)  $ of the (usually trace-increasing) CP map%
\begin{equation}
\mathcal{R}_{\mathcal{K}\rightarrow\mathcal{L}}^{G}\left(  \rho\right)
:=\phi_{\mathcal{L}}^{-1^{+}}\times\operatorname*{Tr}\left(  \rho\right)
\text{,}%
\end{equation}
where $\mathcal{E}$ is the canonical environment $\left(
\ref{eq canonical environment}\right)  $. Here we use the notation introduced
in equations $\ref{eq defining minus 1/2 plus exponent}$ and
$\ref{phi sub L defined}$. Then

\begin{enumerate}
\item \label{part of guess lemma which used to say isometry}One has the
identity%
\begin{equation}
\left(  \mathcal{R}^{G}\right)  _{\mathcal{L}\rightarrow\mathcal{K}}^{\dag
}\left(  \left\vert \phi_{\mathcal{\mathcal{LH}}}\right\rangle \left\langle
\phi_{\mathcal{\mathcal{LH}}}\right\vert \right)  =\Pi_{+}\left(
\phi_{\mathcal{H}}\right)  \otimes%
\openone
_{\mathcal{K}\rightarrow\mathcal{K}}\text{.}%
\label{adjoint eq in rev lemma using double kets}%
\end{equation}

\item[2.] \label{part avoid state hyp in proof}If $\mu$ and $\phi$ are
\textquotedblleft perfectly overlappable\textquotedblright\ by a quantum
operation $\mathcal{R}=\mathcal{R}^{\text{opt}}$,\ as in equation
$\ref{perfect overlap}$, then $\mathcal{R}^{\text{opt}}$ has a dilation
$W_{\mathcal{K}\rightarrow\mathcal{L}\otimes\mathcal{E}}^{\text{opt}}$ such
that%
\begin{equation}
\left\Vert G-W^{\text{opt}}\right\Vert _{\mu,\phi}=0\text{.}%
\label{eq justify overlap guess}%
\end{equation}

\end{enumerate}
\end{lemma}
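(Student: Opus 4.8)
The plan is to prove the two parts in sequence, obtaining Part 1 by a direct adjoint computation and then feeding its conclusion into a dilation-alignment argument for Part 2.

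For Part \ref{part of guess lemma which used to say isometry} I would start from the defining relation \eqref{eq defining adjoint} of the adjoint. Since $\mathcal{R}^{G}_{\mathcal{K}\rightarrow\mathcal{L}}(\rho)=\phi_{\mathcal{L}}^{-1^{+}}\operatorname{Tr}\rho$, testing against an arbitrary $X_{\mathcal{L}}$ gives $(\mathcal{R}^{G})^{\dag}(X)=\operatorname{Tr}(\phi_{\mathcal{L}}^{-1^{+}}X)\,\openone_{\mathcal{K}}$; extending by the identity on $\mathcal{H}$ and applying to $|\phi_{\mathcal{LH}}\rangle\langle\phi_{\mathcal{LH}}|$ yields
\[
(\mathcal{R}^{G})^{\dag}\!\left(|\phi_{\mathcal{LH}}\rangle\langle\phi_{\mathcal{LH}}|\right)=\openone_{\mathcal{K}}\otimes\operatorname{Tr}_{\mathcal{L}}\!\left(\left(\phi_{\mathcal{L}}^{-1^{+}}\otimes\openone_{\mathcal{H}}\right)|\phi_{\mathcal{LH}}\rangle\langle\phi_{\mathcal{LH}}|\right).
\]
I would evaluate the remaining partial trace by inserting the Schmidt decomposition $|\phi_{\mathcal{LH}}\rangle=\sum_{i}\sqrt{\lambda_{i}}\,|u_{i}\rangle_{\mathcal{L}}|v_{i}\rangle_{\mathcal{H}}$ with $\lambda_{i}>0$: then $\phi_{\mathcal{L}}^{-1^{+}}=\sum_{i}\lambda_{i}^{-1}|u_{i}\rangle\langle u_{i}|$, the factors $\lambda_{i}$ cancel, and the surviving diagonal sum is $\sum_{i}|v_{i}\rangle\langle v_{i}|=\Pi_{+}(\phi_{\mathcal{H}})$, which is exactly \eqref{adjoint eq in rev lemma using double kets}.

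The heart of the matter is Part 2, which I would organize around the single-operator map $T_{A}:=(\langle\phi_{\mathcal{LH}}|\otimes\openone_{\mathcal{E}})\,A:\mathcal{KH}\rightarrow\mathcal{E}$ attached to any dilation $A:\mathcal{K}\rightarrow\mathcal{LE}$ (extended by $\openone_{\mathcal{H}}$). Diagonalizing $\mu$ and computing $\langle\phi|\operatorname{Tr}_{\mathcal{E}}(\,\cdot\,)|\phi\rangle$ eigenvector-by-eigenvector shows that the seminorm \eqref{reduced inner product} is simply $\langle A,B\rangle_{\mu,\phi}=\operatorname{Tr}(T_{A}^{\dag}T_{B}\,\mu)$; since $|\phi_{\mathcal{LH}}\rangle=\Pi_{+}(\phi_{\mathcal{H}})|\phi_{\mathcal{LH}}\rangle$ by \eqref{phi proj identity} and $\Pi_{+}(\phi_{\mathcal{H}})$ commutes past $A,B$, one may replace $\mu$ by $\hat\mu=\Pi_{+}(\phi_{\mathcal{H}})\mu\,\Pi_{+}(\phi_{\mathcal{H}})$ from \eqref{def muhat after switch to phis} throughout. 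The structural fact driving everything is $T_{A}^{\dag}T_{A}=A^{\dag}(|\phi\rangle\langle\phi|\otimes\openone_{\mathcal{E}})A=\mathcal{R}^{A\dag}(|\phi\rangle\langle\phi|)$. Hence Part 1 gives $T_{G}^{\dag}T_{G}=\Pi_{+}(\phi_{\mathcal{H}})\otimes\openone_{\mathcal{K}}$, which acts as the identity on $\operatorname{Ran}(\hat\mu)\subseteq\mathcal{K}\otimes\operatorname{supp}(\phi_{\mathcal{H}})$, while the perfect-overlap hypothesis together with Proposition \ref{proposition perfectly overlappable case} and \eqref{eq consequence of perfect overlap} gives $T_{W^{\mathrm{opt}}}^{\dag}T_{W^{\mathrm{opt}}}=\openone$ on $\operatorname{Ran}(\hat\mu)$ for every dilation $W^{\mathrm{opt}}$ of $\mathcal{R}^{\mathrm{opt}}$. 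Thus both $T_{G}$ and $T_{W^{\mathrm{opt}}}$ are isometric on $\operatorname{Ran}(\hat\mu)$.

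I would then exploit the freedom in the choice of dilation: replacing $W^{\mathrm{opt}}$ by $(\openone_{\mathcal{L}}\otimes V)W^{\mathrm{opt}}$ for a unitary $V$ on $\mathcal{E}$ preserves the dilated map \eqref{improper dilation} but sends $T_{W^{\mathrm{opt}}}\mapsto V\,T_{W^{\mathrm{opt}}}$. Choosing an orthonormal eigenbasis $\{|\mu_{a}\rangle\}$ of $\hat\mu$ with positive eigenvalue, the isometry property makes $\{T_{W^{\mathrm{opt}}}|\mu_{a}\rangle\}$ and $\{T_{G}|\mu_{a}\rangle\}$ two orthonormal systems in $\mathcal{E}$; since $\dim\operatorname{Ran}(\hat\mu)\le\dim\mathcal{K}\cdot\operatorname{rank}(\phi_{\mathcal{H}})\le\dim\mathcal{K}\cdot\dim\mathcal{L}=\dim\mathcal{E}$, a unitary $V$ with $V\,T_{W^{\mathrm{opt}}}=T_{G}$ on $\operatorname{Ran}(\hat\mu)$ exists; I relabel this new dilation $W^{\mathrm{opt}}$. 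Then $(T_{G}-T_{W^{\mathrm{opt}}})\sqrt{\hat\mu}=0$, so $\|G-W^{\mathrm{opt}}\|_{\mu,\phi}^{2}=\operatorname{Tr}\big((T_{G}-T_{W^{\mathrm{opt}}})^{\dag}(T_{G}-T_{W^{\mathrm{opt}}})\,\hat\mu\big)=0$, establishing \eqref{eq justify overlap guess}. The main obstacle is precisely the existence of the aligning unitary $V$: one must verify that the canonical environment $\mathcal{E}=\mathcal{L}_{\mathcal{E}}^{\ast}\otimes\mathcal{K}_{\mathcal{E}}$ of \eqref{eq canonical environment} is large enough for the two isometric images of $\operatorname{Ran}(\hat\mu)$ to have equal-dimensional orthogonal complements, which is exactly what the dimension count above secures.
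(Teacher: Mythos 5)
Your proposal is correct and follows essentially the same route as the paper's own proof: your operator $T_{A}=\langle\phi_{\mathcal{LH}}|A$ is exactly the paper's $\langle\phi_{\mathcal{LH}}|A_{\mathcal{K}\rightarrow\mathcal{LE}}$, and both arguments show that $T_{G}$ and $T_{W'}$ are isometries on $\operatorname{Ran}\left(\hat\mu_{\mathcal{KH}}\right)$ (via Part 1 and Proposition \ref{proposition perfectly overlappable case}, respectively) and then align them by a unitary on the canonical environment $\mathcal{E}$, absorbed into the choice of dilation of $\mathcal{R}^{\text{opt}}$. The only differences are presentational: you fill in details the paper leaves implicit, namely the explicit adjoint computation for Part 1 (which the paper dismisses as trivial) and the dimension count supporting the existence of the aligning unitary.
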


\bigskip

\noindent\textbf{Remark: }Note that the choice of dilation $G$ does not affect
the operation%
\[
\mathcal{R}^{\left(  +\right)  }\left(  \rho\right)  :=\operatorname*{Tr}%
_{\mathcal{E}}G^{\left(  +\right)  }\rho\left(  G^{\left(  +\right)  }\right)
^{\dag},
\]
since the replacement $G\rightarrow U_{\mathcal{E}\rightarrow\mathcal{E}%
}G_{\mathcal{K}\rightarrow\mathcal{\mathcal{LE}}}$, where $U_{\mathcal{E}%
\rightarrow\mathcal{E}}$ is unitary, simply induces the replacement
$G^{\left(  +\right)  }\rightarrow U_{\mathcal{E}\rightarrow\mathcal{E}%
}G^{\left(  +\right)  }$.

\begin{proof}
Equation $\ref{adjoint eq in rev lemma using double kets}$ is trivial. To
prove the equation \ref{eq justify overlap guess}, note that equations
$\ref{eq defining adjoint}$ and
$\ref{adjoint eq in rev lemma using double kets}$ imply%
\begin{equation}
G_{\mathcal{\mathcal{LE}}\rightarrow\mathcal{K}}^{\dag}\left\vert
\phi_{\mathcal{\mathcal{LH}}}\right\rangle \left\langle \phi
_{\mathcal{\mathcal{LH}}}\right\vert G_{\mathcal{K}\rightarrow
\mathcal{\mathcal{LE}}}=\left(  \mathcal{R}^{G}\right)  _{\mathcal{L}%
\rightarrow\mathcal{K}}^{\dag}\left(  \left\vert \phi_{\mathcal{\mathcal{LH}}%
}\right\rangle \left\langle \phi_{\mathcal{\mathcal{LH}}}\right\vert \right)
=\Pi_{+}\left(  \phi_{\mathcal{H}}\right)  \otimes%
\openone
_{\mathcal{K}\rightarrow\mathcal{K}}\text{.}%
\label{eq in rev lemma using the double kets}%
\end{equation}
In particular, $\left\langle \phi_{\mathcal{\mathcal{LH}}}\right\vert
G_{\mathcal{K}\rightarrow\mathcal{L}\otimes\mathcal{E}}$ restricts to an
isometry from $\operatorname*{ran}\left(  \phi_{\mathcal{H}}\right)
\otimes\mathcal{K}\supseteq\operatorname*{ran}\left(  \hat{\mu}%
_{\mathcal{\mathcal{KH}}}\right)  $ into $\mathcal{E}$. Let $W_{\mathcal{K}%
\rightarrow\mathcal{\mathcal{LE}}}^{\prime}$ be a dilation of $\mathcal{R}%
^{\text{opt}}$. By Proposition $\ref{perfect overlap}$ it similarly follows
that $\left\langle \phi_{\mathcal{\mathcal{LH}}}\right\vert W_{\mathcal{K}%
\rightarrow\mathcal{\mathcal{LE}}}^{\prime}$ is also an isometry on
$\operatorname*{ran}\left(  \hat{\mu}_{\mathcal{\mathcal{KH}}}\right)  $,
implying that there exists a unitary $X_{\mathcal{E}\rightarrow\mathcal{E}}$
such that%
\[
X_{\mathcal{E}\rightarrow\mathcal{E}}\left\langle \phi_{\mathcal{\mathcal{LH}%
}}\right\vert W_{\mathcal{K}\rightarrow\mathcal{L}\otimes\mathcal{E}}^{\prime
}=\left\langle \phi_{\mathcal{\mathcal{LH}}}\right\vert G_{\mathcal{K}%
\rightarrow\mathcal{\mathcal{L}}\otimes\mathcal{E}}%
\]
on $\operatorname*{ran}\left(  \hat{\mu}_{\mathcal{\mathcal{KH}}}\right)  $.
Equation $\ref{eq justify overlap guess}$ follows from $\left(
\ref{reduced inner product}\right)  $ by setting $W_{\mathcal{K}%
\rightarrow\mathcal{L}\otimes\mathcal{E}}^{\text{opt}}=X_{\mathcal{E}%
\rightarrow\mathcal{E}}W_{\mathcal{K}\rightarrow\mathcal{L}\otimes\mathcal{E}%
}^{\prime}$.
\end{proof}

\subsubsection{Angle Estimates}

The following estimate shows that $G_{\mathcal{K}\rightarrow\mathcal{LE}}$
remains a reasonably-good guess when $\phi$ and $\mu$ are only
reasonably-overlappable, c.f. inequality
\ref{ineq cos theta for measurement guess}:

\begin{lemma}
[Angle estimates]\label{lemma inequalitites for maximum overlap}Take
$\phi_{\mathcal{\mathcal{LH}}}$ to be a unit vector, take the CP map
$\mathcal{R}_{\mathcal{K}\rightarrow\mathcal{L}}^{G}$ and the guess
$G_{\mathcal{K}\rightarrow\mathcal{L}\otimes\mathcal{E}}$ to be as in Lemma
\ref{lemma construction of overlap guess}, and let $\mathcal{R}_{\mathcal{K}%
\rightarrow\mathcal{L}}$ be any quantum operation for which%
\[
\left\langle \phi_{\mathcal{\mathcal{LH}}}\right\vert \mathcal{R}%
_{\mathcal{K}\rightarrow\mathcal{L}}\left(  \mu_{\mathcal{\mathcal{KH}}%
}\right)  \left\vert \phi_{\mathcal{\mathcal{LH}}}\right\rangle >0\text{.}%
\]
Then $\mathcal{R}$ has a Stinespring dilation $W_{\mathcal{K}\rightarrow
\mathcal{\mathcal{LE}}}$ such that $\left\langle W,G\right\rangle _{\mu,\phi
}\in\mathbb{R}$ and%
\begin{equation}
\cos\left(  \theta\right)  :=\frac{\left\langle W,G\right\rangle _{\mu,\phi}%
}{\left\Vert W\right\Vert _{\mu,\phi}\left\Vert G\right\Vert _{\mu,\phi}}%
\geq\sqrt{\frac{\left\langle \phi_{\mathcal{\mathcal{LH}}}\right\vert
\mathcal{R}_{\mathcal{K}\rightarrow\mathcal{L}}\left(  \hat{\mu}%
_{\mathcal{\mathcal{KH}}}\right)  \left\vert \phi_{\mathcal{LH}}\right\rangle
}{\left\Vert \mathcal{R}_{\mathcal{L}\rightarrow\mathcal{K}}^{\dag}\left(
\left\vert \phi_{\mathcal{\mathcal{LH}}}\right\rangle \left\langle
\phi_{\mathcal{\mathcal{LH}}}\right\vert \right)  \right\Vert _{\infty
}\operatorname*{Tr}\left(  \hat{\mu}_{\mathcal{KH}}\right)  }}\text{.}%
\label{angle estimate for projective overlap guess}%
\end{equation}
Here $\hat{\mu}_{\mathcal{\mathcal{KH}}}$ is given by $\left(
\ref{def muhat after switch to phis}\right)  $ and the adjoint $\mathcal{R}%
^{\dag}$ is from equation \ref{eq defining adjoint}. Furthermore, one has the
identities%
\begin{align}
\left\Vert G\right\Vert _{\mu,\phi}^{2} &  =\operatorname*{Tr}\left(  \hat
{\mu}_{\mathcal{KH}}\right) \label{zero order guess has norm 1}\\
\left\langle G^{\left(  +\right)  },G\right\rangle _{\mu,\phi} &
=\operatorname*{Tr}_{\mathcal{K}}\sqrt{\left\langle \phi\right\vert
_{\mathcal{\mathcal{LH}}}\hat{\mu}_{\mathcal{KH}}^{2}\left\vert \phi
\right\rangle _{\mathcal{\mathcal{LH}}}}%
\label{inner prod of G and Gplus for zero order guess}%
\end{align}
where $G^{\left(  +\right)  }$ is the iterate of $G$ given by Theorem
\ref{Theorem iteration for overlap purifications}.
\end{lemma}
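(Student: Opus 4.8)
The plan is to follow the template of Lemma \ref{Lemma G for meas small angle}, with the freedom in the Stinespring dilation of $\mathcal{R}$ playing the role of the unitary freedom $E_k=U_k\tilde E_k$ used there. First I would record a reduction formula for the semi-inner product. Writing $P_X:=\langle\phi_{\mathcal{LH}}|X_{\mathcal K\to\mathcal{LE}}$, a map $\mathcal K\otimes\mathcal H\to\mathcal E$, a direct expansion of $\left(\ref{reduced inner product}\right)$ in an orthonormal basis of $\mathcal E$ gives $\langle X,Y\rangle_{\mu,\phi}=\operatorname{Tr}_{\mathcal E}\left(P_Y\mu P_X^{\dag}\right)=\operatorname{Tr}_{\mathcal{KH}}\left(\mu P_X^{\dag}P_Y\right)$. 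Two algebraic facts then carry most of the argument: by $\left(\ref{eq in rev lemma using the double kets}\right)$ the guess obeys $P_G^{\dag}P_G=\Pi_+\left(\phi_{\mathcal H}\right)\otimes\openone_{\mathcal K}=:\Pi$, so $P_G$ is a partial isometry with $P_G=P_G\Pi$; and since $\left\vert\phi\right\rangle\left\langle\phi\right\vert$ has $\mathcal H$-marginal supported on $\operatorname{ran}\Pi_+\left(\phi_{\mathcal H}\right)$, the positive operator $D:=\mathcal R^{\dag}\left(\left\vert\phi\right\rangle\left\langle\phi\right\vert\right)=P_W^{\dag}P_W$ satisfies $D=\Pi D\Pi$. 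Identity $\left(\ref{zero order guess has norm 1}\right)$ is then immediate, since $\left\Vert G\right\Vert_{\mu,\phi}^2=\operatorname{Tr}\left(\mu P_G^{\dag}P_G\right)=\operatorname{Tr}\left(\mu\Pi\right)=\operatorname{Tr}\hat\mu$.

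For the angle estimate $\left(\ref{angle estimate for projective overlap guess}\right)$ I would exploit that any two dilations of $\mathcal R$ into the fixed environment differ by $W\mapsto\left(\openone_{\mathcal L}\otimes u\right)W$ with $u$ unitary on $\mathcal E$, acting by $P_W\mapsto uP_W$. Fixing a base dilation $W_0$ and setting $A:=P_G\mu P_{W_0}^{\dag}$, one has $\langle W,G\rangle_{\mu,\phi}=\operatorname{Tr}_{\mathcal E}\left(Au^{\dag}\right)$, so by $\left(\ref{equation sup trace A dag U}\right)$-$\left(\ref{unitary maximizer equation}\right)$ there is a choice of $u$ --- extended if necessary from the maximizing partial isometry to a unitary, keeping $W$ a contraction dilation --- for which $\langle W,G\rangle_{\mu,\phi}=\left\Vert A\right\Vert_1$, a nonnegative real number. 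To bound $\left\Vert A\right\Vert_1$ below I would test against the contraction $X:=\left\Vert D\right\Vert_{\infty}^{-1/2}P_{W_0}P_G^{\dag}$; it is a contraction because $P_G^{\dag}$ is a partial isometry and $\left\Vert P_{W_0}\right\Vert_{\infty}^2=\left\Vert D\right\Vert_{\infty}$. Then $\left(\ref{ineq tr abs less than tr norm}\right)$-$\left(\ref{ineq Holder I1 I infinity}\right)$ give $\left\Vert A\right\Vert_1\geq\left\vert\operatorname{Tr}\left(XA\right)\right\vert=\left\Vert D\right\Vert_{\infty}^{-1/2}\operatorname{Tr}\left(\Pi\mu D\right)$, and the supporting identity $D=\Pi D\Pi$ upgrades $\operatorname{Tr}\left(\Pi\mu D\right)$ to $\operatorname{Tr}\left(\Pi\mu\Pi D\right)=\operatorname{Tr}\left(\hat\mu D\right)=\langle\phi|\mathcal R\left(\hat\mu\right)|\phi\rangle$. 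Dividing by $\left\Vert W\right\Vert_{\mu,\phi}=\sqrt{\langle\phi|\mathcal R\left(\hat\mu\right)|\phi\rangle}$ and $\left\Vert G\right\Vert_{\mu,\phi}=\sqrt{\operatorname{Tr}\hat\mu}$ produces exactly $\left(\ref{angle estimate for projective overlap guess}\right)$.

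For the iterate identity $\left(\ref{inner prod of G and Gplus for zero order guess}\right)$ I would invoke Theorem \ref{Theorem iteration for overlap purifications}, which gives $\langle G,G^{\left(+\right)}\rangle_{\mu,\phi}=\left\Vert Q\right\Vert_1$ for $Q=\operatorname{Tr}_{\mathcal H}\left(\left\vert\phi\right\rangle\left\langle\phi\right\vert G\mu\right)$, and then compute $Q^{\dag}Q$. Using $\langle\phi|G=P_G$ and $P_G^{\dag}P_G=\Pi$, a short calculation writes $Q^{\dag}Q=\operatorname{Tr}_{\mathcal H}\left(\mu\Pi\mu\left(\openone_{\mathcal K}\otimes\phi_{\mathcal H}\right)\right)$; since $\phi_{\mathcal H}=\Pi_+\left(\phi_{\mathcal H}\right)\phi_{\mathcal H}$, the cyclicity of $\operatorname{Tr}_{\mathcal H}$ with respect to operators on $\mathcal H$ turns $\mu\Pi\mu$ into $\hat\mu^2$, so that $Q^{\dag}Q=\operatorname{Tr}_{\mathcal H}\left(\hat\mu^2\left(\openone_{\mathcal K}\otimes\phi_{\mathcal H}\right)\right)=\langle\phi|\hat\mu^2|\phi\rangle$, using the general identity $\langle\phi|Z|\phi\rangle=\operatorname{Tr}_{\mathcal H}\left(Z\left(\openone_{\mathcal K}\otimes\phi_{\mathcal H}\right)\right)$ valid for any $Z$ on $\mathcal{KH}$. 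Then $\left\Vert Q\right\Vert_1=\operatorname{Tr}_{\mathcal K}\sqrt{Q^{\dag}Q}$ yields the claim.

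I expect the main obstacle to be the angle estimate, and within it the reconciliation of $\operatorname{Tr}\left(\Pi\mu D\right)$ --- which is what the natural test contraction $X=P_{W_0}P_G^{\dag}/\sqrt{\left\Vert D\right\Vert_{\infty}}$ delivers --- with the quantity $\operatorname{Tr}\left(\hat\mu D\right)=\langle\phi|\mathcal R\left(\hat\mu\right)|\phi\rangle$ appearing in the bound; this is precisely where the supporting fact $D=\Pi D\Pi$ (itself a consequence of $\left\vert\phi\right\rangle\left\langle\phi\right\vert$ living on $\operatorname{ran}\Pi_+\left(\phi_{\mathcal H}\right)$ together with $\mathcal R^{\dag}$ acting only on $\mathcal L$) is indispensable. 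Choosing the correct dilation phase and verifying that $X$ is genuinely a contraction are the two remaining points demanding care.
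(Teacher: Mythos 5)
Your proposal is correct and is essentially the paper's own argument: the environment-unitary phase adjustment that turns $\left\langle W,G\right\rangle _{\mu,\phi}$ into the trace norm $\left\Vert P_{G}\mu P_{W}^{\dag}\right\Vert _{1}$ is exactly the paper's polar-decomposition step, and your H\"{o}lder test operator $X=\left\Vert D\right\Vert _{\infty}^{-1/2}P_{W_{0}}P_{G}^{\dag}$ is (up to that same unitary and normalization) the paper's operator $Z$, with your support identity $D=\Pi D\Pi$ doing the bookkeeping that the paper accomplishes via the identity $ZP_{G}=P_{W}$. Your derivations of $\left\Vert G\right\Vert _{\mu,\phi}^{2}=\operatorname*{Tr}\hat{\mu}$ and of $Q^{\dag}Q=\left\langle \phi\right\vert \hat{\mu}^{2}\left\vert \phi\right\rangle $ for the iterate identity likewise coincide with the paper's proof.
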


\noindent\textbf{Remark: }Note that $\cos\left(  \theta\right)  =1$ if perfect
overlap $\left\langle \phi\right\vert _{\mathcal{\mathcal{LH}}}\mathcal{R}%
_{\mathcal{K}\rightarrow\mathcal{L}}\left(  \hat{\mu}_{\mathcal{\mathcal{KH}}%
}\right)  \left\vert \phi\right\rangle _{\mathcal{LH}}=\operatorname*{Tr}%
\hat{\mu}_{\mathcal{\mathcal{KH}}}$ is achieved.

\bigskip

\begin{proof}
Equation $\ref{zero order guess has norm 1}$ follows from part
\ref{part of guess lemma which used to say isometry} of Lemma
\ref{lemma construction of overlap guess}:%
\[
\left\Vert G\right\Vert _{\mu,\phi}^{2}=\left\langle \phi
_{\mathcal{\mathcal{LH}}}\right\vert \mathcal{R}_{\mathcal{K}\rightarrow
\mathcal{L}}^{G}\left(  \hat{\mu}_{\mathcal{\mathcal{KH}}}\right)  \left\vert
\phi_{\mathcal{\mathcal{LH}}}\right\rangle =\operatorname*{Tr}\left(  \left(
\mathcal{R}^{G}\right)  _{\mathcal{L}\rightarrow\mathcal{K}}^{\dag}\left(
\left\vert \phi_{\mathcal{\mathcal{LH}}}\right\rangle \left\langle
\phi_{\mathcal{\mathcal{LH}}}\right\vert \right)  \hat{\mu}%
_{\mathcal{\mathcal{KH}}}\right)  =\operatorname*{Tr}\hat{\mu}%
_{\mathcal{\mathcal{KH}}}%
\]

To prove the angle estimate $\left(
\ref{angle estimate for projective overlap guess}\right)  $, note that one has
the identity%
\begin{equation}
\left\langle W_{\mathcal{K}\rightarrow\mathcal{\mathcal{LE}}},G\right\rangle
_{\mu,\phi}=\operatorname*{Tr}_{\mathcal{E}}P_{\mathcal{E}\rightarrow
\mathcal{E}},
\end{equation}
where
\begin{equation}
P_{\mathcal{E}\rightarrow\mathcal{E}}=\left\langle \phi_{\mathcal{\mathcal{LH}%
}}\right\vert G_{\mathcal{K}\rightarrow\mathcal{\mathcal{LE}}}\mu
_{\mathcal{\mathcal{KH}}}\left(  W^{\dag}\right)  _{\mathcal{\mathcal{LE}%
}\rightarrow\mathcal{K}}\left\vert \phi_{\mathcal{\mathcal{LH}}}\right\rangle
\text{.}\label{pos op for overlap}%
\end{equation}
Starting with any dilation $W_{\mathcal{K}\rightarrow\mathcal{\mathcal{LE}}} $
of $\mathcal{R}$, we may assure that $P_{\mathcal{E}\rightarrow\mathcal{E}}$
is positive semidefinite by a replacement%
\[
W_{\mathcal{K}\rightarrow\mathcal{\mathcal{LE}}}\rightarrow X_{\mathcal{E}%
\rightarrow\mathcal{E}}W_{\mathcal{K}\rightarrow\mathcal{\mathcal{LE}}},
\]
where the unitary operator $X_{\mathcal{E}\rightarrow\mathcal{E}}$ comes from
the polar decomposition of $P_{\mathcal{E}\rightarrow\mathcal{E}}$. It follows
that the LHS\ of $\left(  \ref{angle estimate for projective overlap guess}%
\right)  $ is real and maximized over the choice of dilation of $\mathcal{R}$.
We claim that there exists a an operator $Z:\mathcal{E}\rightarrow\mathcal{E}$
such that
\begin{align}
Z_{\mathcal{E}\rightarrow\mathcal{E}}P_{\mathcal{E}\rightarrow\mathcal{E}} &
=\left\langle \phi_{\mathcal{\mathcal{LH}}}\right\vert W_{\mathcal{K}%
\rightarrow\mathcal{\mathcal{LE}}}\mu_{\mathcal{\mathcal{KH}}}\left(  W^{\dag
}\right)  _{\mathcal{\mathcal{LE}}\rightarrow\mathcal{K}}\left\vert
\phi_{\mathcal{\mathcal{LH}}}\right\rangle \label{ZCl}\\
\left\Vert Z_{\mathcal{E}\rightarrow\mathcal{E}}\right\Vert _{\infty} &
=\left\Vert \mathcal{R}_{\mathcal{L}^{\prime}\rightarrow\mathcal{K}}^{\dag
}\left(  \left\vert \phi_{\mathcal{L}^{\prime}\mathcal{H}}\right\rangle
\left\langle \phi_{\mathcal{\mathcal{L}}^{\prime}\mathcal{\mathcal{H}}%
}\right\vert \right)  \right\Vert _{\infty}^{1/2}\label{ZC2}%
\end{align}
Assuming this claim, H\"{o}lder's inequality $\left(
\ref{ineq Holder I1 I infinity}\right)  $ implies that
\begin{align}
\left\langle W_{\mathcal{K}\rightarrow\mathcal{\mathcal{LE}}},G\right\rangle
_{\mu,\phi} &  =\operatorname*{Tr}_{\mathcal{E}}P_{\mathcal{E}\rightarrow
\mathcal{E}}=\left\Vert P_{\mathcal{E}\rightarrow\mathcal{E}}\right\Vert
_{1}\nonumber\\
&  \geq\frac{1}{\left\Vert Z\right\Vert }\left\vert \operatorname*{Tr}%
_{\mathcal{E}}Z_{\mathcal{E}\rightarrow\mathcal{E}}P_{\mathcal{E}%
\rightarrow\mathcal{E}}\right\vert \nonumber\\
&  =\frac{\left\langle \phi_{\mathcal{\mathcal{LH}}}\right\vert \mathcal{R}%
_{\mathcal{K}\rightarrow\mathcal{L}}\left(  \mu_{\mathcal{\mathcal{KH}}%
}\right)  \left\vert \phi_{\mathcal{\mathcal{LH}}}\right\rangle }{\left\Vert
\mathcal{R}_{\mathcal{L}\rightarrow\mathcal{K}}^{\dag}\left(  \left\vert
\phi_{\mathcal{LH}}\right\rangle \left\langle \phi_{\mathcal{\mathcal{LH}}%
}\right\vert \right)  \right\Vert _{\infty}^{1/2}}\text{.}%
\label{reduced angle estimate for maximum overlap}%
\end{align}
The angle estimate $\left(  \ref{angle estimate for projective overlap guess}%
\right)  $ follows by dividing both sides by%
\[
\left\Vert W\right\Vert _{\mu,\phi}=\sqrt{\left\langle \phi
_{\mathcal{\mathcal{LH}}}\right\vert \mathcal{R}_{\mathcal{K}\rightarrow
\mathcal{L}}\left(  \mu_{\mathcal{\mathcal{KH}}}\right)  \left\vert
\phi_{\mathcal{\mathcal{LH}}}\right\rangle }%
\]
and by the square root of equation $\ref{zero order guess has norm 1}$.

To prove the claims $\left(  \ref{ZCl}\right)  -\left(  \ref{ZC2}\right)  $,
define
\begin{equation}
Z_{\mathcal{E}\rightarrow\mathcal{E}}=\left\langle \phi_{\mathcal{\mathcal{L}%
}^{\prime}\mathcal{\mathcal{H}}}\right\vert W_{\mathcal{K}\rightarrow
\mathcal{\mathcal{L}}^{\prime}\mathcal{\mathcal{E}}}\left(  G^{\dag}\right)
_{\mathcal{\mathcal{LE}}\rightarrow\mathcal{K}}\left\vert \phi
_{\mathcal{\mathcal{LH}}}\right\rangle ,\label{define mystery z}%
\end{equation}
where $\mathcal{L}^{\prime}$ is a copy of $\mathcal{L}$. Then%
\begin{align*}
\left\Vert Z_{\mathcal{E}\rightarrow\mathcal{E}}\right\Vert _{\infty} &
=\left\Vert \left(  Z^{\dag}Z\right)  _{\mathcal{E}\rightarrow\mathcal{E}%
}\right\Vert _{\infty}^{1/2}=\left\Vert \left\langle \phi
_{\mathcal{\mathcal{LH}}}\right\vert G_{\mathcal{K}\rightarrow
\mathcal{\mathcal{LE}}}W_{\mathcal{L}^{\prime}\mathcal{E}\rightarrow
\mathcal{K}}^{\dag}\left\vert \phi_{\mathcal{L}^{\prime}\mathcal{H}%
}\right\rangle \left\langle \phi_{\mathcal{\mathcal{L}}^{\prime}%
\mathcal{\mathcal{H}}}\right\vert W_{\mathcal{K}\rightarrow
\mathcal{\mathcal{L}}^{\prime}\mathcal{\mathcal{E}}}\left(  G^{\dag}\right)
_{\mathcal{\mathcal{LE}}\rightarrow\mathcal{K}}\left\vert \phi
_{\mathcal{\mathcal{LH}}}\right\rangle \right\Vert _{\infty}^{1/2}\\
&  =\left\Vert \left\langle \phi_{\mathcal{\mathcal{LH}}}\right\vert
G_{\mathcal{K}\rightarrow\mathcal{\mathcal{LE}}}\mathcal{R}_{\mathcal{L}%
^{\prime}\rightarrow\mathcal{K}}^{\dag}\left(  \left\vert \phi_{\mathcal{L}%
^{\prime}\mathcal{H}}\right\rangle \left\langle \phi_{\mathcal{\mathcal{L}%
}^{\prime}\mathcal{\mathcal{H}}}\right\vert \right)  \left(  G^{\dag}\right)
_{\mathcal{\mathcal{LE}}\rightarrow\mathcal{K}}\left\vert \phi
_{\mathcal{\mathcal{LH}}}\right\rangle \right\Vert _{\infty}^{1/2}\text{.}%
\end{align*}
But as in the proof of Lemma \ref{lemma construction of overlap guess}, the
operator $\left\langle \phi_{\mathcal{\mathcal{LH}}}\right\vert G_{\mathcal{K}%
\rightarrow\mathcal{\mathcal{LE}}}$ is an isometry from $\mathcal{K}%
\otimes\operatorname*{ran}\left(  \phi_{\mathcal{H}}\right)  $ into
$\mathcal{E}$, proving $\left(  \ref{ZC2}\right)  $.

To prove $\left(  \ref{ZCl}\right)  $, note that equations
\ref{define mystery z}, $\ref{eq in rev lemma using the double kets},$ \&
\ref{phi proj identity} imply that
\begin{equation}
Z_{\mathcal{E}\rightarrow\mathcal{E}}\left\langle \phi_{\mathcal{\mathcal{LH}%
}}\right\vert G_{\mathcal{K}\rightarrow\mathcal{\mathcal{LE}}}=\left\langle
\phi_{\mathcal{\mathcal{L}}^{\prime}\mathcal{\mathcal{H}}}\right\vert
W_{\mathcal{K}\rightarrow\mathcal{\mathcal{L}}^{\prime}\mathcal{\mathcal{E}}%
}\Pi_{+}\left(  \phi_{\mathcal{H}}\right)  =\left\langle \phi
_{\mathcal{\mathcal{L}}^{\prime}\mathcal{\mathcal{H}}}\right\vert
W_{\mathcal{K}\rightarrow\mathcal{\mathcal{L}}^{\prime}\mathcal{\mathcal{E}}%
}\text{.}\label{strange z identity}%
\end{equation}
Equation $\ref{ZCl}$ follows, since by $\left(  \ref{pos op for overlap}%
\right)  $%
\begin{align*}
Z_{\mathcal{E}\rightarrow\mathcal{E}}P_{\mathcal{E}\rightarrow\mathcal{E}} &
=Z_{\mathcal{E}\rightarrow\mathcal{E}}\left\langle \phi_{\mathcal{\mathcal{L}%
}^{\prime}\mathcal{\mathcal{H}}}\right\vert G_{\mathcal{K}\rightarrow
\mathcal{\mathcal{L}}^{\prime}\mathcal{\mathcal{E}}}\,\mu
_{\mathcal{\mathcal{KH}}}\left(  W^{\dag}\right)  _{\mathcal{\mathcal{LE}%
}\rightarrow\mathcal{K}}\left\vert \phi_{\mathcal{\mathcal{LH}}}\right\rangle
\\
&  =\left\langle \phi_{\mathcal{\mathcal{LH}}}\right\vert W_{\mathcal{K}%
\rightarrow\mathcal{\mathcal{LE}}}\mu_{\mathcal{\mathcal{KH}}}\left(  W^{\dag
}\right)  _{\mathcal{\mathcal{LE}}\rightarrow\mathcal{K}}\left\vert
\phi_{\mathcal{\mathcal{LH}}}\right\rangle \text{.}%
\end{align*}

It remains to prove equation
$\ref{inner prod of G and Gplus for zero order guess}$. By equation
$\ref{inner prod between U and iterate}$,%
\begin{equation}
\left\langle G^{\left(  +\right)  },G\right\rangle _{\mu,\phi}=\left\Vert
Q_{\mathcal{K}\rightarrow\mathcal{LE}}\right\Vert _{1},
\end{equation}
where by equations $\ref{formula Q for overlap problem}$ and
$\ref{eq M proj in restricted MO prob}$,%
\begin{align}
Q_{\mathcal{K}\rightarrow\mathcal{LE}} &  =\operatorname*{Tr}_{\mathcal{H}%
}\left(  \left\vert \phi_{\mathcal{\mathcal{LH}}}\right\rangle \left\langle
\phi_{\mathcal{\mathcal{L}}^{\prime}\mathcal{\mathcal{H}}}\right\vert
~G_{\mathcal{K}\rightarrow\mathcal{\mathcal{L}}^{\prime}\mathcal{\mathcal{E}}%
}\mu_{\mathcal{KH}}\right) \nonumber\\
&  =\left\langle \phi_{\mathcal{\mathcal{L}}^{\prime}\mathcal{\mathcal{H}}%
}\right\vert ~G_{\mathcal{K}\rightarrow\mathcal{\mathcal{L}}^{\prime
}\mathcal{\mathcal{E}}}~\mu_{\mathcal{KH}}\left\vert \phi
_{\mathcal{\mathcal{LH}}}\right\rangle \text{,}%
\label{simplified Q for overlap problem}%
\end{align}
where $\mathcal{L}^{\prime}$ is a copy of $\mathcal{L}$. It follows by
equations $\ref{eq in rev lemma using the double kets}$ and
$\ref{phi proj identity}$ that%
\begin{align}
Q^{\dag}Q &  =\left\langle \phi_{\mathcal{\mathcal{LH}}}\right\vert
\mu_{\mathcal{\mathcal{\mathcal{\mathcal{\mathcal{\mathcal{K}}}}}H}}\chi
_{+}\left(  \phi_{\mathcal{H}}\right)  \mu
_{\mathcal{\mathcal{\mathcal{\mathcal{\mathcal{\mathcal{K}}}}}H}}\left\vert
\phi_{\mathcal{\mathcal{LH}}}\right\rangle \nonumber\\
&  =\left\langle \phi_{\mathcal{\mathcal{LH}}}\right\vert \hat{\mu
}_{\mathcal{\mathcal{\mathcal{\mathcal{\mathcal{\mathcal{K}}}}}H}}%
^{2}\left\vert \phi_{\mathcal{\mathcal{LH}}}\right\rangle
.\label{Q dag Q formula}%
\end{align}
The conclusion follows.
\end{proof}

\subsubsection{The \textquotedblleft Quadratic Overlapper\textquotedblright}

The following operation is analogous to the quadratically-weighted measurement:

\begin{definition}
The \textbf{\textquotedblleft quadratic overlapper\textquotedblright\ }is the
operation $\mathcal{R}^{\text{QO}}:B^{1}\left(  \mathcal{K}\right)
\rightarrow B^{1}\left(  \mathcal{L}\right)  $ defined by%
\begin{equation}
\mathcal{R}^{\text{QO}}\left(  \upsilon_{\mathcal{K}}\right)
=\operatorname*{Tr}_{\mathcal{E}}G_{\mathcal{K}\rightarrow
\mathcal{\mathcal{LE}}}^{\left(  +\right)  }\upsilon_{\mathcal{K}}\left(
G_{\mathcal{K}\rightarrow\mathcal{\mathcal{LE}}}^{\left(  +\right)  }\right)
^{\dag}\text{,}\label{eq def PGO}%
\end{equation}
where $G_{\mathcal{K}\rightarrow\mathcal{\mathcal{LE}}}\in V_{\mu,\phi}$ is
the \textquotedblleft small-angle guess\textquotedblright\ of Lemma
\ref{lemma construction of overlap guess}, with directional iterate
$G_{\mathcal{K}\rightarrow\mathcal{\mathcal{LE}}}^{\left(  +\right)  }$ given
by equations $\ref{eq for JRF iterate for overlap problem}$%
-$\ref{formula Q for overlap problem}$ of Theorem
$\ref{Theorem iteration for overlap purifications}$.
\end{definition}

Alternatively, one may express $\mathcal{R}^{\text{QO}}$ using

\begin{theorem}
[Computation of the quadratic overlapper]%
\label{theorem compute quadratic overlapper}One has the formula%
\begin{equation}
\mathcal{R}_{\mathcal{K}\rightarrow\mathcal{L}}^{\text{QO}}\left(
\upsilon_{\mathcal{K}}\right)  =\operatorname*{Tr}_{\mathcal{KH}}\left(
\hat{\mu}_{\mathcal{KH}}^{2}\left(  \left(  Y^{-1/2^{+}}\upsilon Y^{-1/2^{+}%
}\right)  _{\mathcal{K}\rightarrow\mathcal{K}}\otimes\left\vert \phi
\right\rangle _{\mathcal{LH}}\left\langle \phi\right\vert \right)  \right)
,\label{eq for PGO}%
\end{equation}
where%
\begin{equation}
Y_{\mathcal{K}\rightarrow\mathcal{K}}=\left\langle \phi_{\mathcal{\mathcal{LH}%
}}\right\vert \hat{\mu}_{\mathcal{KH}}^{2}\left\vert \phi
_{\mathcal{\mathcal{LH}}}\right\rangle \text{,}\label{M in maximum overlap}%
\end{equation}
and $\hat{\mu}_{\mathcal{\mathcal{KH}}}$ is given by $\left(
\ref{def muhat after switch to phis}\right)  $.
\end{theorem}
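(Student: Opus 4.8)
The plan is to substitute the explicit iterate $G^{\left(+\right)}$ of Theorem \ref{Theorem iteration for overlap purifications} into the definition \eqref{eq def PGO} and then strip away all reference to the dilation $G$ and the environment $\mathcal{E}$. First I would record that, taking $U=G$ in Theorem \ref{Theorem iteration for overlap purifications}, one has $G^{\left(+\right)}=Q\left(Q^{\dag}Q\right)^{-1/2^{+}}$ with $Q$ in the simplified form \eqref{simplified Q for overlap problem}, and that the computation \eqref{Q dag Q formula} already carried out in the proof of Lemma \ref{lemma inequalitites for maximum overlap} identifies $Q^{\dag}Q=\left\langle\phi_{\mathcal{LH}}\right\vert\hat{\mu}_{\mathcal{KH}}^{2}\left\vert\phi_{\mathcal{LH}}\right\rangle=Y$. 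Hence $G^{\left(+\right)}=QY^{-1/2^{+}}$, and writing $\tilde{\upsilon}:=Y^{-1/2^{+}}\upsilon Y^{-1/2^{+}}$ collapses \eqref{eq def PGO} to $\mathcal{R}^{\text{QO}}\left(\upsilon\right)=\operatorname*{Tr}_{\mathcal{E}}\left(Q\,\tilde{\upsilon}\,Q^{\dag}\right)$.

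The crux is then to remove $\mathcal{E}$ and $G$ from this expression. I would factor $Q=\left\langle\phi_{\mathcal{L'H}}\right\vert G_{\mathcal{K}\rightarrow\mathcal{L'E}}\,\mu_{\mathcal{KH}}\left\vert\phi_{\mathcal{LH}}\right\rangle$ as a composite: the part $T=\mu_{\mathcal{KH}}\left\vert\phi_{\mathcal{LH}}\right\rangle$ leaves the output $\mathcal{L}$-factor untouched, while $B:=\left\langle\phi_{\mathcal{L'H}}\right\vert G_{\mathcal{K}\rightarrow\mathcal{L'E}}$ carries the output $\mathcal{KH}$ of $\mu$ into $\mathcal{E}$. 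The whole point of the guess is that $B$ is essentially an isometry: equation \eqref{eq in rev lemma using the double kets} gives exactly $B^{\dag}B=\Pi_{+}\left(\phi_{\mathcal{H}}\right)\otimes\openone_{\mathcal{K}}$. Combining this with the partial-trace identity $\operatorname*{Tr}_{\mathcal{E}}\left(BXB^{\dag}\right)=\operatorname*{Tr}_{\mathcal{KH}}\left(\left(B^{\dag}B\right)X\right)$, valid because $B$ acts trivially on $\mathcal{L}$, and with $T\tilde{\upsilon}T^{\dag}=\mu_{\mathcal{KH}}\left(\tilde{\upsilon}\otimes\left\vert\phi_{\mathcal{LH}}\right\rangle\left\langle\phi_{\mathcal{LH}}\right\vert\right)\mu_{\mathcal{KH}}$, reduces the expression to
\[
\mathcal{R}^{\text{QO}}\left(\upsilon\right)=\operatorname*{Tr}_{\mathcal{KH}}\left(\left(\Pi_{+}\left(\phi_{\mathcal{H}}\right)\otimes\openone_{\mathcal{K}}\right)\mu_{\mathcal{KH}}\left(\tilde{\upsilon}\otimes\left\vert\phi_{\mathcal{LH}}\right\rangle\left\langle\phi_{\mathcal{LH}}\right\vert\right)\mu_{\mathcal{KH}}\right),
\]
an operator on $\mathcal{L}$ in which neither $\mathcal{E}$ nor $G$ survives.

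It remains to promote the single projection and the bare factors $\mu_{\mathcal{KH}}$ into the symmetric form $\hat{\mu}_{\mathcal{KH}}^{2}=\left(\Pi_{+}\left(\phi_{\mathcal{H}}\right)\mu\Pi_{+}\left(\phi_{\mathcal{H}}\right)\right)^{2}$ of \eqref{eq for PGO}. Writing $P=\Pi_{+}\left(\phi_{\mathcal{H}}\right)\otimes\openone_{\mathcal{K}}$ and $X=\tilde{\upsilon}\otimes\left\vert\phi_{\mathcal{LH}}\right\rangle\left\langle\phi_{\mathcal{LH}}\right\vert$, the identity \eqref{phi proj identity}, $\Pi_{+}\left(\phi_{\mathcal{H}}\right)\left\vert\phi_{\mathcal{LH}}\right\rangle=\left\vert\phi_{\mathcal{LH}}\right\rangle$, gives $PX=XP=X$; and since $P$ and $\mu$ are the identity on the untraced factor $\mathcal{L}$, the partial trace $\operatorname*{Tr}_{\mathcal{KH}}$ is cyclic in them. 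A short cyclic manipulation then shows that both the displayed $\operatorname*{Tr}_{\mathcal{KH}}\left(P\mu X\mu\right)$ and the target $\operatorname*{Tr}_{\mathcal{KH}}\left(\hat{\mu}^{2}X\right)$ equal $\operatorname*{Tr}_{\mathcal{KH}}\left(\mu P\mu X\right)$, which yields \eqref{eq for PGO}. The hard part is not analytic but purely the index bookkeeping, keeping straight which $\mathcal{H}$-, $\mathcal{L}$-, and $\mathcal{K}$-factors get contracted; the genuinely load-bearing input is the isometry relation $B^{\dag}B=\Pi_{+}\left(\phi_{\mathcal{H}}\right)\otimes\openone_{\mathcal{K}}$, which is precisely what forces the environment to drop out.
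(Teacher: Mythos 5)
Your proposal is correct and follows essentially the same route as the paper's own proof: substitute the iterate $G^{(+)}=Q\,(Q^{\dag}Q)^{-1/2^{+}}$ with $Q^{\dag}Q=Y$, move the $G$-factors around the (partial) trace so that the combination $G^{\dag}\left\vert \phi\right\rangle \left\langle \phi\right\vert G=\Pi_{+}\left(  \phi_{\mathcal{H}}\right)  \otimes\openone_{\mathcal{K}}$ of equation (\ref{eq in rev lemma using the double kets}) appears, then finish with cyclicity, $\Pi_{+}\left(  \phi_{\mathcal{H}}\right)  \left\vert \phi\right\rangle =\left\vert \phi\right\rangle $, and the definition of $\hat{\mu}$. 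Your factorization $Q=BT$ and the identity $\operatorname*{Tr}_{\mathcal{E}}\left(  BXB^{\dag}\right)  =\operatorname*{Tr}_{\mathcal{KH}}\left(  B^{\dag}BX\right)  $ is just a cleaner repackaging of the paper's cyclicity-of-trace step, not a different argument.
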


\begin{proof}
By equation $\ref{eq for JRF iterate for overlap problem}$, the guess $G$ has
the iterate%
\[
G_{\mathcal{K}\rightarrow\mathcal{\mathcal{LE}}}^{\left(  +\right)  }=Q\left(
Q^{^{\dag}}Q\right)  ^{-1/2^{+}},
\]
where the operator $Q$ is defined by $\left(
\ref{formula Q for overlap problem}\right)  $. By equations $\left(
\ref{simplified Q for overlap problem}\right)  $-$\left(
\ref{Q dag Q formula}\right)  $, one has%
\begin{align*}
Q_{\mathcal{K}\rightarrow\mathcal{\mathcal{LE}}} &  =\left\langle
\phi_{\mathcal{\mathcal{L}}^{\prime}\mathcal{\mathcal{H}}}\right\vert
~G_{\mathcal{K}\rightarrow\mathcal{\mathcal{L}}^{\prime}\mathcal{\mathcal{E}}%
}~\mu_{\mathcal{KH}}\left\vert \phi_{\mathcal{\mathcal{LH}}}\right\rangle \\
Q^{\dag}Q &  =\left\langle \phi_{\mathcal{\mathcal{LH}}}\right\vert \hat{\mu
}_{\mathcal{\mathcal{\mathcal{\mathcal{\mathcal{\mathcal{K}}}}}H}%
\rightarrow\mathcal{\mathcal{KH}}}^{2}\left\vert \phi_{\mathcal{\mathcal{LH}}%
}\right\rangle =Y_{\mathcal{K}\rightarrow\mathcal{K}}.
\end{align*}
Setting%
\begin{equation}
\tilde{\upsilon}_{\mathcal{K}\rightarrow\mathcal{K}}=Y^{-1/2^{+}}\upsilon
Y^{-1/2^{+}},
\end{equation}
it follows that%
\begin{align}
&  \operatorname*{Tr}_{\mathcal{E}}G_{\mathcal{K}\rightarrow
\mathcal{\mathcal{LE}}}^{\left(  +\right)  }~\upsilon_{\mathcal{\mathcal{K}}%
}~\left(  G^{\left(  +\right)  }\right)  _{\mathcal{\mathcal{LE}}%
\rightarrow\mathcal{K}}^{\dag}\nonumber\\
&  =\operatorname*{Tr}_{\mathcal{E}}\left(  \left\langle \phi
_{\mathcal{\mathcal{\mathcal{L}}}^{\prime}\mathcal{\mathcal{\mathcal{H}}}%
}\right\vert G_{\mathcal{K}\rightarrow\mathcal{\mathcal{L}}^{\prime
}\mathcal{\mathcal{E}}}~\mu_{\mathcal{KH}}\left\vert \phi
_{\mathcal{\mathcal{LH}}}\right\rangle \tilde{\upsilon}_{\mathcal{K}%
\rightarrow\mathcal{K}}\left\langle \phi_{\mathcal{LH}}\right\vert
\mu_{\mathcal{KH}}\left(  G^{\dag}\right)  _{\mathcal{L}^{\prime}%
\mathcal{E}\rightarrow\mathcal{K}}\left\vert \phi_{\mathcal{L}^{\prime
}\mathcal{H}}\right\rangle \right) \nonumber\\
&  =\operatorname*{Tr}_{\mathcal{\mathcal{KH}}}\left(  \mu_{\mathcal{KH}%
}\left\vert \phi_{\mathcal{\mathcal{LH}}}\right\rangle \tilde{\upsilon
}_{\mathcal{K}\rightarrow\mathcal{K}}\left\langle \phi_{\mathcal{LH}%
}\right\vert \mu_{\mathcal{KH}}\left(  G^{\dag}\right)  _{\mathcal{L}^{\prime
}\mathcal{E}\rightarrow\mathcal{K}}\left\vert \phi_{\mathcal{L}^{\prime
}\mathcal{H}}\right\rangle \left\langle \phi_{\mathcal{\mathcal{\mathcal{L}}%
}^{\prime}\mathcal{\mathcal{\mathcal{H}}}}\right\vert G_{\mathcal{K}%
\rightarrow\mathcal{\mathcal{L}}^{\prime}\mathcal{\mathcal{E}}}\right)
\label{was causing latex error}\\
&  =\operatorname*{Tr}_{\mathcal{\mathcal{KH}}}\left(  \mu_{\mathcal{KH}%
}\left(  \tilde{\upsilon}_{\mathcal{K}\rightarrow\mathcal{K}}\otimes\left\vert
\phi\right\rangle _{\mathcal{LH}}\left\langle \phi\right\vert \right)
\mu_{\mathcal{KH}}\Pi_{+}\left(  \phi_{\mathcal{H}}\right)  \right)
\label{uses nice identity overlap proof}\\
&  =\operatorname*{Tr}_{\mathcal{\mathcal{KH}}}\left(  \hat{\mu}%
_{\mathcal{KH}}^{2}\left(  \tilde{\upsilon}_{\mathcal{K}\rightarrow
\mathcal{K}}\otimes\left\vert \phi\right\rangle _{\mathcal{LH}}\left\langle
\phi\right\vert \right)  \right)  .\label{uses cyclicity and something else}%
\end{align}
Here equation $\ref{was causing latex error}$ uses cyclicity of the trace,
equation $\ref{uses nice identity overlap proof}$ uses $\left(
\ref{eq in rev lemma using the double kets}\right)  $, and equation
$\ref{uses cyclicity and something else}$ uses $\left(
\ref{phi proj identity}\right)  $, $\left(
\ref{def muhat after switch to phis}\right)  $, and cyclicity of the trace.
\end{proof}

\subsubsection{Estimates for the restricted maximum overlap
problem\label{subsection overlap estimates}}

On now may apply our angle estimates in a manner similar to that of section
$\ref{section simple proof Holevo Curlander}$:

\begin{theorem}
[Two-sided estimates for the maximum overlap problem]%
\label{Theorem two sided overlap estimates}Let $\mu_{\mathcal{\mathcal{KH}}} $
be positive semidefinite on $\mathcal{K}\otimes\mathcal{H}$ and let
$\left\vert \phi_{\mathcal{\mathcal{LH}}}\right\rangle $ be a unit vector.
Then
\begin{align}
\frac{\Lambda^{2}}{\operatorname*{Tr}\hat{\mu}_{\mathcal{\mathcal{KH}}}} &
\leq\left\langle \phi_{\mathcal{\mathcal{LH}}}\right\vert \mathcal{R}%
_{\mathcal{K}\rightarrow\mathcal{L}}^{\text{QO}}\left(  \mu
_{\mathcal{\mathcal{KH}}}\right)  \left\vert \phi_{\mathcal{\mathcal{LH}}%
}\right\rangle \leq\max_{\mathcal{R}_{\mathcal{K}\rightarrow\mathcal{L}}%
}\left\langle \phi_{\mathcal{\mathcal{LH}}}\right\vert \mathcal{R}%
_{\mathcal{K}\rightarrow\mathcal{L}}\left(  \mu_{\mathcal{\mathcal{KH}}%
}\right)  \left\vert \phi_{\mathcal{\mathcal{LH}}}\right\rangle \nonumber\\
&  \leq\Lambda\times\left\Vert \left(  \mathcal{R}^{\text{opt}}\right)
_{\mathcal{L}\rightarrow\mathcal{K}}^{\dag}\left(  \left\vert \phi
_{\mathcal{LH}}\right\rangle \left\langle \phi_{\mathcal{\mathcal{LH}}%
}\right\vert \right)  \right\Vert _{\infty}^{1/2}\leq\Lambda\text{,}%
\label{eq overlap bound}%
\end{align}
where the maximum is over quantum operations $\mathcal{R}:B^{1}\left(
\mathcal{K}\right)  \rightarrow B^{1}\left(  \mathcal{L}\right)  $, where
$\mathcal{R}^{\text{opt}}$ attains this maximum, and where%
\begin{equation}
\Lambda=\operatorname*{Tr}_{\mathcal{K}}\sqrt{\left\langle \phi
_{\mathcal{\mathcal{LH}}}\right\vert \hat{\mu}_{\mathcal{KH}}^{2}\left\vert
\phi_{\mathcal{\mathcal{LH}}}\right\rangle }.\label{gamma for overlap bound}%
\end{equation}
Here $\hat{\mu}_{\mathcal{\mathcal{KH}}}$ is given by $\left(
\ref{def muhat after switch to phis}\right)  $, $\mathcal{R}^{\dag}$ is given
by $\left(  \ref{eq defining adjoint}\right)  $, and one interprets
$0^{2}/0=0$.
\end{theorem}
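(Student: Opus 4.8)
The plan is to run the three-step strategy of Section~\ref{abstract section} exactly as in the proof of Theorem~\ref{Theorem gen Holevo Curlander bounds}, now in the semi-inner product space $V_{\mu,\phi}$ of Definition~\ref{definition overlap inner product} with $S$ the purification ball and with the small-angle guess $G$ of Lemma~\ref{lemma construction of overlap guess}. By the identity~\ref{eq MO as maximum seminorm}, a Stinespring dilation $W^{\text{opt}}$ of an optimal operation $\mathcal{R}^{\text{opt}}$ satisfies $\|W^{\text{opt}}\|_{\mu,\phi}^{2}=\operatorname{MO}$, abbreviating by $\operatorname{MO}$ the maximum $\max_{\mathcal{R}}\langle\phi_{\mathcal{LH}}|\mathcal{R}(\mu_{\mathcal{KH}})|\phi_{\mathcal{LH}}\rangle$; and since the iterate $G^{(+)}$ is a contraction, the same identity together with~\ref{eq def PGO} gives $\|G^{(+)}\|_{\mu,\phi}^{2}=\langle\phi_{\mathcal{LH}}|\mathcal{R}^{\text{QO}}(\mu_{\mathcal{KH}})|\phi_{\mathcal{LH}}\rangle$. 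Taking $x_{\text{max}}=W^{\text{opt}}$ (for the specific dilation furnished by Lemma~\ref{lemma inequalitites for maximum overlap}, which makes the relevant inner product real), all hypotheses of property $G1$ of Lemma~\ref{directional iterate lemma} are in force.

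First I would substitute the explicit evaluations of Lemma~\ref{lemma inequalitites for maximum overlap} into the chain~\ref{key abstract estimate}. Equation~\ref{zero order guess has norm 1} gives $\|G\|_{\mu,\phi}=\sqrt{\operatorname{Tr}\hat{\mu}_{\mathcal{KH}}}$, while~\ref{inner prod of G and Gplus for zero order guess} together with the definition~\ref{gamma for overlap bound} gives $\langle G^{(+)},G\rangle_{\mu,\phi}=\Lambda$, so that the abstract quantity~\ref{def of lamda in abstract case} becomes $\Lambda(G)=\Lambda/\sqrt{\operatorname{Tr}\hat{\mu}_{\mathcal{KH}}}$. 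After taking square roots, the chain~\ref{key abstract estimate} then reads
\[
\sqrt{\operatorname{MO}}\ \geq\ \sqrt{\langle\phi_{\mathcal{LH}}|\mathcal{R}^{\text{QO}}(\mu_{\mathcal{KH}})|\phi_{\mathcal{LH}}\rangle}\ \geq\ \frac{\Lambda}{\sqrt{\operatorname{Tr}\hat{\mu}_{\mathcal{KH}}}}\ \geq\ \sqrt{\operatorname{MO}}\,\cos\theta .
\]
Squaring the relation between the middle two terms yields the first (lower) bound $\Lambda^{2}/\operatorname{Tr}\hat{\mu}_{\mathcal{KH}}\leq\langle\phi_{\mathcal{LH}}|\mathcal{R}^{\text{QO}}(\mu_{\mathcal{KH}})|\phi_{\mathcal{LH}}\rangle$, and the second inequality of~\ref{eq overlap bound} is immediate because $\mathcal{R}^{\text{QO}}$ is one particular quantum operation.

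For the upper bound I would feed the angle estimate~\ref{angle estimate for projective overlap guess}, evaluated at $\mathcal{R}=\mathcal{R}^{\text{opt}}$, into the final relation of the displayed chain. The key simplification is the Observation that $\langle\phi_{\mathcal{LH}}|\mathcal{R}(\mu_{\mathcal{KH}})|\phi_{\mathcal{LH}}\rangle=\langle\phi_{\mathcal{LH}}|\mathcal{R}(\hat{\mu}_{\mathcal{KH}})|\phi_{\mathcal{LH}}\rangle$ for every $\mathcal{R}$, so that the numerator under the root in~\ref{angle estimate for projective overlap guess} is precisely $\operatorname{MO}$ and hence $\cos\theta\geq\sqrt{\operatorname{MO}}\big/\big(\|(\mathcal{R}^{\text{opt}})^{\dag}(|\phi_{\mathcal{LH}}\rangle\langle\phi_{\mathcal{LH}}|)\|_{\infty}^{1/2}\sqrt{\operatorname{Tr}\hat{\mu}_{\mathcal{KH}}}\big)$. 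Substituting this into $\Lambda/\sqrt{\operatorname{Tr}\hat{\mu}_{\mathcal{KH}}}\geq\sqrt{\operatorname{MO}}\,\cos\theta$ and cancelling the common factor $\sqrt{\operatorname{Tr}\hat{\mu}_{\mathcal{KH}}}$ gives $\operatorname{MO}\leq\Lambda\,\|(\mathcal{R}^{\text{opt}})^{\dag}(|\phi_{\mathcal{LH}}\rangle\langle\phi_{\mathcal{LH}}|)\|_{\infty}^{1/2}$, the third inequality. The final bound $\|(\mathcal{R}^{\text{opt}})^{\dag}(|\phi_{\mathcal{LH}}\rangle\langle\phi_{\mathcal{LH}}|)\|_{\infty}\leq 1$ follows from sub-unitality: since $\mathcal{R}^{\text{opt}}$ is trace non-increasing one has $(\mathcal{R}^{\text{opt}})^{\dag}(\openone)\leq\openone$, whence $(\mathcal{R}^{\text{opt}})^{\dag}(|\phi_{\mathcal{LH}}\rangle\langle\phi_{\mathcal{LH}}|)\leq(\mathcal{R}^{\text{opt}})^{\dag}(\openone)\leq\openone$ by positivity of the adjoint and $|\phi_{\mathcal{LH}}\rangle\langle\phi_{\mathcal{LH}}|\leq\openone$.

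Most of this is bookkeeping resting on the already-established Lemmas. The one point genuinely requiring care is the matching of the two notions of angle: property $G1$ measures $\cos\theta$ against the maximal-seminorm element $x_{\text{max}}$, whereas Lemma~\ref{lemma inequalitites for maximum overlap} only bounds the cosine for a judiciously chosen dilation of $\mathcal{R}^{\text{opt}}$, so I must take that very dilation as $x_{\text{max}}$ and rely on the dilation-independence of $\|\cdot\|_{\mu,\phi}$ to know it is maximal. A secondary nuisance is the degenerate case $\operatorname{Tr}\hat{\mu}_{\mathcal{KH}}=0$, where every displayed quantity vanishes and the bounds hold trivially under the stated convention $0^{2}/0=0$.
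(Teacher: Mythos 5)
Your proposal is correct and follows essentially the same route as the paper's own proof: the same small-angle guess $G$, the same invocation of Lemma \ref{lemma inequalitites for maximum overlap} for a dilation $W^{\text{opt}}$ with real inner product, and the same chain from property $G1$ of Lemma \ref{directional iterate lemma}, with the final sub-unitality bound made explicit. The only (immaterial) difference is that the paper normalizes $\operatorname{Tr}\hat{\mu}_{\mathcal{KH}}=1$ at the outset and works with the unnormalized inner-product form of the angle estimate, whereas you carry the factor $\sqrt{\operatorname{Tr}\hat{\mu}_{\mathcal{KH}}}$ through and cancel it.
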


\noindent\textbf{Remark: }It follows from $\left(  \ref{eq overlap bound}%
\right)  $ that
\[
\Lambda\leq\operatorname*{Tr}\left(  \hat{\mu}_{\mathcal{\mathcal{KH}}%
}\right)  \times\left\Vert \left(  \mathcal{R}^{\text{opt}}\right)
_{\mathcal{L}\rightarrow\mathcal{K}}^{\dag}\left(  \left\vert \phi
_{\mathcal{LH}}\right\rangle \left\langle \phi_{\mathcal{\mathcal{LH}}%
}\right\vert \right)  \right\Vert _{\infty}^{1/2}\leq\operatorname*{Tr}%
\hat{\mu}_{\mathcal{\mathcal{KH}}}\text{.}%
\]

\noindent\textbf{Note: }Given an arbitrary invertible operator $X:\mathcal{H}%
\rightarrow\mathcal{H}$, one may obtain potentially sharper estimates from
inequality $\left(  \ref{eq overlap bound}\right)  $ using a replacement of
the form
\begin{align}
\phi_{\mathcal{\mathcal{LH}}}  &  \rightarrow\frac{X_{\mathcal{H}%
\rightarrow\mathcal{H}}\phi_{\mathcal{\mathcal{LH}}}}{\left\Vert
X_{\mathcal{H}\rightarrow\mathcal{H}}\phi_{\mathcal{\mathcal{LH}}}\right\Vert
}\\
\mu_{\mathcal{\mathcal{KH}}}  &  \rightarrow\left\Vert X_{\mathcal{H}%
\rightarrow\mathcal{H}}\phi_{\mathcal{\mathcal{LH}}}\right\Vert ^{2}\left(
X^{-1}\right)  ^{\dag}\mu_{\mathcal{\mathcal{KH}}}X^{-1},
\end{align}
which does not change the overlap $\left\langle \phi_{\mathcal{\mathcal{LH}}%
}\right\vert \mathcal{R}_{\mathcal{K}\rightarrow\mathcal{L}}\left(
\mu_{\mathcal{\mathcal{KH}}}\right)  \left\vert \phi_{\mathcal{\mathcal{LH}}%
}\right\rangle $.

\bigskip

\begin{proof}
Since all quantities in $\left(  \ref{eq overlap bound}\right)  $ scale
linearly in $\mu$, set $\operatorname*{Tr}\hat{\mu}_{\mathcal{\mathcal{KH}}%
}=1$. (The case $\hat{\mu}=0$ is trivial.)

Let $\mathcal{R}^{\text{opt}}$ attain the maximum in $\left(
\ref{eq overlap bound}\right)  $. Take $G_{\mathcal{K}\rightarrow
\mathcal{\mathcal{LE}}}$ to be the \textquotedblleft small
angle\textquotedblright\ guess of Lemma
\ref{lemma construction of overlap guess}, with iterate $G_{\mathcal{K}%
\rightarrow\mathcal{\mathcal{LE}}}^{\left(  +\right)  }$ given by Theorem
\ref{Theorem iteration for overlap purifications}. By Lemma
\ref{lemma inequalitites for maximum overlap}, there exists a Stinespring
dilation $W_{\mathcal{K}\rightarrow\mathcal{\mathcal{LE}}}^{\text{opt}}$ of
$\mathcal{R}^{\text{opt}}$ such that the angle estimate $\left(
\ref{angle estimate for projective overlap guess}\right)  $ holds in the
equivalent form
\begin{equation}
\left\langle W^{\text{opt}},G\right\rangle _{\mu,\phi}\geq\frac{\left\langle
\phi_{\mathcal{\mathcal{LH}}}\right\vert \mathcal{R}_{\mathcal{K}%
\rightarrow\mathcal{L}}^{\text{opt}}\left(  \mu_{\mathcal{\mathcal{KH}}%
}\right)  \left\vert \phi_{\mathcal{\mathcal{LH}}}\right\rangle }{\left\Vert
\left(  \mathcal{R}^{\text{opt}}\right)  _{\mathcal{L}\rightarrow\mathcal{K}%
}^{\dag}\left(  \left\vert \phi_{\mathcal{LH}}\right\rangle \left\langle
\phi_{\mathcal{\mathcal{LH}}}\right\vert \right)  \right\Vert _{\infty}^{1/2}%
}\label{ineq terminal for MO proof}%
\end{equation}
given by $\left(  \ref{reduced angle estimate for maximum overlap}\right)  $.
But by lemma \ref{directional iterate lemma} and $\left(
\ref{zero order guess has norm 1}\right)  $,%
\begin{equation}
\left\Vert W^{\text{opt}}\right\Vert _{\mu,\phi}\geq\left\Vert G^{\left(
+\right)  }\right\Vert _{\mu,\phi}\geq\Lambda\left(  G\right)  \geq
\left\langle W^{\text{opt}},G\right\rangle _{\mu,\phi}\text{,}%
\label{ineq chain in MO proof}%
\end{equation}
where by $\left(  \ref{def of lamda in abstract case}\right)  $ and $\left(
\ref{zero order guess has norm 1}\right)  -\left(
\ref{inner prod of G and Gplus for zero order guess}\right)  $,
\begin{equation}
\Lambda\left(  G\right)  =\operatorname{Re}\left\langle G^{\left(  +\right)
},G/\left\Vert G\right\Vert _{\mu,\phi}\right\rangle _{\mu,\phi}%
=\Lambda\text{.}%
\end{equation}
The third inequality of $\left(  \ref{eq overlap bound}\right)  $ follows by
appending $\left(  \ref{ineq terminal for MO proof}\right)  $ to $\left(
\ref{ineq chain in MO proof}\right)  $. Squaring the first three quantities of
$\left(  \ref{ineq chain in MO proof}\right)  $ proves the first two
inequalities of $\left(  \ref{eq overlap bound}\right)  $. The final
inequality follows from the fact that $\mathcal{R}^{\text{opt}}$ is a quantum operation.
\end{proof}

\subsection{Estimates for quantum conditional
min-entropy\label{section estimates for conditional min-entropy}}

Theorem $\ref{Theorem two sided overlap estimates}$ has the following corollary:

\begin{corollary}
\label{corollary bounding min entropy}Let $\mathcal{H}_{A}$ and $\mathcal{H}%
_{B}$ be finite-dimensional, and let $\rho_{AB}$ be a density matrix on
$\mathcal{H}_{A}\otimes\mathcal{H}_{B}$. Then for any $s\in\mathbb{R}$ the
conditional min-entropy $\left(  \ref{eq defining conditional minentropy}%
\right)  $ of $A$ given $B$ satisfies the bounds%
\begin{equation}
-\log_{2}\left(  \sqrt{\operatorname*{Tr}\rho_{A}^{s}}\times\operatorname*{Tr}%
_{B}\sqrt{\operatorname*{Tr}_{A}\rho_{AB}\rho_{A}^{-s}\rho_{AB}}\right)  \leq
H_{\text{min}}\left(  A|B\right)  _{\rho}\leq-\log_{2}\left(  \frac{\left(
\operatorname*{Tr}_{B}\sqrt{\operatorname*{Tr}_{A}\rho_{AB}\rho_{A}^{-s}%
\rho_{AB}}\right)  ^{2}}{\operatorname*{Tr}\rho_{A}^{1-s}}\right)
\label{minentropy bounds}%
\end{equation}
where $\rho_{A}=\operatorname*{Tr}_{B}\rho_{AB}$. Here any non-positive powers
are evaluated as in equation $\ref{eq defining minus 1/2 plus exponent}$. (In
particular%
\begin{equation}
\rho_{A}^{0}:=\Pi_{+}\left(  \rho_{A}\right)  \text{,}%
\label{hidden projection equation}%
\end{equation}
where the RHS is the positive projection $\left(
\ref{eq defining positive projection}\right)  $.)
\end{corollary}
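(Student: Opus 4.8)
The plan is to obtain Corollary~\ref{corollary bounding min entropy} by feeding the two-sided overlap bounds of Theorem~\ref{Theorem two sided overlap estimates} into the operational formula of Theorem~\ref{Theorem Konig et al operational min entropy theorem}. First I would record the dictionary that casts the min-entropy expression $(\ref{eq min entropy as overlap})$ as a restricted maximum-overlap problem $(\ref{eq defining mu psi maximum overlap problem})$: set $\mathcal{H}=\mathcal{H}_{A}$, $\mathcal{K}=\mathcal{H}_{B}$, $\mathcal{L}=\mathcal{H}_{A^{\prime}}$, take $\mu_{\mathcal{KH}}=\rho_{AB}$ (with tensor factors reordered so that $B$ precedes $A$), and take $\phi_{\mathcal{LH}}=\Phi_{AA^{\prime}}$ to be the normalized maximally-entangled vector. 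Writing $d_{A}=\dim\mathcal{H}_{A}$, Theorem~\ref{Theorem Konig et al operational min entropy theorem} then reads $H_{\text{min}}(A|B)_{\rho}=-\log_{2}(d_{A}\operatorname{MO}(\mu,\phi))$, so the asserted entropy bounds are precisely the images under $x\mapsto-\log_{2}(d_{A}x)$ of the overlap bounds $\Lambda^{2}/\operatorname*{Tr}\hat{\mu}\leq\operatorname{MO}(\mu,\phi)\leq\Lambda$ of $(\ref{eq overlap bound})$; in particular the lower bound on $\operatorname{MO}$ yields the upper bound on $H_{\text{min}}$ and vice versa.

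The parameter $s$ enters through the invertible-rescaling device of the Note following Theorem~\ref{Theorem two sided overlap estimates}, which I would apply with $X_{\mathcal{H}\rightarrow\mathcal{H}}=\rho_{A}^{s/2}$ acting on the $\mathcal{H}=\mathcal{H}_{A}$ factor; negative or zero powers are read with the $-s^{+}$ convention of $(\ref{eq defining minus 1/2 plus exponent})$, so that $\rho_{A}^{0}=\Pi_{+}(\rho_{A})$. Since $X$ is genuinely invertible only on $\operatorname{supp}(\rho_{A})$, I would first reduce to the case $\rho_{A}>0$ by restricting $\mathcal{H}_{A}$ to $\operatorname{supp}(\rho_{A})$; this leaves $H_{\text{min}}(A|B)_{\rho}$ unchanged and, because the $\Pi_{+}$-convention already confines every term in $(\ref{minentropy bounds})$ to that support and the factors $d_{A}$ will cancel, it leaves the claimed bounds unchanged as well. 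On this support a direct computation---expanding $\Phi_{AA^{\prime}}$ in a Schmidt basis and repeatedly using cyclicity of $\operatorname*{Tr}_{A}$---gives $\Vert X\phi\Vert^{2}=d_{A}^{-1}\operatorname*{Tr}\rho_{A}^{s}$, reduced state $\phi_{\mathcal{H}}^{\prime}=\rho_{A}^{s}/\operatorname*{Tr}\rho_{A}^{s}$, hence $\Pi_{+}(\phi_{\mathcal{H}}^{\prime})=\Pi_{+}(\rho_{A})$ and $\hat{\mu}^{\prime}=\mu^{\prime}=d_{A}^{-1}(\operatorname*{Tr}\rho_{A}^{s})\,\rho_{A}^{-s/2}\rho_{AB}\rho_{A}^{-s/2}$, together with $\operatorname*{Tr}\hat{\mu}^{\prime}=d_{A}^{-1}(\operatorname*{Tr}\rho_{A}^{s})(\operatorname*{Tr}\rho_{A}^{1-s})$.

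The calculational crux---and the step I expect to be the main obstacle---is the evaluation of $\Lambda$ from $(\ref{gamma for overlap bound})$ for the rescaled data, since it requires pushing the maximally-entangled ket through $\hat{\mu}^{\prime 2}$ and collapsing the resulting operator on $\mathcal{H}_{A}$. Carrying out the same Schmidt-basis-plus-cyclicity bookkeeping I would show that the operator $\langle\phi^{\prime}\vert\,\hat{\mu}^{\prime 2}\,\vert\phi^{\prime}\rangle$ on $\mathcal{H}_{B}$ equals $d_{A}^{-2}(\operatorname*{Tr}\rho_{A}^{s})\operatorname*{Tr}_{A}(\rho_{AB}\rho_{A}^{-s}\rho_{AB})$, whence $\Lambda=d_{A}^{-1}\sqrt{\operatorname*{Tr}\rho_{A}^{s}}\;\operatorname*{Tr}_{B}\sqrt{\operatorname*{Tr}_{A}\rho_{AB}\rho_{A}^{-s}\rho_{AB}}$.

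Finally I would substitute these three quantities into $\Lambda^{2}/\operatorname*{Tr}\hat{\mu}^{\prime}\leq\operatorname{MO}\leq\Lambda$ and pass through $H_{\text{min}}=-\log_{2}(d_{A}\operatorname{MO})$. The upper bound $\operatorname{MO}\leq\Lambda$ gives $d_{A}\Lambda=\sqrt{\operatorname*{Tr}\rho_{A}^{s}}\,\operatorname*{Tr}_{B}\sqrt{\operatorname*{Tr}_{A}\rho_{AB}\rho_{A}^{-s}\rho_{AB}}$, which is exactly the argument of the logarithm in the left-hand bound of $(\ref{minentropy bounds})$; the lower bound $\operatorname{MO}\geq\Lambda^{2}/\operatorname*{Tr}\hat{\mu}^{\prime}$ gives $d_{A}\,\Lambda^{2}/\operatorname*{Tr}\hat{\mu}^{\prime}=(\operatorname*{Tr}_{B}\sqrt{\operatorname*{Tr}_{A}\rho_{AB}\rho_{A}^{-s}\rho_{AB}})^{2}/\operatorname*{Tr}\rho_{A}^{1-s}$, the argument in the right-hand bound. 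In both cases the factors of $d_{A}$ cancel exactly as required, which is also the reassuring sign that the final estimate is independent of the ambient dimension.
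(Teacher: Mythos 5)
Your proposal is correct and follows essentially the same route as the paper: both invoke the K\"onig--Renner--Schaffner formula $(\ref{eq min entropy as overlap})$, rescale by $X=\rho_{A}^{s/2}$ (the paper does this by directly defining $\psi_{s}=\left.\left\vert \rho_{A}^{s/2}\right\rangle\!\right\rangle/\Vert\rho_{A}^{s/2}\Vert_{2}$ and $\mu_{s}=\Vert\rho_{A}^{s/2}\Vert_{2}^{2}\,\rho_{A}^{-s/2}\rho_{AB}\rho_{A}^{-s/2}$, which pre-absorbs the $\dim\mathcal{H}_{A}$ factors you carry explicitly and cancel at the end), and then apply Theorem \ref{Theorem two sided overlap estimates}. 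Your explicit evaluations of $\Lambda$, $\operatorname*{Tr}\hat{\mu}$, and the support/$\Pi_{+}$ bookkeeping are exactly the computations the paper leaves implicit in the phrase ``The bounds $(\ref{minentropy bounds})$ follow by Theorem \ref{Theorem two sided overlap estimates},'' and they check out.
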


\noindent\textbf{Remarks:}

\begin{enumerate}
\item The $s=0$ case of $\left(  \ref{minentropy bounds}\right)  $ is
particularly simple:%
\begin{equation}
-\log\left(  \operatorname*{Tr}_{B}\sqrt{\operatorname*{Tr}_{A}\rho_{AB}^{2}%
}\right)  -\frac{1}{2}\log\left(  \operatorname{rank}\left(  \rho_{A}\right)
\right)  \leq H_{\text{min}}\left(  A|B\right)  _{\rho}\leq-2\log_{2}\left(
\operatorname*{Tr}_{B}\sqrt{\operatorname*{Tr}_{A}\rho_{AB}^{2}}\right)
\text{.}%
\end{equation}

\item If $s=1/2$ and $\rho_{AB}=\left\vert \psi_{AB}\right\rangle \left\langle
\psi_{AB}\right\vert $ is pure then a simple calculation$^{\text{\cite{algebra
step for frederic}}}$ shows that the upper and lower bounds of $\left(
\ref{minentropy bounds}\right)  $ agree, yielding the known \cite{Konig Renner
Schaffner Operational meaning of min and max entropy} expression%
\begin{equation}
H_{\text{min}}\left(  A|B\right)  _{\left\vert \psi_{AB}\right\rangle }%
=-2\log_{2}\left(  \operatorname*{Tr}_{A}\sqrt{\rho_{A}}\right)  \text{.}%
\end{equation}

\item More generally, if $\rho_{AB}=\left\vert \psi_{AB}\right\rangle
\left\langle \psi_{AB}\right\vert $ is pure then the upper bound of $\left(
\ref{minentropy bounds}\right)  $ is exact and independent of $s$.

\item If $\rho_{AB}$ is a maximally-entangled pure state then the lower bound
of $\left(  \ref{minentropy bounds}\right)  $ is also exact and independent of
$s$.
\end{enumerate}

\begin{proof}
Let%
\[
\rho_{A}=%
{\displaystyle\sum}
\lambda_{k}\left\vert k\right\rangle _{A}\left\langle k\right\vert
\]
be a spectral decomposition. Then by equation $\ref{eq min entropy as overlap}%
$ one has the identity%
\begin{equation}
2^{-H_{\text{min}}\left(  A|B\right)  _{\rho}}=\left\langle \psi
_{s}\right\vert _{AA^{\ast}}\max_{\mathcal{R}}\left(  \mathcal{R}%
_{B\rightarrow A^{\ast}}\left(  \mu_{s}\right)  _{AB}\right)  \left\vert
\psi_{s}\right\rangle _{AA^{\ast}}\text{,}%
\label{put min entropy step in nice form}%
\end{equation}
where%
\begin{align}
\left\vert \psi_{s}\right\rangle _{AA^{\ast}} &  =\frac{\left.  \left\vert
\rho_{A}^{s/2}\right\rangle \!\right\rangle _{AA^{\ast}}}{\left\Vert \rho
_{A}^{s/2}\right\Vert _{2}}=%
{\displaystyle\sum_{k\text{ with }\lambda_{k}>0}}
\frac{\lambda_{k}^{s/2}}{\sqrt{\operatorname*{Tr}\rho_{A}^{s}}}\left\vert
k\right\rangle _{A}\left\vert \bar{k}\right\rangle _{A^{\ast}}\\
\left(  \mu_{s}\right)  _{AB} &  =\left\Vert \rho_{A}^{s/2}\right\Vert
_{2}^{2}\rho_{A}^{-s/2}\rho_{AB}\rho_{A}^{-s/2}\text{.}%
\end{align}
The bounds $\left(  \ref{minentropy bounds}\right)  $ follow by Theorem
\ref{Theorem two sided overlap estimates}.
\end{proof}

\bigskip

\noindent\textbf{Remark:} Using the fourth term of inequality
\ref{eq overlap bound}, one may tighten the lower bound of $\left(
\ref{minentropy bounds}\right)  $ in cases where one can estimate%
\[
\left\Vert \left(  \mathcal{R}^{\text{opt}}\right)  _{B\rightarrow
\mathcal{A}^{\ast}}^{\dag}\left(  \left\vert \psi_{s}\right\rangle
\left\langle \psi_{s}\right\vert \right)  \right\Vert _{\infty}\text{,}%
\]
where $\mathcal{R}^{\text{opt}}$ is a maximizer of $\left(
\ref{put min entropy step in nice form}\right)  $. Appendix C shows that this
works in the case that $\rho_{AB}$ is a \textquotedblleft
quantum-classical\textquotedblright\ state.

\section{\label{section approximate channel reversals}Approximate Channel
Reversals}

This section applies Theorem \ref{Theorem two sided overlap estimates} to
estimate the reversibility of an arbitrary quantum operation $\mathcal{A}%
:B^{1}\left(  \mathcal{H}\right)  \rightarrow B^{1}\left(  \mathcal{K}\right)
$, as measured by entanglement fidelity $\max_{\mathcal{R}_{\mathcal{K}%
\rightarrow\mathcal{H}}}F_{e}\left(  \rho,\mathcal{R}\circ\mathcal{A}\right)
$. (Note that \emph{more generally Theorem
\ref{Theorem two sided overlap estimates} gives estimates when the input state
of }$\mathcal{A}$\emph{\ and target output state of }$\mathcal{R}%
$\emph{\ differ}, but we focus on this special case.)

In order to express the our reversibility estimates in a more intuitive form
(and to understand the relationship of the corresponding reversal with that of
Barnum and Knill), it is useful to introduce a method for applying functions
to CP maps.

\subsection{\label{section introducing reweightings}The $\rho$-functional
calculus for CP maps}

One may tailor the Kraus decomposition \cite{Kraus Decomp probably} a CP map
to a given input density matrix $\rho$:

\begin{definition}
Let $\mathcal{A}:B^{1}\left(  \mathcal{H}\right)  \rightarrow B^{1}\left(
\mathcal{K}\right)  $ be a completely-positive map and let $\rho$ be a density
matrix on $\mathcal{H}$. A $\rho$\textbf{-Kraus decomposition} of the
restriction of $\mathcal{A}$ to $B^{1}\left(  \operatorname*{supp}\left(
\rho\right)  \right)  $ is a decomposition of the form%
\begin{equation}
\mathcal{A}\left(  \mu\right)  =%
{\displaystyle\sum}
p_{k}E_{k}\mu E_{k}^{\dag}\text{, \ \ }\operatorname*{supp}\left(  \mu\right)
\subseteq\operatorname*{supp}\left(  \rho\right) \label{rho kraus decomp}%
\end{equation}
where

\begin{enumerate}
\item The vectors $E_{k}\left\vert \psi_{\rho}\right\rangle \in\mathcal{K}%
\otimes\mathcal{H}^{\ast}$ are orthonormal, where $\psi_{\rho}$ is the
purification $\left(  \ref{eq for canonical purification}\right)  $.

\item The $p_{k}$ are non-negative.

\item The $\rho$\textbf{-Kraus operators} $E_{k}:\mathcal{H}\rightarrow
\mathcal{K}$ have supports in $\operatorname*{supp}\left(  \rho\right)  $.
\end{enumerate}
\end{definition}

\noindent\textbf{Remarks:}

\begin{enumerate}
\item Existence of a $\rho$-Kraus decomposition follows by an easy
modification of standard techniques. (See Proposition
\ref{simple proposition for rho kraus and rho functional cal}, below.)

\item When $\mathcal{A}$ is trace-preserving, one interprets $\mathcal{A}$ as
acting on $\rho$ by randomly sending the purification $\left\vert \psi_{\rho
}\right\rangle _{\mathcal{HH}^{\ast}}\mathstrut$ into one of the orthonormal
vectors $E_{k}\left\vert \psi_{\rho}\right\rangle $, which are
classically-distinguishable by the observer with access to
$\mathcal{\mathcal{KH}}^{\ast}$.

\item By equations \ref{eq for canonical purification} and
\ref{eq double ket is an isometry}, the orthonormality of the $E_{k}\left\vert
\psi_{\rho}\right\rangle _{\mathcal{HH}^{\ast}}$ is equivalent to the
condition%
\begin{equation}
\operatorname*{Tr}\left(  E_{k}^{\dag}E_{\ell}\rho\right)  =\delta_{k\ell
}\text{.}\label{equiv orthonorm condition in rho kraus}%
\end{equation}
If $\rho$ is maximally-mixed one therefore obtains the usual orthogonality
conditions \cite{Arrighi and Patricot On quantum operations as quantum states}
sometimes required for the Kraus operators.
\end{enumerate}

Given a state $\rho$, there is a natural notion of applying functions to
completely positive maps:

\begin{definition}
[$\rho$-functional calculus for CP maps]\label{def of rho functional calc}Let
$\mathcal{A}:B^{1}\left(  \mathcal{H}\right)  \rightarrow B^{1}\left(
\mathcal{K}\right)  $ be a completely positive map with the $\rho$-Kraus
decomposition $\left(  \ref{rho kraus decomp}\right)  $. For $f:\left[
0,\infty\right)  \rightarrow\left[  0,\infty\right)  $ one defines the CP map
$f_{\rho}\left(  \mathcal{A}\right)  :B^{1}\left(  \operatorname*{supp}\left(
\rho\right)  \right)  \rightarrow B^{1}\left(  \mathcal{K}\right)  $ by%
\begin{equation}
\left(  f_{\rho}\left(  \mathcal{A}\right)  \right)  \left(  \mu\right)  =%
{\displaystyle\sum}
f\left(  p_{k}\right)  E_{k}\mu E_{k}^{\dag}.\label{eq for f sub rho of A}%
\end{equation}
The \textbf{quadratic reweighting $\mathcal{A}^{\left(  2,\rho\right)  }$} of
$\mathcal{A}$ corresponds to the case $f\left(  p\right)  =p^{2}$:%
\begin{equation}
\mathcal{A}^{\left(  2,\rho\right)  }\left(  \mu\right)  =%
{\displaystyle\sum}
p_{k}^{2}E_{k}\mu E_{k}^{\dag}\text{.}\label{eq explicit quadratic weighting}%
\end{equation}

\end{definition}

The following proposition shows that the CP maps $f_{\rho}\left(
\mathcal{A}\right)  $ and $\mathcal{A}^{\left(  2,\rho\right)  }$ are
well-defined and independent of the decomposition $\left(
\ref{rho kraus decomp}\right)  $:

\begin{proposition}
\label{simple proposition for rho kraus and rho functional cal}Let $f$,
$\mathcal{A},$ and $\rho$ be as in Definition \ref{def of rho functional calc}%
. Then

\begin{enumerate}
\item A $\rho$-Kraus decomposition $\left(  \ref{rho kraus decomp}\right)  $
of $\mathcal{A}$ exists. Furthermore,
\[%
{\displaystyle\sum}
p_{k}=\operatorname*{Tr}\mathcal{A}\left(  \rho\right)  ,
\]
so that $\left\{  p_{k}\right\}  $ is a probability distribution if
$\mathcal{A}$ is trace preserving.

\item One has the identity%
\begin{equation}
\left(  f_{\rho}\left(  \mathcal{A}\right)  \right)  \left(  \mu\right)
=\operatorname*{Tr}_{\mathcal{H}^{\ast}}\left(  \left(  \overline
{\rho^{-1/2^{+}}\mu^{\dag}\rho^{-1/2^{+}}}\right)  _{\mathcal{H}^{\ast
}\rightarrow\mathcal{H}^{\ast}}\times f\left(  \mathcal{A}\left(  \left\vert
\psi_{\rho}\right\rangle _{\mathcal{HH}^{\ast}}\left\langle \psi_{\rho
}\right\vert \right)  \right)  \right)  \text{,}%
\label{recover fsub rho of A from rho-choi}%
\end{equation}
for all $\mu\in B^{1}\left(  \operatorname*{supp}\left(  \rho\right)  \right)
$, where barred product of operators acts on $\mathcal{H}^{\ast}$ as indicated
by equations $\ref{stupid def of bar}$-$\ref{stupid def of partial transpose}
$. Here one applies $f$ to a self-adjoint operator using the functional
calculus \cite{Reed and Simon I}: Given a spectral decomposition
\begin{equation}
A=%
{\displaystyle\sum}
\lambda_{i}\Pi_{i}\text{,}%
\end{equation}
one writes%
\begin{equation}
f\left(  A\right)  =%
{\displaystyle\sum}
f\left(  \lambda_{i}\right)  \Pi_{i}.\label{def of fun calc}%
\end{equation}

\item In particular, the CP map $f_{\rho}\left(  \mathcal{A}\right)  $ is
independent of the choice of decomposition $\left(  \ref{rho kraus decomp}%
\right)  $, and%
\begin{equation}
\mathcal{A}^{\left(  2,\rho\right)  }\left(  \mu\right)  =\operatorname*{Tr}%
_{\mathcal{H}^{\ast}}\left(  \left(  \overline{\rho^{-1/2^{+}}\mu^{\dag}%
\rho^{-1/2^{+}}}\right)  _{\mathcal{H}^{\ast}\rightarrow\mathcal{H}^{\ast}%
}\times\left(  \mathcal{A}\left(  \left\vert \psi_{\rho}\right\rangle
_{\mathcal{HH}^{\ast}}\left\langle \psi_{\rho}\right\vert \right)  \right)
^{2}\right)  \text{,}\label{eq for quadratic weighting as it appears in proof}%
\end{equation}
for $\mu\in B^{1}\left(  \operatorname*{supp}\left(  \rho\right)  \right)  $.
\end{enumerate}
\end{proposition}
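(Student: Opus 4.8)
The plan is to build a canonical $\rho$-Kraus decomposition directly from the spectral decomposition of the positive operator $\mathcal{A}(\lvert\psi_\rho\rangle\langle\psi_\rho\rvert)$ on $\mathcal{K}\otimes\mathcal{H}^{\ast}$, to verify the three defining conditions using the double-ket calculus, and then to prove the closed form \ref{recover fsub rho of A from rho-choi}, from which both the independence of the decomposition and the quadratic formula \ref{eq for quadratic weighting as it appears in proof} follow at once. The organizing idea is that part~1 is exactly the $f=\mathrm{id}$ instance of part~2, so the whole proposition reduces to establishing \ref{recover fsub rho of A from rho-choi}.

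For part~1 I would write $\mathcal{A}(\lvert\psi_\rho\rangle\langle\psi_\rho\rvert)=\sum_k p_k\lvert\Phi_k\rangle\langle\Phi_k\rvert$ with $p_k>0$ and $\{\lvert\Phi_k\rangle\}$ orthonormal, and express each eigenvector as a double ket $\lvert\Phi_k\rangle=|B_k\rangle\rangle$ for a unique $B_k:\mathcal{H}\to\mathcal{K}$. Because $\mathcal{A}$ acts only on the $\mathcal{H}$-factor while $\lvert\psi_\rho\rangle$ has $\mathcal{H}^{\ast}$-marginal $\bar\rho$ (equation \ref{eq psi sub rho purifies rhobar}), the operator $\mathcal{A}(\lvert\psi_\rho\rangle\langle\psi_\rho\rvert)$ is supported in $\mathcal{K}\otimes\operatorname{supp}(\bar\rho)$; the Schmidt/SVD correspondence for double kets recorded after \ref{eq for basis free choi matrix} then forces $\operatorname{supp}(B_k)\subseteq\operatorname{supp}(\rho)$. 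I would set $E_k=B_k\rho^{-1/2^{+}}$, so $E_k$ has support in $\operatorname{supp}(\rho)$ and, since $\rho^{-1/2^{+}}\sqrt{\rho}=\Pi_+(\rho)$ acts as the identity on $\operatorname{supp}(B_k)$, one gets $E_k\sqrt{\rho}=B_k$ and hence $E_k\lvert\psi_\rho\rangle=|E_k\sqrt{\rho}\rangle\rangle=\lvert\Phi_k\rangle$ by \ref{basic eq for double kets} and \ref{eq for canonical purification}. Orthonormality of the $\lvert\Phi_k\rangle$ becomes exactly $\operatorname{Tr}(E_k^{\dag}E_\ell\rho)=\delta_{k\ell}$ through the isometry identity \ref{eq double ket is an isometry}, i.e.\ \ref{equiv orthonorm condition in rho kraus}; and $\sum_k p_k=\operatorname{Tr}\mathcal{A}(\rho)$ follows since $\operatorname{Tr}_{\mathcal{H}^{\ast}}$ commutes with $\mathcal{A}$ and $\operatorname{Tr}_{\mathcal{H}^{\ast}}\lvert\psi_\rho\rangle\langle\psi_\rho\rvert=\rho$ by \ref{eq psi sub rho purifies rho}.

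The central computation is part~2, which I would prove term by term against the spectral decomposition, so that it suffices to check
\[
E_k\mu E_k^{\dag}=\operatorname*{Tr}_{\mathcal{H}^{\ast}}\left(\left(\overline{\rho^{-1/2^{+}}\mu^{\dag}\rho^{-1/2^{+}}}\right)_{\mathcal{H}^{\ast}}|E_k\sqrt{\rho}\rangle\rangle\langle\langle E_k\sqrt{\rho}|\right)
\]
for each $k$. Here I commute the $\mathcal{H}^{\ast}$-operator through $E_k$ (which acts $\mathcal{H}\to\mathcal{K}$), use \ref{basic eq for double kets} to absorb it as $\overline{(\cdots)}\,|\sqrt{\rho}\rangle\rangle=|\mu\rho^{-1/2^{+}}\rangle\rangle$ (employing $\sqrt{\rho}\rho^{-1/2^{+}}=\Pi_+(\rho)$ and $\Pi_+(\rho)\mu=\mu$ for $\mu\in B^{1}(\operatorname{supp}\rho)$), and then collapse the $\mathcal{H}^{\ast}$-trace by \ref{eq used to construct canonical purif} to reach $E_k\mu\rho^{-1/2^{+}}\sqrt{\rho}E_k^{\dag}=E_k\mu E_k^{\dag}$. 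Weighting by $f(p_k)$ and summing reproduces $\sum_k f(p_k)E_k\mu E_k^{\dag}=f_\rho(\mathcal{A})(\mu)$ of \ref{eq for f sub rho of A}.

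The step I expect to require the most care is the kernel of $\mathcal{A}(\lvert\psi_\rho\rangle\langle\psi_\rho\rvert)$ in the functional calculus \ref{def of fun calc}. Building the decomposition from the strictly positive eigenvalues keeps the sum over $p_k>0$ only, so no spurious $f(0)$ term enters the left-hand side; correspondingly, on the right one must read $f$ as acting on the support of $\mathcal{A}(\lvert\psi_\rho\rangle\langle\psi_\rho\rvert)$ (in keeping with the $A^{-s^{+}}$ convention of \ref{eq defining minus 1/2 plus exponent}), since a nonzero value of $f$ on the kernel would survive the partial trace against $\overline{\rho^{-1/2^{+}}\mu^{\dag}\rho^{-1/2^{+}}}$ and corrupt the identity. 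Once \ref{recover fsub rho of A from rho-choi} is established, its right-hand side depends only on $\mathcal{A}$, $\rho$, $f$, and $\mu$, which yields the decomposition-independence claimed in part~3; specializing to $f(p)=p^{2}$, for which $f(0)=0$ and the subtlety disappears, gives \ref{eq for quadratic weighting as it appears in proof}.
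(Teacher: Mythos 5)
Your canonical construction is the paper's own: the paper likewise takes a spectral decomposition $\mathcal{A}\left(  \left\vert \psi_{\rho}\right\rangle \left\langle \psi_{\rho}\right\vert \right)  =\sum p_{k}\left.  \left\vert F_{k}\right\rangle \!\right\rangle \left\langle \!\left\langle F_{k}\right\vert \right.  $ and sets $E_{k}=F_{k}\rho^{-1/2^{+}}$, your per-term double-ket manipulations are valid, and your point about $f(0)$ and the kernel is a genuine subtlety that the paper passes over silently (it is harmless there only because the intended application has $f(p)=p^{2}$). The genuine gap is in part 1: you never verify the defining property $\left(  \ref{rho kraus decomp}\right)  $ itself, namely that $\sum p_{k}E_{k}\mu E_{k}^{\dag}=\mathcal{A}\left(  \mu\right)  $ for all $\mu\in B^{1}\left(  \operatorname{supp}\left(  \rho\right)  \right)  $, and your organizing claim that ``part 1 is exactly the $f=\mathrm{id}$ instance of part 2'' is circular at precisely this point. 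The map $f_{\rho}\left(  \mathcal{A}\right)  $ of $\left(  \ref{eq for f sub rho of A}\right)  $ is only defined once a $\rho$-Kraus decomposition is known to exist, and what your term-by-term computation actually proves is
\[
\sum_{k}f\left(  p_{k}\right)  E_{k}\mu E_{k}^{\dag}=\operatorname{Tr}_{\mathcal{H}^{\ast}}\left(  \overline{\rho^{-1/2^{+}}\mu^{\dag}\rho^{-1/2^{+}}}\,f\left(  \mathcal{A}\left(  \left\vert \psi_{\rho}\right\rangle \left\langle \psi_{\rho}\right\vert \right)  \right)  \right)  \text{,}
\]
so that setting $f=\mathrm{id}$ equates $\sum p_{k}E_{k}\mu E_{k}^{\dag}$ with the partial-trace expression on the right --- not with $\mathcal{A}\left(  \mu\right)  $. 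Nothing in your plan shows that this partial-trace expression reconstructs $\mathcal{A}\left(  \mu\right)  $; that reconstruction identity is the first half of the paper's proof and cannot be skipped.

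The paper supplies it by proving $\overline{\mu^{1/2}\rho^{-1/2^{+}}}\left\vert \psi_{\rho}\right\rangle =\left\vert \psi_{\mu}\right\rangle $ (its equation $\ref{eq relationship between psi sub rho and psi sub mu}$), commuting the barred operators, which act on $\mathcal{H}^{\ast}$, past $\mathcal{A}$, which acts on the $\mathcal{H}$ factor, and using $\operatorname{Tr}_{\mathcal{H}^{\ast}}\left\vert \psi_{\mu}\right\rangle \left\langle \psi_{\mu}\right\vert =\mu$. Alternatively, you can close the gap with your own tools, because your per-term identity never uses orthonormality or any property of $E_{k}$ beyond $\operatorname{supp}\left(  \mu\right)  \subseteq\operatorname{supp}\left(  \rho\right)  $: take any ordinary Kraus decomposition $\mathcal{A}\left(  \cdot\right)  =\sum_{j}K_{j}\cdot K_{j}^{\dag}$, write $\mathcal{A}\left(  \left\vert \psi_{\rho}\right\rangle \left\langle \psi_{\rho}\right\vert \right)  =\sum_{j}\left.  \left\vert K_{j}\sqrt{\rho}\right\rangle \!\right\rangle \left\langle \!\left\langle K_{j}\sqrt{\rho}\right\vert \right.  $ via $\left(  \ref{basic eq for double kets}\right)  $, and apply the same per-term computation to each $\left.  \left\vert K_{j}\sqrt{\rho}\right\rangle \!\right\rangle $ to get $\operatorname{Tr}_{\mathcal{H}^{\ast}}\left(  \overline{\rho^{-1/2^{+}}\mu^{\dag}\rho^{-1/2^{+}}}\,\mathcal{A}\left(  \left\vert \psi_{\rho}\right\rangle \left\langle \psi_{\rho}\right\vert \right)  \right)  =\sum_{j}K_{j}\mu K_{j}^{\dag}=\mathcal{A}\left(  \mu\right)  $. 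With that identity in place, your part 1 follows, and your parts 2 and 3 go through for an arbitrary $\rho$-Kraus decomposition (whose orthonormality condition makes it a spectral decomposition of $\mathcal{A}\left(  \left\vert \psi_{\rho}\right\rangle \left\langle \psi_{\rho}\right\vert \right)  $, as you say); without it, existence is unproven and the later parts concern a map that has not yet been defined.
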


\noindent\textbf{Remarks:}

\begin{enumerate}
\item If $f\left(  p\right)  =p$ is the identity function and if $\rho$ is
maximally mixed then equation $\ref{recover fsub rho of A from rho-choi}$
reduces to the usual procedure for recovering a channel from its Choi matrix
$\left(  \ref{eq for basis free choi matrix}\right)  $.

\item Equation $\left(
\ref{eq for quadratic weighting as it appears in proof}\right)  $ gives the
form of $\mathcal{A}^{\left(  2,\rho\right)  }$ which appears when one applies
Theorem \ref{Theorem two sided overlap estimates} to obtain reversibility
estimates. The transpose of $\mu$ becomes a partial transpose $\left(
\ref{stupid def of partial transpose}\right)  $ when $\mathcal{A}^{\left(
2,\rho\right)  }$ is applied to the state of a composite quantum system.

\item A prescription for computing $\mathcal{A}^{\left(  2,\rho\right)  }$
from an arbitrary set of Kraus operators of $\mathcal{A}$ appears in the next section.
\end{enumerate}

\begin{proof}
By equations \ref{basic eq for double kets},
\ref{eq for canonical purification}, \ref{eq defining minus 1/2 plus exponent}%
, and \ref{eq defining positive projection},%
\begin{equation}
\overline{\mu^{1/2}\rho^{-1/2^{+}}}\left\vert \psi_{\rho}\right\rangle
_{\mathcal{HH}^{\ast}}=\left.  \left\vert \rho^{1/2}\rho^{-1/2^{+}}\mu
^{1/2}\right\rangle \!\right\rangle _{\mathcal{HH}^{\ast}}=\left.  \left\vert
\Pi_{+}\left(  \rho\right)  \mu^{1/2}\right\rangle \!\right\rangle
_{\mathcal{HH}^{\ast}}=\left\vert \psi_{\mu}\right\rangle
_{\mathcal{\mathcal{HH}^{\ast}}}%
.\label{eq relationship between psi sub rho and psi sub mu}%
\end{equation}
It therefore follows by equation \ref{eq psi sub rho purifies rho} that for
densities $\mu\in B^{1}\left(  \mathcal{H}\right)  $%
\begin{equation}
\mathcal{A}\left(  \mu\right)  =\operatorname*{Tr}_{\mathcal{H}^{\ast}%
}\mathcal{A}\left(  \left\vert \psi_{\mu}\right\rangle _{\mathcal{HH}^{\ast}%
}\left\langle \psi_{\mu}\right\vert \right)  =\operatorname*{Tr}%
_{\mathcal{H}^{\ast}}\overline{\mu^{1/2}\rho^{-1/2^{+}}}\mathcal{A}\left(
\left\vert \psi_{\rho}\right\rangle _{\mathcal{HH}^{\ast}}\left\langle
\psi_{\rho}\right\vert \right)  \overline{\rho^{-1/2^{+}}\mu^{1/2}}%
\text{.}\label{get A back from rho-kraus}%
\end{equation}
Taking a spectral decomposition%
\begin{equation}
\mathcal{A}\left(  \left\vert \psi_{\rho}\right\rangle \left\langle \psi
_{\rho}\right\vert \right)  =%
{\displaystyle\sum}
p_{k}\left.  \left\vert F_{k}\right\rangle \!\right\rangle
_{\mathcal{\mathcal{KH}}^{\ast}}\left\langle \!\left\langle F_{k}\right\vert
\right.  \text{,}%
\end{equation}
it follows from equations \ref{get A back from rho-kraus},
\ref{basic eq for double kets}, and \ref{eq used to construct canonical purif}
that%
\begin{equation}
\mathcal{A}\left(  \mu\right)  =\operatorname*{Tr}_{\mathcal{H}^{\ast}}%
{\displaystyle\sum}
p_{k}\left.  \left\vert F_{k}\rho^{-1/2^{+}}\mu^{1/2}\right\rangle
\!\right\rangle _{\mathcal{\mathcal{\mathcal{\mathcal{KH}}}}^{\ast}%
}\left\langle \!\left\langle F_{k}\rho^{-1/2^{+}}\mu^{1/2}\right\vert \right.
=%
{\displaystyle\sum}
p_{k}F_{k}\rho^{-1/2^{+}}\mu\rho^{-1/2^{+}}F_{k}^{\dag}\text{.}%
\end{equation}
Setting
\begin{equation}
E_{k}=F_{k}\rho_{k}^{-1/2^{+}}%
\end{equation}
gives the desired $\rho$-Kraus decomposition, where the desired condition
$\left(  \ref{equiv orthonorm condition in rho kraus}\right)  $ follows from
the orthonormality of the $\left.  \left\vert F_{k}\right\rangle
\!\right\rangle \in\mathcal{K}\otimes\mathcal{H}^{\ast}$ using equation
\ref{eq double ket is an isometry}.

If $\mathcal{A}$ is trace-preserving then%
\[
1=\operatorname*{Tr}_{\mathcal{\mathcal{KH}}^{\ast}}\mathcal{A}\left(
\left\vert \psi_{\rho}\right\rangle _{\mathcal{HH}^{\ast}}\left\langle
\psi_{\rho}\right\vert \right)  =\operatorname*{Tr}_{\mathcal{\mathcal{KH}%
}^{\ast}}%
{\displaystyle\sum}
p_{k}E_{k}\left\vert \psi_{\rho}\right\rangle _{\mathcal{HH}^{\ast}%
}\left\langle \psi_{\rho}\right\vert E_{k}^{\dag}=%
{\displaystyle\sum}
p_{k}\text{,}%
\]
by the defining orthonormality condition on the $\left\{  E_{k}\right\}  $,
proving that $\left\{  p_{k}\right\}  $ is a probability distribution.

Now suppose that we are given an arbitrary $\rho$-Kraus decomposition $\left(
\ref{rho kraus decomp}\right)  $ and that $\mu\in B^{1}\left(
\operatorname*{supp}\left(  \rho\right)  \right)  $. Note that since both
sides of $\left(  \ref{recover fsub rho of A from rho-choi}\right)  $ are
linear in $\mu$ we may assume without loss of generality that $\mu$ is
positive semidefinite. Then by equations \ref{eq psi sub rho purifies rho}$,$
\ref{eq relationship between psi sub rho and psi sub mu},
\ref{def of fun calc}, \ref{rho kraus decomp}, and cyclicity of the trace
\begin{align*}%
{\displaystyle\sum}
f\left(  p_{k}\right)  E_{k}\mu E_{k}^{\dag} &  =\operatorname*{Tr}%
_{\mathcal{H}^{\ast}}%
{\displaystyle\sum}
f\left(  p_{k}\right)  E_{k}\left\vert \psi_{\mu}\right\rangle _{\mathcal{HH}%
^{\ast}}\left\langle \psi_{\mu}\right\vert E_{k}^{\dag}\\
&  =\operatorname*{Tr}_{\mathcal{H}^{\ast}}\left[  \overline{\mu^{1/2}%
\rho^{-1/2^{+}}}\left(
{\displaystyle\sum}
f\left(  p_{k}\right)  E_{k}\left\vert \psi_{\rho}\right\rangle _{\mathcal{HH}%
^{\ast}}\left\langle \psi_{\rho}\right\vert E_{k}^{\dag}\right)
\overline{\rho^{-1/2^{+}}\mu^{1/2}}\right] \\
&  =\operatorname*{Tr}_{\mathcal{H}^{\ast}}\left[  \overline{\mu^{1/2}%
\rho^{-1/2^{+}}}f\left(
{\displaystyle\sum}
p_{k}E_{k}\left\vert \psi_{\rho}\right\rangle _{\mathcal{HH}^{\ast}%
}\left\langle \psi_{\rho}\right\vert E_{k}^{\dag}\right)  \overline
{\rho^{-1/2^{+}}\mu^{1/2}}\right] \\
&  =\operatorname*{Tr}_{\mathcal{H}^{\ast}}\left[  \overline{\rho^{-1/2^{+}%
}\mu\rho^{-1/2^{+}}}\times f\left(  \mathcal{A}\left(  \left\vert \psi_{\rho
}\right\rangle _{\mathcal{HH}^{\ast}}\left\langle \psi_{\rho}\right\vert
\right)  \right)  \right]  \text{,}%
\end{align*}
as desired.
\end{proof}

\subsection{Quadratic quantum error recovery}

\begin{theorem}
\label{theorem quadratic recovery estimates}Let $\mathcal{A}:B^{1}\left(
\mathcal{H}\right)  \rightarrow B^{1}\left(  \mathcal{K}\right)  $ be a
quantum operation, and let $\rho$ be a density matrix on $\mathcal{H}$. Then
one has the following bounds on the optimal entanglement fidelity of recovery
\begin{equation}
\frac{\Lambda^{2}}{\operatorname*{Tr}\mathcal{A}\left(  \rho\right)  }\leq
F_{e}\left(  \rho,\mathcal{R}^{\text{QR}}\circ\mathcal{A}\right)  \leq
\sup_{\mathcal{R}_{\mathcal{K}\rightarrow\mathcal{H}}}F_{e}\left(
\rho,\mathcal{R}\circ\mathcal{A}\right)  \leq\Lambda\text{,}%
\label{eq recovery channel bounds}%
\end{equation}
where the supremum is over quantum operations $\mathcal{R}:B^{1}\left(
\mathcal{K}\right)  \rightarrow B^{1}\left(  \mathcal{H}\right)  $, where%
\begin{equation}
\Lambda=\operatorname*{Tr}_{\mathcal{K}}\sqrt{\mathcal{A}^{\left(
2,\rho\right)  }\left(  \rho^{2}\right)  }\leq\operatorname*{Tr}%
\mathcal{A}\left(  \rho\right)  \text{,}\label{eq for gamma of recovery}%
\end{equation}
where $\mathcal{A}^{\left(  2,\rho\right)  }$ is given by Definition
\ref{def of rho functional calc} (see also equations
\ref{eq for quadratic weighting as it appears in proof} and
\ref{eq for quadratic weighting given Kraus operators}, below), and where
\textbf{quadratic recovery operation} is given by%
\begin{equation}
\mathcal{R}^{\text{QR}}\left(  \upsilon\right)  =\rho\left(  \mathcal{A}%
^{\left(  2,\rho\right)  }\right)  ^{\dag}\left(  \left(  \mathcal{A}^{\left(
2,\rho\right)  }\left(  \rho^{2}\right)  \right)  ^{-1/2^{+}}\upsilon
_{\mathcal{K}\rightarrow\mathcal{K}}\left(  \mathcal{A}^{\left(
2,\rho\right)  }\left(  \rho^{2}\right)  \right)  ^{-1/2^{+}}\right)
\rho\text{.}\label{eq quadratic recovery channel}%
\end{equation}
Here the adjoint $\left(  \bullet\right)  ^{\dag}$ is from Definition
$\ref{definition of adjoint as an actual displayed defintion}$.
\end{theorem}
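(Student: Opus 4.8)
The plan is to recognize channel reversal as a special case of the restricted maximum-overlap problem, so that Theorem \ref{Theorem two sided overlap estimates} applies essentially verbatim and the remaining work is bookkeeping: matching the abstract quantities $\hat{\mu}$, $\Lambda$, and $\mathcal{R}^{\text{QO}}$ to the concrete objects $\mathcal{A}^{\left(2,\rho\right)}\left(\rho^{2}\right)$ and $\mathcal{R}^{\text{QR}}$. First I would fix the canonical purification $\left\vert\psi_{\rho}\right\rangle=\lvert\sqrt{\rho}\rangle\!\rangle_{\mathcal{HH}^{\ast}}$ and rewrite the entanglement fidelity \ref{entanglement fidelity} as $F_{e}\left(\rho,\mathcal{R}\circ\mathcal{A}\right)=\left\langle\psi_{\rho}\right\vert\mathcal{R}_{\mathcal{K}\rightarrow\mathcal{H}}\left(\mathcal{A}\left(\left\vert\psi_{\rho}\right\rangle\left\langle\psi_{\rho}\right\vert\right)\right)\left\vert\psi_{\rho}\right\rangle$, with $\mathcal{A}$ and $\mathcal{R}$ acting only on the first tensor factor. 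This is exactly the restricted overlap functional \ref{eq defining mu psi maximum overlap problem} under the identifications of the overlap-input with $\mathcal{K}$, the overlap-output with $\mathcal{H}$, and the untouched reference with $\mathcal{H}^{\ast}$, together with $\mu_{\mathcal{K}\mathcal{H}^{\ast}}:=\mathcal{A}\left(\left\vert\psi_{\rho}\right\rangle\left\langle\psi_{\rho}\right\vert\right)$ and $\phi_{\mathcal{H}\mathcal{H}^{\ast}}:=\left\vert\psi_{\rho}\right\rangle$. Theorem \ref{Theorem two sided overlap estimates} then produces a chain of the shape \ref{eq recovery channel bounds}; it remains to evaluate its ingredients.

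Next I would expand $\mu$ in a $\rho$-Kraus decomposition via Proposition \ref{simple proposition for rho kraus and rho functional cal}: writing $\mathcal{A}\left(\cdot\right)=\sum_{k}p_{k}E_{k}\left(\cdot\right)E_{k}^{\dag}$, equation \ref{basic eq for double kets} gives $\mu=\sum_{k}p_{k}\lvert E_{k}\sqrt{\rho}\rangle\!\rangle\langle\!\langle E_{k}\sqrt{\rho}\rvert$, and the defining orthonormality of the $\lvert E_{k}\sqrt{\rho}\rangle\!\rangle$ makes this already a spectral decomposition, so $\mu^{2}=\sum_{k}p_{k}^{2}\lvert E_{k}\sqrt{\rho}\rangle\!\rangle\langle\!\langle E_{k}\sqrt{\rho}\rvert$. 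The reduced state $\phi_{\mathcal{H}^{\ast}}$ equals $\bar{\rho}$ by \ref{eq psi sub rho purifies rhobar}; since each $\lvert E_{k}\sqrt{\rho}\rangle\!\rangle$ already has $\mathcal{H}^{\ast}$-support inside $\operatorname{supp}\bar{\rho}$, the projections in $\hat{\mu}=\Pi_{+}\left(\bar{\rho}\right)\mu\,\Pi_{+}\left(\bar{\rho}\right)$ act trivially and $\hat{\mu}=\mu$. Two evaluations then follow. The normalization $\operatorname{Tr}\hat{\mu}=\operatorname{Tr}\mathcal{A}\left(\left\vert\psi_{\rho}\right\rangle\left\langle\psi_{\rho}\right\vert\right)=\operatorname{Tr}\mathcal{A}\left(\rho\right)$ supplies the denominator of the lower bound. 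Comparing with equation \ref{eq for quadratic weighting as it appears in proof} evaluated at $\rho^{2}$ (using $\rho^{-1/2^{+}}\rho^{2}\rho^{-1/2^{+}}=\rho$) gives $\left\langle\psi_{\rho}\right\vert\hat{\mu}^{2}\left\vert\psi_{\rho}\right\rangle=\operatorname{Tr}_{\mathcal{H}^{\ast}}\left(\bar{\rho}\,\mu^{2}\right)=\mathcal{A}^{\left(2,\rho\right)}\left(\rho^{2}\right)$, so that $\Lambda=\operatorname{Tr}_{\mathcal{K}}\sqrt{\left\langle\psi_{\rho}\right\vert\hat{\mu}^{2}\left\vert\psi_{\rho}\right\rangle}$ is exactly \ref{eq for gamma of recovery}.

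It then remains to identify the attaining operation. Theorem \ref{theorem compute quadratic overlapper} expresses the quadratic overlapper as $\mathcal{R}^{\text{QO}}\left(\upsilon\right)=\operatorname*{Tr}_{\mathcal{K}\mathcal{H}^{\ast}}\left(\hat{\mu}^{2}\left(Y^{-1/2^{+}}\upsilon Y^{-1/2^{+}}\otimes\left\vert\psi_{\rho}\right\rangle\left\langle\psi_{\rho}\right\vert\right)\right)$ with $Y=\left\langle\psi_{\rho}\right\vert\hat{\mu}^{2}\left\vert\psi_{\rho}\right\rangle=\mathcal{A}^{\left(2,\rho\right)}\left(\rho^{2}\right)$. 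Substituting the $\rho$-Kraus expansion of $\mu^{2}$ and contracting the double kets against $\left\vert\psi_{\rho}\right\rangle\left\langle\psi_{\rho}\right\vert$ through \ref{eq used to construct canonical purif}, the $\mathcal{H}^{\ast}$ integrations collapse each $\lvert E_{k}\sqrt{\rho}\rangle\!\rangle$ to a factor $\rho E_{k}^{\dag}$ on the left and $E_{k}\rho$ on the right, yielding $\sum_{k}p_{k}^{2}\,\rho E_{k}^{\dag}\left(Y^{-1/2^{+}}\upsilon Y^{-1/2^{+}}\right)E_{k}\rho=\rho\left(\mathcal{A}^{\left(2,\rho\right)}\right)^{\dag}\left(Y^{-1/2^{+}}\upsilon Y^{-1/2^{+}}\right)\rho$, which is precisely $\mathcal{R}^{\text{QR}}$ of \ref{eq quadratic recovery channel}. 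With $\hat{\mu}=\mu$ this completes the matching, and the bounds \ref{eq recovery channel bounds} are those of Theorem \ref{Theorem two sided overlap estimates}.

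The main obstacle is this last identification of $\mathcal{R}^{\text{QO}}$ with $\mathcal{R}^{\text{QR}}$: the $\mathcal{H}$-versus-$\mathcal{H}^{\ast}$ bookkeeping must be tracked carefully, since the partial transpose implicit in the double-ket formalism and the conjugation attached to $\bar{\rho}$ have to conspire so that the adjoint $\left(\mathcal{A}^{\left(2,\rho\right)}\right)^{\dag}$ and its two flanking factors of $\rho$ emerge in the correct order. The support technicalities—justifying $\hat{\mu}=\mu$ and the identity $\rho^{-1/2^{+}}\rho^{2}\rho^{-1/2^{+}}=\rho$ off $\operatorname{supp}\rho$—are routine but should be recorded explicitly to keep the non-invertible case honest; I would also note that when $\mathcal{A}$ is trace preserving one has $\operatorname{Tr}\mathcal{A}\left(\rho\right)=1$, so the lower bound simplifies to $\Lambda^{2}$.
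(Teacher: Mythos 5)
Your proposal is correct, and at the top level it is the paper's own proof: both arguments reduce the theorem to Theorem \ref{Theorem two sided overlap estimates} via the substitutions $\left(  \mathcal{H},\mathcal{K},\mathcal{L}\right)  \rightarrow\left(  \mathcal{H}^{\ast},\mathcal{K},\mathcal{H}\right)  $, $\mu_{\mathcal{KH}^{\ast}}=\mathcal{A}\left(  \left\vert \psi_{\rho}\right\rangle \left\langle \psi_{\rho}\right\vert \right)  $, $\left\vert \phi\right\rangle =\left\vert \psi_{\rho}\right\rangle $, and then establish the same three identifications $\operatorname*{Tr}\hat{\mu}=\operatorname*{Tr}\mathcal{A}\left(  \rho\right)  $, $\tilde{\Lambda}=\Lambda$, and $\mathcal{R}^{\text{QO}}=\mathcal{R}^{\text{QR}}$. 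Where you genuinely differ is in how the identifications are verified. The paper works basis-free: it never fixes Kraus operators in this proof, obtaining the second claim from equation \ref{eq for quadratic weighting as it appears in proof} and the third from the partial-transpose identity \ref{Basis free PT identity}, which converts $\operatorname*{Tr}_{\mathcal{H}^{\ast}}\left(  \hat{\mu}^{2}\left\vert \psi_{\rho}\right\rangle \left\langle \psi_{\rho}\right\vert \right)  $ into $\left(  \mathcal{A}^{\left(  2,\rho\right)  }\otimes\openone\right)  \left(  \rho\otimes\rho\right)  $ before the adjoint is invoked. You instead exploit the fact that a $\rho$-Kraus decomposition makes $\mu=\sum p_{k}E_{k}\left\vert \psi_{\rho}\right\rangle \left\langle \psi_{\rho}\right\vert E_{k}^{\dag}$ a \emph{spectral} decomposition (orthonormality of the $E_{k}\left\vert \psi_{\rho}\right\rangle $ being the defining property), so that $\hat{\mu}=\mu$, $\mu^{2}$, $Y$, and the overlapper are all computable term by term, with $\sum p_{k}^{2}\rho E_{k}^{\dag}\left(  \cdot\right)  E_{k}\rho=\rho\left(  \mathcal{A}^{\left(  2,\rho\right)  }\right)  ^{\dag}\left(  \cdot\right)  \rho$ read off at the end. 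Both verifications are sound; yours makes the emergence of the flanking factors of $\rho$ and of the adjoint completely transparent, and it is essentially the style of computation the paper performs separately in Proposition \ref{prop form near beny}, whereas the paper's route avoids committing to any particular decomposition (decomposition-independence being supplied by Proposition \ref{simple proposition for rho kraus and rho functional cal} in either case). Your support bookkeeping---the projections $\Pi_{+}\left(  \bar{\rho}\right)  $ acting trivially on each $E_{k}\left\vert \psi_{\rho}\right\rangle $, and $\rho^{-1/2^{+}}\rho^{2}\rho^{-1/2^{+}}=\rho$---is also handled correctly, so the non-invertible case goes through as claimed.
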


\noindent\textbf{Remark: }In the case that $\mathcal{A}$ is trace-preserving,
one may plug the square of the last inequality of $\left(
\ref{eq recovery channel bounds}\right)  $ into the first inequality $\left(
\ref{eq recovery channel bounds}\right)  $, giving%
\begin{equation}
\frac{F_{e}\left(  \rho,\mathcal{R}^{\text{QR}}\circ\mathcal{A}\right)  }%
{\sup_{\mathcal{R}_{\mathcal{K}\rightarrow\mathcal{H}}}F_{e}\left(
\rho,\mathcal{R}\circ\mathcal{A}\right)  }\geq\frac{\Lambda^{2}}%
{\sup_{\mathcal{R}_{\mathcal{K}\rightarrow\mathcal{H}}}F_{e}\left(
\rho,\mathcal{R}\circ\mathcal{A}\right)  }\geq\sup_{\mathcal{R}_{\mathcal{K}%
\rightarrow\mathcal{H}}}F_{e}\left(  \rho,\mathcal{R}\circ\mathcal{A}\right)
\text{.}\label{eq quad reversal also satisfy barnum knill bounds}%
\end{equation}
\textit{In particular, both of the lower bounds of }$\left(
\ref{eq recovery channel bounds}\right)  $ \textit{are sufficiently tight to
also satisfy the tightness relation} $\left(  \ref{Barnum Knill estimate}%
\right)  $ \textit{of Barnum and Knill} \cite{Barnum Knill UhOh}. (Note,
however, that Barnum and Knill also produce estimates for \textit{average}
entanglement fidelity, under certain commutativity assumptions.) Furthermore,
by expressing the bounds $\left(  \ref{eq recovery channel bounds}\right)  $
in terms of the infidelity $1-F_{e}\left(  \rho,\mathcal{R}\circ
\mathcal{A}\right)  $ one obtains the fact that $\mathcal{R}^{\text{QR}}$,
like $\mathcal{R}^{\text{BK}}$, has an infidelity of recovery within a factor
of two of the optimal.

\medskip

\begin{proof}
Let $\mathcal{H}_{\text{in}}$ be a copy of $\mathcal{H}$, let $\left\vert
\psi_{\rho}\right\rangle _{\mathcal{HH}^{\ast}}$ be the canonical purification
$\left(  \ref{eq for canonical purification}\right)  $ of $\rho,$ and set%
\[
\mu_{\mathcal{\mathcal{KH}}^{\ast}}=\mathcal{A}_{\mathcal{H}_{\text{in}%
}\rightarrow\mathcal{K}}\left(  \left\vert \psi_{\rho}\right\rangle
_{\mathcal{H}_{\text{in}}\mathcal{H}^{\ast}}\left\langle \psi_{\rho
}\right\vert \right)  \text{.}%
\]
Using the replacements $\left(  \mathcal{H},\mathcal{K},\mathcal{L}\right)
\rightarrow\left(  \mathcal{H}^{\ast},\mathcal{K},\mathcal{H}\right)  $ and
$\left\vert \phi\right\rangle _{\mathcal{\mathcal{LH}}}\rightarrow\left\vert
\psi_{\rho}\right\rangle _{\mathcal{HH}^{\ast}}$, Theorem
$\ref{Theorem two sided overlap estimates}$ gives estimates of the form%
\begin{equation}
\frac{\tilde{\Lambda}^{2}}{\operatorname*{Tr}\hat{\mu}_{\mathcal{\mathcal{KH}%
}^{\ast}}}\leq F_{e}\left(  \rho,\mathcal{R}^{\text{QO}}\circ\mathcal{A}%
\right)  \leq\sup_{\mathcal{R}_{\mathcal{K}\rightarrow\mathcal{H}}}%
F_{e}\left(  \rho,\mathcal{R}\circ\mathcal{A}\right)  \leq\tilde{\Lambda
}\text{.}%
\end{equation}
We claim that $\operatorname*{Tr}\hat{\mu}_{\mathcal{\mathcal{KH}}^{\ast}%
}=\operatorname*{Tr}\mathcal{A}\left(  \rho\right)  $, $\tilde{\Lambda
}=\Lambda$, and $\mathcal{R}^{\text{QO}}=\mathcal{R}^{\text{QR}}$.

\textbf{First claim:} Note that%
\begin{align}
\hat{\mu}_{\mathcal{\mathcal{KH}}^{\ast}}  &  =\Pi_{+}\left(
\operatorname*{Tr}_{\mathcal{H}}\left\vert \psi_{\rho}\right\rangle
_{\mathcal{HH}^{\ast}}\left\langle \psi_{\rho}\right\vert \right)
\mu_{\mathcal{\mathcal{KH}}^{\ast}}\Pi_{+}\left(  \operatorname*{Tr}%
_{\mathcal{H}}\left\vert \psi_{\rho}\right\rangle _{\mathcal{HH}^{\ast}%
}\left\langle \psi_{\rho}\right\vert \right) \label{z use muhat def}\\
&  =\Pi_{+}\left(  \bar{\rho}_{\mathcal{H}^{\ast}}\right)  \mathcal{A}%
_{\mathcal{H}_{\text{in}}\rightarrow\mathcal{K}}\left(  \left.  \left\vert
\sqrt{\rho}\,\right\rangle \!\right\rangle _{\mathcal{\mathcal{H}}_{\text{in}%
}\mathcal{\mathcal{H}^{\ast}}}\left\langle \!\left\langle \sqrt{\rho
}\,\right\vert \right.  \right)  \Pi_{+}\left(  \bar{\rho}_{\mathcal{H}^{\ast
}}\right) \label{z use dual purif and def canon}\\
&  =\mathcal{A}_{\mathcal{H}_{\text{in}}\rightarrow\mathcal{K}}\left(  \left.
\left\vert \sqrt{\rho}\,\Pi_{+}\left(  \rho\right)  \right\rangle
\!\right\rangle _{\mathcal{\mathcal{H}}_{\text{in}}\mathcal{\mathcal{H}^{\ast
}}}\left\langle \!\left\langle \sqrt{\rho}\,\Pi_{+}\left(  \rho\right)
\right\vert \right.  \right) \label{z use basic}\\
&  =\mathcal{A}_{\mathcal{H}_{\text{in}}\rightarrow\mathcal{K}}\left(
\left\vert \psi_{\rho}\right\rangle _{\mathcal{\mathcal{H}}_{\text{in}%
}\mathcal{\mathcal{H}^{\ast}}}\left\langle \psi_{\rho}\right\vert \right)
=\mu_{\mathcal{\mathcal{KH}}^{\ast}}\text{,}\label{mu is muhat for reversal}%
\end{align}
where the first three equalities used $\left(
\ref{def muhat after switch to phis}\right)  $ \& $\left(
\ref{phi sub H defined}\right)  $, $\left(
\ref{eq psi sub rho purifies rhobar}\right)  $, and $\left(
\ref{basic eq for double kets}\right)  $. It follows by equation
$\ref{eq psi sub rho purifies rho}$ that%
\[
\operatorname*{Tr}_{\mathcal{\mathcal{KH}}^{\ast}}\hat{\mu}%
_{\mathcal{\mathcal{KH}}^{\ast}}=\operatorname*{Tr}_{\mathcal{K}}%
\mathcal{A}_{\mathcal{H}_{\text{in}}\rightarrow\mathcal{K}}\left(
\operatorname*{Tr}_{\mathcal{H}^{\ast}}\left\vert \psi_{\rho}\right\rangle
_{\mathcal{\mathcal{H}}_{\text{in}}\mathcal{\mathcal{H}^{\ast}}}\left\langle
\psi_{\rho}\right\vert \right)  =\operatorname*{Tr}\mathcal{A}\left(
\rho\right)  \text{,}%
\]
as desired.

\textbf{Second claim: }One computes%
\begin{align}
&  \left\langle \psi_{\rho}\right\vert _{\mathcal{\mathcal{HH}^{\ast}}}\left(
\hat{\mu}_{\mathcal{\mathcal{KH}}^{\ast}}\right)  ^{2}\left\vert \psi_{\rho
}\right\rangle _{\mathcal{\mathcal{HH}^{\ast}}}\nonumber\\
&  =\left\langle \psi_{\rho}\right\vert _{\mathcal{\mathcal{HH}^{\ast}}%
}\left(  \mathcal{A}_{\mathcal{H}_{\text{in}}\rightarrow\mathcal{K}}\left(
\left\vert \psi_{\rho}\right\rangle _{\mathcal{H}_{\text{in}}\mathcal{H}%
^{\ast}}\left\langle \psi_{\rho}\right\vert \right)  \right)  ^{2}\left\vert
\psi_{\rho}\right\rangle _{\mathcal{\mathcal{HH}^{\ast}}}\label{c uses mu hat}%
\\
&  =\operatorname*{Tr}_{\mathcal{\mathcal{HH}^{\ast}}}\left[  \left(
\mathcal{A}_{\mathcal{H}_{\text{in}}\rightarrow\mathcal{K}}\left(  \left\vert
\psi_{\rho}\right\rangle _{\mathcal{H}_{\text{in}}\mathcal{H}^{\ast}%
}\left\langle \psi_{\rho}\right\vert \right)  \right)  ^{2}\left\vert
\psi_{\rho}\right\rangle _{\mathcal{\mathcal{HH}^{\ast}}}\left\langle
\psi_{\rho}\right\vert \right] \label{c uses cyclicity}\\
&  =\operatorname*{Tr}_{\mathcal{\mathcal{H}^{\ast}}}\left[  \left(
\mathcal{A}_{\mathcal{H}_{\text{in}}\rightarrow\mathcal{K}}\left(  \left\vert
\psi_{\rho}\right\rangle _{\mathcal{H}_{\text{in}}\mathcal{H}^{\ast}%
}\left\langle \psi_{\rho}\right\vert \right)  \right)  ^{2}\bar{\rho
}_{\mathcal{H}^{\ast}}\right] \label{c dual purity}\\
&  =\mathcal{A}_{\mathcal{H}\rightarrow\mathcal{K}}^{\left(  2,\rho\right)
}\left(  \rho^{2}\right)  ,\label{give M in chan rev proof}%
\end{align}
where $\mathcal{H}_{\text{in}}$ is a copy of $\mathcal{H}$ and where our steps
(in sequence) used $\left(  \ref{mu is muhat for reversal}\right)  $,
cyclicity of the trace, $\left(  \ref{eq psi sub rho purifies rhobar}\right)
$, and $\left(  \ref{eq for quadratic weighting as it appears in proof}%
\right)  $. That $\tilde{\Lambda}^{\ }=\Lambda$ now follows from equation
$\ref{gamma for overlap bound}$.

\textbf{Third claim:} By equation $\ref{eq for PGO}$%
\begin{equation}
\mathcal{R}_{\mathcal{K}\rightarrow\mathcal{L}}^{\text{QO}}\left(
\upsilon\right)  =\operatorname*{Tr}_{\mathcal{KH}^{\ast}}\left[  \left(
\hat{\mu}_{\mathcal{KH}^{\ast}}\right)  ^{2}\left(  \left(  Y^{-1/2^{+}%
}\upsilon Y^{-1/2^{+}}\right)  _{\mathcal{K}\rightarrow\mathcal{K}}%
\otimes\left\vert \psi_{\rho}\right\rangle _{\mathcal{H\mathcal{H}}^{\ast}%
}\left\langle \psi_{\rho}\right\vert \right)  \right]
\label{recovery surprisingly tricky to proceed}%
\end{equation}
where by $\left(  \ref{M in maximum overlap}\right)  $ and $\left(
\ref{give M in chan rev proof}\right)  $%
\begin{equation}
Y_{\mathcal{K}\rightarrow\mathcal{K}}=\mathcal{A}^{\left(  2,\rho\right)
}\left(  \rho^{2}\right)  \text{,}\label{found Y in recovery}%
\end{equation}
But by $\left(  \ref{eq for canonical purification}\right)  $ \& $\left(
\ref{basic eq for double kets}\right)  $, $\left(
\ref{eq for quadratic weighting as it appears in proof}\right)  $, and
$\left(  \ref{Basis free PT identity}\right)  $ one has%
\begin{align}
&  \operatorname*{Tr}_{\mathcal{H}^{\ast}}\left[  \left(  \hat{\mu
}_{\mathcal{KH}^{\ast}}\right)  ^{2}\left\vert \psi_{\rho}\right\rangle
_{\mathcal{H\mathcal{H}}^{\ast}}\left\langle \psi_{\rho}\right\vert \right]
\nonumber\\
&  =\operatorname*{Tr}_{\mathcal{H}^{\ast}}\left[  \left(  \mathcal{A}%
_{\mathcal{H}_{\text{in}}\rightarrow\mathcal{K}}\left(  \left\vert \psi_{\rho
}\right\rangle _{\mathcal{H}_{\text{in}}\mathcal{H}^{\ast}}\left\langle
\psi_{\rho}\right\vert \right)  \right)  ^{2}\bar{\rho}_{\mathcal{H}^{\ast}%
}^{\,-1/2^{+}}\left.  \left\vert \rho\right\rangle \!\right\rangle
_{\mathcal{H\mathcal{H}}^{\ast}}\left\langle \!\left\langle \rho\right\vert
\right.  \bar{\rho}_{\mathcal{H}^{\ast}}^{\,-1/2^{+}}\right] \nonumber\\
&  =\left(  \mathcal{A}_{\mathcal{H}_{\text{in}}\rightarrow\mathcal{K}%
}^{\left(  2,\rho\right)  }\otimes%
\openone
_{\mathcal{H}}\right)  \left(  \operatorname*{PT}_{B^{2}\left(  \mathcal{H}%
^{\ast}\right)  \rightarrow B^{2}\left(  \mathcal{H}_{\text{in}}\right)
}\left(  \left.  \left\vert \rho\right\rangle \!\right\rangle
_{\mathcal{H\mathcal{H}}^{\ast}}\left\langle \!\left\langle \rho\right\vert
\right.  \right)  \right) \nonumber\\
&  =\left(  \mathcal{A}_{\mathcal{H}_{\text{in}}\rightarrow\mathcal{K}%
}^{\left(  2,\rho\right)  }\otimes%
\openone
_{\mathcal{H}}\right)  \left(  \rho_{\mathcal{H}\rightarrow\mathcal{H}%
_{\text{in}}}\otimes\rho_{\mathcal{H}_{\text{in}}\rightarrow\mathcal{H}%
}\right)  ,\label{tricky substitution}%
\end{align}
where $\mathcal{H}_{\text{in}}$ is a copy of $\mathcal{H}$. So setting
\begin{equation}
X_{\mathcal{K}\rightarrow\mathcal{K}}=Y^{-1/2^{+}}\upsilon Y^{-1/2^{+}%
}=\left(  \mathcal{A}^{\left(  2,\rho\right)  }\left(  \rho^{2}\right)
\right)  ^{-1/2^{+}}\upsilon\left(  \mathcal{A}^{\left(  2,\rho\right)
}\left(  \rho^{2}\right)  \right)  ^{-1/2^{+}}%
,\label{X operator to simplify recovery}%
\end{equation}
and substituting $\left(  \ref{tricky substitution}\right)  $ into $\left(
\ref{recovery surprisingly tricky to proceed}\right)  $ gives
\begin{align}
\mathcal{\tilde{R}}_{\mathcal{K}\rightarrow\mathcal{L}}\left(  \upsilon
\right)   &  =\operatorname*{Tr}_{\mathcal{\mathcal{KH}}^{\ast}}\left(
\left(  \hat{\mu}_{\mathcal{KH}^{\ast}}\right)  ^{2}\left(  X_{\mathcal{K}%
\rightarrow\mathcal{K}}\otimes\left\vert \psi_{\rho}\right\rangle
_{\mathcal{\mathcal{HH}}^{\ast}}\left\langle \psi_{\rho}\right\vert \right)
\right) \nonumber\\
&  =\operatorname*{Tr}_{\mathcal{\mathcal{K}}}\left(  \operatorname*{Tr}%
_{\mathcal{H}^{\ast}}\left(  \left(  \hat{\mu}_{\mathcal{KH}^{\ast}}\right)
^{2}\left\vert \psi_{\rho}\right\rangle _{\mathcal{\mathcal{HH}}^{\ast}%
}\left\langle \psi_{\rho}\right\vert \right)  X_{\mathcal{K}\rightarrow
\mathcal{K}}\right) \nonumber\\
&  =\operatorname*{Tr}_{\mathcal{\mathcal{K}}}\left(  \mathcal{A}%
_{\mathcal{H}_{\text{in}}\rightarrow\mathcal{K}}^{\left(  2,\rho\right)
}\left(  \rho_{\mathcal{H}\rightarrow\mathcal{H}_{\text{in}}}\otimes
\rho_{\mathcal{H}_{\text{in}}\rightarrow\mathcal{H}}\right)  X_{\mathcal{K}%
\rightarrow\mathcal{K}}\right) \nonumber\\
&  =\operatorname*{Tr}_{\mathcal{H}_{\text{in}}}\left[  \left(  \rho
_{\mathcal{H}\rightarrow\mathcal{H}_{\text{in}}}\otimes\rho_{\mathcal{H}%
_{\text{in}}\rightarrow\mathcal{H}}\right)  \left(  \left(  \mathcal{A}%
^{\left(  2,\rho\right)  }\right)  _{\mathcal{K}\rightarrow\mathcal{H}%
_{\text{in}}}^{\dag}\left(  X\right)  \right)  \right] \nonumber\\
&  =\rho_{\mathcal{H}_{\text{in}}\rightarrow\mathcal{H}}\left(  \left(
\mathcal{A}^{\left(  2,\rho\right)  }\right)  _{\mathcal{K}\rightarrow
\mathcal{H}_{\text{in}}}^{\dag}\left(  X\right)  \right)  \rho_{\mathcal{H}%
\rightarrow\mathcal{H}_{\text{in}}}\text{.}%
\end{align}
This proves the claim.

The inequality $\Lambda\leq\operatorname*{Tr}\mathcal{A}\left(  \rho\right)  $
follows from $\left(  \ref{eq recovery channel bounds}\right)  $.
\end{proof}

\bigskip

The following proposition puts our recovery bounds into a form closer to the
nearly simultaneously-appearing bounds of B\'{e}ny and Oreshkov (Theorem
\ref{corrolary 3 of beny}, above):

\begin{proposition}
\label{prop form near beny}Suppose that $\rho$ is a density on $\mathcal{H}$
and that the quantum operation $\mathcal{A}:B^{1}\left(  \mathcal{H}\right)
\rightarrow B^{1}\left(  \mathcal{K}\right)  $ has a Kraus decomposition of
the form%
\begin{equation}
\mathcal{A}\left(  \mu\right)  =%
{\displaystyle\sum}
F_{k}\mu F_{k}^{\dag}\text{,}%
\end{equation}
where the $F_{k}$ are not constrained to satisfy any orthogonality conditions.
Then for $\mu\in B^{1}\left(  \operatorname*{supp}\left(  \rho\right)
\right)  $ one has%
\begin{equation}
\mathcal{A}^{\left(  2,\rho\right)  }\left(  \mu\right)  =%
{\displaystyle\sum_{k\ell}}
F_{k}\mu F_{\ell}^{\dag}\times\operatorname*{Tr}\left(  F_{k}^{\dag}F_{\ell
}\rho\right) \label{eq for quadratic weighting given Kraus operators}%
\end{equation}

\end{proposition}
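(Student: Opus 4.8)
The plan is to reduce everything to the Kraus-representation-independent formula (\ref{eq for quadratic weighting as it appears in proof}) of Proposition \ref{simple proposition for rho kraus and rho functional cal}, which already expresses $\mathcal{A}^{\left(2,\rho\right)}\left(\mu\right)$ as $\operatorname{Tr}_{\mathcal{H}^{\ast}}$ of $\overline{\rho^{-1/2^{+}}\mu^{\dag}\rho^{-1/2^{+}}}$ against the \emph{square} $\left(\mathcal{A}\left(\left\vert\psi_{\rho}\right\rangle\left\langle\psi_{\rho}\right\vert\right)\right)^{2}$. Since part 3 of that Proposition proves the formula is independent of the chosen Kraus decomposition, I am free to evaluate it with the given, non-orthogonal operators $\left\{F_{k}\right\}$. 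The only real content is then to compute this square explicitly in the double-ket calculus and to carry out the partial trace.

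First I would rewrite $\mathcal{A}\left(\left\vert\psi_{\rho}\right\rangle\left\langle\psi_{\rho}\right\vert\right)$ in double-ket form. Using $\left\vert\psi_{\rho}\right\rangle=|\sqrt{\rho}\rangle\!\rangle$ from (\ref{eq for canonical purification}) together with the intertwining identity (\ref{basic eq for double kets}) with $B=\openone$, each term becomes $F_{k}\left\vert\psi_{\rho}\right\rangle=|F_{k}\sqrt{\rho}\rangle\!\rangle$, so that $\mathcal{A}\left(\left\vert\psi_{\rho}\right\rangle\left\langle\psi_{\rho}\right\vert\right)=\sum_{k}|F_{k}\sqrt{\rho}\rangle\!\rangle\langle\!\langle F_{k}\sqrt{\rho}|$. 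Squaring and evaluating the inner products with the isometry property (\ref{eq double ket is an isometry}) gives $\langle\!\langle F_{k}\sqrt{\rho},F_{\ell}\sqrt{\rho}\rangle\!\rangle=\operatorname{Tr}\left(\sqrt{\rho}F_{k}^{\dag}F_{\ell}\sqrt{\rho}\right)=\operatorname{Tr}\left(F_{k}^{\dag}F_{\ell}\rho\right)$, whence $\left(\mathcal{A}\left(\left\vert\psi_{\rho}\right\rangle\left\langle\psi_{\rho}\right\vert\right)\right)^{2}=\sum_{k\ell}\operatorname{Tr}\left(F_{k}^{\dag}F_{\ell}\rho\right)\,|F_{k}\sqrt{\rho}\rangle\!\rangle\langle\!\langle F_{\ell}\sqrt{\rho}|$. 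The scalars $\operatorname{Tr}\left(F_{k}^{\dag}F_{\ell}\rho\right)$ are exactly the coefficients appearing in the target formula (\ref{eq for quadratic weighting given Kraus operators}).

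It then remains to feed this into (\ref{eq for quadratic weighting as it appears in proof}). I would move the operator $\overline{\rho^{-1/2^{+}}\mu^{\dag}\rho^{-1/2^{+}}}$, which acts on $\mathcal{H}^{\ast}$, through the first double ket by a second application of (\ref{basic eq for double kets}): setting $D=\rho^{-1/2^{+}}\mu^{\dag}\rho^{-1/2^{+}}$, one gets $\bar{D}_{\mathcal{H}^{\ast}}|F_{k}\sqrt{\rho}\rangle\!\rangle=|F_{k}\sqrt{\rho}D^{\dag}\rangle\!\rangle$, where $\sqrt{\rho}D^{\dag}=\sqrt{\rho}\rho^{-1/2^{+}}\mu\rho^{-1/2^{+}}=\Pi_{+}\left(\rho\right)\mu\rho^{-1/2^{+}}$. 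Finally the partial trace over $\mathcal{H}^{\ast}$ is collapsed by (\ref{eq used to construct canonical purif}), namely $\operatorname{Tr}_{\mathcal{H}^{\ast}}|A\rangle\!\rangle\langle\!\langle B|=AB^{\dag}$, producing $F_{k}\,\Pi_{+}\left(\rho\right)\mu\rho^{-1/2^{+}}\sqrt{\rho}\,F_{\ell}^{\dag}$. Simplifying the central factor via $\rho^{-1/2^{+}}\sqrt{\rho}=\Pi_{+}\left(\rho\right)$ and absorbing both projections through the hypothesis $\mu\in B^{1}\left(\operatorname{supp}\left(\rho\right)\right)$ (so that $\Pi_{+}\left(\rho\right)\mu=\mu\Pi_{+}\left(\rho\right)=\mu$) yields $F_{k}\mu F_{\ell}^{\dag}$, and summing against the coefficients gives precisely (\ref{eq for quadratic weighting given Kraus operators}).

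The step requiring the most care — and essentially the only place anything can go wrong — is the bookkeeping of the non-positive powers $\rho^{-1/2^{+}}$ and the associated support projections $\Pi_{+}\left(\rho\right)$: one must verify $\sqrt{\rho}\rho^{-1/2^{+}}=\rho^{-1/2^{+}}\sqrt{\rho}=\Pi_{+}\left(\rho\right)$ directly from the spectral definition (\ref{eq defining minus 1/2 plus exponent}), and then invoke the support restriction on $\mu$ to discard the residual $\Pi_{+}\left(\rho\right)$'s. Keeping track of which factor lives on $\mathcal{H}$ versus its dual $\mathcal{H}^{\ast}$ (hence which operator appears conjugated) is the other routine-but-error-prone ingredient; everything else is a mechanical application of the double-ket identities. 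Because both sides of the claimed identity are manifestly linear in $\mu$, no further reduction is needed.
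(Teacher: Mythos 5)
Your proposal is correct and follows essentially the same route as the paper's own proof: both reduce to the decomposition-independent formula $\left(\ref{eq for quadratic weighting as it appears in proof}\right)$ and evaluate it with the given Kraus operators $F_{k}$ via the double-ket identities $\left(\ref{basic eq for double kets}\right)$, $\left(\ref{eq double ket is an isometry}\right)$, and $\left(\ref{eq used to construct canonical purif}\right)$, producing the overlap coefficients $\operatorname*{Tr}\left(F_{k}^{\dag}F_{\ell}\rho\right)$. The only cosmetic difference is that the paper first assumes $\mu$ positive semidefinite (by linearity) and absorbs the factors $\overline{\mu^{1/2}\rho^{-1/2^{+}}}$ into the kets via $\left(\ref{eq relationship between psi sub rho and psi sub mu}\right)$, whereas you keep $\mu$ general and discharge the resulting projections $\Pi_{+}\left(\rho\right)$ by hand using the support hypothesis; both versions are sound.
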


\begin{proof}
Since both sides of $\left(
\ref{eq for quadratic weighting given Kraus operators}\right)  $ are linear in
$\mu$, we may assume without loss of generality that $\mu$ is positive
semidefinite. The conclusion follows using equations
\ref{eq for quadratic weighting as it appears in proof},
\ref{eq relationship between psi sub rho and psi sub mu},
\ref{eq for canonical purification} \& \ref{basic eq for double kets}, and
\ref{eq double ket is an isometry} \&
\ref{eq used to construct canonical purif} (in said order):
\begin{align*}
\mathcal{A}^{\left(  2,\rho\right)  }\left(  \mu\right)   &
=\operatorname*{Tr}_{\mathcal{H}^{\ast}}\left(  \overline{\mu^{1/2}%
\rho^{-1/2^{+}}}\times\left(
{\displaystyle\sum_{k}}
F_{k}\left\vert \psi_{\rho}\right\rangle _{\mathcal{HH}^{\ast}}\left\langle
\psi_{\rho}\right\vert F_{k}^{\dag}\right)  ^{2}\times\overline{\rho
^{-1/2^{+}}\mu^{1/2}}\right) \\
&  =\operatorname*{Tr}_{\mathcal{H}^{\ast}}\left(
{\displaystyle\sum_{k\ell}}
F_{k}\left\vert \psi_{\mu}\right\rangle _{\mathcal{HH}^{\ast}}\left\langle
\psi_{\rho}\right\vert F_{k}^{\dag}F_{\ell}\left\vert \psi_{\rho}\right\rangle
_{\mathcal{HH}^{\ast}}\left\langle \psi_{\mu}\right\vert F_{\ell}^{\dag
}\right) \\
&  =\operatorname*{Tr}_{\mathcal{H}^{\ast}}\left(
{\displaystyle\sum_{k\ell}}
\left.  \left\vert F_{k}\sqrt{\mu}\right\rangle \!\right\rangle _{\mathcal{KH}%
^{\ast}}\left\langle \!\left\langle F_{k}\sqrt{\rho}\right\vert \right.
\left.  \left\vert F_{\ell}\sqrt{\rho}\right\rangle \!\right\rangle
_{\mathcal{KH}^{\ast}}\left\langle \!\left\langle F_{\ell}\sqrt{\mu
}\right\vert \right.  \right) \\
&  =%
{\displaystyle\sum_{k\ell}}
F_{k}\mu F_{\ell}^{\dag}\times\operatorname*{Tr}\left(  F_{k}^{\dag}F_{\ell
}\rho\right)  \text{.}%
\end{align*}

\end{proof}

\subsection{The relationship between the Quadratic Recovery and Barnum and
Knill's reversal \label{subsection comparison with barnum knill reversal}}

As we have already seen in equation
\ref{eq quad reversal also satisfy barnum knill bounds}, the quadratic
reversal $\mathcal{R}^{QR}$ and the simple lower bound of $\left(
\ref{eq recovery channel bounds}\right)  $ are both sufficiently accurate to
also satisfy the tightness relation $\left(  \ref{Barnum Knill estimate}%
\right)  $ of Barnum and Knill. This section makes a brief comparison of the
quadratic reversal operation with the reversal map of Barnum and Knill (for
the special case of non-average entanglement fidelity) in light of the
relationship between the quadratic measurement and the PGM.

Re-expressing the elements of the ensemble $\left(
\ref{a priori normed ensemble to distinguish}\right)  $ as $\rho_{k}=p_{k}%
\hat{\rho}_{k}$, where $\operatorname*{Tr}\left(  \hat{\rho}_{k}\right)  =1$
and $p_{k}=\operatorname*{Tr}\rho_{k} $ is the chance that $\hat{\rho}_{k}$
appears, the formulas for the \textquotedblleft pretty good\textquotedblright%
\ and quadratically-weighted measurements become%
\begin{align*}
M_{k}^{\text{PGM}} &  =\left(
{\displaystyle\sum}
p_{\ell}\hat{\rho}_{\ell}\right)  ^{-1/2^{+}}p_{k}\hat{\rho}_{k}\left(
{\displaystyle\sum}
p_{\ell}\hat{\rho}_{\ell}\right)  ^{-1/2^{+}}\\
M_{k}^{\text{QW}} &  =\left(
{\displaystyle\sum}
p_{\ell}^{2}\hat{\rho}_{\ell}^{2}\right)  ^{-1/2^{+}}p_{k}^{2}\hat{\rho}%
_{k}^{2}\left(
{\displaystyle\sum}
p_{\ell}^{2}\hat{\rho}_{\ell}^{2}\right)  ^{-1/2^{+}}\text{.}%
\end{align*}
In particular, to get from the pretty-good measurement to the quadratic
measurement, one replaces all probabilities and density matrices by their squares.

A simple examination of the formulas $\left(  \ref{Barnum Knill reversal}%
\right)  $ and $\left(  \ref{eq quadratic recovery channel}\right)  $ shows
that a similar relationship exists between the entanglement fidelity case of
the Barnum-Knill reversal $\mathcal{R}^{\text{BK}}$ and the
quadratically-weighted reversal $\mathcal{R}^{\text{QR}}$. Note that the
corresponding probabilities $p_{k}$, which must be replaced by their squares,
are viewed as being hidden in the $\rho$-Kraus decomposition $\left(
\ref{rho kraus decomp}\right)  $ of the reversed map $\mathcal{A}$.

In \cite{Tyson Error rates of quantum Belavkin measurements} various
weightings for Belavkin pure-state square-root measurements were compared, and
it was argued that Holevo's quadratically-weighted measurement had qualitative
and quantitative advantages over the linearly weighted PGM. Based on analogy,
we conjecture that $\mathcal{R}^{\text{QR}}$ will typically (but not always)
outperform $\mathcal{R}^{\text{BK}}$.

\subsubsection{Depolarizing noise and the quadratic transpose channel
\label{section the case of depolarizing noise}}

It is perhaps interesting to quantitatively compare the actions of the
Barnum-Knill reversal $\mathcal{R}^{\text{BK}}$ with the quadratic recovery
$\mathcal{R}^{\text{QR}}$ in the simplest special case, in which depolarizing
noise
\begin{equation}
\left(  \mathcal{A}_{p}\right)  _{\mathcal{H}\rightarrow\mathcal{H}}\left(
\mu\right)  =p\times\frac{%
\openone
}{\dim\mathcal{H}}\operatorname*{Tr}\left(  \mu\right)  +\left(  1-p\right)
\times\mu,\text{\ \ \ \ }p\in\left[  0,1\right]
\end{equation}
acts on half of a maximally-entangled state. (Note that the $\rho=%
\openone
/\dim\mathcal{H}$ case of $\mathcal{R}^{\text{BK}}$, also known as the
\textit{transpose channel }\cite{Petz quantum entropy and its use}, has
recently \cite{Ng Mandayam simple approach to approximate QEC} been employed
in the study of approximate quantum error correction.) For $\dim\mathcal{H}%
>1$, one easily obtains%
\begin{align}
\mathcal{R}_{p}^{\text{BK}} &  =\mathcal{A}_{p}\\
\mathcal{R}_{p}^{\text{QR}} &  =\mathcal{A}_{f\left(  p,\dim\mathcal{H}%
\right)  }\text{,}%
\end{align}
where%
\begin{equation}
f\left(  p,\dim\mathcal{H}\right)  :=\frac{p^{2}}{\left(  1-p\right)
^{2}\left(  \dim\mathcal{H}\right)  ^{2}+\left(  2-p\right)  p}%
\end{equation}
satisfies%
\begin{equation}
f\left(  p,\dim\mathcal{H}\right)  \leq f\left(  p,1\right)  =p^{2}\text{.}%
\end{equation}
In particular, both recovery operations \textquotedblleft
correct\textquotedblright\ depolarization errors by committing further
depolarization. Fortunately, however, when $p<1$ \textit{the quadratic
recovery depolarizes with lower probability than the transpose channel},
\textit{especially when} $\dim\mathcal{H}$ \textit{is large or }$p$ \textit{is
small.}

A more detailed quantitative comparison of $\mathcal{R}^{\text{BK}}$,
$\mathcal{R}^{\text{QR}}$, and of reversals of other possible weightings
(perhaps generalizing the cubically-weighted measurement of \cite{Wehner
thesis, Wehner State discrimination with post-measurement information}) will
be left for future work.

\section{Conclusion and future directions}

We have generalized the iterative schemes of
Je\v{z}ek-\v{R}eh\'{a}\v{c}ek-Fiur\'{a}\v{s}ek \cite{Jezek Rehacek and
Fiurasek Finding optimal strategies for minimum error quantum state
discrimination}, Je\v{z}ek-Fiur\'{a}\v{s}ek-Hradil \cite{Jezek Fiurasek Hradil
Quantum inference of states and processes, Hradil et al Maximum Likelihood
methods in quantum mechanics}, and Reimpell-Werner \cite{Reimpell Werner,
Reimpell Thesis}. Using an abstract framework, \textquotedblleft small
angle\textquotedblright\ guesses were employed to construct concise two-sided
bounds for minimum-error quantum detection, maximum overlap, quantum
conditional min-entropy, and the reversibility of quantum dynamics. An
approximately-optimal channel reversal and overlap operation were derived. The
resulting bounds were sufficiently tight to also satisfy the tightness
relations of Barnum and Knill \cite{Barnum Knill UhOh}, although our methods
more generally allowed the target state and the input state to differ. Our
recovery operation was found to be a significant improvement of the transpose
channel in the simple case of depolarizing noise acting on half of a
maximally-entangled state.

As a direction for future study, we note that Barnum and Knill constructed an
approximate reversal operation in the more general sense of \textit{average}
entanglement fidelity, albeit with commutativity assumptions of unknown
necessity. A remaining open question is whether one can generalize our
quadratic reversal construction to this case of \emph{average} entanglement
fidelity, and whether these commutativity assumptions may be removed. More
generally, one may ask how to obtain estimates for the maximum overlap problem
without our assumed purity of the target state. The principle difficulty in
answering both of these questions is in finding an appropriate
\textquotedblleft small angle guess,\textquotedblright\ in the sense of lemma
\ref{directional iterate lemma}.

Another future direction, in which we have made recent progress \cite{Tyson in
preparation}, is to employ matrix monotonicity to obtain bounds for the
maximum overlap problem, including its special cases of channel reversibility
and quantum conditional min-entropy.%

\section*{Appendix A: Canonical Stinespring dilations
\label{stinespring appendix}
\addcontentsline{toc}{section}{Appendix A: Canonical Stinespring dilations}
}%

Using only the square root function and the natural isomorphisms of Section
\ref{section basis free}, one may construct Stinespring dilations which are
independent of any choice of a basis:

\begin{definition}
Let $\mathcal{R}:$ $B^{1}\left(  \mathcal{K}\right)  \rightarrow B^{1}\left(
\mathcal{L}\right)  $ be a completely positive map, with $\mathcal{K}$
finite-dimensional. The \textbf{canonical environment }is given by%
\begin{equation}
\mathcal{E=L}_{\mathcal{E}}^{\ast}\otimes\mathcal{K}_{\mathcal{E}%
},\label{can env 2}%
\end{equation}
where $\mathcal{L}_{\mathcal{E}}^{\ast}$ and $\mathcal{K}_{\mathcal{E}}$ are
copies of $\mathcal{L}^{\ast}$ and $\mathcal{K}$, respectively. The
\textbf{canonical Stinespring dilation }$U_{\mathcal{R}}$ of $\mathcal{R}$ is
the linear transformation $U_{\mathcal{R}}:\mathcal{K}\rightarrow
\mathcal{L}\otimes\mathcal{E}$ such that $\left.  \left\vert U_{\mathcal{R}%
}\right\rangle \!\right\rangle _{\mathcal{\mathcal{LE}K}^{\ast}}=\left.
\left\vert U\right\rangle \!\right\rangle _{\mathcal{LK}^{\ast}\mathcal{L}%
_{\mathcal{E}}^{\ast}\mathcal{K}_{\mathcal{E}}}$ is the canonical purification
$\left(  \ref{eq for canonical purification}\right)  $ of the Choi matrix
$\mathcal{\tilde{R}}=\mathcal{R}\left(  \left.  \left\vert
\openone
\right\rangle \!\right\rangle _{\mathcal{KK}^{\ast}}\left\langle
\!\left\langle
\openone
\right\vert \right.  \right)  $.
\end{definition}

That $U_{\mathcal{R}}$ is a bona fide purification of $\mathcal{R}$ follows
from the following lemma:

\begin{lemma}
\label{lemma dilation is purification}Let $\mathcal{R}:B^{1}\left(
\mathcal{K}\right)  \rightarrow B^{1}\left(  \mathcal{L}\right)  $ be a
quantum operation, with $\mathcal{K}$ finite-dimensional. Then $U_{\mathcal{K}%
\rightarrow\mathcal{\mathcal{LE}}}$ is a Stinespring dilation of $\mathcal{R}$
iff $\left.  \left\vert U\right\rangle \!\right\rangle
_{\mathcal{L\mathcal{\mathcal{E}K}}^{\ast}}$ is a purification of the Choi
matrix $\mathcal{\tilde{R}}$.
\end{lemma}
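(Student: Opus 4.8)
The plan is to reduce the stated equivalence to the injectivity of the Choi--Jamio{\l}kowski correspondence $\mathcal{R}\mapsto\tilde{\mathcal{R}}$, which holds precisely because $\mathcal{K}$ is finite-dimensional. The conceptual bridge is that extracting Kraus operators from $U$ and forming the double-ket $\vert U\rangle\!\rangle$ package the \emph{same} data in two different groupings, one producing a Stinespring dilation and the other a purification of the Choi matrix.

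First I would fix an orthonormal basis $\{\vert k\rangle_{\mathcal{E}}\}$ of $\mathcal{E}$ and set $E_{k}=(\openone_{\mathcal{L}}\otimes\langle k\vert_{\mathcal{E}})U:\mathcal{K}\rightarrow\mathcal{L}$, so that $U=\sum_{k}\vert k\rangle_{\mathcal{E}}\otimes E_{k}$ and the associated CP map is $\mathcal{R}_{U}(\rho):=\operatorname{Tr}_{\mathcal{E}}(U\rho U^{\dag})=\sum_{k}E_{k}\rho E_{k}^{\dag}$. By the defining property $\left(\ref{defining property of a purification}\right)$ of a Stinespring dilation, $U$ dilates $\mathcal{R}$ if and only if $\mathcal{R}_{U}=\mathcal{R}$. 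Next I would compute the two Choi-type objects. Applying identity $\left(\ref{basic eq for double kets}\right)$ with $B=\openone$ gives $(E_{k}\otimes\openone_{\mathcal{K}^{\ast}})\vert\openone\rangle\!\rangle_{\mathcal{KK}^{\ast}}=\vert E_{k}\rangle\!\rangle_{\mathcal{LK}^{\ast}}$, so that from $\left(\ref{eq for basis free choi matrix}\right)$ the Choi matrix is $\widetilde{\mathcal{R}_{U}}=\sum_{k}\vert E_{k}\rangle\!\rangle\langle\!\langle E_{k}\vert$. On the other hand, linearity of the double-ket together with the same identity yields $\vert U\rangle\!\rangle_{\mathcal{LEK}^{\ast}}=\sum_{k}\vert k\rangle_{\mathcal{E}}\otimes\vert E_{k}\rangle\!\rangle_{\mathcal{LK}^{\ast}}$; regrouping $\mathcal{LEK}^{\ast}=(\mathcal{LK}^{\ast})\otimes\mathcal{E}$ and tracing out $\mathcal{E}$ collapses the basis labels via $\langle k'\vert k\rangle=\delta_{kk'}$, producing $\operatorname{Tr}_{\mathcal{E}}(\vert U\rangle\!\rangle\langle\!\langle U\vert)=\sum_{k}\vert E_{k}\rangle\!\rangle\langle\!\langle E_{k}\vert=\widetilde{\mathcal{R}_{U}}$. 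Since $\vert U\rangle\!\rangle\langle\!\langle U\vert$ is manifestly rank one, the entire content of the statement ``$\vert U\rangle\!\rangle$ is a purification of $\tilde{\mathcal{R}}$'' is the single partial-trace equation $\operatorname{Tr}_{\mathcal{E}}(\vert U\rangle\!\rangle\langle\!\langle U\vert)=\tilde{\mathcal{R}}$.

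Chaining these observations gives $\mathcal{R}_{U}=\mathcal{R}\iff\widetilde{\mathcal{R}_{U}}=\tilde{\mathcal{R}}\iff\operatorname{Tr}_{\mathcal{E}}(\vert U\rangle\!\rangle\langle\!\langle U\vert)=\tilde{\mathcal{R}}$. The forward direction (dilation $\Rightarrow$ purification) needs only the displayed Choi computation, while the converse uses injectivity of $\mathcal{R}\mapsto\tilde{\mathcal{R}}$ in finite dimension to pass from equality of Choi matrices back to equality of maps. I expect the main obstacle to be purely notational: keeping the double-ket bookkeeping straight under the regrouping $\mathcal{LEK}^{\ast}=(\mathcal{LK}^{\ast})\otimes\mathcal{E}$, and in particular verifying that the partial trace is taken over the factor $\mathcal{E}$ of the \emph{ket} side rather than over the dual factor $\mathcal{K}^{\ast}$, so that the result genuinely lands in $B^{1}(\mathcal{LK}^{\ast})$ and can be compared with $\tilde{\mathcal{R}}$. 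The only substantive input beyond these manipulations is the finite-dimensionality of $\mathcal{K}$, which guarantees that the Choi correspondence is a bijection.
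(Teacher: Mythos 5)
Your proposal is correct, but it follows a genuinely different route from the paper's, chiefly in the converse direction. For the forward direction the two arguments are the same computation in different clothing: the paper applies the dilation property to $\left.  \left\vert \openone \right\rangle \!\right\rangle _{\mathcal{KK}^{\ast}}\left\langle \!\left\langle \openone \right\vert \right.$ and uses the basis-free identity $(\ref{basic eq for double kets})$, $U\left.  \left\vert \openone \right\rangle \!\right\rangle _{\mathcal{KK}^{\ast}}=\left.  \left\vert U\right\rangle \!\right\rangle _{\mathcal{LEK}^{\ast}}$, to get $\mathcal{\tilde{R}}=\operatorname*{Tr}_{\mathcal{E}}\left.  \left\vert U\right\rangle \!\right\rangle \left\langle \!\left\langle U\right\vert \right.$ in one line, whereas you reach the same partial-trace identity by expanding in an orthonormal basis of $\mathcal{E}$ and summing Kraus double-kets; your version of the identity, $\operatorname*{Tr}_{\mathcal{E}}\left.  \left\vert U\right\rangle \!\right\rangle \left\langle \!\left\langle U\right\vert \right.  =\widetilde{\mathcal{R}_{U}}$ valid for \emph{every} contraction $U$, is a nice symmetric packaging that makes the equivalence transparent. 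Where you truly diverge is the converse: you reduce it to injectivity of the Choi correspondence and invoke that as a black box, while the paper proves the needed recovery statement inline and basis-free --- assuming $\operatorname*{Tr}_{\mathcal{E}}\left.  \left\vert U\right\rangle \!\right\rangle \left\langle \!\left\langle U\right\vert \right.  =\mathcal{\tilde{R}}$, it sandwiches by $\left(  \bar{\upsilon}^{\dag}\right)  _{\mathcal{K}^{\ast}}^{1/2}$, uses that this operator commutes with both $\operatorname*{Tr}_{\mathcal{E}}$ and $\mathcal{R}_{\mathcal{K}\rightarrow\mathcal{L}}$, and concludes $\operatorname*{Tr}_{\mathcal{E}}\left(  U\upsilon U^{\dag}\right)  =\mathcal{R}\left(  \upsilon\right)$ for every density $\upsilon$; in other words, it re-derives Choi injectivity rather than quoting it. Your approach buys modularity and brevity; the paper's buys self-containedness and uniformity with its double-ket calculus. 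Two points to attend to if you write yours up: (i) only $\mathcal{K}$ is assumed finite-dimensional, so $\mathcal{L}$ and hence $\mathcal{E}$ may be infinite-dimensional, making your sums $U=\sum_{k}\left\vert k\right\rangle _{\mathcal{E}}\otimes E_{k}$ and $\sum_{k}\left.  \left\vert E_{k}\right\rangle \!\right\rangle \left\langle \!\left\langle E_{k}\right\vert \right.$ infinite; they do converge (since $\dim\mathcal{K}<\infty$ forces $U$ to be Hilbert--Schmidt, indeed of rank at most $\dim\mathcal{K}$), but this deserves a sentence; (ii) the injectivity you cite should be proved or referenced, e.g.\ via the recovery formula $\mathcal{R}\left(  \left\vert i\right\rangle \left\langle j\right\vert \right)  =\left(  \openone _{\mathcal{L}}\otimes\left\langle \bar{\imath}\right\vert \right)  \mathcal{\tilde{R}}\left(  \openone _{\mathcal{L}}\otimes\left\vert \bar{j}\right\rangle \right)$, which is exactly where $\dim\mathcal{K}<\infty$ enters and which is, in substance, what the paper's converse computation establishes.
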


\begin{proof}
Suppose that $U$ dilates $\mathcal{R}$. Then by equation
$\ref{basic eq for double kets}$%
\[
\mathcal{\tilde{R}}=\operatorname*{Tr}_{\mathcal{E}}U_{\mathcal{K}%
\rightarrow\mathcal{\mathcal{LE}}}\left.  \left\vert
\openone
\right\rangle \!\right\rangle _{\mathcal{KK}^{\ast}}\left\langle
\!\left\langle
\openone
\right\vert \right.  \left(  U_{\mathcal{K}\rightarrow\mathcal{\mathcal{LE}}%
}\right)  ^{\dag}=\operatorname*{Tr}_{\mathcal{E}}\left.  \left\vert
U\right\rangle \!\right\rangle _{\mathcal{\mathcal{\mathcal{LE}K}}^{\ast}%
}\left\langle \!\left\langle U\right\vert \right.  \text{,}%
\]
so $\left.  \left\vert U\right\rangle \!\right\rangle $ purifies
$\mathcal{\tilde{R}}$.

Conversely, suppose that $\left.  \left\vert U\right\rangle \!\right\rangle
_{\mathcal{\mathcal{\mathcal{LE}K}}^{\ast}}$ purifies $\mathcal{\tilde{R}}$,
and let $\upsilon\in B^{1}\left(  \mathcal{K}\right)  $ be a density matrix.
Then by equations $\ref{eq used to construct canonical purif}$,
$\ref{basic eq for double kets}$, \ref{eq for canonical purification}, and
\ref{eq psi sub rho purifies rho},%
\begin{align}
\operatorname*{Tr}_{\mathcal{E}}\left(  U_{\mathcal{K}\rightarrow
\mathcal{\mathcal{LE}}}\upsilon_{\mathcal{K}}\left(  U_{\mathcal{K}%
\rightarrow\mathcal{\mathcal{LE}}}\right)  ^{\dag}\right)   &
=\operatorname*{Tr}_{\mathcal{EK}^{\ast}}\left(  \left(  \bar{\upsilon}^{\dag
}\right)  _{\mathcal{K}^{\ast}}^{1/2}\left.  \left\vert U\right\rangle
\!\right\rangle _{\mathcal{\mathcal{LE}K}^{\ast}}\left\langle \!\left\langle
U\right\vert \right.  \left(  \bar{\upsilon}^{\dag}\right)  _{\mathcal{K}%
^{\ast}}^{1/2}\right) \nonumber\\
&  =\operatorname*{Tr}_{\mathcal{K}^{\ast}}\left[  \left(  \bar{\upsilon
}^{\dag}\right)  _{\mathcal{K}^{\ast}}^{1/2}\mathcal{R}\left(  \left.
\left\vert
\openone
\right\rangle \!\right\rangle _{\mathcal{KK}^{\ast}}\left\langle
\!\left\langle
\openone
\right\vert \right.  \right)  \left(  \bar{\upsilon}^{\dag}\right)
_{\mathcal{K}^{\ast}}^{1/2}\right] \nonumber\\
&  =\mathcal{R}\left(  \operatorname*{Tr}_{\mathcal{K}^{\ast}}\left\vert
\psi_{\upsilon}\right\rangle \left\langle \psi_{\upsilon}\right\vert \right)
=\mathcal{R}\left(  \upsilon\right)  \text{.}%
\end{align}

\end{proof}%

\section*{Appendix B: Reimpell-Werner iteration as directional iteration
\label{appendix Reimpell Werner iteration}
\addcontentsline{toc}{section}%
{Appendix B: Reimpell-Werner iteration as directional iteration}
}%

The purpose of this section is to verify that Reimpell-Werner iteration
(introduced in section \ref{section reimpell werner iteration}) for CP maps
corresponds to directional iteration of the corresponding Stinespring dilations.

One may re-express the maximized functional $f\left(  \mathcal{R}\right)  $ of
equation $\left(  \ref{eq reimpell functional represented by F}\right)  $ as%

\begin{equation}
f\left(  \mathcal{R}\right)  =\left\Vert U_{\mathcal{R}}\right\Vert _{F}%
^{2}\text{,}%
\end{equation}
where $U_{\mathcal{R}}$ is a Stinespring dilation of the CP map $\mathcal{R}%
:B^{1}\left(  \mathcal{K}\right)  \rightarrow B^{1}\left(  \mathcal{L}\right)
$ and the seminorm is defined by

\begin{definition}
Let $\mathcal{E}=\mathcal{L}_{\mathcal{E}}^{\ast}\otimes\mathcal{K}%
_{\mathcal{E}}$ be the canonical environment $\left(  \ref{can env 2}\right)
$ for quantum operations from $\mathcal{K}$ to $\mathcal{L}$. For operators
$U,W:\mathcal{K}\rightarrow\mathcal{L}\otimes\mathcal{E}$, define the
semidefinite inner product
\begin{equation}
\left\langle U,W\right\rangle _{F}=\left\langle \!\left\langle U\right\vert
\right.  _{\mathcal{LKE}^{\ast}}F_{\mathcal{LK}^{\ast}\rightarrow
\mathcal{LK}^{\ast}}\left.  \left\vert W\right\rangle \!\right\rangle
_{\mathcal{LKE}^{\ast}}\text{,}\label{Reimpell F inner product}%
\end{equation}
where $A\mapsto\left.  \left\vert A\right\rangle \!\right\rangle $ is the
isomorphism of equation \ref{stupid def of double ket}. Let $V_{F}=\left\{
\left.  U~\right\vert ~\left\Vert U\right\Vert _{F}<\infty\right\}  $, on
which $\left\langle \bullet,\bullet\right\rangle _{F}$ is a well-defined
semidefinite inner product. Let $S=\left\{  \left.  U~\right\vert ~\left\Vert
U\right\Vert \leq1\right\}  $.
\end{definition}

\begin{theorem}
[Reimpell-Werner iteration is directional iteration]Suppose that
$U_{\mathcal{K}\rightarrow\mathcal{\mathcal{LE}}}$ is a Stinespring dilation
of a CP map $\mathcal{R}:B^{1}\left(  \mathcal{K}\right)  \rightarrow
B^{1}\left(  \mathcal{L}\right)  $. Then $U\in V_{F}$ has a directional
iterate $U^{\left(  +\right)  }\in S$ which dilates the Reimpell-Werner
iterate $\mathcal{R}^{\oplus}$ of Def. \ref{def of reimpell werner iterates}.
\end{theorem}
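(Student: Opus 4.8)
The plan is to follow the same template used for Theorems \ref{Theorem JRF iteration is an example of abstract JRFH iteration} and \ref{Theorem iteration for overlap purifications}: reduce the computation of the directional iterate to the trace-maximization identity $\left( \ref{equation sup trace A dag U}\right)$--$\left( \ref{unitary maximizer equation}\right)$, and then use the double-ket calculus of Section \ref{section basis free} together with Lemma \ref{lemma dilation is purification} to read off the Choi matrix of the resulting operation. Throughout, $U:\mathcal{K}\rightarrow\mathcal{L}\otimes\mathcal{E}$ dilates $\mathcal{R}$, so that $\operatorname*{Tr}_{\mathcal{E}}|U\rangle\!\rangle\langle\!\langle U|=\mathcal{\tilde{R}}$ by Lemma \ref{lemma dilation is purification}.

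First I would introduce the operator $Q^{\prime}_{\mathcal{K}\rightarrow\mathcal{LE}}$ defined by $|Q^{\prime}\rangle\!\rangle=F_{\mathcal{LK}^{\ast}\rightarrow\mathcal{LK}^{\ast}}|U\rangle\!\rangle$, which exists and is unique since $A\mapsto|A\rangle\!\rangle$ is an isometric isomorphism by $\left( \ref{eq double ket is an isometry}\right)$; here $F$ acts on the $\mathcal{LK}^{\ast}$ tensor factors and leaves $\mathcal{E}$ untouched. Because $F$ is positive (hence self-adjoint), $\left( \ref{Reimpell F inner product}\right)$ and $\left( \ref{eq double ket is an isometry}\right)$ give, for any $W\in S$,
\begin{equation}
\langle W,U\rangle_{F}=\langle\!\langle W|\,F\,|U\rangle\!\rangle=\langle\!\langle W,Q^{\prime}\rangle\!\rangle=\operatorname*{Tr}_{\mathcal{K}}W^{\dagger}Q^{\prime}.
\end{equation}
Maximizing $\operatorname{Re}\langle W,U\rangle_{F}$ over contractions $W$ is then exactly $\left( \ref{equation sup trace A dag U}\right)$, whose maximizer $\left( \ref{unitary maximizer equation}\right)$ is the partial isometry $U^{\left( +\right) }=Q^{\prime}(Q^{\prime\dagger}Q^{\prime})^{-1/2^{+}}$. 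In particular $U^{\left( +\right) }\in S$, so it is a directional iterate of $U$ in the sense of Definition \ref{def of directional iterate}.

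It remains to show $U^{\left( +\right) }$ dilates $\mathcal{R}^{\oplus}$, which by Lemma \ref{lemma dilation is purification} amounts to $\operatorname*{Tr}_{\mathcal{E}}|U^{\left( +\right) }\rangle\!\rangle\langle\!\langle U^{\left( +\right) }|=\widetilde{\mathcal{R}}^{\oplus}$. Two identities drive this. Since $F$ commutes with $\operatorname*{Tr}_{\mathcal{E}}$ (disjoint factors), one gets $\operatorname*{Tr}_{\mathcal{E}}|Q^{\prime}\rangle\!\rangle\langle\!\langle Q^{\prime}|=F\mathcal{\tilde{R}}F$. Tracing out $\mathcal{L}$ as well and applying $\left( \ref{eq inner product out first factor}\right)$ with output space $\mathcal{L}\otimes\mathcal{E}$ yields $\operatorname*{Tr}_{\mathcal{L}}(F\mathcal{\tilde{R}}F)=\operatorname*{Tr}_{\mathcal{LE}}|Q^{\prime}\rangle\!\rangle\langle\!\langle Q^{\prime}|=\overline{Q^{\prime\dagger}Q^{\prime}}$, an operator on $\mathcal{K}^{\ast}$. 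Comparing with $\left( \ref{Reimpell def of gamma for iterate}\right)$, this is precisely $\Gamma=\openone_{\mathcal{L}}\otimes\overline{Q^{\prime\dagger}Q^{\prime}}$, and hence $\Gamma^{-1/2^{+}}=\openone_{\mathcal{L}}\otimes\overline{(Q^{\prime\dagger}Q^{\prime})^{-1/2^{+}}}$. Expanding $U^{\left( +\right) }$ via $\left( \ref{basic eq for double kets}\right)$ gives $|U^{\left( +\right) }\rangle\!\rangle=(\openone_{\mathcal{LE}}\otimes\overline{(Q^{\prime\dagger}Q^{\prime})^{-1/2^{+}}})|Q^{\prime}\rangle\!\rangle$; forming the outer product and tracing over $\mathcal{E}$, which commutes with the $\overline{(Q^{\prime\dagger}Q^{\prime})^{-1/2^{+}}}$ factor since the latter acts on $\mathcal{K}^{\ast}$, then produces
\begin{equation}
\operatorname*{Tr}_{\mathcal{E}}|U^{\left( +\right) }\rangle\!\rangle\langle\!\langle U^{\left( +\right) }|=\Gamma^{-1/2^{+}}\,(F\mathcal{\tilde{R}}F)\,\Gamma^{-1/2^{+}}=\widetilde{\mathcal{R}}^{\oplus},
\end{equation}
matching $\left( \ref{Reimpell def of iterate}\right)$.

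I expect the main obstacle to be purely the dual-space bookkeeping: correctly tracking on which tensor factors $F$, the two partial traces, and the conjugate $\overline{(Q^{\prime\dagger}Q^{\prime})^{-1/2^{+}}}$ act, and verifying that complex conjugation commutes with the $(\cdot)^{-1/2^{+}}$ functional calculus. The latter holds because conjugation preserves positivity and sends spectral projections to spectral projections while leaving the real eigenvalues unchanged, so that $\overline{P^{-1/2^{+}}}=\bar{P}^{-1/2^{+}}$ for positive $P$. Once these identifications are made, every remaining step is a direct application of the cited double-ket identities and the trace-maximization lemma.
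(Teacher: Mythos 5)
Your proposal is correct and follows essentially the same route as the paper's own proof: it defines the same operator ($X$ in the paper, your $Q'$) via $|X\rangle\!\rangle = F|U\rangle\!\rangle$, invokes the trace-maximization identities $\left(\ref{equation sup trace A dag U}\right)$--$\left(\ref{unitary maximizer equation}\right)$ to get $U^{(+)}=X(X^{\dag}X)^{-1/2^{+}}$, and uses the double-ket identities $\left(\ref{basic eq for double kets}\right)$, $\left(\ref{eq inner product out first factor}\right)$ together with Lemma \ref{lemma dilation is purification} to identify the Choi matrix of the iterate with $\Gamma^{-1/2^{+}}F\tilde{\mathcal{R}}F\Gamma^{-1/2^{+}}$. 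The only difference is one of bookkeeping: the paper simplifies $|U^{(+)}\rangle\!\rangle$ to $\Gamma^{-1/2^{+}}FU|\openone\rangle\!\rangle$ and leaves the final partial-trace check implicit, whereas you carry that check out explicitly.
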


\begin{proof}
Let $X_{\mathcal{K}\rightarrow\mathcal{\mathcal{LE}}}$ be the operator defined
by%
\[
\left.  \left\vert X_{\mathcal{K}\rightarrow\mathcal{\mathcal{LE}}%
}\right\rangle \!\right\rangle =F_{\mathcal{LK}^{\ast}\rightarrow
\mathcal{LK}^{\ast}}\left.  \left\vert U\right\rangle \!\right\rangle
_{\mathcal{\mathcal{\mathcal{LE}K}}^{\ast}}\text{.}%
\]
Then by equations $\ref{Reimpell F inner product},$
$\ref{equation sup trace A dag U}$, and $\ref{unitary maximizer equation}$,%
\[
\max_{W\in S}\operatorname{Re}\left\langle W,U\right\rangle _{F}=\max_{W\in
S}\operatorname{Re}\operatorname*{Tr}\left(  W_{\mathcal{K}\rightarrow
\mathcal{\mathcal{LE}}}\right)  ^{\dag}X_{\mathcal{K}\rightarrow
\mathcal{\mathcal{LE}}}=\left\Vert X_{\mathcal{K}\rightarrow
\mathcal{\mathcal{LE}}}\right\Vert _{1},
\]
with maximizer $W=U^{\left(  +\right)  }$ given by%
\[
U^{\left(  +\right)  }=X\left(  X^{\dag}X\right)  ^{-1/2^{+}}\text{.}%
\]
By equations \ref{basic eq for double kets},
\ref{eq inner product out first factor}, and
\ref{Reimpell def of gamma for iterate} one has%
\begin{align*}
\left.  \left\vert U^{\left(  +\right)  }\right\rangle \!\right\rangle
_{\mathcal{\mathcal{\mathcal{LE}K}}^{\ast}} &  =\left(  \overline{X^{\dag}%
X}\right)  _{\mathcal{K}^{\ast}\rightarrow\mathcal{K}^{\ast}}^{-1/2^{+}%
}\left.  \left\vert X\right\rangle \!\right\rangle
_{\mathcal{\mathcal{\mathcal{LE}K}}^{\ast}}\\
&  =\left(  \operatorname*{Tr}_{\mathcal{\mathcal{LE}}}\left.  \left\vert
X\right\rangle \!\right\rangle _{\mathcal{\mathcal{\mathcal{LE}K}}^{\ast}%
}\left\langle \!\left\langle X\right\vert \right.  \right)  ^{-1/2}\left.
\left\vert X\right\rangle \right\rangle _{\mathcal{\mathcal{\mathcal{LE}K}%
}^{\ast}}\\
&  =\left(  \operatorname*{Tr}_{\mathcal{\mathcal{LE}}}F_{\mathcal{LK}^{\ast
}\rightarrow\mathcal{LK}^{\ast}}\left.  \left\vert U\right\rangle
\!\right\rangle _{\mathcal{\mathcal{\mathcal{LE}K}}^{\ast}}\left\langle
\!\left\langle U\right\vert \right.  F_{\mathcal{LK}^{\ast}\rightarrow
\mathcal{LK}^{\ast}}\right)  ^{-1/2^{+}}F_{\mathcal{LK}^{\ast}\rightarrow
\mathcal{LK}^{\ast}}\left.  \left\vert U\right\rangle \!\right\rangle
_{\mathcal{\mathcal{\mathcal{LE}K}}^{\ast}}\\
&  =\Gamma^{-1/2^{+}}F_{\mathcal{LK}^{\ast}\rightarrow\mathcal{LK}^{\ast}%
}U_{\mathcal{K}\rightarrow\mathcal{\mathcal{LE}}}\left.  \left\vert
\openone
\right\rangle \!\right\rangle _{\mathcal{K\mathcal{K}}^{\ast}}\text{.}%
\end{align*}
It follows from Lemma $\ref{lemma dilation is purification}$ and Eq.
\ref{Reimpell def of iterate} that $U^{\left(  +\right)  }$ dilates
$\mathcal{R}^{\oplus}$.
\end{proof}

\pagebreak%

\section
*{Appendix C: The relationship between overlap bounds and state distinguishability
\label{appendix relationship to the measurement bounds}
\addcontentsline{toc}{section}%
{Appendix C: The relationship between overlap bounds and state distinguishability}
}%

As remarked in section \ref{section the restricted maximum overlap}, Theorems
1 and 2 of \cite{Konig Renner Schaffner Operational meaning of min and max
entropy} (see also equations
\ref{premaximized identity of koenig renner shafner}%
-\ref{form of Ropt used to kill root m}, below) imply that minimum-error
distinguishability of a finite collection of quantum states $\mathcal{E}%
=\left\{  \rho_{k}\right\}  _{k=1,\ldots,m}$ may be expressed in terms of
restricted maximum overlap:%
\begin{equation}
P_{\text{succ}}\left(  M^{\text{opt}}\right)  =m\times\max_{\mathcal{R}%
_{\mathcal{H}\rightarrow\left(  \mathbb{C}^{m}\right)  ^{\ast}}}\left\langle
\phi_{\mathbb{C}^{m}\left(  \mathbb{C}^{m}\right)  ^{\ast}}\right\vert
\mathcal{R}_{\mathcal{H}\rightarrow\left(  \mathbb{C}^{m}\right)  ^{\ast}%
}\left(  \mu_{\mathcal{H}\mathbb{C}^{m}}\right)  \left\vert \phi
_{\mathbb{C}^{m}\mathbb{C}^{m\ast}}\right\rangle
.\label{KRS version of optimal success}%
\end{equation}
Here the vector $\phi\in\mathbb{C}^{m}\otimes\left(  \mathbb{C}^{m}\right)
^{\ast}$ is the maximally-mixed state%
\begin{equation}
\left\vert \phi_{\mathbb{C}^{m}\left(  \mathbb{C}^{m}\right)  ^{\ast}%
}\right\rangle =\frac{1}{\sqrt{m}}\left.  \left\vert
\openone
\right\rangle \!\right\rangle _{\mathbb{C}^{m}\mathbb{C}^{m\ast}}:=\frac
{1}{\sqrt{m}}%
{\displaystyle\sum}
\left\vert k\right\rangle _{\mathbb{C}^{m}}\left\vert \bar{k}\right\rangle
_{\mathbb{C}^{m\ast}}\text{,}\label{KRS max entangled}%
\end{equation}
and $\mu\in B^{1}\left(  \mathcal{H}\otimes\mathbb{C}^{m}\right)  $ is the
\textquotedblleft quantum-classical\textquotedblright\ state%
\begin{equation}
\mu_{\mathcal{H}\otimes\mathbb{C}^{m}}=%
{\displaystyle\sum_{k=1}^{m}}
\rho_{k}\otimes\left\vert k\right\rangle _{\mathbb{C}^{m}}\left\langle
k\right\vert \text{,}\label{class quantum state}%
\end{equation}
where the $\rho_{k}$ are normalized as in Definition
\ref{definition containing ensemble E}.

If one applies the overlap bounds of Theorem
\ref{Theorem two sided overlap estimates} (or the $s=0$ case of Corollary
\ref{corollary bounding min entropy} combined with Eq.
$\ref{eq min entropy as overlap}$), one obtains%
\begin{align}
\left(  \operatorname*{Tr}\sqrt{%
{\displaystyle\sum}
\rho_{k}^{2}}\right)  ^{2}\leq P_{\text{succ}}\left(  M^{\text{QW}}\right)
\leq P_{\text{succ}}\left(  M^{\text{opt}}\right)   &  \leq\sqrt{m}\left\Vert
\left(  \mathcal{R}^{\text{opt}}\right)  ^{\dag}\left(  \left\vert
\phi\right\rangle \left\langle \phi\right\vert \right)  \right\Vert _{\infty
}\times\operatorname*{Tr}\sqrt{%
{\displaystyle\sum}
\rho_{k}^{2}}\nonumber\\
&  \leq\sqrt{m}\times\operatorname*{Tr}\sqrt{%
{\displaystyle\sum}
\rho_{k}^{2}}.\label{overlap applied to meas}%
\end{align}
In particular, if one neglects the $\left\Vert \mathcal{R}^{\dag}\right\Vert
_{\infty}$ factor in the fourth expression of this estimate then one picks up
a spurious factor of $\sqrt{m}$ not appearing in the bounds of Theorem
\ref{Theorem gen Holevo Curlander bounds}. (Weakness of the upper bound is not
surprising, since $\phi$ and $\mu$ are generally not \textquotedblleft
reasonably overlappable.\textquotedblright)

In order to show how one may apply the fourth term of the overlap estimate
$\left(  \ref{overlap applied to meas}\right)  $, we give another proof of
Theorem \ref{Theorem gen Holevo Curlander bounds}. It is hoped that similar
methods may lead to sharper upper in other instances of maximum overlap or
conditional min-entropy.\bigskip

\begin{proof}
[An \textquotedblleft overlap proof\textquotedblright\ of Theorem
\ref{Theorem gen Holevo Curlander bounds}]We restrict consideration to the
case $\mathcal{E}=\left\{  \rho_{k}\right\}  _{k=1,\ldots,m}$. Given a quantum
operation $\mathcal{R}_{\mathcal{H}\rightarrow\mathbb{C}^{M}}$ one has the
identity%
\begin{equation}
m\times\left\langle \phi_{\mathbb{C}^{m}\left(  \mathbb{C}^{m}\right)  ^{\ast
}}\right\vert \mathcal{R}_{\mathcal{H}\rightarrow\left(  \mathbb{C}%
^{m}\right)  ^{\ast}}\left(  \mu_{\mathcal{H}\mathbb{C}^{m}}\right)
\left\vert \phi_{\mathbb{C}^{m}\mathbb{C}^{m\ast}}\right\rangle
=P_{\text{succ}}\left(  M^{\mathcal{R}}\right)
,\label{premaximized identity of koenig renner shafner}%
\end{equation}
where $\mu\in B^{1}\left(  \mathcal{H}\otimes\mathbb{C}^{m}\right)  $ and
$\phi\in\mathbb{C}^{m}\otimes\mathbb{C}^{m\ast}$ are as in equations $\left(
\ref{KRS max entangled}\right)  $-$\left(  \ref{class quantum state}\right)  $
and where the POVM $M^{\mathcal{R}}$ corresponding to the operation
$\mathcal{R}$ is given by%
\begin{equation}
M_{k}^{\mathcal{R}}:=\left(  \mathcal{R}\right)  _{\left(  \mathbb{C}%
^{m}\right)  ^{\ast}\rightarrow\mathcal{H}}^{\dag}\left(  \left\vert \bar
{k}\right\rangle _{\mathbb{C}^{m\ast}}\left\langle \bar{k}\right\vert \right)
,\ \ \ k=1,\ldots,m\text{.}\label{POVM corresponding to operator}%
\end{equation}
Since any given POVM $M$ may be expressed in the form of $\left(
\ref{POVM corresponding to operator}\right)  $ for the quantum operation
$\mathcal{R}=\mathcal{R}^{M}$ given by
\begin{equation}
\mathcal{R}_{\mathcal{H}\rightarrow\left(  \mathbb{C}^{m}\right)  ^{\ast}}%
^{M}\left(  \rho\right)  :=%
{\displaystyle\sum_{k=1}^{m}}
\left\vert \bar{k}\right\rangle _{\mathbb{C}^{m\ast}}\left\langle \bar
{k}\right\vert \times\operatorname*{Tr}\left(  M_{k}\rho\right)
\text{,}\label{operation implementing subPOVM}%
\end{equation}
maximization of $\left(  \ref{premaximized identity of koenig renner shafner}%
\right)  $ over operations $\mathcal{R}$ gives the identity $\left(
\ref{KRS version of optimal success}\right)  $. Taking $M^{\text{opt}}$ to be
some optimal measurement, it follows that a maximizer of the LHS\ of $\left(
\ref{premaximized identity of koenig renner shafner}\right)  $ is given by
\begin{equation}
\mathcal{R}^{\text{opt}}=\mathcal{R}^{M^{\text{opt}}}\text{,}%
\label{form of Ropt used to kill root m}%
\end{equation}
where $M^{\text{opt}}$ is an optimal measurement. One estimates%
\begin{equation}
\left\Vert \left(  \mathcal{R}^{\text{opt}}\right)  _{\mathbb{C}^{m\ast
}\rightarrow\mathcal{H}}^{\dag}\left(  \left\vert \phi\right\rangle
_{\mathbb{C}^{m}\mathbb{C}^{m\ast}}\left\langle \phi\right\vert \right)
\right\Vert _{\infty}=\left\Vert \frac{1}{m}%
{\displaystyle\sum}
\left\vert k\right\rangle _{\mathbb{C}^{m}}\left\langle k\right\vert \otimes
M_{k}^{\text{opt}}\right\Vert _{\infty}\leq\frac{1}{m}\text{.}%
\label{resutling estimate from form of R opt}%
\end{equation}
Applying the bounds $\left(  \ref{eq overlap bound}\right)  $ to $\left(
\ref{premaximized identity of koenig renner shafner}\right)  $ yields the
chain of inequalities%
\begin{equation}
\left(  \sqrt{%
{\displaystyle\sum}
\rho_{k}^{2}}\right)  ^{2}\leq P_{\text{succ}}\left(  M^{\text{QW}}\right)
\leq P_{\text{succ}}\left(  M^{\text{opt}}\right)  \leq\sqrt{%
{\displaystyle\sum}
\rho_{k}^{2}}\leq\sqrt{m}\times\sqrt{%
{\displaystyle\sum}
\rho_{k}^{2}}\text{.}\label{yuck root m}%
\end{equation}

\end{proof}

\bigskip

\noindent\textbf{Acknowledgements:} We would like to thank Arthur Jaffe, Peter
Shor, and Chris King for their encouragement, Arthur Jaffe for suggesting a
change in presentation, Andrew Fletcher, Fr\'{e}d\'{e}ric Dupuis, Cedric
B\'{e}ny, Ognyan Oreshkov \& Renato Renner for valuable discussions, Stephanie
Wehner for pointing out the work of Ogawa and Nagaoka, the anonymous referee
for valuable comments, and a previous referee for \cite{Tyson Holevo Curlander
Bounds} for pointing out the connection to min-entropy.

\bigskip

\noindent\textbf{Note Added: }Private communication from the authors of
\cite{Beny and Oreshkov general conditions for approximate quanutm error
recovery and near optimal recovery channels} indicates that they have obtained
the quadratic recovery channel by alternative means \cite{Beny and Oreshkov in
preparation}.

\bigskip

\end{document}